\documentclass[10pt]{article}
\usepackage{dcolumn}
\usepackage{bm}
\usepackage{hyperref}
\usepackage{verbatim}       

\usepackage{color}
\usepackage{graphicx}
\usepackage{array}

\usepackage{amsthm}
\usepackage{amssymb}

\usepackage{amstext}
\usepackage{psfrag}
\usepackage{amsmath}
\usepackage{amsfonts}
\usepackage{units}
\usepackage{mathrsfs}

\newcommand{\p}{\partial}

\newcommand{\dd}{{\rm d}}

\newtheorem{theorem}{Theorem}[section]
\newtheorem{corollary}[theorem]{Corollary}
\newtheorem{lemma}[theorem]{Lemma}
\newtheorem{proposition}[theorem]{Proposition}
\theoremstyle{definition}
\newtheorem{definition}[theorem]{Definition}
\theoremstyle{remark}
\newtheorem{remark}[theorem]{Remark}
\newtheorem{example}[theorem]{Example}

\newtheorem*{claim*}{Claim}  

\begin{document}


\title{\Large\makebox[\textwidth]{Bartnik's splitting conjecture and the cosmological wall:}\\ a Cauchy Busemann function approach}

\author{E. Minguzzi\footnote{Dipartimento di Matematica, Universit\`a degli Studi di Pisa,  Via
B. Pontecorvo 5,  I-56127 Pisa, Italy. E-mail:
ettore.minguzzi@unipi.it, ORCID:0000-0002-8293-3802}}

\date{}
\maketitle

\begin{abstract}
\noindent Bartnik's conjecture states that a timelike geodesically complete cosmological spacetime satisfying the timelike convergence condition splits as a static product. We show that any  counterexample  exhibits a peculiar global causal structure: two disjoint open regions, the future and the past cosmological developments, separated by a non-empty closed causally convex set: the cosmological wall. The developments are bounded by $C^0$ null hypersurfaces, the cosmological horizons, and a component of each horizon contains a lightlike line. These objects are natural and hence carry physical significance, as they do not depend on the choice of Cauchy hypersurface. They can be defined through the Cauchy Busemann functions $b^\pm$ of a compact Cauchy hypersurface $\Sigma$, the developments being their finiteness domains. Using a product trick we show that $b^+$ is the graph function of a Cauchy horizon in $M\times \mathbb{R}$, and so inherits the regularity and dynamical properties of horizons: $b^+$ is continuous, locally Lipschitz and locally semi-concave on its finiteness domain, and its calibrated curves are timelike rays. We give a foliation-free formula for $b^+$ as a minimal asymptotic clock defect, and we prove that the Lorentzian distances to the level sets of the distance foliation are locally eventually equi-Lipschitz on the future development but not near its boundary, so that equi-Lipschitz bounds near a Cauchy hypersurface are equivalent to the conjecture. The zero level set of $b^+$ is the Cauchy horosphere of $\Sigma$ in the sense of Galloway and Vega. We prove that if the two cosmological developments intersect, which happens whenever $b^+$ is $C^1$ on its finiteness domain, then the spacetime contains a timelike line and hence splits, a result which recovers several positive results of the literature. We also obtain a localized rigidity version of Hawking's singularity theorem: if some compact Cauchy hypersurface has non-positive mean curvature on the future development, the spacetime splits. Hence, in a counterexample, every compact Cauchy hypersurface expands somewhere in the future development and contracts somewhere in the past one. If the wall has empty interior, it coincides with both horizons and is ruled by lightlike lines, and under null completeness its components are smooth totally geodesic null hypersurfaces.
\end{abstract}

\tableofcontents

\section{Introduction}


\subsection{Background}

The singularity theorems of general relativity show that, under suitable energy and causality conditions, spacetimes are typically geodesically incomplete. In the cosmological setting the relevant result is Hawking's theorem \cite{hawking67}: a spacetime which satisfies the timelike convergence condition (TCC), $Ric(v,v)\ge 0$ for every timelike vector $v$, and which admits a compact Cauchy hypersurface of everywhere negative future mean curvature, is future timelike geodesically incomplete.

Every such theorem raises the question of what happens in the borderline case, namely of which spacetimes manage to be geodesically complete while still satisfying its energy condition. Yau proposed to study this rigidity problem systematically \cite{yau82}, and in several instances the answer takes the same form: completeness forces the spacetime to be a static product, the model case in which the focusing predicted by the energy condition is exactly balanced. The best known instance is the Lorentzian splitting theorem of Eschenburg, Galloway and Newman \cite{eschenburg88,galloway89,newman90}. It assumes, besides TCC, the existence of a complete timelike line, that is, of an inextendible timelike geodesic, defined for all values of its proper time, each of whose segments maximizes the Lorentzian distance between its endpoints, and concludes that the spacetime splits isometrically as a product $(\mathbb{R}\times \Sigma, -\dd t^2+h)$.

A timelike line is, however, a strong and non-local assumption, and in a cosmological spacetime there is no reason for one to exist. Bartnik's conjecture asks whether, in that setting, the compactness of a Cauchy hypersurface can replace it: completeness alone should then force the same product structure.

\begin{quote}
{\bf Bartnik conjecture} \cite{bartnik88}: Every  spacetime $(M,g)$ admitting a compact Cauchy hypersurface and satisfying the timelike convergence condition (cosmological spacetime) either is timelike geodesically incomplete or  splits into a product $\mathbb{R}_t \times \Sigma_x$, $g=-\dd t^2+ h$ where $(\Sigma,h)$ is a compact Riemannian space.
\end{quote}

Bartnik himself proved the conjecture under the additional assumption that the spacetime admits a compact CMC Cauchy hypersurface \cite{bartnik88}, and he related the conjecture to the problem of existence of such hypersurfaces; the existence and non-existence of CMC Cauchy hypersurfaces in the cosmological setting are reviewed in \cite{galloway19}, and vacuum cosmological spacetimes without CMC Cauchy hypersurfaces are now known \cite{ling25}. Several further partial results have been obtained by adding assumptions which, directly or indirectly, allow one to construct a timelike line: the absence of observer horizons \cite{galloway84b,galloway19}, or of null rays \cite{lesourd18b}; the existence of a future and a past ray with intersecting chronological past and future \cite{galloway97}; the min-max condition of Galloway and Vega \cite{galloway14b}, and their condition on the relative position of past and future Cauchy horospheres \cite{galloway17}; non-positive timelike sectional curvature \cite{galloway18c}, also in the synthetic setting of Lorentzian length spaces \cite{flores24}; conformal symmetries \cite{costa20b}; and the min-min condition on the minimal elements of the causal boundary \cite{heveling25}. Lorentzian Busemann functions and their level sets, the horospheres, have been a central tool in these investigations \cite{galloway96,sharifzadeh09,amini18,galloway14b}. Some care is needed with approximation arguments. The local equi-Lipschitz property of the Lorentzian distances from the points of a timelike line is an essential ingredient of the proofs of the Lorentzian splitting theorem, but, as shown by Galloway, McCann and Ohanyan \cite{galloway26}, the analogous property fails in general for the Lorentzian distances to the level sets of a Cauchy temporal function.

A basic obstruction to many strategies is the example of Ehrlich and Galloway \cite{ehrlich90}: a causally geodesically complete globally hyperbolic spacetime with compact Cauchy hypersurface and no timelike line. Due to it, it is  futile to try to prove the existence of a timelike line  without using the timelike convergence condition, to only then invoke the splitting theorem. Of course, their example does not satisfy TCC.

\subsection{The approach and the cosmological wall}

In this work we adopt a somewhat different point of view. Rather than trying to produce a timelike line directly, we identify the global structure that a counterexample must have.

In this way, if Bartnik's conjecture is false, the search for a counterexample will be  aided by the constrained causal structure that we determine. Furthermore, the counterexample will of course show the structures of which we are predicting existence,  and our results will ensure that they are not accidental features of the counterexample but intrinsic features, independent of auxiliary choices such as that of a Cauchy hypersurface, of physical significance.
If, on the other hand, the conjecture  is true, knowing the causal structure would most likely help the development of a proof.

Given a compact Cauchy hypersurface $\Sigma$ we introduce the future Cauchy Busemann function $b^+=\lim_k [k-d(\cdot,\Sigma_k)]$, where $\Sigma_k$ is the set of points at Lorentzian distance $k$ from $\Sigma$, and its time dual $b^-$. The key observation, which we call the product trick, is that in the product spacetime $\tilde M=M\times \mathbb{R}$, $\tilde g=g+\dd z^2$, the past domain of dependence of a suitable set is the strict hypograph of $b^+$, so that the graph of $b^+$ is a Cauchy horizon. The regularity theory of horizons and the dynamics of their generators thus become available for $b^+$. The function $b^+$ may take the value $+\infty$, and its finiteness domain $\textrm{dom}\, b^+$, the {\em future cosmological development}, is an open past set; dually $\textrm{dom}\, b^-$, the {\em past cosmological development}, is an open future set.

Our first main result is that if the two developments intersect, the spacetime contains a timelike line, hence it splits under TCC. Bartnik's conjecture is thus equivalent, for timelike complete cosmological spacetimes, to the statement that the two developments always intersect. If the conjecture is false, the developments are disjoint and, $M$ being connected, the closed set
\[
W:=M\backslash (\textrm{dom}\, b^+\cup \textrm{dom}\, b^-)
\]
is non-empty. We call it the {\em cosmological wall}. The developments, their boundaries, the {\em cosmological horizons}, and the wall do not depend on the chosen Cauchy hypersurface, so they are intrinsic objects of the spacetime and, should a counterexample exist, they would carry physical significance: much as the event horizon of a black hole, the future cosmological horizon $\p\, \textrm{dom}\, b^+$ can be crossed by future-directed causal curves only in the outward direction, from the future development to its complement. A large part of this work is devoted to the structure of the wall. The picture that emerges is that of Figure \ref{figstruct}, which in its simplest realization in dimension 1+1 reproduces the causal structure of the Ehrlich--Galloway example.

\begin{figure}[ht]
\begin{center}
 \includegraphics[width=12cm]{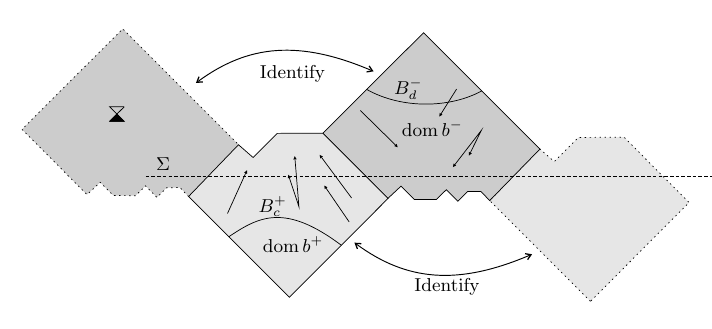}
\end{center} \caption{If Bartnik's conjecture is false the spacetime has this structure (in this figure we assumed the wall and the two cosmological  horizons to coincide).  The existence of calibrating rays is suggested by arrows (though recall that we parametrize past rays so that they are future directed). The  calibrating rays and the horospheres depend on $\Sigma$ but the other elements displayed do not. The equispaced level sets of $B^+_c$ accumulate on $\p\, \textrm{dom}\, b^+$ since $b^+$ diverges to infinity approaching this set. Still, causal curves might avoid level sets of $b^+$ in the past direction. Note that this figure respects the causal structure constraint of Theorem \ref{cbdc}: every TIP contains a future $\Sigma$-ray necessarily entirely contained in $\textrm{dom}\, b^+$.} \label{figstruct}
\end{figure}

\subsection{Main results}
We list here our results on the structure of a timelike complete spacetime admitting a compact Cauchy hypersurface (most constructions concerning $b^+$ require only future timelike completeness; exceptions are specified in Section \ref{secconv}):
\begin{itemize}
\item[(a)] {\em Regularity and dynamics.} $b^+$ is continuous, and locally Lipschitz and locally semi-concave on $\textrm{dom}\, b^+$, which is foliated by the level sets $(b^+)^{-1}(c)$ (for $c\ge 0$ these are Cauchy horospheres in the sense of Galloway and Vega, Corollary \ref{corhoro}); its calibrated curves are timelike rays (Sections \ref{secbuse}-\ref{sechoro}). It admits a foliation-free formula as a minimal asymptotic clock defect (Theorem \ref{thmintr}), and the distances to the level sets $\Sigma_k$ are locally eventually equi-Lipschitz on $\textrm{dom}\, b^+$ (Theorem \ref{thmequi}).
\item[(b)] {\em Splitting criteria.} If $\textrm{dom}\, b^+\cap \textrm{dom}\, b^-\ne \emptyset$ the function $b^+-b^-$ attains a minimum and a timelike line passes through each minimum point (Theorem \ref{notr}). This happens iff $2k-d(\Sigma_{-k},\Sigma_k)$ is bounded (Theorem \ref{thmscal}), and it happens if $b^+$ is $C^1$ on $\textrm{dom}\, b^+$ (Theorem \ref{thmcone}), namely if from every point of $\textrm{dom}\, b^+$ starts a unique calibrated ray (Proposition \ref{propcrease}). Several known splitting results follow as corollaries. Under TCC the horospheres are mean convex in the support sense (Theorem \ref{thmmean}). Under TCC, if the two developments intersect, they fill the whole spacetime (Corollary \ref{corint}). A compact Cauchy hypersurface with non-positive mean curvature on the future development forces splitting (Theorem \ref{thmhawk}); this contains Bartnik's CMC splitting theorem. Local eventual equi-Lipschitz bounds for the distances to the sets $\Sigma_k$ near $\Sigma$ are equivalent to the conjecture (Corollary \ref{coreqlip}), and a global $C^1$ solution of the eikonal equation produces timelike lines (Proposition \ref{propeik}).
\item[(c)] {\em The cosmological wall.} If the spacetime does not split, the developments and the horizons are independent of $\Sigma$ (Theorem \ref{xxxt}); some component of the future horizon contains a lightlike line and meets every Cauchy hypersurface (Theorem \ref{thmwall}); every point common to the two horizons lies on a lightlike line contained in both; a wall with empty interior coincides with both horizons, it is ruled by lightlike lines and separates every Cauchy hypersurface, and under null completeness it is a smooth totally geodesic null hypersurface (Theorem \ref{thmthin}). Every complete timelike geodesic has infinite clock defect in the future or in the past (Corollary \ref{cordefect}).
\end{itemize}

The paper is organized as follows. Section \ref{secdist} studies the distance from a compact Cauchy hypersurface. Section \ref{secbuse} introduces the Cauchy Busemann function, Section \ref{secprod} the product trick, Section \ref{secreg} the regularity theory and Section \ref{sechoro} the horospheres and their mean curvature. Section \ref{secsplit} contains the splitting criteria, and Section \ref{secstru} the structure of a hypothetical counterexample, including the cosmological wall. Appendix \ref{appcaus} contains an auxiliary causality result.

\subsection{Conventions and standing assumptions} \label{secconv}

Reference for background causality theory is  \cite{minguzzi18b} unless otherwise stated, and we use its terminology. In particular, Cauchy hypersurfaces are acausal by definition, a {\em time function} is a continuous function which is strictly increasing over causal curves, and $d$ denotes the Lorentzian distance. For $A,B\subset M$ we set $d(A,B)=\sup_{p\in A,q\in B} d(p,q)$ and $d(p,B)=d(\{p\},B)$. Given a closed set $S$, an {\em $S$-ray} is a future inextendible causal curve starting from $S$ such that for every point $q$ on it the segment from $S$ to $q$ has length $d(S,q)$; past $S$-rays are defined dually.

Throughout the paper $(M,g)$ is a connected, time-oriented, globally hyperbolic spacetime of dimension $n+1$ which admits a compact Cauchy hypersurface $\Sigma$.

Timelike geodesic completeness is not a standing assumption. We say that $(M,g)$ is {\em future} (resp.\ {\em past}) {\em timelike complete} if every future (resp.\ past) inextendible timelike geodesic is future (resp.\ past) complete, and that it is {\em timelike complete} if both properties hold. Future timelike completeness guarantees that the level sets $\Sigma_\tau$, $\tau>0$, of the Lorentzian distance from $\Sigma$ are compact Cauchy hypersurfaces (Theorem \ref{kvvb}), which is what is needed to construct the future Cauchy Busemann function $b^+$; a weaker assumption which suffices is discussed in Remark \ref{remcomplete}. Accordingly, in Sections \ref{secbuse}--\ref{sechoro}, which are devoted to $b^+$, $(M,g)$ is assumed to be future timelike complete, while the time dual statements, which concern the past Cauchy Busemann function $b^-$, require past timelike completeness. In Sections \ref{secsplit} and \ref{secstru}, where both $b^+$ and $b^-$ appear, $(M,g)$ is assumed to be timelike complete. Results which need no completeness assumption, or which need past timelike completeness in addition to the assumption of their section, are explicitly marked.

The timelike convergence condition (TCC), $Ric(v,v)\ge 0$ for every timelike $v$, will be used only through the Lorentzian splitting theorem, in Proposition \ref{prop39}, in Section \ref{secmean} and in Section \ref{secmeanC}; results in which it is used are explicitly marked. A {\em timelike line} is an inextendible timelike geodesic each of whose segments is maximizing; it is {\em complete} if it is defined for all values of its proper time. For globally hyperbolic spacetimes the Lorentzian splitting theorem does not require the timelike geodesic completeness of the whole spacetime, but it does require the completeness of the timelike line \cite{galloway89}, see also \cite{beem96} and \cite[Thm.\ 5.1]{galloway14b}: under TCC, the existence of a complete timelike line implies that $(M,g)$ is isometric to a product $(\mathbb{R}\times S,-\dd t^2+h)$. In a timelike complete spacetime every timelike line is complete. The completeness of the line cannot be dropped: the spacetime $M=(0,+\infty)\times T^n$, $g=-\dd t^2+h$, $h$ flat, is globally hyperbolic, future timelike complete and Ricci flat, every vertical geodesic is an inextendible timelike line, but $(M,g)$ is not isometric to a product $\mathbb{R}\times S$.

For the sectional curvature of a non-degenerate plane spanned by $X,Y$ we use the convention of \cite{beem96,ehrlich90}, $K(X,Y)=g(R(X,Y)Y,X)/[g(X,X)g(Y,Y)-g(X,Y)^2]$, where $R(X,Y)=\nabla_X\nabla_Y-\nabla_Y\nabla_X-\nabla_{[X,Y]}$. Then $Ric(X,X)=-\sum_{i=1}^n K(X,e_i)$ for every unit timelike vector $X$ and every orthonormal basis $\{e_i\}$ of $X^\perp$, so that non-positive timelike sectional curvature implies TCC, and the two conditions are equivalent in dimension $1+1$. In the synthetic framework of Lorentzian length spaces the triangle comparison condition corresponding to non-positive timelike sectional curvature in this convention is called non-negative timelike curvature \cite{flores24}.

The following table summarizes where the various assumptions are used.
\begin{center}
\begin{tabular}{>{\raggedright\arraybackslash}p{3cm}>{\raggedright\arraybackslash}p{8.1cm}}
\hline
Assumption & Where it is used \\ \hline
\noalign{\vskip 4pt}
future timelike completeness & Theorem \ref{kvvb}; Sections \ref{secbuse}--\ref{sechoro}, devoted to $b^+$ \\ \noalign{\vskip 5pt}
past timelike completeness & time dual results, concerning $b^-$; non-emptiness of $B_c$, $c<0$ (Proposition \ref{cllp}); Proposition \ref{prophoro} and Corollary \ref{corhoro}; splitting conclusions of Theorem \ref{thmcone} and Propositions \ref{propline}, \ref{propequiv} \\ \noalign{\vskip 5pt}
timelike completeness & Sections \ref{secsplit} and \ref{secstru} \\ \noalign{\vskip 5pt}
TCC & Lorentzian splitting theorem; Proposition \ref{prop39}; Sections \ref{secmean} and \ref{secmeanC} \\ \noalign{\vskip 5pt}
null completeness & Theorem \ref{thmrigid} and Theorem \ref{thmthin}(d) \\ \noalign{\vskip 3pt} \hline
\end{tabular}
\end{center}

\section{Distance functions from compact Cauchy hypersurfaces} \label{secdist}

The results of this section do not require any completeness assumption, with the exception of Theorem \ref{kvvb}, whose assumptions are stated explicitly.

For a Cauchy hypersurface $\Sigma$ we set
\[
d_\Sigma^+(q):=\sup_{p\in \Sigma} d(p,q)=d(\Sigma,q) , \qquad d_\Sigma^-(q):=\sup_{p\in \Sigma} d(q,p)=d(q,\Sigma).
\]

\begin{theorem} \label{thmcont}
Let $\Sigma$ be a Cauchy hypersurface. The function $d_\Sigma^+$ is positive, finite and continuous on $I^+(\Sigma)$ and for every $q\in I^+(\Sigma)$ there is $p\in \Sigma$ and a maximizing timelike geodesic from $p$ to $q$ of length $d_\Sigma^+(q)$. Moreover, $d_\Sigma^+$ extended with the value $0$ on $J^-(\Sigma)$ is continuous on $M$.
\end{theorem}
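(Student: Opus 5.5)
The plan is to reduce everything to compactness of causal diamonds and to the standard properties of the Lorentzian distance in globally hyperbolic spacetimes. In such spacetimes $d$ is finite and continuous on $M\times M$. Moreover, whenever $p\le q$ there is a maximizing causal geodesic from $p$ to $q$ of length $d(p,q)$ (Avez--Seifert), and it is timelike if $d(p,q)>0$.

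\emph{Finiteness, positivity and attainment of the supremum.} Fix $q\in I^+(\Sigma)$.
\begin{itemize}
\item The set $J^-(q)\cap \Sigma$ is compact, since $\Sigma$ is a Cauchy hypersurface and the spacetime is globally hyperbolic; it is the intersection of the closed set $\Sigma$ with $J^-(q)\cap J^+(\Sigma)$, which is compact.
\item For $p\in\Sigma\setminus J^-(q)$ we have $d(p,q)=0$.
\item Hence $d_\Sigma^+(q)=\max_{p\in J^-(q)\cap\Sigma} d(p,q)$, the maximum existing because $d$ is continuous and the domain is compact. So $d^+_\Sigma(q)$ is finite and attained at some $p$.
\item Since $q\in I^+(\Sigma)$, some $p'\in\Sigma$ satisfies $p'\ll q$, so $d_\Sigma^+(q)\ge d(p',q)>0$.
\item Avez--Seifert applied to the maximizing $p$ gives a maximizing geodesic from $p$ to $q$. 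It is timelike because its length is positive.
\end{itemize}

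\emph{Continuity.} Lower semicontinuity of $d_\Sigma^+$ on all of $M$ is immediate: it is a supremum of the continuous functions $d(p,\cdot)$. For upper semicontinuity at an arbitrary $q\in M$, take $q_n\to q$ and choose $p_n\in\Sigma$ with $d(p_n,q_n)=d^+_\Sigma(q_n)$, which we may assume is positive.
\begin{itemize}
\item Fix $r\in I^+(q)$. Eventually $q_n\in I^-(r)$, so $p_n\in J^-(r)\cap\Sigma$, which is compact.
\item Passing to a subsequence, $p_n\to p\in\Sigma$, and continuity of $d$ gives $\limsup d^+_\Sigma(q_n)\le d(p,q)\le d^+_\Sigma(q)$.
\end{itemize}
This proves continuity on $I^+(\Sigma)$. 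The same argument shows continuity on $M$ once the function vanishes on $J^-(\Sigma)$.

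\emph{Vanishing on $J^-(\Sigma)$.} If $q\in J^-(\Sigma)$ and $d(p,q)>0$ for some $p\in\Sigma$, then $p\ll q\le p''$ for some $p''\in\Sigma$, so $p\ll p''$. This contradicts the acausality of $\Sigma$ (achronality would already suffice). Since $M=I^+(\Sigma)\cup J^-(\Sigma)$ for a Cauchy hypersurface, the extension by $0$ agrees with $d^+_\Sigma$ everywhere, and continuity on $M$ follows from the previous step.

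The only delicate point is this upper semicontinuity, namely ensuring that the maximizers $p_n$ stay in a compact set. The choice of the auxiliary point $r\in I^+(q)$ handles it, and it works even at boundary points $q\in\Sigma$.
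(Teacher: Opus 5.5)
Your proof is correct and follows the paper's argument. The key step is the same: an auxiliary $r\in I^+(q)$ traps the maximizers $p_n$ in the compact set $J^-(r)\cap\Sigma$. Where the paper runs the limit curve theorem with upper semicontinuity of length, you invoke continuity of $d$ in globally hyperbolic spacetimes, which is proved by that same argument; this also lets you treat points of $\Sigma$ uniformly instead of through the paper's separate achronality argument.
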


See also \cite[Prop.\ 2.3]{minguzzi19d}.

\begin{proof}
Recall that $I^+(\Sigma)=\textrm{Int} D^+(\Sigma)$ since $\Sigma$ is Cauchy.
Positivity is clear since any past inextendible timelike curve ending at $q\in I^+(\Sigma)$ must intersect $\Sigma$. Finiteness has been proved in \cite[Cor.\ 3.47]{minguzzi18b} as well as the fact that for every $q$ the sup is realized, namely there is some $p$ and a maximal timelike geodesic segment $\sigma$ connecting $p$ and $q$ such that $\ell(\sigma)=d(p,q)=d^+_{\Sigma}(q)$.

The pointwise supremum of a family of lower semi-continuous functions is lower semi-continuous. As the functions $d(p,\cdot)$ are lower semi-continuous for every $p\in M$, we conclude that $d^+_{\Sigma}$ is lower semi-continuous.

Suppose it is not upper semi-continuous at $q$, then there is $\epsilon>0$ and a sequence of points $(p_n,q_n)$, $q_n\to q$, $p_n\in \Sigma$ such that $d(p_n,q_n)=d^+_\Sigma(q_n)> d^+_\Sigma(q)+\epsilon$. Let $r\in I^+(q)$, then $q\in I^-(r)$, so for large $n$, $q_n\in I^-(r)$ and finally $p_n\in I^-(r)$. But $J^-(r)\cap \Sigma$ is compact thus we can assume that $p_n\to p\in\Sigma$. A standard limit curve argument applied to the maximizing causal curves $\sigma_n$ connecting $p_n$ to $q_n$ \cite[Prop.\ 3.27]{minguzzi18b} gives a  limit  causal curve $\sigma$ connecting $p$ to $q$, and by the upper semi-continuity of the length functional
\[
d^+_\Sigma(q)\ge d(p,q)\ge \ell(\sigma)\ge \limsup_{n} \ell(\sigma_n)\ge  d^+_\Sigma(q)+\epsilon,
\]
a contradiction.

For the last statement it is sufficient to prove that if $q_n\to q\in \Sigma$, $q_n\in I^+(\Sigma)$, then $d^+_\Sigma(q_n)\to 0$. If not, passing to a subsequence, $d^+_\Sigma(q_n)\ge \epsilon>0$ and, arguing as above, we get a limit causal curve $\sigma$ from some $p\in \Sigma$ to $q\in \Sigma$ with $\ell(\sigma)\ge \epsilon$, thus $p\ll q$, which contradicts the achronality of $\Sigma$.
\end{proof}

Since $\Sigma$ is acausal, $J^+(\Sigma)=\Sigma\cup I^+(\Sigma)$. Indeed, $M=I^-(\Sigma)\cup \Sigma \cup I^+(\Sigma)$ and a point of $J^+(\Sigma)$ can belong neither to $I^-(\Sigma)$, by achronality, nor to $\Sigma\backslash J^+(\Sigma)$.

\begin{lemma} \label{lemlimray}
Let $\Sigma$ be a compact Cauchy hypersurface and let $q_n\in I^+(\Sigma)$ be a sequence of points without accumulation points (i.e.\ escaping every compact set). Let $\sigma_n$ be a maximizing timelike curve from $p_n\in \Sigma$ to $q_n$ of length $d^+_\Sigma(q_n)$, parametrized with respect to the arc-length of a complete Riemannian metric $h$. Then there is a subsequence which converges uniformly on compact subsets to a future $\Sigma$-ray $\sigma$.
\end{lemma}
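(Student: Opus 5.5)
Since $\Sigma$ is compact, after passing to a subsequence we may assume $p_n\to p\in\Sigma$. The curves $\sigma_n$ are future directed causal curves parametrized by $h$-arc length, starting at $p_n$. Because $q_n$ leaves every compact set and $h$ is complete, their $h$-lengths diverge. We can therefore apply the standard limit curve theorem for globally hyperbolic spacetimes \cite[Thm.\ 2.53, Prop.\ 3.27]{minguzzi18b} to the sequence $\sigma_n$ with converging initial points $p_n\to p$. It gives a subsequence converging uniformly on compact subsets to a future inextendible continuous causal curve $\sigma$ starting from $p\in\Sigma$. Since $(M,g)$ is globally hyperbolic, hence non-totally imprisoning, the $h$-arc-length parametrization of the limit is defined on $[0,\infty)$, and inextendibility follows.

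The main point is to show that $\sigma$ is an $S$-ray, i.e.\ that for every $q=\sigma(t)$ the segment $\sigma|_{[0,t]}$ has length $d(\Sigma,q)=d^+_\Sigma(q)$. I would proceed in three steps.

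\textbf{Step 1: the truncated pieces are maximizing.} Fix $t$. For large $n$ the curve $\sigma_n$ is defined on $[0,t]$ and $\sigma_n(t)\to q$. Any initial segment of a curve realizing $d^+_\Sigma(q_n)$ also realizes $d^+_\Sigma(\sigma_n(t))$. Otherwise, concatenating a longer curve from $\Sigma$ to $\sigma_n(t)$ with the remaining part of $\sigma_n$ would give a curve from $\Sigma$ to $q_n$ longer than $d^+_\Sigma(q_n)$. Hence $\ell(\sigma_n|_{[0,t]})=d^+_\Sigma(\sigma_n(t))$.

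\textbf{Step 2: pass to the limit.} Upper semicontinuity of the length functional gives $\ell(\sigma|_{[0,t]})\ge\limsup_n \ell(\sigma_n|_{[0,t]})$. The right-hand side equals $\lim_n d^+_\Sigma(\sigma_n(t))$. This limit is $d^+_\Sigma(q)$ by the continuity of $d^+_\Sigma$ on $M$, where $d^+_\Sigma$ is extended by $0$ on $J^-(\Sigma)$ (Theorem \ref{thmcont}).

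\textbf{Step 3: reverse inequality.} Trivially $\ell(\sigma|_{[0,t]})\le d(p,q)\le d^+_\Sigma(q)$. Together with Step 2 this gives equality.

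It remains to check that $\sigma$ is really a curve starting from $\Sigma$ that goes into the future as claimed. Timelike character is not required by the definition of $\Sigma$-ray, which only asks for a causal curve. If desired, I would note the following: by maximality $\sigma$ is a geodesic on each segment. If $\sigma$ ever entered $I^+(\Sigma)$, the length formula would force it to be timelike there, because $d^+_\Sigma>0$ on $I^+(\Sigma)$ and a maximizing causal curve of positive length is a timelike geodesic. This last remark is optional.

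The only delicate point I expect is making sure the limit curve theorem yields an inextendible limit defined on all of $[0,\infty)$ rather than a curve escaping to infinity in finite parameter. This is where the divergence of the $h$-lengths, the completeness of $h$, and global hyperbolicity are used.
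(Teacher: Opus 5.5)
Your proposal is correct and follows essentially the same route as the paper. Both extract $p_n\to p$ by compactness of $\Sigma$, obtain an inextendible limit via the limit curve theorem and non-total imprisonment, and close the chain $d^+_\Sigma(r)\ge d(p,r)\ge \ell(\sigma\vert_{[0,t]})\ge \limsup_n \ell(\sigma_n\vert_{[0,t]})=\lim_n d^+_\Sigma(\sigma_n(t))=d^+_\Sigma(r)$. That chain rests on the $\Sigma$-maximality of initial segments, upper semi-continuity of length, and the continuity of $d^+_\Sigma$ from Theorem \ref{thmcont}; your timelike remark is, as you say, optional here and is handled separately in Theorem \ref{cpgx}.
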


\begin{proof}
By compactness of $\Sigma$ we can assume $p_n\to p\in \Sigma$. Since the endpoints escape every compact set and $(M,g)$ is non-totally imprisoning, the limit curve theorem \cite{minguzzi18b} provides a future inextendible causal curve $\sigma$ starting from $p$, and a subsequence converging to it uniformly on compact subsets. Let $r=\sigma(t)$, $t>0$, and $r_n=\sigma_n(t)$, so that $r_n\to r$. Since $\sigma_n$ is $\Sigma$-maximizing, $\ell(\sigma_n\vert_{[0,t]})=d^+_\Sigma(r_n)$. By the upper semi-continuity of the length functional and by Theorem \ref{thmcont}
\[
d^+_\Sigma(r)\ge d(p,r)\ge \ell(\sigma\vert_{[0,t]})\ge \limsup_n \ell(\sigma_n\vert_{[0,t]})=\lim_n d^+_\Sigma(r_n)=d^+_\Sigma(r),
\]
thus $\ell(\sigma\vert_{[0,t]})=d^+_\Sigma(r)$, that is, $\sigma$ is a $\Sigma$-ray.
\end{proof}

\begin{theorem} \label{cpgx}
Let $\Sigma$ be a compact Cauchy hypersurface. Then there is a future $\Sigma$-ray and every future $\Sigma$-ray is timelike.
\end{theorem}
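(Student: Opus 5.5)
Existence will follow from Lemma \ref{lemlimray}. Since $\Sigma$ is a compact Cauchy hypersurface, $M$ is non-compact: it is homeomorphic to $\mathbb{R}\times\Sigma$. Take a future inextendible timelike curve $\gamma$ starting on $\Sigma$ and choose points $q_n\in\gamma\cap I^+(\Sigma)$ escaping every compact set. This is possible because in a globally hyperbolic (hence non-totally imprisoning) spacetime an inextendible causal curve is not contained in any compact set. By Theorem \ref{thmcont}, for each $n$ there is a maximizing timelike geodesic $\sigma_n$ from some $p_n\in\Sigma$ to $q_n$ with length $d^+_\Sigma(q_n)$. Lemma \ref{lemlimray} then gives a subsequence converging to a future $\Sigma$-ray.

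For the second claim, let $\sigma$ be a future $\Sigma$-ray starting at $p\in\Sigma$. The plan is to show that $\sigma$ leaves $\Sigma$ chronologically, so that it has positive length, and then to conclude by the standard maximization argument.

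First, $\sigma$ cannot stay in $\Sigma$, which is acausal. So for small $t>0$ we have $\sigma(t)\in J^+(\Sigma)\setminus\Sigma=I^+(\Sigma)$, and Theorem \ref{thmcont} gives $d^+_\Sigma(\sigma(t))>0$. Hence every segment $\sigma|_{[0,t]}$ has positive length $d(\Sigma,\sigma(t))=d(p,\sigma(t))$. In particular it is a maximizing causal curve between $p$ and $\sigma(t)$ with $p\ll\sigma(t)$. A maximizing causal curve between chronologically related points is, after reparametrization, a timelike geodesic: such a curve is a causal geodesic, so it is either timelike everywhere or null everywhere, and a null geodesic has zero length. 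Applying this for every $t$ shows that $\sigma$ is a timelike geodesic on its whole domain.

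The main subtlety is the first step, namely excluding that $\sigma$ starts along a null direction (or along $\Sigma$) while still reaching $I^+(\Sigma)$. The case of a segment that is null for a while and then timelike is handled by the remark above: any maximizing causal segment with positive length is entirely timelike, because a broken curve with a null piece could be deformed into a longer one, contradicting maximality. Here I would cite the standard result \cite{minguzzi18b} that maximizing causal curves are unbroken geodesics.
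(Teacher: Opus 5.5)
Your proof is correct and follows the paper's route. Existence comes from Lemma \ref{lemlimray}, which the paper cites as an alternative to \cite{minguzzi18b}. Timelikeness follows because $\sigma(t)\in J^+(\Sigma)\setminus\Sigma=I^+(\Sigma)$ forces $\ell(\sigma\vert_{[0,t]})=d^+_\Sigma(\sigma(t))>0$, so each maximizing segment is a timelike geodesic.

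The only differences are cosmetic. The paper proves $d^+_\Sigma(r)>0$ by hand, using a timelike curve from $I^-(\Sigma)$ that crosses $\Sigma$, whereas you quote the positivity part of Theorem \ref{thmcont}. Also, acausality of $\Sigma$ gives $\sigma(t)\notin\Sigma$ for every $t>0$, not only for small $t$, which is what your ``every segment'' step actually uses.
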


\begin{proof}
By \cite[Thm.\ 2.65]{minguzzi18b} (or by Lemma \ref{lemlimray}) there is a future $\Sigma$-ray $\sigma$. Let $q\in \Sigma$ be its starting point, pick $p\ll q$ and $r\in  \sigma \backslash \{q\}$. The timelike curve from $p\in I^-(\Sigma)$ to $r\in J^+(\Sigma)\backslash \Sigma=I^+(\Sigma)$ intersects the Cauchy hypersurface $\Sigma$ at some point $s\ne r$, thus $d^+_\Sigma(r)\ge d(s,r)>0$. As the segment of $\sigma$ from $q$ to $r$ is maximizing and has positive length, it is a timelike geodesic, hence $\sigma$ cannot be lightlike.
\end{proof}

For the following result see also \cite[Lemma 4.2]{galloway97} \cite[Lemma 3.15]{galloway14b}.
We set $\Sigma_c:=(d_\Sigma^+)^{-1}(c)$ for $c> 0$, $\Sigma_{-c}:=(d_\Sigma^-)^{-1}(c)$ for $c>0$, and $\Sigma_0=\Sigma$.

\begin{theorem} \label{kvvb}
Let  $\Sigma$ be a compact Cauchy hypersurface and assume that $(M,g)$ is future timelike complete. Then for every $c>0$ the level set $\Sigma_c$ is a compact Cauchy hypersurface, $d^+_\Sigma$ is strictly increasing over causal curves contained in $J^+(\Sigma)$, and for $c\ge 0$ the set $J^+(\Sigma)\cap (d_\Sigma^+)^{-1}([0,c])=J^+(\Sigma)\cap J^-(\Sigma_c)$ is compact. Moreover, $(d_{\Sigma_a}^+)^{-1}(b)=\Sigma_{a+b}$ for $a,b>0$. The time dual statements hold if $(M,g)$ is past timelike complete. Consequently, if $(M,g)$ is timelike complete, the signed distance function $d_\Sigma: M \to \mathbb{R}$ equal to $d_\Sigma^+$ on $J^+(\Sigma)$, and $-d_\Sigma^-$ on $J^-(\Sigma)$, is a Cauchy time function whose level sets $\Sigma_c$, $c\in \mathbb{R}$, are compact Cauchy hypersurfaces.
\end{theorem}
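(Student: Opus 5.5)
I will first prove that $d^+_\Sigma$ is strictly increasing along causal curves in $J^+(\Sigma)$. Everything else then follows from the structure of a Cauchy time-like function on $J^+(\Sigma)$, with future timelike completeness used to make every level set reached by every inextendible causal curve.

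\emph{Monotonicity.} Let $x<y$ be two points of a causal curve $\gamma$ in $J^+(\Sigma)$.
\begin{itemize}
\item If $x\in I^+(\Sigma)$, Theorem \ref{thmcont} gives $p\in\Sigma$ and a maximizing timelike geodesic from $p$ to $x$ of length $d^+_\Sigma(x)>0$. Concatenating it with $\gamma|_{[x,y]}$ yields a causal curve that is not a null pregeodesic, since it contains a timelike segment. By the reverse triangle inequality and the fact that a corner strictly increases length (or that $x\ll y$ when $\gamma$ is timelike), $d^+_\Sigma(y)\ge d(p,y)>d(p,x)=d^+_\Sigma(x)$.
\item If $x\in\Sigma$, then $y\in I^+(\Sigma)$ by acausality of $\Sigma$, so $d^+_\Sigma(y)>0=d^+_\Sigma(x)$.
\end{itemize}

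\emph{Compactness and the Cauchy property.} Next I show that $K_c:=J^+(\Sigma)\cap J^-(\Sigma_c)$ coincides with $J^+(\Sigma)\cap(d^+_\Sigma)^{-1}([0,c])$ and is compact, and that $\Sigma_c$ is Cauchy.
\begin{itemize}
\item \emph{Unboundedness.} Every future inextendible causal curve $\gamma$ from a point of $\Sigma$ has $d^+_\Sigma\to+\infty$ along it. If instead $d^+_\Sigma\le C$ on $\gamma$, pick points $q_n\in\gamma$ escaping compact sets (non-imprisonment). By Lemma \ref{lemlimray} the maximizing segments from $\Sigma$ to $q_n$, which have length $\le C$, converge to a future $\Sigma$-ray $\sigma$. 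By Theorem \ref{cpgx} the ray $\sigma$ is timelike, hence a future inextendible timelike geodesic, hence complete by assumption, so it has infinite length. By upper semicontinuity of length, finite arcs of $\sigma$ of length greater than $C$ are limits of arcs of the $\sigma_n$ whose length is eventually close to or exceeds $C$, which is a contradiction. I expect this step to be the main obstacle, since one must be careful with the parametrizations in this limit.
\item \emph{$\Sigma_c$ is Cauchy.} By continuity of $d^+_\Sigma$ (Theorem \ref{thmcont}) and strict monotonicity, every inextendible causal curve, which crosses $\Sigma$, meets $\Sigma_c$ exactly once. Thus $\Sigma_c$ is a closed set met exactly once by every inextendible causal curve, hence an acausal Cauchy hypersurface.
\item \emph{Compactness of $\Sigma_c$.} $\Sigma_c$ is homeomorphic to $\Sigma$ via the flow of a timelike vector field, or alternatively because any two Cauchy hypersurfaces are homeomorphic, so it is compact.
\item \emph{Compactness of $K_c$.} In a globally hyperbolic spacetime $J^+(\Sigma)\cap J^-(\Sigma_c)$ is compact for compact $\Sigma,\Sigma_c$. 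The equality with $J^+(\Sigma)\cap(d^+_\Sigma)^{-1}([0,c])$ follows from monotonicity: a point with $d^+_\Sigma\le c$ has its future causal curves reaching $\Sigma_c$, and conversely.
\end{itemize}

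\emph{Additivity: $(d^+_{\Sigma_a})^{-1}(b)=\Sigma_{a+b}$.} I will show that $d^+_\Sigma=a+d^+_{\Sigma_a}$ on $J^+(\Sigma_a)$.
\begin{itemize}
\item \emph{($\ge$).} Take a maximizing geodesic from $\Sigma_a$ to $q$, and extend it to the past by a segment from $\Sigma$ to its initial point of length $a$ (Theorem \ref{thmcont}).
\item \emph{($\le$).} The maximizing curve from $\Sigma$ to $q$ crosses $\Sigma_a$ at some point $r$ with $d^+_\Sigma(r)=a$. Its initial portion has length at most $a$, and its final portion has length at most $d^+_{\Sigma_a}(q)$.
\end{itemize}

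\emph{Time duals and the signed distance.} The time dual statements follow by reversing the time orientation. For the signed function $d_\Sigma$: it is continuous by Theorem \ref{thmcont} and its dual, since both pieces vanish on $\Sigma$. It is strictly increasing along causal curves by monotonicity of each piece and acausality of $\Sigma$. Its level sets are Cauchy, so it is a Cauchy time function.
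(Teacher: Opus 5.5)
Your plan is the paper's argument with the steps reordered. You get strict monotonicity first from the corner argument and then the Cauchy property from the intermediate value theorem. The paper instead proves acausality of $\Sigma_c$ by the same corner argument, then the Cauchy property by contradiction, then monotonicity from acausality. Your additivity step, via the identity $d^+_\Sigma=a+d^+_{\Sigma_a}$ on $J^+(\Sigma_a)$, is a clean substitute for the paper's ``a Cauchy hypersurface contained in an achronal set equals it''. To get $\Sigma_{a+b}\subset(d^+_{\Sigma_a})^{-1}(b)$ from it, add that a point with $d^+_\Sigma=a+b>a$ lies in $J^+(\Sigma_a)$, which follows from your monotonicity.

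The step that fails as written is the one you flagged. The problem is not the parametrization but the direction of the semicontinuity. Upper semicontinuity of length gives $\ell(\sigma|_{[0,t]})\ge\limsup_n\ell(\sigma_n|_{[0,t]})$, an \emph{upper} bound on the lengths of the approximating arcs. It cannot show that they eventually exceed $C$: curves of length $\le C$, even of length $0$ (lightlike zigzags), can converge to a curve of large length. What excludes your scenario is the $\Sigma$-maximality of the $\sigma_n$ together with the continuity of $d^+_\Sigma$ (Theorem \ref{thmcont}).

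Choose $t$ with $\ell(\sigma|_{[0,t]})>C$, which is possible since $\sigma$ has infinite length. The $h$-lengths of the $\sigma_n$ diverge, so $\sigma_n(t)$ is defined for large $n$ and $\sigma_n(t)\to\sigma(t)$. Hence
\[
C\ge d^+_\Sigma(q_n)\ge d^+_\Sigma(\sigma_n(t))\longrightarrow d^+_\Sigma(\sigma(t))\ge \ell(\sigma|_{[0,t]})>C ,
\]
a contradiction. This is exactly how the paper closes the argument, and the same mechanism yields the length convergence inside the proof of Lemma \ref{lemlimray}. Arcs of length merely ``close to $C$'' would give no contradiction; the strict inequality above is what is needed. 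With this repair the proposal is complete.
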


The theorem does not hold without the future timelike completeness condition. Identify $x=0$ and $x=-1$ in 1+1 Minkowski spacetime, let $\Sigma$ be the slice $t=0$ and remove $J^+(q)$ from the spacetime where $q$ is a point in $I^+(\Sigma)$. The resulting spacetime is still globally hyperbolic with Cauchy hypersurface $\Sigma$, but for large $c$ the level sets $\Sigma_c$ are not Cauchy.

\begin{proof}
By Theorem \ref{thmcont} (and its time dual) $d_\Sigma$ is continuous.
Let $c>0$. Observe that $\Sigma_c$ is acausal. In fact, suppose there is a causal curve $\sigma$ connecting  $q_1,q_2\in \Sigma_c$, $q_1\ne q_2$. We know that there is $p\in \Sigma$ and a timelike geodesic connecting $p$ and $q_1$ of length $c$. Concatenating it with $\sigma$ gives a causal curve from $p$ to $q_2$ which is either longer than $c$, if $\sigma$ has positive length, or has a corner, and so can be deformed into a longer curve \cite{minguzzi18b}. This contradicts $q_2\in \Sigma_c$.

Suppose $c>0$ is such that $\Sigma_c$ is not a Cauchy hypersurface. Then there is an inextendible causal curve $\gamma$ which does not intersect $\Sigma_c$. It intersects $\Sigma$ at some point $u$. Let us parametrize $\gamma$ with respect to arc-length of a complete Riemannian metric so that it is future-directed and $\gamma(0)=u$. Since $d_\Sigma^+$ is continuous, we have  $d_\Sigma^+(\gamma(t))<c$ for every $t \ge 0$. Let $q_n=\gamma(n)$. By Lemma \ref{lemlimray} the  maximizing timelike curves $\sigma_n$ from $\Sigma$ to $q_n$ converge, up to a subsequence, to a future $\Sigma$-ray $\sigma$ starting from some $p\in \Sigma$. By Theorem \ref{cpgx} $\sigma$ is timelike, and by future timelike completeness it is defined, with proper time parametrization, on $[0,+\infty)$. Thus we can find some $t$ such that $d^+_\Sigma(\sigma(t))=\ell(\sigma\vert_{[0,t]})>c$, thus by the continuity of $d^+_\Sigma$, for sufficiently large $n$, $d_\Sigma^+(q_n)=\ell(\sigma_n)\ge  d^+_\Sigma(\sigma_n(t))> c$, a contradiction which proves that  $\Sigma_c$ is a Cauchy hypersurface.

If $p\le q$, $p,q\in J^+(\Sigma)$, as we have $d^+_\Sigma(q)\ge d^+_\Sigma(p)+d(p,q)$, the function $d^+_\Sigma$ is non-decreasing over causal curves on $J^+(\Sigma)$. It must be strictly increasing because if $p<q$, $p,q\in J^+(\Sigma)$ and $d^+_\Sigma(p)=d^+_\Sigma(q)=c$ then they belong to the same  Cauchy hypersurface $\Sigma_c$, $c\ge 0$, which contradicts its acausality. Under past timelike completeness, $d_\Sigma$ is similarly strictly increasing on $J^-(\Sigma)$, so under timelike completeness it is a time function.
A time function whose level sets are Cauchy is a Cauchy time function. The Cauchy hypersurfaces $\Sigma_c$ are homeomorphic to $\Sigma$, hence compact, and  $J^+(\Sigma)\cap (d_\Sigma^+)^{-1}([0,c])$$=J^+(\Sigma)\cap J^-(\Sigma_c)$ is compact (note that $d^+_\Sigma$ vanishes on $J^-(\Sigma)$, so the intersection with $J^+(\Sigma)$ is needed).

Let $r\in \Sigma_{a+b}$, $a,b>0$. If there is $q\in \Sigma_a$ and a causal curve $\sigma$ connecting $q$ to $r$ of length $\ell(\sigma)>b$, since we can find a causal curve of length $a$ from a point $p\in \Sigma$ to $q$, we would have a causal curve connecting $\Sigma$ to $r$ of length larger than $a+b$, contradicting the definition of $\Sigma_{a+b}$, thus $d_{\Sigma_a}^+(r)\le b$. Now, consider the causal curve $\gamma$ connecting $\Sigma$ to $r$ of length $a+b$. It intersects $\Sigma$, $\Sigma_a$ at points $p,q$. Clearly $d(p,q)=d^+_\Sigma(q)=a$ otherwise we could find a longer curve contradicting the definition of $\Sigma_{a+b}$, and since
$\gamma$ is maximizing, $d(p,q)+d(q,r)=d(p,r)$, which reads $a+d(q,r)=a+b$. This shows that $d_{\Sigma_a}^+(r)\ge b$, hence $d_{\Sigma_a}^+(r)=b$. We have proved $\Sigma_{a+b}\subset (d_{\Sigma_a}^+)^{-1}(b)$. The former set is a Cauchy hypersurface and the latter is acausal (by the acausality argument above applied to $\Sigma_a$), and a Cauchy hypersurface $S$ contained in an achronal set $S'$ coincides with it: if $x\in S'\backslash S$, an inextendible timelike curve through $x$ meets $S$ at a point $y\ne x$ of $S'$, contradicting the achronality of $S'$.
\end{proof}

\begin{remark} \label{remcomplete}
In the proof of Theorem \ref{kvvb}, future timelike completeness is used only to ensure that every future $\Sigma$-ray, which is a timelike geodesic by Theorem \ref{cpgx}, has infinite length. Conversely, if $\Sigma_c$ is a Cauchy hypersurface for every $c>0$, then every future $\Sigma$-ray $\sigma$ meets every $\Sigma_c$ and, since $d^+_\Sigma(\sigma(t))=\ell(\sigma\vert_{[0,t]})$, it has infinite length, namely it is future complete. Thus the level sets $\Sigma_c$, $c>0$, are Cauchy hypersurfaces if and only if the future $\Sigma$-rays are future complete. The proof also shows that this happens if and only if $d^+_\Sigma\to +\infty$ along every future inextendible causal curve, so that, by the Claim in the proof of Theorem \ref{xxxt}, which does not use any completeness assumption, the property does not depend on the choice of the compact Cauchy hypersurface $\Sigma$. With the exception of the non-emptiness of the level sets $B_c$, $c<0$, in Proposition \ref{cllp}, and of the splitting conclusions of Theorem \ref{thmcone} and Propositions \ref{propline} and \ref{propequiv}, where past timelike completeness is used to guarantee that the constructed timelike line is complete, the results of this work which assume future (resp.\ past) timelike completeness use it only through this property (resp.\ its time dual), so they hold under this weaker assumption.
\end{remark}

\begin{proposition} \label{nner}
For $a\le 0\le b$ every causal curve from $\Sigma_a$ to $\Sigma_b$ has length at most $-a+b$, thus $d(\Sigma_a,\Sigma_b)\le -a+b$. No completeness assumption is needed.
\end{proposition}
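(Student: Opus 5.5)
Let $\gamma$ be a causal curve from $x\in\Sigma_a$ to $y\in\Sigma_b$, with $a\le 0\le b$. The plan is to split $\gamma$ at the Cauchy hypersurface $\Sigma$ and bound each piece by the defining property of the distance functions $d^\pm_\Sigma$. No completeness is needed: we never use that $\Sigma_a$ or $\Sigma_b$ are Cauchy, only that $d^-_\Sigma(x)=-a$ and $d^+_\Sigma(y)=b$ (with the convention $\Sigma_0=\Sigma$, where these values are $0$).

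First I would check that $\gamma$ meets $\Sigma$. By definition $x\in J^-(\Sigma)$, since $\Sigma_a$ with $a<0$ is a level set of $d^-_\Sigma$, which is positive exactly on $I^-(\Sigma)$. Likewise $y\in J^+(\Sigma)$. Extend $\gamma$ to an inextendible causal curve; it meets $\Sigma$ at exactly one point $s$. I claim $s$ lies on the segment $\gamma$ itself.
\begin{itemize}
\item If $s$ came strictly after $y$, then $y\le s$ with $y\in J^+(\Sigma)$, so $s\in J^+(y)\subset J^+(\Sigma)$. Since $y\in\Sigma\cup I^+(\Sigma)$, the point $s\in\Sigma$ would be causally related to a point of $\Sigma$ or lie in $I^+(\Sigma)$. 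Either way acausality of $\Sigma$ is contradicted, unless $s=y$.
\item The case where $s$ comes strictly before $x$ is symmetric.
\end{itemize}
So $\gamma=\gamma_1*\gamma_2$, with $\gamma_1$ from $x$ to $s$ and $\gamma_2$ from $s$ to $y$.

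Next, I bound each piece:
\[
\ell(\gamma_1)\le d(x,s)\le d(x,\Sigma)=d^-_\Sigma(x)=-a,
\qquad
\ell(\gamma_2)\le d(s,y)\le d(\Sigma,y)=d^+_\Sigma(y)=b.
\]
When $a=0$ we have $x\in\Sigma$, and achronality gives $d(x,s)=0$; the case $b=0$ is similar. Adding the two bounds gives $\ell(\gamma)\le -a+b$.

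Finally, the sup definition $d(\Sigma_a,\Sigma_b)=\sup_{p\in\Sigma_a,\,q\in\Sigma_b} d(p,q)$ gives the bound on the distance. Each $d(p,q)$ is a supremum of lengths of causal curves from $p$ to $q$, or is $0$ if there are none, so $d(\Sigma_a,\Sigma_b)\le -a+b$.

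The only delicate point is the first step, showing that $\gamma$ actually crosses $\Sigma$ between its endpoints. It is handled by the observations $J^+(\Sigma)=\Sigma\cup I^+(\Sigma)$ and the acausality of $\Sigma$ noted after Theorem \ref{thmcont}. Everything else is immediate from the definitions.
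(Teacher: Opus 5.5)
Your proof is correct and is essentially the paper's argument: split the curve at its crossing point $q\in\Sigma$ and bound the two pieces by $d^-_\Sigma(p)=-a$ and $d^+_\Sigma(r)=b$. The paper simply asserts that the curve meets $\Sigma$, whereas you justify that the crossing lies between the endpoints using $J^+(\Sigma)=\Sigma\cup I^+(\Sigma)$ and the acausality of $\Sigma$.
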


\begin{proof}
Let $\sigma$ be a causal curve from $p\in \Sigma_a$ to $r\in \Sigma_b$. It meets the Cauchy hypersurface $\Sigma$ at some point $q$, thus $\ell(\sigma)\le d(p,q)+d(q,r)\le d^-_\Sigma(p)+d^+_\Sigma(r)=-a+b$.
\end{proof}

Similarly, as $d^+_\Sigma(r)\ge d^+_\Sigma(q)+d(q,r)$ for $q\le r$ in $J^+(\Sigma)$, for $0\le b'\le b$ every causal curve from $\Sigma_{b'}$ to $\Sigma_b$ has length at most $b-b'$, and dually for $a\le a'\le 0$ every causal curve from $\Sigma_a$ to $\Sigma_{a'}$ has length at most $a'-a$. Again, no completeness assumption is needed. This will be used repeatedly.

\section{The Cauchy Busemann function} \label{secbuse}

In this section and in Sections \ref{secprod}--\ref{sechoro}, $(M,g)$ is assumed to be future timelike complete, so that, by Theorem \ref{kvvb}, the sets $\Sigma_\tau$, $\tau>0$, are compact Cauchy hypersurfaces. Theorem \ref{cbdc}, the properties of the Busemann function $b^\sigma$ of a ray, the general facts on rushing functions and calibrated curves at the beginning of Section \ref{secreg}, Proposition \ref{propeik}, Lemma \ref{lemdalem} and Theorem \ref{thmagh} do not involve $b$ and do not need this assumption. The time dual results, concerning the past Cauchy Busemann function $b^-$ introduced in Section \ref{secsplit}, require past timelike completeness.

Let  $\Sigma_\tau=(d_\Sigma^+)^{-1}(\tau)$, $\tau\ge 0$. Observe that $d(\Sigma,\Sigma_\tau)=\tau$.

\begin{proposition} \label{kcqq}
For any proper time parametrized future $\Sigma$-ray, $\sigma:[0,+\infty)\to M$, we have    for $0\le t \le \tau$, $d(\sigma(t),\Sigma_\tau)=\tau-t$, in particular $d(\sigma(0),\Sigma_\tau)=\tau$.
\end{proposition}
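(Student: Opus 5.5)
We prove the two inequalities $d(\sigma(t),\Sigma_\tau)\ge \tau-t$ and $d(\sigma(t),\Sigma_\tau)\le \tau-t$ separately, using only the ray property of $\sigma$ and the observation after Proposition \ref{nner}.

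\emph{Lower bound.} Since $\sigma$ is a future $\Sigma$-ray parametrized by proper time, we have $d^+_\Sigma(\sigma(s))=\ell(\sigma\vert_{[0,s]})=s$ for every $s\ge 0$. Hence $\sigma(s)\in\Sigma_s$, and in particular $\sigma(\tau)\in \Sigma_\tau$. The segment of $\sigma$ from $\sigma(t)$ to $\sigma(\tau)$ is a causal curve of length $\tau-t$. Therefore
\[
d(\sigma(t),\Sigma_\tau)\ge d(\sigma(t),\sigma(\tau))\ge \tau-t .
\]
(Equality in the middle step actually holds, since subsegments of $\Sigma$-maximizing curves are maximizing.)

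\emph{Upper bound.} Let $r\in \Sigma_\tau$ and suppose first that $\sigma(t)\le r$; otherwise $d(\sigma(t),r)=0\le \tau-t$ and there is nothing to prove. Both points lie in $J^+(\Sigma)$, so the reverse triangle inequality gives
\[
d^+_\Sigma(r)\ge d^+_\Sigma(\sigma(t))+d(\sigma(t),r).
\]
Substituting $d^+_\Sigma(r)=\tau$ and $d^+_\Sigma(\sigma(t))=t$ yields $d(\sigma(t),r)\le \tau-t$. This is exactly the remark following Proposition \ref{nner}: causal curves from $\Sigma_{t}$ to $\Sigma_\tau$ have length at most $\tau-t$. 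Taking the supremum over $r\in\Sigma_\tau$ gives $d(\sigma(t),\Sigma_\tau)\le\tau-t$.

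Combining the two bounds gives $d(\sigma(t),\Sigma_\tau)=\tau-t$, and the case $t=0$ gives $d(\sigma(0),\Sigma_\tau)=\tau$. The only point requiring care is the identification $d^+_\Sigma(\sigma(s))=s$. This follows directly from the definition of an $\Sigma$-ray together with the proper time parametrization, which is available by Theorem \ref{cpgx} and future timelike completeness.
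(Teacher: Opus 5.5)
Your proof is correct and follows essentially the paper's argument. The lower bound comes from $\sigma(\tau)\in\Sigma_\tau$ together with the segment $\sigma\vert_{[t,\tau]}$, and the upper bound from concatenating with $\sigma\vert_{[0,t]}$, which is the reverse triangle inequality for $d^+_\Sigma$; the only difference is that you bound $d(\sigma(t),r)$ for each $r\in\Sigma_\tau$ and take the supremum, instead of invoking a maximizing geodesic from $\sigma(t)$ to $\Sigma_\tau$ and arguing by contradiction.
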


For a generic point $p\in \Sigma$ we have just $d(p,\Sigma_\tau)\le \tau$, thus the starting points of $\Sigma$-rays are rather special.
\begin{proof}
Let  $\sigma$ be a $\Sigma$-ray, observe that $\sigma_\tau:=\sigma\vert_{[0,\tau]}$ has endpoint in $\Sigma_\tau$, as  $\ell(\sigma_\tau)=d(\Sigma, \sigma(\tau))=\tau$.
The Lorentzian distance between $\sigma(t)$ and $\Sigma_\tau$ is attained by a geodesic  $\gamma$ connecting $\sigma(t)$ to a point $q\in \Sigma_\tau$. If it were $\ell(\gamma)> \tau-t$, then the causal curve obtained by following $\sigma\vert_{[0,t]}$ and then $\gamma$ would give $d(\Sigma,q)>\tau$, a contradiction since $q\in \Sigma_\tau$. But it cannot be $\ell(\gamma)<\tau-t$ as otherwise $\gamma$ would not be maximizing the Lorentzian distance between $\sigma(t)$ and $\Sigma_\tau$ as $\sigma\vert_{[t,\tau]}$ has length $\tau-t$.
\end{proof}

Consider for $\tau>0$ the Busemann type function
\[
b_\tau(p)=\tau-d(p,\Sigma_\tau), \qquad p\in M.
\]
Observe that for $c<\tau$, $b_\tau^{-1}(c)=(d^-_{\Sigma_\tau})^{-1}(\tau-c)$, thus these level sets are acausal and, if $(M,g)$ is also past timelike complete, they are Cauchy hypersurfaces by the time dual of Theorem \ref{kvvb} (without past timelike completeness they might even be empty), while $b_\tau=\tau$ on the whole set $J^+(\Sigma_\tau)$.

Notice that for $p\in J^+(\Sigma)$ and sufficiently large $\tau$, $d(\Sigma,p)+d(p,\Sigma_\tau)\le d(\Sigma,\Sigma_\tau)=\tau$ thus
\[
0\le d(\Sigma,p) \le \tau-d(p,\Sigma_\tau)=b_\tau(p) .
\]
If, instead, $p\in J^-(\Sigma)$, every causal curve from $p$ to $\Sigma_\tau$ meets $\Sigma$, thus $d(p,\Sigma_\tau)\le d(p,\Sigma)+d(\Sigma,\Sigma_\tau)=d^-_\Sigma(p)+\tau$, namely
\[
 -d^-_\Sigma(p) \le \tau-d(p,\Sigma_\tau)=b_\tau(p) ,
\]
which proves that for each $p$, $b_\tau(p)$ is bounded from below uniformly in $\tau$ (past timelike completeness is not needed for this).

If $p\le q$ and $q\in J^-(\Sigma_\tau)$ then $d(p,q)+d(q,\Sigma_\tau)\le d(p,\Sigma_\tau)$, thus
\begin{equation} \label{rushtau}
b_\tau(p)+d(p,q)\le b_\tau(q), \qquad p\le q, \ q \in J^-(\Sigma_\tau),
\end{equation}
namely $b_\tau$ is rushing \cite{minguzzi18b} on $J^-(\Sigma_\tau)$. It is not rushing on $M$ since $b_\tau$ is constant on $J^+(\Sigma_\tau)$.

Since for $\tau'>\tau$ every causal curve from $\Sigma_\tau$ to $\Sigma_{\tau'}$ has Lorentzian length no larger than $\tau'-\tau$ (see the end of Section \ref{secdist}), we have $d(p,\Sigma_{\tau'})\le  d(p,\Sigma_{\tau}) +\tau'-\tau$, which implies
\[
b_{\tau'}(p)=\tau'-d(p,\Sigma_{\tau'})\ge \tau-d(p,\Sigma_{\tau})=b_\tau(p).
\]
The family of functions is non-decreasing in $\tau$, while in the  study of the Lorentzian splitting theorem the analogous family is non-increasing.
For each $p\in M$ we can then define the {\em Cauchy Busemann function} (dependent on $\Sigma$)
\[
b(p)=\lim_{\tau\to \infty} b_\tau(p) =\sup_{\tau>0} b_\tau(p).
\]
The lower bounds satisfied by $b_\tau(p)$ prove that
\begin{equation} \label{cmlr}
b\ge d^+_\Sigma \quad \textrm{on } J^+(\Sigma), \qquad b\ge -d^-_\Sigma \quad \textrm{on } J^-(\Sigma),
\end{equation}
 and $b>-\infty$ over $M$. Thus $b: M \to (-\infty,+\infty]$.
 Observe that $b_\tau$ is continuous because  $d(\cdot,\Sigma_\tau)$ is continuous (Thm.\ \ref{thmcont}), thus $b$, being the supremum of a family of  continuous functions, is lower semi-continuous.

For $p\le q$, for sufficiently large $\tau$ we have $q\in J^-(\Sigma_\tau)$, thus passing to the limit in (\ref{rushtau}) we  get (note that $d(p,q)$ is finite but the other terms can be infinite in principle)
\begin{equation} \label{fcpb}
b(p)+d(p,q)\le b(q),
\end{equation}
namely $b$ is rushing.

\begin{remark} \label{cort}
The function $b_\tau$ depends on the choice of $\Sigma$, so let us write it  $b_\tau^\Sigma$ for the moment. If we choose $\Sigma'=\Sigma_s$, $s>0$, then by Thm.\ \ref{kvvb}, $\Sigma'_{\tau'}=\Sigma_{\tau}$ where $\tau=\tau'+s$. Thus
\[
b^{\Sigma'}_{\tau'}=\tau'-d(p,\Sigma'_{\tau'})=\tau-s-d(p,\Sigma_{\tau})=b_\tau^\Sigma-s
\]
and taking the limit
$
b^{\Sigma'}=b^{\Sigma}-s.
$
This means that the level sets of $b$ corresponding to non-negative values can be seen as zero level sets of other Cauchy Busemann functions, but this fact does not extend to the negative values.
\end{remark}

It is important to prove that $b$ is not identically $+\infty$. This fact follows from the existence of future $\Sigma$-rays.
\begin{proposition} \label{ncho}
Let $\sigma$ be any future proper time-parametrized $\Sigma$-ray. We have $b(\sigma(t))=t$.
\end{proposition}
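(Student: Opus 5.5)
By Proposition \ref{kcqq}, for every $\tau\ge t$ we have $d(\sigma(t),\Sigma_\tau)=\tau-t$. Hence $b_\tau(\sigma(t))=\tau-(\tau-t)=t$ for every $\tau\ge t$, and so $b(\sigma(t))=\lim_{\tau\to\infty}b_\tau(\sigma(t))=t$.

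This limit is legitimate because the limit in $\tau$ exists by the monotonicity established above. Moreover, the monotone family $b_\tau(\sigma(t))$ is eventually constant, equal to $t$, so its supremum over all $\tau>0$ also equals $t$.

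For completeness, the ingredients behind Proposition \ref{kcqq} are the following. First, $\sigma(\tau)\in\Sigma_\tau$, since $\ell(\sigma|_{[0,\tau]})=d(\Sigma,\sigma(\tau))=\tau$. This gives the lower bound $d(\sigma(t),\Sigma_\tau)\ge\ell(\sigma|_{[t,\tau]})=\tau-t$. The upper bound $\tau-t$ comes from the ray property: concatenating $\sigma|_{[0,t]}$ with a curve from $\sigma(t)$ to $\Sigma_\tau$ longer than $\tau-t$ would produce a point of $\Sigma_\tau$ at distance greater than $\tau$ from $\Sigma$, which is impossible.

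No step here is a real obstacle. The only point requiring care is that $\sigma$ is defined on all of $[0,+\infty)$, so that $\sigma(\tau)$ exists for every large $\tau$. This is guaranteed because $\sigma$ is timelike by Theorem \ref{cpgx}, and it is then complete by future timelike completeness (equivalently, by Remark \ref{remcomplete}). In particular $b$ is finite along every $\Sigma$-ray, so $b\not\equiv+\infty$.
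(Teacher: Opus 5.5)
Your proof is correct and is the paper's argument: Proposition \ref{kcqq} gives $d(\sigma(t),\Sigma_\tau)=\tau-t$, hence $b_\tau(\sigma(t))=t$ for all $\tau\ge t$, and passing to the limit gives $b(\sigma(t))=t$. The extra remarks on monotonicity and on the completeness of $\sigma$ are accurate but not needed beyond what the paper already assumes.
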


\begin{proof}
We have proved in Prop.\ \ref{kcqq} that $d(\sigma(t),\Sigma_\tau)=\tau-t$ which reads $b_\tau(\sigma(t))=t$ for $\tau\ge t$, and so $b(\sigma(t))=t$.
\end{proof}

The following fact will be used in Corollary \ref{cormintip} and Remark \ref{rem41}.

\begin{theorem} \label{cbdc}
Let $\Sigma$ be a compact Cauchy hypersurface. Every TIP contains another TIP of the form $I^-(\sigma)$ where $\sigma$ is a future $\Sigma$-ray (hence timelike).
\end{theorem}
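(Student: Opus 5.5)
A TIP is a set of the form $P=I^-(\gamma)$ with $\gamma$ a future inextendible causal curve. The plan is to run the limit argument of Lemma \ref{lemlimray} along $\gamma$ and then check that the resulting $\Sigma$-ray stays in $P$. No completeness assumption should be needed, since Lemma \ref{lemlimray} and Theorem \ref{cpgx} use none.

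First I parametrize $\gamma$ by the arc-length of a complete Riemannian metric, so it is future inextendible. By global hyperbolicity (non-total imprisonment) the points $q_n:=\gamma(n)$ eventually escape every compact set. Since $\gamma$ meets the Cauchy hypersurface $\Sigma$ and $J^+(\Sigma)=\Sigma\cup I^+(\Sigma)$, after discarding finitely many indices I may take $q_n\in I^+(\Sigma)$. Theorem \ref{thmcont} gives, for each $n$, a maximizing timelike curve $\sigma_n$ from some $p_n\in\Sigma$ to $q_n$ of length $d^+_\Sigma(q_n)$. Lemma \ref{lemlimray} then yields a subsequence converging uniformly on compact subsets to a future $\Sigma$-ray $\sigma$. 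By Theorem \ref{cpgx}, $\sigma$ is timelike.

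The remaining task is to show $I^-(\sigma)\subset I^-(\gamma)$; this is the step I expect to need care. It suffices to show $\sigma(t)\in \overline{I^-(\gamma)}$ for every $t$. Indeed, if $x\ll\sigma(t)$, pick $t'>t$. Then $\sigma(t)\ll\sigma(t')$ because $\sigma$ is timelike, and $\sigma(t')\in\overline{I^-(\gamma)}$. Since $I^+(x)$ is an open neighbourhood of $\sigma(t')$, it contains some point $y\in I^-(\gamma)$. Hence $x\ll y\ll\gamma(s)$ for some $s$, and so $x\in I^-(\gamma)$.

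To get the closure statement, fix $t$. For each large $n$, the point $\sigma_n(t)$ lies on a causal curve ending at $q_n=\gamma(n)$, so $\sigma_n(t)\in J^-(\gamma)$. Moreover $\sigma_n(t)\to\sigma(t)$, so $\sigma(t)\in\overline{J^-(\gamma)}$. Finally, $J^-(\gamma)\subset\overline{I^-(\gamma)}$, because every point of $J^-(\gamma(s))$ is a limit of points of $I^-(\gamma(s+1))$ by the standard closure property $\overline{I^-(q)}\supset J^-(q)$. This gives $\sigma(t)\in\overline{I^-(\gamma)}$, as needed.

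The main obstacle is this closure step: the limit points of the $\sigma_n$ are a priori only in the causal past of $\gamma$. The fix is that $\sigma$ is timelike, which lets me upgrade closure membership of $\sigma(t')$ to chronological membership for $x\ll\sigma(t)$, exactly as in the second paragraph.

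Thus $I^-(\sigma)$ is a TIP contained in $P=I^-(\gamma)$, with $\sigma$ a future $\Sigma$-ray, which is timelike by Theorem \ref{cpgx}.
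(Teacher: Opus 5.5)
Your proposal is correct and follows essentially the paper's argument: the same sequence $q_n=\gamma(n)$, the same $\Sigma$-maximizers $\sigma_n$, Lemma \ref{lemlimray}, and the upgrade $I^-(\overline{I^-(\gamma)})=I^-(\gamma)$. The only differences are small. The paper takes $\gamma$ timelike via the Geroch--Kronheimer--Penrose characterization, so $\sigma_n\subset J^-(q_n)\subset I^-(\gamma)$ directly, whereas you allow $\gamma$ causal and pass through $J^-(\gamma)\subset\overline{I^-(\gamma)}$. Also, the escape of $\gamma(n)$ from compact sets really comes from global hyperbolicity (e.g.\ $t(\gamma(n))\to+\infty$ for a Cauchy time function $t$), not from non-total imprisonment alone.
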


\begin{proof}
TIPs can be characterized as being sets of the form $I^-(\tilde \gamma)$ where $\tilde \gamma$ is a future inextendible timelike curve \cite[Thm.\ 2.1,2.3]{geroch72}. Since $\tilde \gamma$ intersects the Cauchy hypersurface $\Sigma$, denoting with $\gamma$ the portion in $I^+(\Sigma)$ we have $I^-(\tilde \gamma)=I^{-}(\gamma)$. Let $\gamma$ be parametrized with respect to the arc-length of a complete Riemannian metric so that it is future-directed and let  $q_n=\gamma(n)$. Let $\sigma_n$ be a maximizing timelike curve connecting $p_n\in \Sigma$ to $q_n$ of length $d^+_\Sigma(q_n)$. Observe that $\sigma_n\subset J^-(q_n)\subset I^-(\gamma)$. By Lemma \ref{lemlimray} a subsequence converges to a future $\Sigma$-ray $\sigma$, which is timelike by Theorem \ref{cpgx}, and $\sigma\subset \overline{I^-(\gamma)}$. Since $\sigma$ is timelike, every point of $\sigma$ belongs to $I^-(\overline{I^-(\gamma)})=I^-(\gamma)$, thus $I^-(\sigma)\subset I^-(\gamma)$.
\end{proof}

Let $b^\sigma:M\to (-\infty,+\infty]$ be the Busemann function of the future $\Sigma$-ray $\sigma$ (proper time parametrized)
\[
b^\sigma(p)=\lim_{t\to+\infty}[d(\sigma(0),\sigma(t))-d(p,\sigma(t))] = \lim_{t\to+\infty}[t-d(p,\sigma(t))].
\]
The limit exists: if $p\notin I^-(\sigma)$ the bracket equals $t$ and $b^\sigma(p)=+\infty$; if $p\in I^-(\sigma)$, the bracket is non-increasing in $t$ once $p\ll \sigma(t)$ since $d(p,\sigma(t'))\ge d(p,\sigma(t))+t'-t$ for $t<t'$.
It is  known \cite[Eq.\ (5-2)]{galloway96} that $b^\sigma$ is finite on $I^-(\sigma)$. Indeed,  on
 $I^-(\sigma)\cap J^+(\Sigma)$ it is non-negative, for if $p$ belongs to that set, by the reverse triangle inequality, $d^+_\Sigma(p)\le b^\sigma(p)$.
 If $p\in J^-(\Sigma)$ then for $t\ge 0$ every maximizing curve from $p$ to $\sigma(t)$ meets $\Sigma$, thus $d(p,\sigma(t))\le d(p,\Sigma)+d(\Sigma,\sigma(t))=  d(p,\Sigma)+t$, hence $b^\sigma(p)\ge -d(p,\Sigma)>-\infty$.

Let us denote with $\textrm{dom} \,b$ the effective domain of $b$, namely the subset of $M$ where it is finite.

\begin{proposition} \label{cmol}
Let $\sigma$ be any future $\Sigma$-ray. We have $b\le b^\sigma$ on $I^-(\sigma)$,  $b=b^\sigma$ over $\sigma$, and $I^-(\sigma) \subset \textrm{dom}\, b$.
The effective domain $\textrm{dom} \,b$ intersects every Cauchy hypersurface and satisfies $J^-(\textrm{dom}\, b)\subset \textrm{dom}\, b$ (hence it is a past set).
\end{proposition}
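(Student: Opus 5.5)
The plan is to prove the four claims in turn, using only the definitions of $b_\tau$ and $b^\sigma$, Proposition \ref{ncho}, the rushing inequality (\ref{fcpb}), and the fact that $\Sigma$-rays are timelike and future complete (Theorems \ref{cpgx} and \ref{kvvb}).

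\emph{Step 1: $b\le b^\sigma$ on $I^-(\sigma)$.} Take $p\in I^-(\sigma)$ and $t$ large enough that $p\ll\sigma(t)$. Since $\sigma(t)\in\Sigma_t$ (Proposition \ref{ncho}, or directly from the ray property) and $\Sigma_\tau$ lies to the future of $\Sigma_t$ for $\tau\ge t$, I will compare $d(p,\Sigma_\tau)$ with $d(p,\sigma(\tau))$. The point $\sigma(\tau)$ lies in $\Sigma_\tau$, so $d(p,\Sigma_\tau)\ge d(p,\sigma(\tau))$. Hence
\[
b_\tau(p)=\tau-d(p,\Sigma_\tau)\le \tau-d(p,\sigma(\tau)).
\]
Letting $\tau\to\infty$ gives $b(p)\le b^\sigma(p)$. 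As recalled before the statement, $b^\sigma$ is finite on $I^-(\sigma)$, and $b>-\infty$ everywhere, so $I^-(\sigma)\subset \textrm{dom}\, b$.

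\emph{Step 2: $b=b^\sigma$ on $\sigma$.} For $s\le t$ we have $d(\sigma(s),\sigma(t))=t-s$ by maximality, so $b^\sigma(\sigma(s))=s$. Proposition \ref{ncho} gives $b(\sigma(s))=s$, so the two functions agree on $\sigma$.

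\emph{Step 3: $\textrm{dom}\, b$ is a past set.} Let $q\in\textrm{dom}\, b$ and $p\le q$. By (\ref{fcpb}), $b(p)\le b(q)-d(p,q)<+\infty$. Together with $b>-\infty$ this gives $p\in\textrm{dom}\, b$, so $J^-(\textrm{dom}\, b)\subset\textrm{dom}\, b$.

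\emph{Step 4: $\textrm{dom}\, b$ meets every Cauchy hypersurface.} A future $\Sigma$-ray $\sigma$ exists by Theorem \ref{cpgx}, and it is timelike. By future timelike completeness it is defined on $[0,+\infty)$ and is future inextendible. Let $S$ be a Cauchy hypersurface. The curve $\sigma$ can be extended to the past to an inextendible causal curve (e.g.\ by concatenating with a past inextendible timelike curve ending at $\sigma(0)$). This extended curve meets $S$ at some point $x$. Then $x\le\sigma(t)$ for some $t$, so $x\in J^-(\sigma)\subset J^-(\textrm{dom}\, b)\subset\textrm{dom}\, b$ by Steps 1 and 3, using $\sigma\subset\textrm{dom}\, b$.

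No step is a real obstacle. The only points needing care are in Step 1, namely that $\sigma(\tau)\in\Sigma_\tau$ and that we take a limit in $\tau$ along the ray, and in Step 4, ensuring that the curve used is inextendible in both directions.
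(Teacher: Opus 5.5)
Your proof is correct and follows essentially the paper's route. You use the comparison $d(p,\Sigma_\tau)\ge d(p,\sigma(\tau))$ for the inequality and for finiteness on $I^-(\sigma)$, Proposition \ref{ncho} for agreement on $\sigma$, the rushing inequality (\ref{fcpb}) for the past-set property, and a past extension of the ray meeting $S$ for the last claim. The only cosmetic slip is that $\sigma(\tau)\in\Sigma_\tau$ comes from the ray property, $d^+_\Sigma(\sigma(\tau))=\ell(\sigma\vert_{[0,\tau]})=\tau$, not from Proposition \ref{ncho}; you do correctly offer this as the alternative justification.
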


\begin{proof}
Let $p\in I^-(\sigma)$ then there is a sufficiently large
$t>0$ such that $p\ll \sigma(t)$. For $n\ge t$, $\sigma(n)\in \Sigma_n$ and $d(p,\Sigma_n)\ge d(p, \sigma(n))$. Then using Prop.\ \ref{kcqq}
\[
b(p)=\lim_{n\to \infty} [n-d(p,\Sigma_n)]
\le \lim_{n\to \infty} [n-d(p, \sigma(n))]=b^\sigma(p)<+\infty.
\]
This proves $I^-(\sigma)\subset \textrm{dom}\, b$.
By Prop.\ \ref{ncho} and Prop.\ \ref{kcqq} we have $b=b^\sigma$ over $\sigma$.

Let $S$ be a Cauchy hypersurface. Let us extend $\sigma$ to an inextendible timelike geodesic. By the rushing property (\ref{fcpb}) $b$ is finite on the  extension to the past of $\sigma(0)$, and it is finite on $\sigma\subset I^-(\sigma)$. Since the extended geodesic is inextendible it intersects $S$. Thus $\textrm{dom}\, b\cap S \ne \emptyset$.
The inclusion $J^-(\textrm{dom}\, b)\subset \textrm{dom}\, b$ follows from the rushing property and the fact that $b$ does not take the value $-\infty$.
\end{proof}

\begin{remark}
The set $b_k^{-1}(0)$ is what Galloway and Vega  \cite{galloway14b} call a Cauchy prehorosphere. We shall see later that, under past timelike completeness, $b^{-1}(0)$ is the Cauchy horosphere, namely the achronal boundary of the future set $\cup_k I^+(b_{k}^{-1}(0))$. The proof  requires first to establish the continuity of $b$. Galloway and Vega studied directly the regularity of the horospheres rather than that of $b$.
\end{remark}

\section{The product trick} \label{secprod}

There is an important product spacetime interpretation that becomes apparent after introducing $\tilde M=M\times \mathbb{R}$ endowed with the metric $\tilde g=g+\dd z^2$ with the time orientation given by the vector field $(V,0)$ where $V$ is a vector field providing the time orientation for $(M,g)$. Causal relations and sets on $\tilde M$ will be denoted with a tilde, e.g.\ $\tilde J^+$, $\tilde I^+$, $\tilde D^-$.

Observe that a (locally Lipschitz) curve $t \to \tilde x(t)=(x(t),z(t))$ is causal iff $g(\dot x, \dot x)+\dot z^2\le 0$ almost everywhere, which is equivalent to: $x$ is causal and  $\vert \dot z\vert\le \sqrt{- g(\dot x,\dot x)}$ (so $x$ is timelike wherever $\dot z\ne 0$, while it is merely causal where $\dot z=0$). In particular, for $\tilde x$ causal we have necessarily  $z(t)\le z(0)+\ell(x\vert_{[0,t]})$. Conversely, for every causal curve $x$ on $M$ and every $z_0\in \mathbb{R}$ the curve $t\mapsto (x(t), z_0+\ell(x\vert_{[0,t]}))$ is a lightlike (hence causal) curve on $\tilde M$, called the {\em lightlike lift} of $x$.
Similarly, a  curve $t \to \tilde x(t)=(x(t),z(t))$ is timelike iff  $x$ is timelike and  $\vert \dot z\vert< \sqrt{- g(\dot x,\dot x)}$. In particular, for $\tilde x$ timelike we have necessarily  $z(t)< z(0)+\ell(x\vert_{[0,t]})$.

The reason we consider only this direction of the inequality is that we shall only consider sets $\tilde S\subset \tilde M$ with the property: $(p,z)\in \tilde S$ implies $(p,z+a)\in \tilde S$ for every $a\le 0$. As a consequence, we shall only be interested in their top boundaries.

\begin{lemma} \label{lemgh}
If $\tilde x=(x,z)$ is a future inextendible causal curve on $\tilde M$ then $x$ is a future inextendible causal curve on $M$. The spacetime $(\tilde M,\tilde g)$ is globally hyperbolic with Cauchy hypersurface $\Sigma\times \mathbb{R}$.
\end{lemma}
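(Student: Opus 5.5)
The plan is to prove the first claim by contradiction, and then to obtain the Cauchy property from it using the fact that projections of causal curves are causal.

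For the first claim, let $\tilde x(t)=(x(t),z(t))$ be future inextendible causal in $\tilde M$. The projection $x$ is causal because $g(\dot x,\dot x)\le -\dot z^2\le 0$ almost everywhere. Suppose $x$ is future extendible, i.e.\ it has a future endpoint $p$. Then $x$ is eventually contained in a compact neighborhood $K$ of $p$, which we may take inside $J^-(r)$ for some $r\gg p$ and, by global hyperbolicity, of the form $J^+(S)\cap J^-(r)$ for a Cauchy hypersurface $S$. The length of $x$ on that final portion is then bounded by $d(S,r)<\infty$. Since $|\dot z|\le\sqrt{-g(\dot x,\dot x)}$, the function $z$ has bounded total variation along the final portion, so $z(t)$ converges as well. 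Hence $\tilde x$ has a future endpoint $(p,\lim z)$, contradicting inextendibility.

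One subtlety is that $\tilde x$ might keep varying in $z$ while $x$ stays essentially at $p$. This is excluded by the length bound: near $p$, the length of $x$ after time $t$ tends to $0$, since $\ell\le d(x(t),p)$-type estimates hold and $d$ is continuous in the globally hyperbolic setting. So the variation of $z$ goes to zero and convergence holds. The parametrization issue also needs a word. If $\tilde x$ is parametrized by $\tilde h$-arc length for $\tilde h=h+\dd z^2$ and the parameter domain is bounded, we get an endpoint as above. If the domain is unbounded, we still get an endpoint for the image, and that already contradicts inextendibility. I expect this step, handling the limit behaviour of $z$ rigorously, to be the main (though mild) obstacle; the bounded-length argument is the key to it.

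For global hyperbolicity, it suffices to show that $\Sigma\times\mathbb{R}$ is a Cauchy hypersurface, since existence of a Cauchy hypersurface is equivalent to global hyperbolicity. First, $\Sigma\times\mathbb{R}$ is acausal. A causal curve in $\tilde M$ joining two of its points projects to a causal curve joining two points of $\Sigma$. By the acausality of $\Sigma$ this projection must be constant, so $\dot x=0$, and then $|\dot z|\le 0$, so the curve is constant. Second, every inextendible causal curve $\tilde x$ meets it. By the first claim, applied also in its time-dual form to the past, the projection $x$ is an inextendible causal curve in $M$, so it meets $\Sigma$ at some parameter $t_0$, and then $\tilde x(t_0)\in\Sigma\times\mathbb{R}$. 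Finally, $\Sigma\times\mathbb{R}$ is a closed topological hypersurface. An alternative that avoids the Cauchy-hypersurface characterization is to note that $\tilde M$ is causal and that $\tilde J^+(\tilde p)\cap\tilde J^-(\tilde q)$ projects into the compact set $J^+(p)\cap J^-(q)$ with $z$-range bounded by $d(p,q)$, hence is compact; I would mention this as a remark.
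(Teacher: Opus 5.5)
Your proposal is correct and follows the paper's proof. Finite length of the projection forces $z$ to have bounded variation, so projections of inextendible curves are inextendible and meet $\Sigma$, and acausality of $\Sigma\times\mathbb{R}$ follows by projecting, since a causal curve of $\tilde M$ cannot have constant projection. The differences are only cosmetic: the paper bounds the length directly by $d(x(0),p)<\infty$ rather than via $d(S,r)$, your ``subtlety'' paragraph is already covered by the bounded-variation argument, and your side remark on causal diamonds would also need their closedness, which the main argument does not require.
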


\begin{proof}
If $x$ had a future endpoint $e$ then $\ell(x\vert_{[0,t)})\le d(x(0),e)<\infty$, so by $\vert \dot z\vert\le \sqrt{- g(\dot x,\dot x)}$ the function $z$ would have bounded variation, hence a limit, and $\tilde x$ would be extendible. The same argument in the past direction shows that projections of inextendible causal curves are inextendible, so they meet $\Sigma$, namely every inextendible causal curve on $\tilde M$ meets $\Sigma\times \mathbb{R}$. A causal curve on $\tilde M$ cannot have constant projection since the vector $(0,\dot z)$ is spacelike, thus the projection of a causal curve joining two points of $\Sigma\times \mathbb{R}$ would be a non-trivial causal curve joining two points of $\Sigma$, which is impossible by acausality. Similarly, $\tilde M$ is causal because $M$ is. Thus $\Sigma\times \mathbb{R}$ is an acausal set met by every inextendible causal curve, namely a Cauchy hypersurface, and $\tilde M$ is globally hyperbolic.
\end{proof}

Let $\tilde x: I \to \tilde M$ be a geodesic. Since $\p_z$ is Killing, $\tilde g (\p_z, \dot{\tilde x})=\dot z$ is conserved. Hence, every causal geodesic on $\tilde M$ reads $\tilde x(t)=(x(t), a+b t)$ where $x$ is a causal geodesic of $(M,g)$, $a\in \mathbb{R}$ and $b^2 \le - g(\dot x,\dot x)$ (a constant along $x$). Moreover, if $x$ is lightlike then $\tilde x$ can only be lightlike and $z$ is constant, namely $\tilde x$ is the {\em horizontal lift} of $x$.

\begin{proposition} \label{nhtr}
The projection of a maximizing causal geodesic of $\tilde M$ is a maximizing causal geodesic of $M$.
\end{proposition}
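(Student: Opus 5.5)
Let $\tilde x(t)=(x(t),a+bt)$, $t\in[0,1]$, be a maximizing causal geodesic of $\tilde M$ from $\tilde p=(p,z_0)$ to $\tilde q=(q,z_1)$. By the discussion preceding the statement, $x$ is a causal geodesic of $M$ with $b^2\le -g(\dot x,\dot x)=:L^2$, a constant. The plan is to argue by contradiction: assuming $x$ is not maximizing, I will build a longer causal curve in $\tilde M$ between the same endpoints. The main tool is the lightlike lift, suitably rescaled in the $z$ direction.

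First I dispose of the degenerate cases. If $x$ is lightlike, then $b=0$ and $\tilde x$ is the horizontal lift. Suppose a causal curve $y$ from $p$ to $q$ had $\ell(y)>0$. Then $p\ll q$, so there is a timelike geodesic $y$ with $\ell(y)>0$. The curve $t\mapsto(y(t),z_0+\epsilon(t))$ with $|\dot\epsilon|$ small and $\epsilon(0)=\epsilon(1)=0$, or simply $z$ constant, is then causal, indeed timelike, from $\tilde p$ to $\tilde q$. Its length $\int\sqrt{-g(\dot y,\dot y)}>0$ exceeds $\ell(\tilde x)=0$, a contradiction. So $d(p,q)=0=\ell(x)$ and $x$ is maximizing.

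In general $\ell(\tilde x)=\sqrt{L^2-b^2}$, with $x$ parametrized affinely on $[0,1]$ so that $\ell(x)=L$ and $z_1-z_0=b$. Suppose $d(p,q)>L$. By global hyperbolicity of $M$, there is a maximizing timelike geodesic $y$ from $p$ to $q$, parametrized on $[0,1]$ with constant speed $L'=d(p,q)>L\ge|b|$. The candidate is $\tilde y(t)=(y(t),z_0+bt)$. It is causal, indeed timelike, since $b^2<L'^2$, it joins $\tilde p$ to $\tilde q$, and its length is $\sqrt{L'^2-b^2}>\sqrt{L^2-b^2}$. This contradicts the maximality of $\tilde x$, so $\ell(x)=d(p,q)$.

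Maximality of every subsegment follows by applying the same argument to the corresponding subsegment of $\tilde x$, since subsegments of maximizing curves are maximizing. The only point needing care is keeping track of the parametrizations: I must use the same affine parameter interval for $y$ and $x$, so that the $z$-speed $b$ can be reused, and I rely on the observation that $\sqrt{L^2-b^2}$ is strictly increasing in $L$ for $L\ge|b|$. I expect no real obstacle beyond this bookkeeping and the lightlike case.
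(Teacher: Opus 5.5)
Your proof is correct and follows the paper's argument: replace the projection $x$ by a longer maximizing timelike geodesic $y$ on the same affine parameter interval, and keep the same $z$-speed. Then $(y(t),z_0+bt)$ is a longer causal curve in $\tilde M$ between the same endpoints. Your separate treatment of the lightlike case is harmless but redundant, since the general computation with $L=|b|$ already covers it, which is what the paper's parenthetical remark on $\ell(\tilde x)=0$ does.
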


This observation will not be used below, but it is a useful property of the product.

\begin{proof}
Let $\tilde x:[a,b]\to \tilde M$, $\tilde x(t)=(x(t),c+dt)$, be an affinely parametrized maximizing causal geodesic, so that $d^2 \le - g(\dot x,\dot x)=[\ell(x)/(b-a)]^2$. Suppose that $x$ is not maximizing. Since $M$ is globally hyperbolic there is a maximizing timelike geodesic $y: [a,b]\to M$, $y(a)=x(a)$, $y(b)=x(b)$, affinely parametrized,  with $\ell(y)> \ell(x)$. Then $-g(\dot y,\dot y)=[{\ell(y)}/(b-a)]^2$ and the curve $\tilde y(t)=(y(t), c+ d t)$ is causal, has the same endpoints as $\tilde x$  and has length
\[
\ell(\tilde y)=\int_a^b \sqrt{[{\ell(y)}/(b-a)]^2 -d^2}\, \dd t>  \int_a^b \sqrt{[{\ell(x)}/(b-a)]^2 -d^2}\, \dd t=\ell(\tilde x),
\]
where if $\ell(\tilde x)=0$ the inequality still holds as $\tilde y$ is then timelike.
This contradicts the maximality of $\tilde x$.
\end{proof}

\begin{proposition} \label{propprod}
 Let $\tilde \Sigma=\{(p,z): p\in \Sigma, z \le 0\}$ then
 \begin{align}
 \tilde J^+(\tilde \Sigma)&=\{(p,z): p \in J^+( \Sigma), z\le d^+_\Sigma(p)\}, \label{kcrj}\\
 \tilde I^+(\tilde \Sigma)&=\{(p,z): p \in I^+(\Sigma), z< d^+_\Sigma(p)\}, \label{kcrt}\\
 \tilde D^{-}(\tilde I^+(\tilde \Sigma))
&=\{(p,z): p \in M, z< b(p)\}, \label{mxwe} \\
\tilde H^{-}(\tilde I^+(\tilde \Sigma))&= \p \operatorname{epi}  b.\label{mxwh}
\end{align}
Moreover, $\tilde D^{-}(\tilde I^+(\tilde \Sigma))$ is a future open set and $\tilde H^{-}(\tilde I^+(\tilde \Sigma))$ is its  boundary, hence an achronal boundary (thus edgeless). The set $\tilde D^{-}(\tilde I^+(\tilde \Sigma))$  contains $\tilde I^+(\tilde \Sigma)$.
\end{proposition}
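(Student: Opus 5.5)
I would prove the four identities in order, each time using the lifts described before Lemma \ref{lemgh}. For (\ref{kcrj}), the inclusion $\subset$ follows because a causal curve $\tilde x=(x,z)$ from $(p_0,z_0)\in\tilde\Sigma$ to $(p,z)$ projects to a causal curve $x$ from $p_0\in\Sigma$ to $p$, so $p\in J^+(\Sigma)$, and $z\le z_0+\ell(x)\le 0+d^+_\Sigma(p)$. For $\supset$, take $p\in J^+(\Sigma)$ and $z\le d^+_\Sigma(p)$. If $p\in I^+(\Sigma)$, Theorem \ref{thmcont} gives a maximizing geodesic from some $p_0\in\Sigma$ of length $d^+_\Sigma(p)$. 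Its lightlike lift starting at $(p_0,z-d^+_\Sigma(p))\in\tilde\Sigma$ ends at $(p,z)$. If $p\in\Sigma$, the constant curve works with $z\le 0$. The same argument gives (\ref{kcrt}), with two changes. For $\supset$, start the lift at height $z-\ell+\epsilon$, $\epsilon>0$, and use $z'$ growing with slope strictly below $\sqrt{-g(\dot x,\dot x)}$; this gives a timelike curve. For $\subset$, a timelike $\tilde x$ forces $x$ timelike and the strict inequality $z<z_0+\ell(x)$.

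For (\ref{mxwe}) I would show $(p,z)\in\tilde D^-(\tilde I^+(\tilde\Sigma))$ iff every future inextendible causal curve from $(p,z)$ enters the set in (\ref{kcrt}). First suppose $z<b(p)$. Pick $\tau$ with $z<b_\tau(p)=\tau-d(p,\Sigma_\tau)$. Let $\tilde x=(x,z(t))$ be future inextendible causal from $(p,z)$. By Lemma \ref{lemgh} its projection $x$ is inextendible, so it meets the Cauchy hypersurface $\Sigma_\tau$ at some $q$ (Theorem \ref{kvvb}). There the height is at most $z+\ell(x|_{p\to q})\le z+d(p,\Sigma_\tau)<\tau=d^+_\Sigma(q)$. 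By (\ref{kcrt}) the curve has entered $\tilde I^+(\tilde\Sigma)$. If $q\in J^-(\Sigma)$ a slight continuation handles the boundary case $d^+=0$, or one can use $\tau>0$ directly since $q\in\Sigma_\tau\subset I^+(\Sigma)$. Conversely suppose $z\ge b(p)$, so $b(p)<\infty$. For each $k$ take a maximizing geodesic $\gamma_k$ from $p$ to $\Sigma_k$ of length $d(p,\Sigma_k)$, and lift it lightlike starting at $(p,z)$. At a point $r$ on $\gamma_k$ the height is $z+d(p,r)\ge b(p)+d(p,r)$, and one wants this $\ge d^+_\Sigma(r)$. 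From the rushing inequality (\ref{fcpb}) we get $b(r)\ge b(p)+d(p,r)$. Maximality gives $b_k(r)=b_k(p)+d(p,r)$, and $b_k(p)\le b(p)$. So the height at $r$ is $\ge b_k(r)\ge d^+_\Sigma(r)$, using $b_k\ge d^+_\Sigma$ on $J^+(\Sigma)\cap J^-(\Sigma_k)$.

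Thus each lift stays outside $\tilde I^+(\tilde\Sigma)$ up to $\Sigma_k$. Passing to a limit curve, Lemma \ref{lemlimray}-type, gives a future inextendible causal curve from $(p,z)$. Closedness of the complement of (\ref{kcrt}) in $\tilde M$ keeps the limit outside $\tilde I^+(\tilde\Sigma)$, which means $(p,z)\notin\tilde D^-$. I expect this to be the main obstacle. The key difficulty is getting the limit to be inextendible, which needs the endpoints escaping compact sets, true since they lie on $\Sigma_k$. One also has to check that the defining inequalities survive in the limit. As an alternative, one can avoid limits by working in $\tilde D^-$ with the convention that it uses inextendible curves, and showing directly that the escaping-to-$\Sigma_k$ curves cannot all enter.

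Finally, $\{z<b\}$ is a future set in $\tilde M$. Indeed, if $(p,z)\le(q,w)$ then $w\le z+\ell\le z+d(p,q)$ and, by (\ref{fcpb}), $b(q)\ge b(p)+d(p,q)$, so strict inequality propagates. It is open because $b$ is lower semicontinuous. A future set's boundary is an achronal boundary, hence edgeless. With the standard identity $\tilde H^-(A)=\overline{\tilde D^-(A)}\setminus \tilde I^+(\tilde D^-(A))$, the boundary of an open future set $\tilde D^-$ coincides with $\tilde H^-$. That boundary is $\p\{z<b\}=\p\operatorname{epi}b$, which gives (\ref{mxwh}). The containment of $\tilde I^+(\tilde\Sigma)$ is immediate from (\ref{kcrt}) and (\ref{cmlr}).
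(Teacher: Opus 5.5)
Your proposal is correct up to one small slip. For the hard inclusion it takes a genuinely different route from the paper.

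\textbf{The slip.} In the inclusion $\{z<b\}\subset\tilde D$, the projection $x$ meets $\Sigma_\tau$ only if $p\in J^-(\Sigma_\tau)$. If $p\in I^+(\Sigma_\tau)$ it does not. In that case, however, $b_\tau(p)=\tau$ and $d^+_\Sigma(p)>\tau>z$, so $(p,z)$ already lies in $\tilde I^+(\tilde\Sigma)$ by (\ref{kcrt}); the paper treats this case separately in exactly this way. Alternatively, since $b_\tau(p)$ is non-decreasing in $\tau$, you can simply choose $\tau$ large enough that $p\in I^-(\Sigma_\tau)$. The case you do discuss, $q\in J^-(\Sigma)$, cannot occur, because $q\in\Sigma_\tau\subset I^+(\Sigma)$.

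\textbf{How the paper argues.} For $\tilde D\subset\{z<b\}$, the paper first proves that $\tilde D$ is open, by a separate limit-curve argument applied to its complement. It uses openness to get a margin $(p,r+\epsilon)\in\tilde D$. It then takes the limit $\sigma$ in $M$ of the maximizing curves from $p$ to $\Sigma_k$, lifts $\sigma$, and reaches a contradiction using upper semicontinuity of length and continuity of $d^+_\Sigma$.

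\textbf{How your argument differs.} You work with $z\ge b(p)$ directly. You show that each lift of $\gamma_k$ already avoids $\tilde I^+(\tilde\Sigma)$, via the pointwise bound $z+d(p,r)\ge b_k(p)+d(p,r)=b_k(r)\ge d^+_\Sigma(r)$. You then pass to the limit in $\tilde M$ itself, where closedness of $\tilde M\setminus\tilde I^+(\tilde\Sigma)$ does the rest. The endpoints escape because their projections lie on $\Sigma_k$, and $\tilde M$ is globally hyperbolic by Lemma \ref{lemgh}, so the limit curve is future inextendible. This needs neither the $\epsilon$-margin nor prior openness.

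You also obtain openness and the future property of $\tilde D$ afterwards, as corollaries of (\ref{mxwe}): openness from the lower semicontinuity of $b$, and the future property from the rushing inequality (\ref{fcpb}). The paper instead proves both intrinsically from the definition of the domain of dependence, by a limit-curve argument and a concatenation argument respectively.

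\textbf{What each route buys.} Yours is shorter and makes the causal structure of $\tilde D$ a direct consequence of the formula for $b$. The paper's shows that these are general features of $\tilde D^-$ of an open future set in a globally hyperbolic spacetime, independent of the previously established properties of $b$.
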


\begin{remark} \label{remfol}
A fundamental observation is that formula (\ref{mxwe}) characterizes the Busemann function without any reference to the auxiliary foliation by the level sets $\Sigma_s$ used in its construction, so that $b$ is a natural object which depends solely on the starting hypersurface $\Sigma$. This does not mean that replacing $\Sigma_\tau$ by the level sets of an arbitrary temporal function in the expression $\tau-d(\cdot,\Sigma_\tau)$ would give the same limit: in a static product $-\dd t^2+h$ with $\Sigma=\{t=0\}$, the choice of the slices $\{t=2\tau\}$ gives $\tau-d(\cdot,\{t=2\tau\})\to -\infty$. Rather, the normalization provided by the Lorentzian distance from $\Sigma$ is the natural one, as the foliation-free characterizations of Proposition \ref{propprod} and Theorem \ref{thmintr} show.
\end{remark}

Note that the epigraph of $b$ is closed since $b$ is lower semi-continuous. For shortness, we set
\[
\tilde D:=\tilde D^{-}(\tilde I^+(\tilde \Sigma)).
\]

\begin{proof}[Proof of Proposition \ref{propprod}]
Let $(p,r) \ge (q,a)$, $a\le 0$, $q\in \Sigma$. This means that there is a causal curve $\tilde x: [0,1]\to \tilde M$, $\tilde x(t)=(x(t),z(t))$, $(x(0),z(0))=(q,a)$, $(x(1),z(1))=(p,r)$. Thus $x$ is causal which implies $p\in J^+(q)\subset J^+(\Sigma)$. By the formula proved above $r=z(1)\le z(0)+\ell(x)\le a+d^+_\Sigma(p)\le d^+_\Sigma(p)$. This proves the inclusion $\subset$ in (\ref{kcrj}).
For the other direction, suppose $p\ge q\in \Sigma$ and $r\le d^+_\Sigma(p)$. Let $x$ be a maximizing causal curve from $\Sigma$ to $p$, $\ell(x)=d^+_\Sigma(p)$, starting from $q'\in \Sigma$. Its lightlike lift  connects $(q',0)$ to $(p,d^+_\Sigma(p))$ and so, by the translational invariance of the cone structure of $\tilde M$, $(q',r-d^+_\Sigma(p)) \le (p,r)$, where  $(q',r-d^+_\Sigma(p))\in \tilde \Sigma$.

Let $(p,r) \gg (q,a)$, $a\le 0$, $q\in \Sigma$. Then there is a timelike curve $\tilde x$ from $(q,a)$ to $(p,r)$ whose projection $x$ is timelike, which implies $p\in I^+(q)\subset I^+(\Sigma)$ and $r< a+\ell(x)\le d^+_\Sigma(p)$. This proves the inclusion $\subset$ in (\ref{kcrt}).
For the other direction, suppose $p\in I^+(\Sigma)$ and $r< d^+_\Sigma(p)$. There is a maximizing affinely parametrized timelike geodesic $x:[0,1]\to M$ connecting some $q\in \Sigma$ to $p$, of Lorentzian length $\ell(x)=d^+_\Sigma(p)$. Observe that $g(\dot x,\dot x)=-(d^+_\Sigma(p))^2$, thus the curve $\tilde x(t):=(x(t), a+ \alpha t d^+_\Sigma(p))$ is timelike for every $a\le 0$ and $\alpha\in [0,1)$ and starts from  $\tilde \Sigma$. For $r\le 0$ it is sufficient to choose $a=r$, $\alpha=0$. For $0<r<d^+_\Sigma(p)$ it is sufficient to choose $a=0$, $\alpha=r/d^+_\Sigma(p)$.

{\em $\tilde D$ is open.} Its complement $C$ is closed. Indeed, let $P_k\in C$, $P_k\to P$. There are future-inextendible causal curves $\tilde x_k$ starting from $P_k$ and not intersecting the open set $\tilde I^+(\tilde \Sigma)$, hence contained in the closed set $\tilde M\backslash \tilde I^+(\tilde \Sigma)$. As $\tilde M$ is globally hyperbolic (Lemma \ref{lemgh}), by the limit curve theorem there is a future-inextendible  causal curve $\tilde x$ starting from $P$ and contained in $\tilde M\backslash \tilde I^+(\tilde \Sigma)$, which proves  $P\in C$.

{\em Proof of $\supset$ in (\ref{mxwe}).} Let $(q,u)$ be such that $u<b(q)$. Since $b(q)=\sup_k b_k(q)$ there is $k>0$ such that $u<b_k(q)=k-d(q,\Sigma_k)$. Let $\tilde x=(x,z)$ be a future inextendible causal curve starting from $(q,u)$. If $q\in I^+(\Sigma_k)$ then $d^+_\Sigma(q)>k>u$ so $\tilde x(0)\in \tilde I^+(\tilde \Sigma)$ by (\ref{kcrt}). Otherwise $q\in J^-(\Sigma_k)$ and by Lemma \ref{lemgh} $x$ is future inextendible, thus it intersects $\Sigma_k$ at some point $x(\tau)$. Then $z(\tau)\le u+\ell(x\vert_{[0,\tau]})\le u+d(q,\Sigma_k)<k=d^+_\Sigma(x(\tau))$, thus $\tilde x(\tau)\in \tilde I^+(\tilde \Sigma)$. By the arbitrariness of $\tilde x$, $(q,u)\in \tilde D$.

{\em Proof of $\subset$ in (\ref{mxwe}).} Let $(p,r)\in \tilde D$. As $\tilde D$ is open there is $\epsilon>0$ such that $(p,r+\epsilon)\in \tilde D$. It is sufficient to prove $r+\epsilon\le b(p)$. Suppose by contradiction that $r+\epsilon> b(p)$, so that $r+\epsilon>b_k(p)=k- d(p,\Sigma_k)$ for every $k$. For large $k$, $p\in I^-(\Sigma_k)$ and there is a maximizing causal curve $\sigma_k:[0,d_k]\to M$ ($h$-arc length parametrization with $h$ complete) from $p$ to $\Sigma_k$, $\ell(\sigma_k)=d(p,\Sigma_k)$. As $\Sigma_k$ escapes every compact set, by the limit curve theorem a subsequence, not relabeled, converges uniformly on compact subsets to a future-inextendible causal curve $\sigma$ starting from $p$. We claim that the lightlike lift $\tilde x(t)=(\sigma(t), r+\epsilon+\ell(\sigma\vert_{[0,t]}))$, which starts from $(p,r+\epsilon)$ and is future inextendible, does not intersect $\tilde I^+(\tilde \Sigma)$, which gives the desired contradiction.

Suppose that $\tilde x(t)\in \tilde I^+(\tilde \Sigma)$, namely $\sigma(t)\in I^+(\Sigma)$ and $r+\epsilon+\ell(\sigma\vert_{[0,t]})< d^+_\Sigma(\sigma(t))$. By the upper semi-continuity of the length functional and the continuity of $d^+_\Sigma$ (Theorem \ref{thmcont}), for sufficiently large $k$
\[
r+\epsilon+\ell(\sigma_k\vert_{[0,t]})< d^+_\Sigma(\sigma_k(t)).
\]
Since $\sigma_k$ is maximizing from $p$ to $\Sigma_k$, its restriction to $[t,d_k]$ maximizes the distance from $\sigma_k(t)$ to $\Sigma_k$, thus $\ell(\sigma_k\vert_{[0,t]})=d(p,\Sigma_k)-d(\sigma_k(t),\Sigma_k)$. Using $r+\epsilon>k- d(p,\Sigma_k)$ we get
\begin{align*}
k&< r+\epsilon+d(p,\Sigma_k)=r+\epsilon+\ell(\sigma_k\vert_{[0,t]})+d(\sigma_k(t),\Sigma_k)\\
&<d^+_\Sigma(\sigma_k(t))+d(\sigma_k(t),\Sigma_k)\le d(\Sigma,\Sigma_k)=k,
\end{align*}
a contradiction.

The set $\tilde D$ contains $\tilde I^+(\tilde \Sigma)$ due to  (\ref{cmlr}), (\ref{kcrt}) and (\ref{mxwe}).

The set $\tilde D$ is a future set (a set $\tilde F$ is future if $\tilde I^+(\tilde F)\subset \tilde F$; notice that if $\tilde F$ is open and future, $\tilde I^+(\tilde F)=\tilde F$). Indeed, let $P\in \tilde D$ and $Q\gg P$. If it were $Q\notin \tilde D$ there would be a future inextendible causal curve $\tilde \sigma$ starting at $Q$ not intersecting $\tilde I^+(\tilde \Sigma)$. Let $\tilde \gamma$ be a timelike curve from $P$ to $Q$. It does not intersect $\tilde I^+(\tilde \Sigma)$ otherwise $Q$  would belong to it, as $\tilde I^+(\tilde \Sigma)$ is future, and so would $\tilde \sigma$. Thus, joining $\tilde \gamma$ and $\tilde \sigma$ we get a future inextendible causal curve  starting from $P$ not intersecting $\tilde I^+(\tilde \Sigma)$, a contradiction.

By definition the past Cauchy horizon is
\[
\tilde H^{-}(\tilde I^+(\tilde \Sigma))= \overline{\tilde D}\backslash \tilde I^+(\tilde D)= \overline{\tilde D}\backslash \tilde D= \p \tilde D.
\]
It is the boundary of $\tilde D$, hence the  boundary of  $\tilde M\backslash \tilde D$ which, by (\ref{mxwe}), is the epigraph of $b$. As $\tilde D$ is a future set, its boundary is an achronal boundary, hence edgeless.
\end{proof}

\begin{theorem}[Foliation-free formula] \label{thmintr}
For every future inextendible causal curve $x:[0,+\infty)\to M$ the function
\[
t\mapsto F_x(t):=d^+_\Sigma(x(t))-\ell(x\vert_{[0,t]}),
\]
defined for those $t$ such that $x(t)\in J^+(\Sigma)$, is non-decreasing, so the limit $L(x):=\lim_{t\to +\infty} F_x(t)\in (-\infty,+\infty]$ exists. For every $p\in M$
\begin{equation} \label{intrinsic}
b(p)=\inf \{L(x): x \textrm{ future inextendible causal curve starting from } p\}.
\end{equation}
Moreover, if $p\in \textrm{dom}\, b$ the infimum is attained by the $b$-calibrated curves starting from $p$ (Definition \ref{defcal} and Proposition \ref{bber}).
\end{theorem}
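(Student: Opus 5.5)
Monotonicity of $F_x$ comes first. For $s<t$ with $x(s)\in J^+(\Sigma)$ we have $x(t)\in J^+(\Sigma)$. The reverse triangle inequality gives $d^+_\Sigma(x(t))\ge d^+_\Sigma(x(s))+d(x(s),x(t))\ge d^+_\Sigma(x(s))+\ell(x\vert_{[s,t]})$, and this rearranges to $F_x(t)\ge F_x(s)$. Since $x$ is future inextendible it meets $\Sigma$, so $F_x$ is defined on a terminal interval and the limit $L(x)\in(-\infty,+\infty]$ exists.

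Next I prove the inequality $b(p)\le L(x)$ for every $x$ starting at $p$, using the product trick of Proposition \ref{propprod}. Assume $L(x)<\infty$, fix $c>L(x)$, and take the lightlike lift $\tilde x(t)=(x(t),c+\ell(x\vert_{[0,t]}))$. Whenever $x(t)\in J^+(\Sigma)$ we have $c+\ell(x\vert_{[0,t]})>F_x(t)+\ell(x\vert_{[0,t]})=d^+_\Sigma(x(t))$. By (\ref{kcrt}) this means $\tilde x(t)\notin\tilde I^+(\tilde\Sigma)$. For earlier $t$, where $x(t)\notin I^+(\Sigma)$, the lift is also outside $\tilde I^+(\tilde \Sigma)$. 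So $\tilde x$ is a future inextendible causal curve from $(p,c)$ avoiding $\tilde I^+(\tilde\Sigma)$, hence $(p,c)\notin\tilde D$. By (\ref{mxwe}) this gives $c\ge b(p)$, and letting $c\downarrow L(x)$ yields $b(p)\le L(x)$. This step needs no regularity of $b$.

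For the reverse inequality, if $b(p)=+\infty$ there is nothing to prove, so take $p\in\textrm{dom}\, b$. I would use a curve $\sigma$ from $p$ that is the limit of maximizing curves $\sigma_k$ from $p$ to $\Sigma_k$, as in the proof of $\subset$ in (\ref{mxwe}). That argument already shows the following: if the lift of $\sigma$ starting at height $r$ avoids $\tilde I^+(\tilde\Sigma)$, then $r\ge b(p)$. I want the converse quantitative estimate, namely $F_\sigma(t)\le b(p)$ for all $t$. Fix $t$ with $\sigma(t)\in J^+(\Sigma)$, and let $\sigma_k$ have $h$-arclength parameter with $\sigma_k(t)\to\sigma(t)$. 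Maximality of $\sigma_k$ from $p$ to $\Sigma_k$ gives
\[
d^+_\Sigma(\sigma_k(t))-\ell(\sigma_k\vert_{[0,t]})\le k-d(\sigma_k(t),\Sigma_k)-\ell(\sigma_k\vert_{[0,t]})=k-d(p,\Sigma_k)=b_k(p)\le b(p),
\]
where the first inequality is $d^+_\Sigma(q)+d(q,\Sigma_k)\le k$. Now pass to the limit $k\to\infty$, using the continuity of $d^+_\Sigma$ (Theorem \ref{thmcont}) and the upper semicontinuity of length, $\ell(\sigma\vert_{[0,t]})\ge\limsup\ell(\sigma_k\vert_{[0,t]})$. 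This gives $F_\sigma(t)\le b(p)$, hence $L(\sigma)\le b(p)$. Combined with the previous paragraph, the infimum is attained by $\sigma$, which proves (\ref{intrinsic}).

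The main obstacle is the final clause, that the infimum is attained by the $b$-calibrated curves. I would establish it after Definition \ref{defcal}. If $x$ is $b$-calibrated from $p$, then $b(x(t))=b(p)+\ell(x\vert_{[0,t]})$. Combined with $b\ge d^+_\Sigma$ on $J^+(\Sigma)$ from (\ref{cmlr}), this gives $F_x(t)\le b(p)$, so $L(x)\le b(p)$ and equality follows from the lower bound. Conversely, one should check that the limit curve $\sigma$ above is calibrated, which follows from $b(p)=L(\sigma)$ together with the monotonicity of $F_\sigma$ and the rushing property (\ref{fcpb}) applied on subsegments. The delicate point is the consistency between the definition of calibration and the existence statement of Proposition \ref{bber}.
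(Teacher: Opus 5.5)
Your proof is correct. The monotonicity of $F_x$ and the final clause on calibrated curves are as in the paper. So is the inequality $b(p)\le L(x)$: yours is the contrapositive of the paper's argument, and both apply the inclusion $\supset$ in (\ref{mxwe}) to the lightlike lift of $x$.

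The inequality $\inf_x L(x)\le b(p)$ is where you take a different route. The paper uses the other inclusion $\subset$ in (\ref{mxwe}) as a black box: if $z<\inf_x L(x)$, every future inextendible causal curve from $(p,z)$ enters $\tilde I^+(\tilde\Sigma)$, so $(p,z)\in\tilde D$ and $z<b(p)$. You instead re-run, in quantitative form, the limit-curve argument by which $\subset$ in (\ref{mxwe}) was proved. This gives $F_\sigma\le b(p)$ for the limit $\sigma$ of the maximizers from $p$ to $\Sigma_k$.

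The paper's route is shorter once Proposition \ref{propprod} is available, and it presents (\ref{intrinsic}) as a restatement of (\ref{mxwe}). Yours shows in addition that the infimum is always a minimum, attained by an explicit curve. It does so without using the existence of calibrated curves (Proposition \ref{bber}), which in the paper rests on the generators of $H$.

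Your route can in fact be made entirely independent of $\tilde M$. For large $k$ the curve $x$ meets the Cauchy hypersurface $\Sigma_k$ at some $x(\tau_k)$, so $b_k(p)\le k-\ell(x\vert_{[0,\tau_k]})=F_x(\tau_k)\le L(x)$. Hence $b(p)\le L(x)$ follows directly from the definition $b=\lim_k b_k$.

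Two minor points.
\begin{itemize}
\item The inequality $d^+_\Sigma(q)+d(q,\Sigma_k)\le k$ requires $q\in J^+(\Sigma)$, and it fails in general on $I^-(\Sigma)$. So run your estimate for $t$ with $\sigma(t)\in I^+(\Sigma)$; then $\sigma_k(t)\in I^+(\Sigma)$ for large $k$, and this is all that $L(\sigma)$ requires.
\item The calibration of $\sigma$ is not needed for the statement. If you want it, the ingredient is not the monotonicity of $F_\sigma$. It is the inequality $b\le L$ you already proved, applied at $\sigma(t)$ to the tail $\sigma\vert_{[t,+\infty)}$, whose value is $L(\sigma)+\ell(\sigma\vert_{[0,t]})=b(p)+\ell(\sigma\vert_{[0,t]})$. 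The rushing property (\ref{fcpb}) then gives the reverse inequality.
\end{itemize}
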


The quantity $L(x)$ can be interpreted as the asymptotic delay of a clock carried along $x$ with respect to the ``fastest'' clocks leaving $\Sigma$; $b(p)$ is the minimal such delay among all observers leaving $p$.
Formula (\ref{intrinsic}) does not involve the level sets $\Sigma_k$ and, in fact, it does not even require timelike completeness to make sense.

\begin{proof}
The set $J^+(\Sigma)$ is a future set, and every future inextendible causal curve enters it and never leaves it. If $x(t),x(t')\in J^+(\Sigma)$, $t<t'$, then $d^+_\Sigma(x(t'))\ge d^+_\Sigma(x(t))+d(x(t),x(t'))\ge d^+_\Sigma(x(t))+\ell(x\vert_{[t,t']})$, which proves the monotonicity of $F_x$.

Let $z< \inf_x L(x)$ and let $\tilde x=(x,z(\cdot))$ be a future inextendible causal curve starting from $(p,z)$. By Lemma \ref{lemgh} $x$ is future inextendible. As $z<L(x)$ and $F_x$ is non-decreasing, for sufficiently large $t$ we have $x(t)\in I^+(\Sigma)$ and $z+\ell(x\vert_{[0,t]})<d^+_\Sigma(x(t))$, thus $z(t)\le z+\ell(x\vert_{[0,t]})< d^+_\Sigma(x(t))$, namely $\tilde x(t)\in \tilde I^+(\tilde \Sigma)$ by (\ref{kcrt}). Thus $(p,z)\in \tilde D$ and by (\ref{mxwe}) $z<b(p)$. We conclude $\inf_x L(x)\le b(p)$.

Conversely, let $z<b(p)$, so that $(p,z)\in \tilde D$, and let $x$ be a future inextendible causal curve starting from $p$. Its lightlike lift $(x(t), z+\ell(x\vert_{[0,t]}))$ is future inextendible and so it intersects $\tilde I^+(\tilde \Sigma)$, namely there is $t$ such that $x(t)\in I^+(\Sigma)$ and $z<F_x(t)\le L(x)$. By the arbitrariness of $z$ and $x$, $b(p)\le \inf_x L(x)$.

Finally, if $p\in \textrm{dom}\, b$ and $x$ is $b$-calibrated starting from $p$, then for $x(t)\in J^+(\Sigma)$, by (\ref{cmlr}), $F_x(t)=d^+_\Sigma(x(t))-b(x(t))+b(p)\le b(p)$, thus $L(x)\le b(p)$ and hence $L(x)=b(p)$.
\end{proof}

\begin{remark} \label{remasym}
If $x$ is a proper time parametrized $b$-calibrated ray starting from $p\in \textrm{dom}\, b$, then $d^+_\Sigma(x(t))-t\to b(p)$ and $b(x(t))=b(p)+t$, thus
\[
b(x(t))-d^+_\Sigma(x(t))\to 0 \quad \textrm{for } t\to +\infty .
\]
Thus, asymptotically along calibrated rays, $b$ coincides with the distance from $\Sigma$.
\end{remark}

\section{Horizon, calibrated curves and regularity} \label{secreg}

A {\em $C^0$ future null hypersurface} \cite{galloway00} is a locally achronal topological hypersurface such that  through every point in it starts  a future inextendible lightlike geodesic contained in the hypersurface.
For shortness, let us set $H:=\tilde H^{-}(\tilde I^+(\tilde \Sigma))=\p \tilde D$.

\begin{proposition} \label{cnph}
The Cauchy horizon $H$  is a $C^0$ future null hypersurface. Indeed, for every $P\in H$ we can find an achronal future-inextendible lightlike geodesic $\tilde x$ starting from $P$ and entirely contained in $H$.
\end{proposition}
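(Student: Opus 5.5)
We prove the statement directly from the product picture of Proposition \ref{propprod}. There, $H=\p\tilde D$ is a past Cauchy horizon, and $\tilde D$ is an open future set in the globally hyperbolic spacetime $(\tilde M,\tilde g)$ of Lemma \ref{lemgh}. Since $H$ is the boundary of a future set, it is an achronal boundary, hence a locally achronal (indeed achronal) edgeless topological hypersurface. It therefore remains to produce, through every $P\in H$, a future inextendible lightlike geodesic starting at $P$ and contained in $H$; achronality of the geodesic then follows from achronality of $H$.

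\textbf{Step 1: a causal curve from $P$ that misses $\tilde I^+(\tilde \Sigma)$.} Let $P\in H$. Then $P\notin \tilde D$, because $\tilde D$ is open and $H=\p\tilde D$. By definition of $\tilde D=\tilde D^-(\tilde I^+(\tilde\Sigma))$, there is a future inextendible causal curve $\tilde x$ starting at $P$ that never meets $\tilde I^+(\tilde\Sigma)$. (Equivalently, we could rerun the limit-lift construction from the proof of (\ref{mxwe}) with $r+\epsilon$ replaced by $b(p)$, but the definitional argument is cleaner.)

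\textbf{Step 2: the curve stays in $H$.} Every point $Q$ of $\tilde x$ lies outside $\tilde D$, because the tail of $\tilde x$ from $Q$ is a future inextendible causal curve that misses $\tilde I^+(\tilde\Sigma)$. Next, $P\in\overline{\tilde D}$ and $\tilde D$ is an open future set, so $\tilde I^+(P)\subset \tilde I^+(\overline{\tilde D})=\tilde I^+(\tilde D)= \tilde D$. Take $Q\in\tilde x$ and $Q'\gg Q$. Then $Q'\gg P$, so $Q'\in\tilde D$. Hence $Q\in\overline{\tilde I^+(Q)}\subset\overline{\tilde D}$, and so $Q\in \overline{\tilde D}\setminus\tilde D=H$.

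\textbf{Step 3: the curve is an unbroken lightlike geodesic.} This is the main point to argue carefully. Suppose two points $Q_1<Q_2$ of $\tilde x$ satisfied $Q_1\ll Q_2$, or suppose the segment between them could be deformed into a timelike curve. Then $Q_1\ll Q_2$ with both points in $H$, which contradicts the achronality of $H$. Therefore every segment of $\tilde x$ is a causal curve between points that are causally but not chronologically related. By the standard result \cite{minguzzi18b}, such a curve is, after reparametrization, a lightlike geodesic without conjugate points in its interior. It follows that $\tilde x$ is a future inextendible lightlike geodesic contained in $H$. It is achronal because $H$ is.

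Combining Steps 1–3 with the fact that $H$ is an achronal topological hypersurface gives the required $C^0$ future null hypersurface. The only delicate step is Step 2, which needs the future-set property of $\tilde D$ and its openness; both were established in Proposition \ref{propprod}.
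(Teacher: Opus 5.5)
Your proposal is correct and follows the paper's proof essentially step by step: you pick a future inextendible causal curve from $P$ that misses $\tilde I^+(\tilde \Sigma)$, show it stays out of $\tilde D$, use that $\tilde D$ is an open future set to keep it in $\overline{\tilde D}$, and conclude from the achronality of $H$ that it is a lightlike geodesic. The only difference is cosmetic: you phrase the closure step directly, via $\tilde I^+(\overline{\tilde D})=\tilde I^+(\tilde D)=\tilde D$, whereas the paper argues by contradiction with a point $Q\notin\overline{\tilde D}$, a neighborhood $O$ of $Q$ disjoint from $\overline{\tilde D}$, and a point $R\in O$ with $R\gg Q$.
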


The geodesics mentioned in this statement are  called {\em generators} of the horizon. Observe that if prolonged in the past direction they can escape the horizon.
The proof is a standard limit curve argument but it is repeated here because $H$ is not of the familiar form $H^-(S)$ with $S$ closed and achronal.
\begin{proof}
The set $H$ is an achronal boundary, hence a locally Lipschitz topological hypersurface.
Let $P\in H$.
Since $H=\p \tilde D$ and $\tilde D$ is open, $P\notin \tilde D$ and there are future-inextendible causal curves  starting from $P$ and not intersecting $\tilde I^+(\tilde \Sigma)$. (Alternatively, one can take a sequence $P_k\to P$, $P_k\notin \tilde D$, and use the limit curve theorem.) Let $\tilde x$ be one of them. It does not intersect $\tilde D$, since the portion of $\tilde x$ after a point of $\tilde D$ would have to intersect $\tilde I^+(\tilde \Sigma)$. Suppose  $\tilde x$ is not contained in $H$, so that there is some point $Q\in \tilde x\backslash  \overline{\tilde D}$. Let $O$ be a neighborhood of $Q$ disjoint from $\overline{\tilde D}$ and let $R\in O$, $R\gg Q$. Since $P\le Q\ll R$ we have $P\ll R$, thus $\tilde I^-(R)$ is an open neighborhood of $P\in \p \tilde D$ and so it contains some point $P'\in \tilde D$. As $\tilde D$ is a future set and $R\gg P'$, $R\in \tilde D$, a contradiction with $R\in O$. Thus $\tilde x$ is entirely contained in $H$. Since $H$ is achronal, $\tilde x$ is achronal and hence, once reparametrized, a lightlike geodesic.
\end{proof}

Future-inextendible maximizing causal curves are also called future causal rays. Rays are not necessarily affinely parametrized. To stress that the curve is affinely parametrized we might also use the word geodesic.

The following notion comes from weak KAM theory.
\begin{definition} \label{defcal}
Let $u: M\to [-\infty, +\infty]$  be a function.
A causal curve $x: I \to M$ with image in $\textrm{dom}\, u:=u^{-1}(\mathbb{R})$ is said to be {\em $u$-calibrated} if for every $t_1,t_2\in I$, $t_2>t_1$,
\[
u(x(t_2))=u(x(t_1))+\ell(x\vert_{[t_1,t_2]}).
\]
\end{definition}
Here $\textrm{dom}\, u$ is the effective domain of $u$, namely the subset of $M$ where it is finite. The requirement that calibrated curves be contained in $\textrm{dom}\, u$ is essential: with the usual conventions on infinite values, every causal curve contained in $\{u=+\infty\}$ would otherwise satisfy the calibration identity.
The following result is immediate.

\begin{proposition}
Let $u: M\to [-\infty, +\infty]$  be a function and let $x:I\to \textrm{dom}\, u$ be a causal curve. Then $x$ is $u$-calibrated iff its lightlike lift $\tilde x(t)=(x(t), u(x(a)) +\ell(x\vert_{[a,t]}))$, $t\in I$, (for some and hence every $a\in I$) has image in $\textrm{graph}\, u$.
\end{proposition}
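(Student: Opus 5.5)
Both implications follow by direct computation from the definition of the lightlike lift. Fix $a\in I$ and set $z(t):=u(x(a))+\ell(x\vert_{[a,t]})$ for $t\in I$. For $t<a$ we use the convention $\ell(x\vert_{[a,t]})=-\ell(x\vert_{[t,a]})$, which is the natural extension needed when $I$ extends to the past of $a$. By the observation of Section \ref{secprod}, $\tilde x=(x,z)$ is a lightlike curve of $\tilde M$. Since $x$ takes values in $\textrm{dom}\, u$, all quantities are finite and no conventions about infinite values enter. Throughout, ``$\textrm{graph}\, u$'' means the set of pairs $(p,u(p))$ with $p\in\textrm{dom}\, u$.

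\emph{($\Rightarrow$)} Assume $x$ is $u$-calibrated. For $t>a$, the calibration identity with $t_1=a$, $t_2=t$ gives $u(x(t))=u(x(a))+\ell(x\vert_{[a,t]})=z(t)$. For $t<a$, the identity with $t_1=t$, $t_2=a$ gives $u(x(t))=u(x(a))-\ell(x\vert_{[t,a]})=z(t)$. Hence $\tilde x(t)=(x(t),u(x(t)))\in \textrm{graph}\, u$ for every $t\in I$.

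\emph{($\Leftarrow$)} Assume $z(t)=u(x(t))$ for all $t\in I$. For $t_1<t_2$ in $I$, the additivity of the length functional along $x$ gives $z(t_2)-z(t_1)=\ell(x\vert_{[t_1,t_2]})$, regardless of where $a$ lies relative to $t_1,t_2$. Therefore $u(x(t_2))=u(x(t_1))+\ell(x\vert_{[t_1,t_2]})$, which is the calibration identity.

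\emph{Independence of $a$.} By the direct implication, if the condition holds for one $a$ then $x$ is calibrated. Applying the direct implication again with any other base point $a'$ shows that the lift built from $a'$ also lies in $\textrm{graph}\, u$. So ``some'' and ``every'' are equivalent.

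There is no real obstacle. The only delicate points are using the signed-length convention for $t<a$ and noting that the hypothesis $x\subset \textrm{dom}\, u$ is what excludes the degenerate $\pm\infty$ cases discussed after Definition \ref{defcal}.
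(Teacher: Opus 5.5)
Your proof is correct and is exactly the direct verification the paper intends when it calls this result immediate (it gives no written proof); reading $\ell(x\vert_{[a,t]})$ as a signed length for $t<a$ is the right interpretation when $a$ is not the initial point of $I$. One small slip: in the independence-of-$a$ paragraph, the first step (the condition for one $a$ implies calibration) uses the converse implication ($\Leftarrow$), not the direct one.
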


\begin{definition}
A function $u: M\to [-\infty, +\infty]$ is {\em rushing} \cite{minguzzi18b} (dominated) if for every $p,q\in M$, $p\le q$,
\[
u(p)+d(p,q)\le u(q).
\]
\end{definition}

\begin{proposition}
The function $u: M\to [-\infty, +\infty]$ is rushing iff the set $\tilde U\subset \tilde M$, $\tilde U=\{(p,z): z\le u(p)\}$ satisfies $\tilde J^+(\tilde U)\subset \tilde U$.
\end{proposition}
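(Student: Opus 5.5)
We prove the two inclusions separately, using only the definitions together with the translational invariance of the cone structure of $\tilde M$ along $\p_z$ and the description of causal curves in $\tilde M$ from the beginning of Section \ref{secprod}. Recall that a curve $\tilde x=(x,z)$ is causal iff $x$ is causal and $\vert \dot z\vert\le \sqrt{-g(\dot x,\dot x)}$. Consequently, if $(p,a)\le (q,c)$ then $p\le q$ and $c\le a+d(p,q)$. Conversely, if $p\le q$ and $c\le a+d(p,q)$, then $(p,a)\le(q,c)$.

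For the second fact, let $x$ be a maximizing causal curve from $p$ to $q$ (global hyperbolicity). Its lightlike lift starting at $(p,a)$ ends at $(q,a+d(p,q))$. Any point $(q,c)$ with $c\le a+d(p,q)$ is then reached in one of two ways.
\begin{itemize}
\item If $d(p,q)>0$, we follow the curve $(x(t), a+\alpha\,\ell(x\vert_{[0,t]}))$ with $\alpha\in[0,1]$ chosen suitably. This is causal because $\alpha\le 1$.
\item If $c<a$, we first translate downward. Alternatively, we use the curve $(x(t),z(t))$ with $\vert\dot z\vert\le\sqrt{-g(\dot x,\dot x)}$ and $z$ decreasing, which is allowed since the inequality is symmetric in the sign of $\dot z$.
\end{itemize}
We need the reachable range of $c$ to include every value below $a+d(p,q)$. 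Downward moves cost proper length just as upward ones do, so we cannot simply subtract arbitrary amounts of $z$. Instead we argue with sets: we show that $\tilde U$ is closed under decreasing $z$ and under lightlike lifts, and we avoid descending along causal curves altogether.

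\emph{($\Rightarrow$).} Assume $u$ is rushing, and let $(p,a)\in\tilde U$ and $(p,a)\le(q,c)$. Then $p\le q$ and $c\le a+d(p,q)\le u(p)+d(p,q)\le u(q)$, so $(q,c)\in\tilde U$. Infinite values cause no trouble. If $u(p)=-\infty$, no point $(p,a)$ lies in $\tilde U$, because $a\in\mathbb{R}$. If $u(q)=+\infty$, the final inequality is trivial.

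\emph{($\Leftarrow$).} Assume $\tilde J^+(\tilde U)\subset\tilde U$, and let $p\le q$. We may assume $u(p)>-\infty$. Take any real $a\le u(p)$, so that $(p,a)\in\tilde U$. The lightlike lift of a maximizing causal curve from $p$ to $q$ (or the trivial curve if $p=q$) shows $(p,a)\le(q,a+d(p,q))$. By hypothesis, $(q,a+d(p,q))\in\tilde U$, that is, $a+d(p,q)\le u(q)$. Taking the supremum over real $a\le u(p)$ gives $u(p)+d(p,q)\le u(q)$, and this also covers the case $u(p)=+\infty$. This direction uses only lightlike lifts, so the concern raised above about moving downward in $z$ does not arise.

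The main obstacle is just the bookkeeping with the extended values $\pm\infty$, together with the existence of a maximizing curve between causally related points. The latter is guaranteed by global hyperbolicity; any causal curve of length close to $d(p,q)$ also suffices, followed by a limit in $a$.
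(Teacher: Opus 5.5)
Your argument in the two directions is correct and is essentially the paper's proof. For ($\Rightarrow$) you project the causal curve and use $c\le a+\ell(x)\le a+d(p,q)\le u(p)+d(p,q)\le u(q)$. For ($\Leftarrow$) you apply the hypothesis to the lightlike lift starting at $(p,a)$ and take the supremum over real $a\le u(p)$; the paper also takes the supremum over all causal curves from $p$ to $q$, which avoids invoking maximizers.

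You should delete the ``converse'' asserted in your opening paragraph, since it is false (although unused). The relation $(p,a)\le(q,c)$ forces $\vert c-a\vert\le d(p,q)$, not just $c\le a+d(p,q)$. For example, $(p,a)\not\le(p,c)$ for $c<a$, because $d(p,p)=0$.
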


The graph is defined in the usual way over $\textrm{dom}\, u$. Observe that for $u$ rushing, $\textrm{graph}\, u$ is included in the achronal boundary\footnote{An achronal boundary is the boundary of a future set.} $\p \tilde U$ because for each $p\in \textrm{dom}\, u$, $(p,u(p))\in \tilde U$, while for each $\epsilon>0$, $(p,u(p)+\epsilon)\notin \tilde U$.

\begin{proof}
Suppose $u$ is rushing and let $(p,a)\in \tilde U$, $(q,c)\ge (p,a)$. There is a causal curve $\tilde x=(x,z)$ from $(p,a)$ to $(q,c)$. The projected curve $x$ is causal and $c\le a+\ell(x)\le a+d(p,q)\le u(p)+d(p,q)\le u(q)$, which proves that $(q,c)\in \tilde U$.
Conversely, suppose that $\tilde U$ satisfies $\tilde J^+(\tilde U)\subset \tilde U$, and let $p\le q$. If $u(p)=-\infty$ there is nothing to prove, so let us assume $u(p)>-\infty$. Let $a\le u(p)$ be a finite number.
For every causal curve $x$ from $p$ to $q$ the lightlike lift of $x$ starting at $(p,a)$ ends at $(q,a+\ell(x))$, thus  $(q,a+\ell(x))\in \tilde U$, which implies $a+\ell(x)\le u(q)$. Taking the supremum over $x$ and $a$ we get $u(p)+d(p,q)\le u(q)$.
\end{proof}

\begin{proposition} \label{jjtr}
Let $u: M\to [-\infty, +\infty]$ be rushing and $x:[a,b]\to \textrm{dom}\, u$ a causal curve such that $u(x(b))= u(x(a))+\ell(x\vert_{[a,b]})$. Then $x$ is maximizing and $u$-calibrated.
\end{proposition}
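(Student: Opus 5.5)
The plan is to squeeze the inequality chain coming from the rushing property and the reverse triangle inequality. Write $\ell_{[s,t]}$ for $\ell(x\vert_{[s,t]})$. Fix $a\le t_1<t_2\le b$. The curve $x$ takes values in $\textrm{dom}\, u$, so all the values $u(x(t))$ are finite and differences of them make sense.

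First step: apply the rushing inequality to the three pairs $x(a)\le x(t_1)$, $x(t_1)\le x(t_2)$ and $x(t_2)\le x(b)$, and bound each distance from below by the length of the corresponding piece of $x$. This gives
\[
u(x(b))\ge u(x(t_2))+d(x(t_2),x(b))\ge u(x(t_2))+\ell_{[t_2,b]},
\]
\[
u(x(t_2))\ge u(x(t_1))+d(x(t_1),x(t_2))\ge u(x(t_1))+\ell_{[t_1,t_2]},
\]
\[
u(x(t_1))\ge u(x(a))+d(x(a),x(t_1))\ge u(x(a))+\ell_{[a,t_1]}.
\]
Summing the three, and using additivity of the length functional, we get
\[
u(x(b))\ge u(x(a))+d(x(a),x(t_1))+d(x(t_1),x(t_2))+d(x(t_2),x(b))\ge u(x(a))+\ell_{[a,b]}.
\]

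Second step: the hypothesis says the two ends of this chain coincide. Since every term is finite, each intermediate inequality must be an equality. From the second line this gives $u(x(t_2))=u(x(t_1))+\ell_{[t_1,t_2]}$, which is $u$-calibration. It also gives $d(x(t_1),x(t_2))=\ell_{[t_1,t_2]}$, so every segment of $x$ realizes the Lorentzian distance between its endpoints, that is, $x$ is maximizing.

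Taking $t_1=a$ and $t_2=b$ shows directly that $x$ itself is maximizing. I expect no real obstacle here. The only care needed is the finiteness of $u$ along $x$, which is exactly what justifies cancelling terms and is guaranteed by the requirement that $x$ has image in $\textrm{dom}\, u$.
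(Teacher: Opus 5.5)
Your proof is correct and follows essentially the paper's argument. Both apply the rushing inequality to the pieces $[a,t_1]$, $[t_1,t_2]$, $[t_2,b]$, bound each distance below by the length, sum, and let the hypothesis force every inequality to be an equality. The only cosmetic difference is that the paper first extracts maximality of the whole curve from the two-term chain and only then sums the segment-wise inequalities, whereas your single three-piece chain yields maximality and calibration of every segment at once.
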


\begin{proof}
This follows from
$
u(x(b))= u(x(a))+\ell(x)\le u(x(a))+d(x(a),x(b))\le u(x(b))
$
which proves $\ell(x)=d(x(a),x(b))$, namely $x$ is maximizing. Thus for every $t_1<t_2$,
$\ell(x\vert_{[t_1,t_2]})=d(x(t_1),x(t_2))$, so
$
u(x(t_2))\ge u(x(t_1))+\ell(x\vert_{[t_1,t_2]}).
$
The same inequality holds on $[a,t_1]$ and on $[t_2,b]$; if one of them were strict, summing we would get $u(x(b))>u(x(a))+\ell(x)$, a contradiction.
\end{proof}

Let us return to the rushing function $b$.

\begin{proposition} \label{bber}
For every $p\in \textrm{dom}\, b$ there is a future inextendible causal geodesic $x: [0,a)\to M$, $a\in (0,+\infty]$, starting from $p$ which is maximizing and $b$-calibrated, in the sense that it is  entirely contained in $\textrm{dom}\, b$ and for every $t_1<t_2$
\[
b(x(t_2))=b(x(t_1))+d(x(t_1),x(t_2)).
\]
It is the projection of a generator of $H$ starting from $(p,b(p))$. Conversely, the lightlike lift of every future-inextendible $b$-calibrated causal curve $x$ starting from $p$, namely $t\mapsto (x(t),b(x(t)))$, is a generator of $H$.
If a future inextendible $b$-calibrated causal curve $x_2$ passes through a point $x_1(s)$, $s>0$, of another  $b$-calibrated causal curve $x_1:[0,a_1)\to M$ then, after that point, it coincides with $x_1$ up to reparametrization.
\end{proposition}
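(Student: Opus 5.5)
The idea is to work entirely in the product spacetime and transfer the horizon dynamics of Proposition \ref{cnph} back to $M$.

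\textbf{Existence.} Let $p\in \textrm{dom}\, b$. By (\ref{mxwe}) and the lower semi-continuity of $b$, the point $P=(p,b(p))$ lies in $\p\operatorname{epi} b=H$. Proposition \ref{cnph} gives an achronal future-inextendible lightlike geodesic $\tilde x$ of $\tilde M$ starting at $P$ and contained in $H$. By the structure of causal geodesics of $\tilde M$ obtained from the Killing field $\p_z$, we can write $\tilde x(t)=(x(t),b(p)+\beta t)$ with $x$ a causal geodesic of $M$ and $\beta^2=-g(\dot x,\dot x)$. By Lemma \ref{lemgh}, $x$ is future inextendible. Along a lightlike curve of $\tilde M$ we have $\vert\dot z\vert=\sqrt{-g(\dot x,\dot x)}$. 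Since $\tilde x$ stays in $H$, which is the boundary of the epigraph, $z$ cannot decrease: otherwise we would produce a timelike-separated pair of points on $H$ by perturbing, contradicting achronality. More directly, $\beta<0$ is excluded because the lift $(x(t),b(p)+\ell(x|_{[0,t]}))$ would then give a causal curve from $P$ to a point of $H$ lying strictly above $\tilde x(t)$, hence chronologically related to a point of $H$. So $z(t)=b(p)+\ell(x\vert_{[0,t]})$.

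\textbf{Calibration.} Next I need $z(t)=b(x(t))$, i.e.\ that $H$ points lie on the graph. This holds if $b(x(t))<\infty$. Since $\tilde D$ is the strict hypograph and $H=\p\tilde D$, any point $(q,z)\in H$ satisfies $z\ge b(q)$, because $\operatorname{epi}b$ is closed and contains $H$. It also satisfies $z\le b(q)$, because it is a limit of points $(q_k,z_k)$ with $z_k<b(q_k)$. Here I must combine lower semicontinuity with the rushing property. The cleaner route is to use rushing: from $p\le x(t)$,
\[
b(x(t))\ge b(p)+d(p,x(t))\ge b(p)+\ell(x\vert_{[0,t]})=z(t),
\]
while $(x(t),z(t))\in\operatorname{epi}b$ gives $z(t)\ge b(x(t))$. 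Hence $b(x(t))=b(p)+\ell(x|_{[0,t]})$ is finite, so $x\subset\textrm{dom}\,b$. Proposition \ref{jjtr} then yields that $x$ is maximizing and $b$-calibrated.

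\textbf{Converse.} If $x$ is future-inextendible and $b$-calibrated, the curve $t\mapsto(x(t),b(x(t)))$ is its lightlike lift, hence causal, and it lies in $\textrm{graph}\,b\subset\p\operatorname{epi}b=H$. A causal curve contained in an achronal set is, after reparametrization, a lightlike geodesic. It is future inextendible because its projection is, so it is a generator.

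\textbf{Non-branching.} This is the step I expect to be the main obstacle. Suppose $x_2$ passes through $x_1(s)$ with $s>0$. Concatenating $x_1|_{[0,s]}$ with $x_2$ after that point gives a causal curve along which $b$ increases exactly by the length, using the calibration of each piece. By Proposition \ref{jjtr} this concatenation is maximizing, hence an unbroken geodesic. Thus $x_2$ continues $x_1$ smoothly, and by geodesic uniqueness the two coincide after $x_1(s)$ up to reparametrization. The subtle case is when the curves are lightlike, where maximizing still forces no corner. The equivalent statement in $\tilde M$ is that null generators of an achronal boundary do not branch at interior points.
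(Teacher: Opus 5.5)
Your proof is essentially the paper's (a generator from Proposition \ref{cnph}, squeezed between $H\subset\operatorname{epi} b$ and the rushing inequality, then Proposition \ref{jjtr}; the same converse; and non-branching via Proposition \ref{jjtr} applied to the concatenation in $M$, which is equivalent to the paper's use of the achronality of the lifted concatenation in $H$). The one exception is your justification of $\beta\ge 0$, which does not work: the endpoint $(x(t),b(p)+\ell(x\vert_{[0,t]}))$ of the lightlike lift is not known to lie on $H$, and a point vertically above $\tilde x(t)$ is not even causally related to it. The fix is already in your calibration step if you keep $z(t)=b(p)+\beta t$ there instead of presupposing the sign: $b(p)+\vert\beta\vert t\le b(x(t))\le b(p)+\beta t$ forces $\beta\ge 0$ and gives finiteness and calibration at once, exactly as in the paper.
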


From one point can start many $b$-calibrated causal curves, as it happens for the generators of a $C^0$ future null hypersurface.

\begin{proof}
Let $p\in \textrm{dom}\, b$ and $c:=b(p)$. We have $(p,c)\in \textrm{epi}\, b$ while $(p,c-1/n)\in \tilde D$, thus $(p,c)\in \p \tilde D=H$. By Proposition \ref{cnph} there is a generator $\tilde x\subset H$ starting from $(p,c)$ which, affinely parametrized, reads $\tilde x(t)=(x(t),c+dt)$, $t\in [0,a)$, where $x$ is a causal geodesic with $d^2= -g(\dot x,\dot x)$ and $a\in (0,+\infty]$, since at this stage the generator might be incomplete. By Lemma \ref{lemgh} $x$ is future inextendible. Since $\textrm{epi}\, b$ is closed, $H\subset \textrm{epi}\, b$, thus
\[
b(x(t))\le c+dt.
\]
On the other hand, by the rushing property,
\[
b(x(t))\ge b(p)+d(p,x(t))\ge c+\ell(x\vert_{[0,t]})=c+\vert d\vert t.
\]
The two inequalities give $d\ge 0$, $b(x(t))=c+dt<+\infty$ and $d(p,x(t))=\ell(x\vert_{[0,t]})$. By Proposition \ref{jjtr}, $x$ is maximizing and $b$-calibrated. This argument does not use the inclusion $H\subset \textrm{graph}\, b$, which is proved only in Theorem \ref{xxhg}. That $x$ is timelike with infinite proper time is proved in Theorem \ref{xxhg}(i).

Conversely, if $x$ is $b$-calibrated and future inextendible, its lightlike lift $(x(t),b(x(t)))$ lies in $\textrm{graph}\, b\subset \p \textrm{epi}\, b=H$. It is a causal curve contained in the achronal set $H$, hence a future inextendible (Lemma \ref{lemgh}) lightlike geodesic contained in $H$, namely a generator.

For the last statement, let $x_2(s_2)=x_1(s)$. The concatenation of the lift of $x_1\vert_{[0,s]}$ with the lift of $x_2$ after $s_2$ is a causal curve contained in the achronal set $H$, hence a lightlike geodesic without corners. Thus the lift of $x_2$ continues that of $x_1$ and, by uniqueness of geodesics, the two lifts coincide after $x_1(s)$, and so do $x_1$ and $x_2$.

\end{proof}

\begin{remark}
The previous proof shows that the affinely parametrized generators of $H$ starting from $\textrm{graph}\, b$ are of the form $\tilde x(t)=(x(t),b(x(0))+dt)$ where $x$ is a future-inextendible maximizing causal geodesic and $d=\sqrt{- g(\dot x,\dot x)}\ge 0$ is a constant.
\end{remark}

\begin{proposition}
The proper time-parametrized future $\Sigma$-rays are $b$-calibrated, hence the projections of generators of $H$.
\end{proposition}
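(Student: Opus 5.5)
Let $\sigma:[0,+\infty)\to M$ be a proper time parametrized future $\Sigma$-ray. The plan is to check Definition \ref{defcal} directly, using the values of $b$ along $\sigma$ already computed, and then appeal to Proposition \ref{bber} for the statement about generators.

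First, by Theorem \ref{cpgx} the ray $\sigma$ is timelike, and by future timelike completeness it is defined on all of $[0,+\infty)$, hence future inextendible. By Proposition \ref{ncho}, $b(\sigma(t))=t$ for every $t\ge 0$, so $\sigma$ lies in $\textrm{dom}\, b$. Since $\sigma$ is a $\Sigma$-ray it is maximizing, and with the proper time parametrization $\ell(\sigma\vert_{[t_1,t_2]})=t_2-t_1$. Therefore
\[
b(\sigma(t_2))=t_2=t_1+(t_2-t_1)=b(\sigma(t_1))+\ell(\sigma\vert_{[t_1,t_2]}),
\]
which is exactly the calibration identity. Alternatively, the identity on $[0,t]$ together with the rushing property (\ref{fcpb}) gives calibration on every subinterval by Proposition \ref{jjtr}.

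Second, since $\sigma$ is a future inextendible $b$-calibrated causal curve, the converse part of Proposition \ref{bber} shows that its lightlike lift $t\mapsto(\sigma(t),b(\sigma(t)))=(\sigma(t),t)$ is a generator of $H$. Hence $\sigma$ is the projection of a generator.

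There is essentially no obstacle here. The only point to watch is that calibrated curves must lie in $\textrm{dom}\, b$, and this follows from the finiteness $b(\sigma(t))=t$ given by Proposition \ref{ncho}, itself a consequence of Proposition \ref{kcqq}.
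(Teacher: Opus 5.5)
Your proof is correct and essentially the paper's: calibration comes from $b(\sigma(t))=t$ (Proposition \ref{ncho}) together with $\ell(\sigma\vert_{[t_1,t_2]})=t_2-t_1$, and the generator statement comes from the converse part of Proposition \ref{bber}. The one difference is how $\sigma\subset \textrm{dom}\, b$ is obtained. The paper uses $\sigma\subset I^-(\sigma)\subset \textrm{dom}\, b$ via timelikeness and Proposition \ref{cmol}. You read it off directly from the finite values $b(\sigma(t))=t$, which is equally valid and slightly more direct.
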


\begin{proof}
We know that a future $\Sigma$-ray $\sigma$ is timelike. By Prop.\ \ref{cmol} $\sigma\subset I^-(\sigma)\subset \textrm{dom}\, b$ and by Prop.\ \ref{ncho}, $b(\sigma(t_2))=b(\sigma(t_1))+(t_2-t_1)$, where $t_2-t_1=\ell(\sigma\vert_{[t_1,t_2]})$.
\end{proof}

We shall use the following standard facts on the Lorentzian cut locus in globally hyperbolic spacetimes \cite[Ch.\ 9]{beem96}. Let
\[
W:=\{(q,r)\in M\times M: q\in I^-(r), \ q\notin \textrm{Cut}^-(r)\}.
\]
Then $W$ is open and $d$ is smooth on $W$. Moreover, if $\alpha:[0,s']\to M$ is a maximizing timelike geodesic and $0<s<s'$, then $(\alpha(0),\alpha(s))\in W$: indeed $\alpha(0)$ is not conjugate to $\alpha(s)$ along $\alpha$, otherwise $\alpha$ would not be maximizing beyond $\alpha(s)$, and $\alpha\vert_{[0,s]}$ is the unique maximizing geodesic from $\alpha(0)$ to $\alpha(s)$, otherwise concatenating a different one with $\alpha\vert_{[s,s']}$ we would get a maximizing curve with a corner.

\begin{theorem} \label{xxhg}
The following properties hold.
\begin{itemize}
\item[(i)] For $p\in \textrm{dom}\, b$, every future inextendible $b$-calibrated causal curve starting from $p$ is a timelike ray of infinite length, hence, once parametrized with proper time, a future complete timelike geodesic.
\item[(ii)] The set $\textrm{dom}\, b$ is open.
\item[(iii)] Let $p\in \textrm{dom}\, b$ and let $\alpha(t)$, $\alpha(0)=p$, be a proper-time parametrized $b$-calibrated curve starting from $p$. For each $s> 0$
\begin{equation} \label{cmby}
b_{p,s}(q)=b(p)+s-d(q,\alpha(s))
\end{equation}
is a smooth upper support function for $b$ at $p$, namely it is smooth in a neighborhood of $p$, $b_{p,s}(p)=b(p)$ and $b\le b_{p,s}$ on $I^-(\alpha(s))$.
\item[(iv)] $b: M\to (-\infty,+\infty]$ is continuous, $H$ is the graph of $b\vert_{\textrm{dom}\, b}$, and every generator of $H$ projects to a future inextendible $b$-calibrated curve.
\item[(v)] For every compact subset $K\subset \textrm{dom}\, b$, the set $U_K$ of unit future-directed timelike vectors tangent to some $b$-calibrated curve at its starting point in $K$ is compact in $TM$.
\item[(vi)] $b\vert_{\textrm{dom}\, b}$ is locally semi-concave (locally upper-$C^2$) and hence locally Lipschitz.
\end{itemize}
\end{theorem}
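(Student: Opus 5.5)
I will prove the items in the order (i), (iii), (ii), (iv), (v), (vi), since each one feeds the next.

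\emph{Item (i).} Let $x$ be a future inextendible $b$-calibrated causal curve from $p\in\textrm{dom}\, b$. By Proposition \ref{jjtr} it is maximizing, and by Proposition \ref{bber} it is a causal geodesic. Since $x$ enters $J^+(\Sigma)$, eventually
\[
b(x(t))-b(p)=\ell(x\vert_{[0,t]}).
\]
I use (\ref{cmlr}) together with Theorem \ref{kvvb}: $x$ meets every $\Sigma_\tau$, and $b(x(t))\ge d^+_\Sigma(x(t))\to+\infty$ along $x$. Hence $\ell(x)=+\infty$, so $x$ is not lightlike. A maximizing geodesic is either lightlike or timelike, so $x$ is timelike. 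The proper time is infinite, which gives future completeness.

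\emph{Item (iii).} Take $\alpha$ proper-time parametrized and calibrated. For $q\in I^-(\alpha(s))$ the rushing property (\ref{fcpb}) gives $b(q)+d(q,\alpha(s))\le b(\alpha(s))=b(p)+s$, that is $b\le b_{p,s}$, with equality at $p$. Smoothness near $p$ follows from the cut-locus facts quoted before the theorem. I apply them to the maximizing geodesic $\alpha\vert_{[0,s']}$ with $s'>s$, which gives $(p,\alpha(s))\in W$. Since $W$ is open, $d(\cdot,\alpha(s))$ is smooth near $p$.

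\emph{Items (ii) and (iv).} Item (ii) is immediate from (iii). $I^-(\alpha(s))$ is an open neighbourhood of $p$ on which $b\le b_{p,s}<\infty$, so $\textrm{dom}\, b$ is open. Upper semicontinuity at $p\in\textrm{dom}\, b$ also follows from the upper support function, and $b$ is already lower semicontinuous, so $b$ is continuous on $\textrm{dom}\, b$. At points outside the domain $b=+\infty$, and lower semicontinuity gives continuity into $(-\infty,+\infty]$. Then $\textrm{epi}\, b$ is the closure of the region above the continuous graph, so $\p\,\textrm{epi}\, b=\textrm{graph}\, b\vert_{\textrm{dom}\, b}$, hence $H=\textrm{graph}\, b$ by (\ref{mxwh}). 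A generator of $H$ then starts on the graph, and the argument in the proof of Proposition \ref{bber} shows that it projects to a calibrated curve.

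\emph{Item (v).} Take unit vectors $v_n\in U_K$ with calibrated geodesics $\alpha_n$ starting at $p_n\in K$, and pass to a subsequence $p_n\to p\in K$. The risk is that $v_n$ degenerates toward the light cone, i.e.\ escapes to infinity in $TM$. I first use a limit-curve argument for the $h$-reparametrized curves. The limit is a causal curve $\alpha$ from $p$. The calibration identity passes to the limit by continuity of $b$ and upper semicontinuity of length, which gives $b(\alpha(t))\ge b(p)+\ell(\alpha\vert_{[0,t]})$; the reverse inequality holds by rushing, so $\alpha$ is calibrated. By (i), $\alpha$ is timelike with infinite length. Then the $h$-normalized tangent vectors $v_n/|v_n|_h$ converge to the timelike tangent of $\alpha$, because geodesics depend continuously on initial data and the limit curve is the geodesic with the limit direction. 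So the $v_n$ stay bounded and the set is closed, hence compact. Excluding the null-degeneration is the main obstacle here: the limit curve must be identified with the geodesic tangent to the limiting null direction, and if that direction were null it would contradict (i).

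\emph{Item (vi).} I use the support functions $b_{p,s}$ with $s=1$ for $p$ ranging over a compact $K$. By (v) and the openness of $W$, the points $(q,\alpha_p(1))$, for $q$ in a fixed neighbourhood of $K$ and all calibrated $\alpha_p$, lie in a compact subset of $W$; I may shrink $s$ uniformly if needed. On that set $d$ is smooth, so the Hessians of the $b_{p,s}$ are uniformly bounded. A function touched from above at every point by smooth functions with uniformly bounded $C^2$ norms is locally semiconcave, and hence locally Lipschitz.
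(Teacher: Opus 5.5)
Your proposal is correct and follows the paper's proof essentially item by item. The two deviations are minor: in (i) you exclude lightlike calibrated curves directly from the infinite length forced by meeting every $\Sigma_k$, where the paper uses non-total imprisonment, and in (iv) you identify $\p\,\textrm{epi}\, b$ directly instead of arguing by contradiction.

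There are two slips to fix.

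In (v), upper semicontinuity of the length gives $b(\alpha(t))\le b(p)+\ell(\alpha\vert_{[0,t]})$, and it is rushing that gives $\ge$; you have the directions swapped.

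In (vi), the uniform compact subset of $W$ comes from restricting $q$ to a ball of uniform radius around the base point of each calibrated ray, using compactness of $U_K$ and openness of $W$. Shrinking $s$ does not help, since it only makes $I^-(\alpha_p(s))$ smaller.
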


The regularity of $b$ is inherited from that of horizons which are known to be locally Lipschitz and semi-convex when expressed as graphs over a spacelike hypersurface in the direction of a timelike vector field \cite{chrusciel01,minguzzi14d}. Here the  horizon is expressed as a graph where the ``vertical'' variable $z$ is spacelike, $\tilde g(\p_z,\p_z)>0$, which explains the upper (rather than lower) support functions. One could obtain the  regularity of $b$ from that of $H$ through a change of coordinates on $\tilde M$, taking advantage of the fact that $\p_z$ stays away from the light cone; however, the direct proof presented below is shorter.

\begin{proof}
(i). Let $x$ be a $b$-calibrated future inextendible causal curve starting from $p$. By Proposition \ref{jjtr} it is maximizing, hence a timelike or a lightlike geodesic. Suppose it is lightlike, then $b(x(t))=c:=b(p)$ for every $t\ge 0$. Let $\Sigma':=\Sigma_{\vert c\vert+1}$. For those $t$ such that $x(t)\in J^+(\Sigma)$ we have, by (\ref{cmlr}), $d^+_\Sigma(x(t))\le c<\vert c\vert +1$, thus $x(t)\in J^-(\Sigma')$; the same holds if $x(t)\in J^-(\Sigma)$. Thus $x$ is imprisoned in the compact set $J^+(p)\cap J^-(\Sigma')$, which contradicts non-total imprisonment. Thus $x$ is timelike. Finally, $x$ meets every Cauchy hypersurface $\Sigma_k$, $k\in \mathbb{N}$, where by (\ref{cmlr}) $b\ge k$. As $b$ increases by the length along $x$, the length of $x$ is infinite.

(ii). Let $p\in \textrm{dom}\, b$, let $\gamma$ be a $b$-calibrated curve starting from $p$ (Prop.\ \ref{bber}) and take $r=\gamma(\epsilon)$, $\epsilon>0$. Then $p\in I^-(r)$ by (i), and for $q\in I^-(r)$ we have by (\ref{fcpb}) $b(q)\le b(r)-d(q,r)\le b(r)<+\infty$, thus $I^-(r)\subset \textrm{dom}\, b$, which proves that $\textrm{dom}\, b$ is open. This argument also shows that $b$ is bounded from above in a neighborhood of each point of $\textrm{dom}\, b$.

(iii). Since $\alpha$ is maximizing and $b$-calibrated, $b_{p,s}(p)=b(p)+s-s=b(p)$ and, for $q\le \alpha(s)$, $b(q)+ d(q,\alpha(s))\le b(\alpha(s))=b(p)+s$ which means $b(q)\le b_{p,s}(q)$. Since $\alpha$ is a timelike ray, $(p,\alpha(s))\in W$, and as $W$ is open and $d$ is smooth on $W$, the function $b_{p,s}$ is smooth in a neighborhood of $p$.

(iv). By (iii), $\limsup_{q\to p} b(q)\le \lim_{q\to p} b_{p,s}(q)=b(p)$, thus $b$ is upper semi-continuous and hence continuous on $\textrm{dom}\, b$, as we proved that $b$ is lower semi-continuous. At $p\in M\backslash \textrm{dom}\, b$, $b(p)=+\infty$ and lower semi-continuity implies $b(q)\to +\infty$ for $q\to p$. Hence $b$ is continuous.

Suppose $(p,r)$ belongs to $H$ but not to the graph of $b$. As $H\subset \textrm{epi}\, b$, $r\ge b(p)$, thus $b(p)<+\infty$ and $r>b(p)$. By continuity of $b$ there are $\epsilon>0$ and an open set $O\ni p$ such that $b(q)< r-\epsilon$ for $q\in O$, which means that the open set $O\times (r-\epsilon,r+\epsilon)$ does not intersect $\tilde D$ and contains $(p,r)$, a contradiction  since $H$ is the boundary of $\tilde D$. We conclude that $H\subset \textrm{graph}\, b$, the other inclusion having been already observed. In particular every generator starts from a point of the graph of $b$ and, by the proof of Proposition \ref{bber}, projects to a $b$-calibrated curve.

(v). Let us introduce a complete Riemannian metric $h$ on $M$. The elements of $U_K$ must be  bounded in $h$-norm, otherwise we could find a  sequence $v_k\in U_K$, $v_k\in T_{p_k}M$, $p_k\to p\in K$, $h(v_k,v_k)\to \infty$, and hence, by the limit curve theorem applied to the $h$-arc length reparametrized rays $x_k$ with initial tangent $v_k$, a lightlike geodesic ray $x$ starting from $p$. By the continuity of $b$ and $d$ the  identity $b(x_k(t_2))=b(x_k(t_1))+d(x_k(t_1),x_k(t_2))$ passes to the limit, showing that $x$ is a lightlike $b$-calibrated ray starting from $p\in \textrm{dom}\, b$, a contradiction with (i). Closedness of $U_K$ follows from the same type of limit argument, using (i) to exclude lightlike limits.

(vi). Let $p\in \textrm{dom}\, b$ and let $K\subset \textrm{dom}\, b$ be a compact neighborhood of $p$ contained in a coordinate chart. For $v\in U_K$ let $\alpha_v$ be the proper time parametrized $b$-calibrated ray with initial vector $v$; by (i) and the facts on $W$ recalled before the theorem, $(\pi(v),\alpha_v(1))\in W$. The set $\{(\pi(v),\alpha_v(1)): v\in U_K\}$ is compact by (v) and contained in the open set $W$, thus there is $\rho>0$ such that $(q,\alpha_v(1))\in W$ for every $v\in U_K$ and every $q$ at $h$-distance at most $\rho$ from $\pi(v)$. Let $C\subset K$ be the closed $h$-ball of radius $\rho/2$ centered at $p$ (with $\rho$ small enough so that it is contained in $K$). Then the set $T:=\{(q,\alpha_v(1)): q\in C, v\in U_C\}$ is a compact subset of $W$.
For $q_0\in C$ and $v\in U_C\cap T_{q_0}M$ the function
\[
b_v(q):=b(q_0)+1-d(q,\alpha_v(1))
\]
is smooth on $C$, it is an upper support function for $b$ at $q_0$ (by (iii), since $C\times\{\alpha_v(1)\}\subset W$ implies $C\subset I^-(\alpha_v(1))$), and its first and second derivatives in coordinates are bounded on $C$ uniformly in $v$, since $d$ is smooth on a neighborhood of the compact set $T$. This is the local upper-$C^2$ property,  which implies that in a neighborhood of any point $b$ is semi-concave \cite[Thm.\ 10.33]{rockafellar09}, see \cite[Prop.\ 3]{minguzzi14d} and references therein for the various equivalences. Local Lipschitzness follows from the uniform bound $G$ on the gradients of the support functions: for $q,q'\in C$, $b(q')-b(q)\le b_v(q')-b_v(q)\le G\vert q'-q\vert$, and exchanging the roles of $q$ and $q'$ we get $\vert b(q')-b(q)\vert \le G\vert q'-q\vert$ (in coordinates, for $C$ convex).

\end{proof}

The next result shows that the failure of the local equi-Lipschitz property for the Lorentzian distances to the level sets of Cauchy temporal functions, established in \cite{galloway26}, cannot happen on $\textrm{dom}\, b$ for the foliation by the level sets $\Sigma_\tau$. We say that a family of functions $\{f_\tau\}_{\tau>0}$ is {\em locally eventually equi-Lipschitz} on an open set $O$ if every point of $O$ has a neighborhood $V$ and a number $\tau_V>0$ such that the functions $f_\tau$, $\tau\ge \tau_V$, are Lipschitz on $V$ with a common constant (with respect to a coordinate metric). This is the form of the property which is controlled by the tangent criterion of \cite[Lemma 3.1]{galloway26}, and it suffices for the arguments below.

\begin{theorem} \label{thmequi}
Let $K\subset \textrm{dom}\, b$ be compact and let $\tau_0>0$ be such that $K\subset I^-(\Sigma_{\tau_0})$. The set $V_K$ of unit future-directed timelike vectors which are initial tangents of maximizing timelike geodesics from a point of $K$ to some $\Sigma_\tau$, $\tau\ge \tau_0$, (of length $d(q,\Sigma_\tau)$) is relatively compact in $TM$. As a consequence \cite[Lemma 3.1]{galloway26}, the family $\{d(\cdot,\Sigma_\tau)\}_{\tau>0}$ is locally eventually equi-Lipschitz on $\textrm{dom}\, b$.
\end{theorem}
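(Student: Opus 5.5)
The plan is to argue by contradiction, following the proof of Theorem \ref{xxhg}(v). Fix a complete Riemannian metric $h$. Suppose $V_K$ is not relatively compact. Then there are $q_k\in K$, $\tau_k\ge\tau_0$ and maximizing timelike geodesics $\gamma_k$ from $q_k$ to $\Sigma_{\tau_k}$, of length $d(q_k,\Sigma_{\tau_k})$, whose unit initial tangents $v_k$ satisfy $h(v_k,v_k)\to\infty$. Passing to subsequences, $q_k\to q\in K$, and either $\tau_k\to\tau_\infty<\infty$ or $\tau_k\to\infty$.

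In the bounded case, I would parametrize the $\gamma_k$ by $h$-arc length and apply the limit curve theorem in the compact region $J^+(K)\cap J^-(\Sigma_{\tau_0+\sup\tau_k})$ (Theorem \ref{kvvb}). This gives a limit maximizing causal curve $\gamma$ from $q$ to $\Sigma_{\tau_\infty}$, where $\tau_\infty\ge\tau_0$. Its length is $d(q,\Sigma_{\tau_\infty})>0$ by continuity of $d^-_{\Sigma_\tau}$ and upper semicontinuity of length, because $q\in I^-(\Sigma_{\tau_0})\subset I^-(\Sigma_{\tau_\infty})$. Since $\gamma$ is maximizing with positive length, it is a timelike geodesic. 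Geodesics depend continuously on their initial data, so the normalized tangents $v_k/|v_k|_h$ converge to a lightlike direction. Then the limit geodesic would be lightlike, which is a contradiction.

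In the unbounded case, the limit curve $\gamma$ is future inextendible and starts from $q\in\textrm{dom}\, b$; the limit curve is inextendible because $\Sigma_{\tau_k}$ escapes compact sets. Exactly as in the proof of $\subset$ in (\ref{mxwe}), I would show that $\gamma$ is $b$-calibrated. Since $\gamma_k$ maximizes distance to $\Sigma_{\tau_k}$, for $t_1<t_2$ we have $b_{\tau_k}(\gamma_k(t_2))=b_{\tau_k}(\gamma_k(t_1))+\ell(\gamma_k|_{[t_1,t_2]})$. As $\tau_k\to\infty$, $b_{\tau_k}\to b$ monotonically, and $b$ is continuous (Theorem \ref{xxhg}(iv)), so Dini's theorem gives locally uniform convergence on $\textrm{dom}\, b$. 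Passing to the limit with upper semicontinuity of length yields $b(\gamma(t_2))\le b(\gamma(t_1))+\ell(\gamma|_{[t_1,t_2]})$, and the rushing property (\ref{fcpb}) gives the reverse inequality. Hence $\gamma$ is $b$-calibrated; by Proposition \ref{jjtr} it is maximizing, and by Theorem \ref{xxhg}(i) it is timelike. On the other hand, blow-up of $h(v_k,v_k)$ forces the limit to be a lightlike geodesic. This contradicts the previous conclusion and proves relative compactness of $V_K$.

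The main obstacle is the unbounded case. There one must check that $\gamma$ stays in $\textrm{dom}\, b$, so that Dini applies along it. This holds because $\textrm{dom}\, b$ is a past set and $b$ is bounded above near $q$ (proof of Theorem \ref{xxhg}(ii)), so $b\le b(r)$ on a neighborhood. The argument only needs the calibration identity on compact parameter intervals near $q$, which suffices to conclude that $\gamma$ is timelike near its start, and so gives the tangent-direction contradiction.

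For the consequence, take a compact neighborhood $K$ of a point of $\textrm{dom}\, b$, which is open by Theorem \ref{xxhg}(ii). Relative compactness of $V_K$ is exactly the hypothesis of \cite[Lemma 3.1]{galloway26}. Applying that lemma to the functions $d(\cdot,\Sigma_\tau)$, $\tau\ge\tau_0$, which are semiconvex with supports given by these geodesics, gives a common local Lipschitz constant. This is the local eventual equi-Lipschitz property.
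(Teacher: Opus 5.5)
Your bounded case is essentially the paper's argument. One detail is missing there: since $\tau_k$ varies, the limit $d(q_k,\Sigma_{\tau_k})\to d(q,\Sigma_{\tau_\infty})$ needs $\vert d(y,\Sigma_\tau)-d(y,\Sigma_{\tau'})\vert\le\vert\tau-\tau'\vert$, not only the continuity of each $d(\cdot,\Sigma_\tau)$.

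The unbounded case has a gap exactly at the step you flag. Dini's theorem needs the points $\gamma(t)$ to lie a priori in a compact subset of $\textrm{dom}\, b$. Your justification (that $\textrm{dom}\, b$ is a past set and $b$ is bounded above near $q$) only places a neighbourhood of $q$ in $\textrm{dom}\, b$; that is just openness. Nothing you say prevents the lightlike limit $\gamma$ from leaving $\textrm{dom}\, b$ through $\p\,\textrm{dom}\, b$ at larger parameters. So what you actually obtain is the calibration of a short segment $\gamma\vert_{[0,\epsilon]}$.

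Theorem \ref{xxhg}(i) says nothing about such segments, since it concerns future inextendible calibrated curves. Your claim that local calibration ``suffices to conclude that $\gamma$ is timelike near its start'' therefore needs an argument. One is available. Concatenate $\gamma\vert_{[0,\epsilon]}$ with a future inextendible $b$-calibrated ray starting at $\gamma(\epsilon)$ (Proposition \ref{bber}). Along the concatenation $b$ increases by the length, so by Proposition \ref{jjtr} it is a future inextendible $b$-calibrated curve from $q$. It is not timelike, which contradicts Theorem \ref{xxhg}(i). Alternatively, a continuation argument on $t^*:=\sup\{t:\gamma([0,t])\subset\textrm{dom}\, b\}$ gives $t^*=+\infty$, using the continuity of $b$ as a map into $(-\infty,+\infty]$ and the openness of $\textrm{dom}\, b$.

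The paper avoids this issue and does not use Dini at all. Writing $\sigma_n,\sigma$ for your $\gamma_k,\gamma$, it fixes an integer $m\le\tau_n$ and uses
\[
b_m(\sigma_n(t))\le b_{\tau_n}(\sigma_n(t))=b_{\tau_n}(q_n)+\ell(\sigma_n\vert_{[0,t]})\le b(q_n)+\ell(\sigma_n\vert_{[0,t]}).
\]
This needs only the monotonicity of $b_\tau$ in $\tau$, the continuity of the single function $b_m$, and the continuity of $b$ at the single point $q$. Letting $n\to\infty$ and then $m\to\infty$ gives $b(\sigma(t))\le b(q)+\ell(\sigma\vert_{[0,t]})$ for every $t$. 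This shows at once that the whole inextendible limit lies in $\textrm{dom}\, b$. With the rushing property and Proposition \ref{jjtr} it also shows that the limit is $b$-calibrated, so Theorem \ref{xxhg}(i) applies directly. Only calibration from the initial point $q$ is needed, not the two-point identity for arbitrary $t_1<t_2$.
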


\begin{proof}
Suppose not, then there are $q_n\in K$, $\tau_n\ge \tau_0$, and maximizing timelike geodesics $\sigma_n$ from $q_n$ to $\Sigma_{\tau_n}$ with unit initial tangents $v_n$ such that $h(v_n,v_n)\to \infty$. Passing to a subsequence, $q_n\to q\in K$, the $h$-normalized initial vectors converge to a lightlike vector $u$ at $q$ and, parametrizing the $\sigma_n$ with respect to $h$-arc length, they converge in $C^1$ on compact subsets of their domains to the lightlike geodesic $\sigma$ with initial vector $u$.

If $\tau_n$ has a bounded subsequence, we can assume $\tau_n\to \tau_*$. The endpoints of $\sigma_n$ stay in the compact set $J^+(K)\cap J^-(\Sigma_{\tau_*+1})$, thus a subsequence of the $\sigma_n$ converges to a maximizing causal segment from $q$ to $\Sigma_{\tau_*}$ (by the continuity of $d_\Sigma$). Since $\vert d(y,\Sigma_\tau)-d(y,\Sigma_{\tau'})\vert\le \vert \tau-\tau'\vert$ and $d(\cdot,\Sigma_{\tau_*})$ is continuous,  by the upper semi-continuity of the length its length is at least $\lim_n d(q_n,\Sigma_{\tau_n})=d(q,\Sigma_{\tau_*})>0$. Thus it is a timelike geodesic, which contradicts the fact that its initial direction is lightlike.

Thus $\tau_n\to +\infty$, the $h$-lengths of the $\sigma_n$ diverge and $\sigma$ is a future inextendible lightlike geodesic starting from $q$. Fix $t>0$ and an integer $m$. For $n$ large, $\tau_n\ge m$ and, since $\sigma_n$ is maximizing from $q_n$ to $\Sigma_{\tau_n}$, $\ell(\sigma_n\vert_{[0,t]})=d(q_n,\Sigma_{\tau_n})-d(\sigma_n(t),\Sigma_{\tau_n})$, namely
\[
b_m(\sigma_n(t))\le b_{\tau_n}(\sigma_n(t))=b_{\tau_n}(q_n)+\ell(\sigma_n\vert_{[0,t]})\le b(q_n)+\ell(\sigma_n\vert_{[0,t]}).
\]
Letting $n\to \infty$, using the continuity of $b_m$, the continuity of $b$ at $q\in \textrm{dom}\, b$ (Theorem \ref{xxhg}) and the upper semi-continuity of the length, we get $b_m(\sigma(t))\le b(q)+\ell(\sigma\vert_{[0,t]})$. Letting $m\to \infty$, $b(\sigma(t))\le b(q)+\ell(\sigma\vert_{[0,t]})$, while the opposite inequality follows from the rushing property. By Proposition \ref{jjtr} $\sigma$ is a $b$-calibrated lightlike curve starting from $q\in \textrm{dom}\, b$, which contradicts Theorem \ref{xxhg}(i).
\end{proof}

\begin{proposition} \label{propbdry}
The achronal boundary $\p\, \textrm{dom}\, b$ is a $C^0$ future null hypersurface.
\end{proposition}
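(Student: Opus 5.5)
We need to show that $\p\,\textrm{dom}\, b$ is a $C^0$ future null hypersurface. Since $\textrm{dom}\, b$ is an open past set (Proposition \ref{cmol} and Theorem \ref{xxhg}(ii)), its boundary is an achronal boundary, hence an achronal locally Lipschitz topological hypersurface. It remains to show that through every point $p\in \p\,\textrm{dom}\, b$ starts a future inextendible lightlike geodesic contained in $\p\,\textrm{dom}\, b$. Since $\textrm{dom}\,b$ is open, $p\notin \textrm{dom}\, b$, so $b(p)=+\infty$, and we must produce the generator by a limit argument.

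We pick $p_k\in \textrm{dom}\, b$ with $p_k\to p$. By Proposition \ref{bber} and Theorem \ref{xxhg}(i), from each $p_k$ starts a future inextendible $b$-calibrated timelike ray $x_k$, entirely contained in $\textrm{dom}\, b$. Parametrizing with respect to $h$-arc length (with $h$ a complete Riemannian metric), the limit curve theorem (non-total imprisonment) gives a subsequence converging uniformly on compact subsets to a future inextendible causal curve $x$ starting from $p$. Since $x_k\subset \textrm{dom}\, b$, we have $x\subset\overline{\textrm{dom}\, b}$. On the other hand $x\subset J^+(p)$, and $J^+(p)\cap \textrm{dom}\, b=\emptyset$ because $\textrm{dom}\, b$ is a past set (if $q\ge p$ with $q\in \textrm{dom}\, b$, then $p\in J^-(\textrm{dom}\, b)\subset\textrm{dom}\, b$). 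Therefore $x\subset \p\,\textrm{dom}\, b$.

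Finally, $x$ is a causal curve contained in the achronal set $\p\,\textrm{dom}\, b$, so it contains no pair of chronologically related points. A causal curve with this property is, once reparametrized, a lightlike geodesic: otherwise two of its points would be joined by a longer curve, or the curve would have a corner that could be smoothed, and in either case they would be chronologically related. This yields the required generator. Alternatively, one could take $x$ directly to be any future inextendible causal curve in $M\backslash\textrm{dom}\,b$ through $p$ and argue as in Proposition \ref{cnph}; the limit of calibrated rays, however, gives the inclusion in the closure at once.

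The main obstacle is ensuring that the limit curve is genuinely future inextendible and starts at $p$, rather than degenerating. Since the $x_k$ are inextendible with $h$-length infinite and $p_k\to p$, the standard limit curve theorem in a globally hyperbolic spacetime handles this. The remaining subtlety is only that limits of timelike rays may become lightlike, which is exactly what we expect here.
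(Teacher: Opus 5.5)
Your proof is correct and follows the paper's own argument: $b$-calibrated rays from points $p_k\to p$ converge to a future inextendible causal curve in $\overline{\textrm{dom}\, b}$, which avoids $\textrm{dom}\, b$ because $\textrm{dom}\, b$ is a past set (equivalently, because $b$ is rushing), so it lies in the achronal boundary and is a lightlike geodesic. The alternative in your side remark does not work: the argument of Proposition \ref{cnph} uses that $\tilde D$ is a \emph{future} set, whereas $\textrm{dom}\, b$ is a past set, and an arbitrary causal curve from $p$ can leave $\overline{\textrm{dom}\, b}$ at once (for instance a timelike one, since $I^+(p)\cap\overline{\textrm{dom}\, b}=\emptyset$), so the limit of calibrated rays is genuinely needed rather than merely convenient.
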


\begin{proof}
Since $\textrm{dom}\, b$ is an open past set, its boundary is an achronal boundary.
Let $p\in \p\, \textrm{dom}\, b$. We can find a sequence $p_k\in \textrm{dom}\, b$, $p_k\to p$, and corresponding timelike $b$-calibrated rays $\sigma_k$ starting from $p_k$, which converge uniformly on compact subsets to a causal future ray $\sigma$ starting from $p$. By construction $\sigma\subset \overline{\textrm{dom}\, b}$. If there were a point $q\in \sigma\cap \textrm{dom}\, b$ then, as $b$ is rushing, $b(p)$ would be finite, a contradiction with $p\in \p\, \textrm{dom}\, b$ and the openness of $\textrm{dom}\, b$. Thus $\sigma$ is entirely contained in $\p\, \textrm{dom}\, b$ and since this set is achronal, $\sigma$ is a lightlike geodesic.

In particular, the unit tangents $v_k$ of the rays $\sigma_k$ at $p_k$ leave every compact subset of $TM$: otherwise a subsequence would converge to a unit timelike vector and $\sigma$ would be a timelike geodesic contained in the achronal set $\p\, \textrm{dom}\, b$. Thus approaching the cosmological horizon the calibrated rays undergo unbounded boosts.
\end{proof}

\begin{lemma} \label{lemgrad}
Let $p\in \textrm{dom}\, b$ be a point where $b$ is differentiable. Then $T:=-\nabla b(p)$ is a future directed unit timelike vector and there is exactly one $b$-calibrated future inextendible curve starting from $p$, namely the timelike geodesic with initial velocity $T$.
\end{lemma}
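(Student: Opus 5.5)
Let $p\in \textrm{dom}\, b$ be a point of differentiability of $b$, let $\alpha$ be any proper time parametrized $b$-calibrated future inextendible curve starting from $p$ (one exists by Proposition \ref{bber}), and let $v=\dot\alpha(0)$. By Theorem \ref{xxhg}(i), $\alpha$ is a future complete timelike geodesic, so $v$ is a future directed unit timelike vector. The plan is to show that $-\nabla b(p)=v$. Since the right-hand side is determined by $b$ alone, $v$ will not depend on the choice of $\alpha$. Geodesic uniqueness then gives that there is exactly one calibrated curve from $p$.

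\textbf{Step 1: the derivative along $\alpha$.} Since $b(\alpha(t))=b(p)+t$, we have $\dd b_p(v)=1$, that is, $g(\nabla b(p),v)=1$.

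\textbf{Step 2: the upper support function.} Take the smooth upper support function $b_{p,s}$ of Theorem \ref{xxhg}(iii). The difference $b_{p,s}-b$ is $\ge 0$ near $p$ and vanishes at $p$, and both functions are differentiable there. Hence $p$ is a local minimum of this difference and $\dd b_p=\dd (b_{p,s})_p=-\dd\, d(\cdot,\alpha(s))_p$. At a non-cut point, the gradient of $q\mapsto d(q,r)$ is the standard one. Since $(p,\alpha(s))\in W$, the function $d(\cdot,\alpha(s))$ is smooth near $p$, and its gradient at $p$ is $-\dot\alpha(0)=-v$ for the unit speed maximizing geodesic from $p$ to $\alpha(s)$. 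This is the first variation formula, with the sign fixed by the fact that moving $p$ to the future decreases the distance. Therefore $\nabla b(p)=-\nabla_q d(q,\alpha(s))\vert_{q=p}$.

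\textbf{Step 3: identify the gradient.} This sign check is the main obstacle. From Step 2, $\nabla b(p)$ equals $+v$ or $-v$ depending on how $\nabla d$ is normalized. Step 1 fixes it: with signature $(-,+,\dots,+)$, $g(v,v)=-1$, so $g(\nabla b(p),v)=1$ forces $\nabla b(p)=-v$. Thus $T=-\nabla b(p)=v$ is future directed, unit and timelike.

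\textbf{Step 4: uniqueness.} Any other calibrated curve from $p$ has initial velocity equal to $T$ by the same argument. By Theorem \ref{xxhg}(i) it is a proper time parametrized timelike geodesic with that initial velocity, so by uniqueness of geodesics it coincides with $\alpha$.

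Alternatively, Step 3 can be checked without sign conventions. The rushing inequality gives $\dd b_p(w)\ge 1$ for every future unit timelike $w$, since $b(\exp(tw))\ge b(p)+t$. Combined with equality at $v$, this forces $-\nabla b(p)$ to be the unit vector $v$ by the reverse Cauchy–Schwarz inequality.
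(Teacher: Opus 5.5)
Your proof is correct and is essentially the paper's: the upper support function $b_{p,s}$ gives $\dd b_p=\dd (b_{p,s})_p$, and calibration along $\alpha$, combined with the rushing inequality and the reverse Cauchy--Schwarz inequality, identifies $-\nabla b(p)$ with $\dot\alpha(0)$ (your closing ``alternative'' is exactly the paper's proof, which also takes the unit length from the eikonal equation satisfied by $b_{p,s}$). One slip: in Step 2 the gradient of $r=d(\cdot,\alpha(s))$ at $p$ is $+v$, not $-v$, since $\dd r_p(v)=-1$ and $g(v,v)=-1$ (compare $\nabla r(\alpha(t))=\dot\alpha(t)$ in Lemma \ref{lemdalem}), so $\nabla b(p)=-\nabla r(p)=-v$ follows directly, whereas your stated sign would give $+v$; Step 3 is correct only because it discards that sign and fixes it through Step 1.
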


\begin{proof}
Let $\alpha$ be a proper time parametrized $b$-calibrated ray starting from $p$ and $s>0$. The smooth function $b_{p,s}$ of Theorem \ref{xxhg}(iii) satisfies $b\le b_{p,s}$ near $p$ with equality at $p$, thus $\dd b_p=\dd (b_{p,s})_p$. Since $d(\cdot,\alpha(s))$ satisfies the eikonal equation where it is smooth, $g(\nabla b,\nabla b)=-1$ at $p$. By the rushing property, $b(\exp_p(tv))\ge b(p)+t$ for every future directed unit timelike $v$ and small $t>0$, so $\dd b(v)\ge 1>0$, which shows that $T$ is future directed. Finally, $\dd b(\dot \alpha(0))=1$ by calibration, namely $-g(T,\dot \alpha(0))=1$, and by the reverse Cauchy-Schwarz inequality for future directed unit timelike vectors $\dot \alpha(0)=T$. As $\alpha$ was arbitrary, the calibrated curve from $p$ is unique.
\end{proof}

The following result shows that a counterexample to Bartnik's conjecture would necessarily have Busemann functions with non-differentiability points.

\begin{theorem} \label{thmcone}
If $b$ is $C^1$ on $\textrm{dom}\, b$ then through every point of $\textrm{dom}\, b$ passes a timelike line which is $b$-calibrated. The line is future complete. If $(M,g)$ is also past timelike complete, the line is complete and, under TCC, the spacetime splits. The analogous statement holds for $b^-$, with the roles of future and past timelike completeness exchanged.
\end{theorem}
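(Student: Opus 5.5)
The plan is to extend each future calibrated ray backwards, following the integral curves of the gradient field. On $\textrm{dom}\, b$ define $T:=-\nabla b$. By Lemma \ref{lemgrad}, at every point $T$ is a future directed unit timelike vector. Since $b$ is $C^1$, $T$ is a continuous vector field on the open set $\textrm{dom}\, b$. Lemma \ref{lemgrad} also says that the unique $b$-calibrated ray from $p$ is the geodesic $\gamma_p$ with $\dot\gamma_p(0)=T(p)$. Along this ray, $b(\gamma_p(t))=b(p)+t$. Differentiating, $\dd b(\dot\gamma_p)=1$, and the equality case of the reverse Cauchy--Schwarz inequality gives $\dot\gamma_p(t)=T(\gamma_p(t))$. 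So calibrated rays are integral curves of $T$. Lemma \ref{lemgrad} also makes $p\mapsto T(p)$ exactly the initial-velocity map of the calibrated rays, and Theorem \ref{xxhg}(v) gives local compactness of these velocities.

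Fix $p\in\textrm{dom}\, b$. Peano's theorem gives an integral curve of the continuous field $T$, so let $\gamma:(a,+\infty)\to\textrm{dom}\, b$ be a maximal integral curve through $p$. The first step is to show that $\gamma$ is a calibrated geodesic on its whole domain. For any $s\in(a,\infty)$, the calibrated ray from $\gamma(s)$ is the geodesic with initial velocity $T(\gamma(s))$, and it is an integral curve of $T$. The worry is that continuous fields need not have unique integral curves. I will avoid this as follows: the function $b\circ\gamma$ has derivative $\dd b(T)=1$, so $b(\gamma(t_2))-b(\gamma(t_1))=t_2-t_1$. Since $T$ is unit timelike, $\ell(\gamma\vert_{[t_1,t_2]})=t_2-t_1$. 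Proposition \ref{jjtr} then makes $\gamma$ maximizing and calibrated, hence a timelike geodesic parametrized by proper time. Thus $\gamma$ is a $b$-calibrated timelike geodesic, and its future part from $p$ is the unique ray of Lemma \ref{lemgrad}, which is future complete by Theorem \ref{xxhg}(i). Any two calibrated segments through a common point continue each other (Proposition \ref{bber}), so the construction is consistent.

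Next I show that $\gamma$ is past inextendible in $M$. Suppose $a>-\infty$. If $\gamma$ had a past endpoint $e\in\textrm{dom}\, b$, the integral curve could be continued through $e$, contradicting maximality. It cannot reach $\p\,\textrm{dom}\, b$ either, since $\textrm{dom}\, b$ is a past set, so $J^-(\gamma(s))\subset\textrm{dom}\, b$; a past endpoint lies in $J^-(\gamma(s))$ and hence in $\textrm{dom}\, b$. A past imprisoned curve without endpoint is excluded by global hyperbolicity. Hence $\gamma$ is an inextendible timelike geodesic that is maximizing on every segment, that is, a timelike line. It is $b$-calibrated and passes through $p$. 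Its future completeness comes from Theorem \ref{xxhg}(i). Under past timelike completeness the inextendible geodesic is also past complete, so the line is complete and the Lorentzian splitting theorem applies under TCC. The statement for $b^-$ is the time dual.

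The main obstacle is past inextendibility. A finite $a$ means a past inextendible curve of finite length, which is allowed when past completeness is not assumed. That is fine: the line only needs to be inextendible, not complete, for the first claim, and completeness enters only through the past timelike completeness hypothesis.
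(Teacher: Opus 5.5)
Your proof is correct and uses the same ingredients as the paper: $T=-\nabla b$ from Lemma \ref{lemgrad}, Peano's theorem, $\dd b(T)=1$ combined with Proposition \ref{jjtr} to make integral curves calibrated geodesics (which also gives their uniqueness), future completeness from Theorem \ref{xxhg}(i), and the past-set property of $\textrm{dom}\, b$. The difference is only organizational: the paper starts from the inextendible geodesic through $p$ with initial velocity $T(p)$ and propagates calibration backwards by an open--closed argument, whereas you take a maximal integral curve and prove it is past inextendible, as the paper does in Proposition \ref{propeik}; your remark about imprisoned curves is unnecessary, since a curve without past endpoint is past inextendible by definition.
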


\begin{proof}
Let $T=-\nabla b$, a continuous future directed unit timelike vector field on $\textrm{dom}\, b$ by Lemma \ref{lemgrad}. Let $p\in \textrm{dom}\, b$ and let $\alpha:(a,+\infty)\to M$, $a\in [-\infty,0)$, be the inextendible timelike geodesic with $\alpha(0)=p$, $\dot \alpha(0)=T(p)$, parametrized with proper time (it is future complete by Theorem \ref{xxhg}(i), while $a$ might be finite). By Lemma \ref{lemgrad} $\alpha\vert_{[0,+\infty)}$ is the $b$-calibrated ray from $p$, and since for $t\ge 0$ the restriction $\alpha\vert_{[t,+\infty)}$ is calibrated, $\dot \alpha(t)=T(\alpha(t))$. As $\textrm{dom}\, b$ is a past set, $\alpha(t)\in \textrm{dom}\, b$ for every $t\in (a,+\infty)$.

Let $S:=\{t\in (a,0]: \alpha\vert_{[t,+\infty)} \textrm{ is } b\textrm{-calibrated}\}$. $S$ is non-empty and closed in $(a,0]$ by the continuity of $b$. Let us show that it is open in $(a,0]$. Let $t_0\in S$ and $x=\alpha(t_0)$. By Peano's theorem there is an integral curve $\beta:(t_0-\epsilon,t_0]\to \textrm{dom}\, b$ of $T$ with $\beta(t_0)=x$, where $t_0-\epsilon\ge a$. Along it $g(\dot \beta,\dot \beta)=-1$ and $\frac{\dd}{\dd t} b(\beta(t))=\dd b(T)=-g(\nabla b,\nabla b)=1$. Thus the concatenation of $\beta$ with $\alpha\vert_{[t_0,+\infty)}$ is a causal curve along which $b$ increases by the length, and by Proposition \ref{jjtr} it is maximizing, hence a geodesic without corners. Therefore $\beta=\alpha\vert_{(t_0-\epsilon,t_0]}$ and $(t_0-\epsilon,t_0]\subset S$. We conclude $S=(a,0]$, so $\alpha$ is $b$-calibrated on its whole domain $(a,+\infty)$ and, again by Proposition \ref{jjtr}, it is a timelike line, namely an inextendible maximizing timelike geodesic. It is future complete by Theorem \ref{xxhg}(i), and it is complete if $(M,g)$ is past timelike complete, in which case the Lorentzian splitting theorem applies under TCC.
\end{proof}

The differentiability of $b$ has a simple characterization in terms of calibrated curves.

\begin{proposition} \label{propcrease}
Let $p\in \textrm{dom}\, b$. Then $b$ is differentiable at $p$ if and only if exactly one future inextendible $b$-calibrated curve starts from $p$. Consequently, $b$ is $C^1$ on an open set $O\subset \textrm{dom}\, b$ if and only if from every point of $O$ starts exactly one future inextendible $b$-calibrated curve.
\end{proposition}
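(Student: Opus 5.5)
One direction is Lemma \ref{lemgrad}: if $b$ is differentiable at $p$, exactly one $b$-calibrated future inextendible curve starts at $p$, the timelike geodesic with initial velocity $-\nabla b(p)$. The converse is the real content, and I would prove it using the semi-concavity structure of Theorem \ref{xxhg}(vi) together with the support functions $b_{p,s}$ of Theorem \ref{xxhg}(iii).

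Assume exactly one calibrated curve $\alpha$ starts from $p$, and let $T=\dot\alpha(0)$, unit timelike. The goal is to show that the superdifferential $\partial^+ b(p)$ is a single point. Since $b$ is locally semi-concave, $\partial^+ b(p)$ is non-empty and compact and convex, and it equals the convex hull of the reachable gradients $D^*b(p)$, that is, the limits of $\dd b(q_k)$ with $q_k\to p$ and $b$ differentiable at $q_k$. I would cite this standard fact for semi-concave functions, as in Cannarsa--Sinestrari. It therefore suffices to show that $D^*b(p)$ is the single covector $\xi_0:=-g(T,\cdot)$, which is the differential of $b_{p,s}$ at $p$ because $\nabla_q d(q,\alpha(s))$ at $q=p$ is $-T$ up to sign conventions. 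Then $\partial^+b(p)=\{\xi_0\}$, and a semi-concave function with a singleton superdifferential is differentiable.

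To identify the reachable gradients, take $q_k\to p$ with $b$ differentiable at $q_k$. By Lemma \ref{lemgrad}, $\dd b(q_k)=-g(T_k,\cdot)$, where $T_k$ is the initial velocity of the unique calibrated ray $\alpha_k$ from $q_k$. By Theorem \ref{xxhg}(v), the $T_k$ lie in a compact set of unit timelike vectors, so after passing to a subsequence $T_k\to T_\infty$, still unit timelike. The rays $\alpha_k$ then converge on compact sets to the geodesic $\alpha_\infty$ with initial velocity $T_\infty$. Passing the identity $b(\alpha_k(t))=b(q_k)+t$ to the limit, using continuity of $b$, shows that $\alpha_\infty$ is $b$-calibrated from $p$. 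By uniqueness $\alpha_\infty=\alpha$, so $T_\infty=T$ and the limit gradient is $\xi_0$. This gives $D^*b(p)=\{\xi_0\}$.

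The step I expect to be the main obstacle is justifying $\partial^+b=\mathrm{co}\,D^*b$. This needs local semi-concavity in coordinates with a genuine modulus, and Theorem \ref{xxhg}(vi) supplies it through uniformly $C^2$-bounded upper support functions. If I prefer to avoid that citation, I can argue directly. A function that is upper-$C^2$ with uniform bounds, and whose upper support differentials at $p$ all equal $\xi_0$, must be differentiable at $p$. The reason is that $b\le b_{p,s}$ gives $b(q)-b(p)\le \xi_0(q-p)+C|q-p|^2$. For the lower bound, I would use the support functions at the nearby points $q$: $b(p)\le b_{q}(p)$, and $\dd b_q(q)\to\xi_0$ as $q\to p$ by the same compactness-and-uniqueness argument applied to arbitrary calibrated rays from $q$. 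Together these give $b(q)-b(p)\ge \xi_0(q-p)-o(|q-p|)$. This direct route is cleaner and I would use it.

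For the final claim about an open set $O$, pointwise differentiability everywhere on $O$ is equivalent to uniqueness of the calibrated curve at every point of $O$. Differentiability at every point of $O$ upgrades to $C^1$, because for a semi-concave function that is differentiable on an open set the gradient is continuous; the same compactness-and-uniqueness limit argument gives this directly.
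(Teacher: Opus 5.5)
Your proof is correct and is essentially the paper's argument. The paper also uses Lemma \ref{lemgrad} for the direct implication. For the converse it likewise uses Theorem \ref{xxhg}(v) and the uniqueness of the calibrated ray at $p$ to show that every limit of differentials $\dd b_{p_j}$, at differentiability points $p_j\to p$, equals $-g(\dot\alpha(0),\cdot)$. The only difference in the main route is the closing nonsmooth-analysis fact. The paper uses that the Clarke generalized gradient of a locally Lipschitz function, the convex hull of these limits, is a singleton; this gives strict differentiability and also the $C^1$ statement on open sets. You use the superdifferential of a semi-concave function instead. Your alternative direct route, combining the upper support function at $p$ with the uniformly $C^2$-bounded support functions at nearby points $q$ from Theorem \ref{xxhg}(vi), is also valid. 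It is slightly more elementary, since it only needs the convergence of initial tangents of arbitrary calibrated rays from $q\to p$, not differentiability points and the convex-hull theorem.
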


\begin{proof}
The direct implication is Lemma \ref{lemgrad}. For the converse, let $\alpha$ be the unique $b$-calibrated curve starting from $p$, parametrized with proper time, and let $v=\dot \alpha(0)$. We work in a coordinate chart around $p$, where $b$ is locally Lipschitz (Theorem \ref{xxhg}(vi)). Let $p_j\to p$ be points of differentiability of $b$ such that the differentials $\dd b_{p_j}$ converge to some covector $\omega$. By Lemma \ref{lemgrad}, $-\nabla b(p_j)=v_j$, where $v_j$ is the initial tangent of the unique $b$-calibrated curve starting from $p_j$. By Theorem \ref{xxhg}(v), applied to a compact neighborhood $K$ of $p$, a subsequence of $v_j$ converges to the initial tangent of a $b$-calibrated curve starting from $p$, which is necessarily $v$. Thus $\omega=-g(v,\cdot)$. We have shown that all the limits of differentials of $b$ along sequences of differentiability points converging to $p$ coincide. For a locally Lipschitz function the Clarke generalized gradient at $p$ is the convex hull of these limits, and the function is strictly differentiable at $p$ if the generalized gradient is a singleton \cite[Ch.\ 9]{rockafellar09}. Thus $b$ is differentiable at $p$, with $\dd b_p=-g(v,\cdot)$. For the last statement, a function which is strictly differentiable at every point of an open set is $C^1$ there \cite[Ch.\ 9]{rockafellar09}; the converse follows from Lemma \ref{lemgrad}.
\end{proof}

\begin{remark}
By Theorem \ref{thmcone} and Proposition \ref{propcrease}, if $(M,g)$ is timelike complete, satisfies TCC and does not split, then there is a point of $\textrm{dom}\, b^+$ from which at least two distinct future $b^+$-calibrated rays start, and dually for $b^-$. Such points correspond to creases of the horospheres, in analogy with the corresponding statement for horizons \cite{chrusciel01}. We do not claim that every horosphere has a crease.
\end{remark}

The argument of Theorem \ref{thmcone} also shows that a globally defined $C^1$ solution of the eikonal equation produces timelike lines.

\begin{proposition} \label{propeik}
Let $u:M\to \mathbb{R}$ be a $C^1$ function such that $g(\nabla u,\nabla u)=-1$ and $\nabla u$ is past-directed. Then every integral curve of the continuous vector field $-\nabla u$ extends to an inextendible integral curve, which is a timelike line along which $u$ increases by the proper time. No completeness assumption is needed for this conclusion. If one of these lines is complete, for instance if $(M,g)$ is timelike complete, then under TCC the spacetime splits. Conversely, in a product $(\mathbb{R}\times S,-\dd t^2+h)$ the function $u=t$ has these properties.
\end{proposition}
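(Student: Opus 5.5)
The argument mirrors Theorem \ref{thmcone}, with $u$ in place of $b$ and the whole of $M$ in place of $\textrm{dom}\, b$. Throughout, $T:=-\nabla u$; it is a continuous, future directed, unit timelike vector field, since $g(\nabla u,\nabla u)=-1$ and $\nabla u$ is past directed.

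\emph{Step 1: $u$ is rushing.} Let $\gamma$ be a future directed causal curve. Then $\frac{\dd}{\dd t}u(\gamma)=\dd u(\dot\gamma)=-g(T,\dot\gamma)$. By the reverse Cauchy--Schwarz inequality this is $\ge\sqrt{-g(\dot\gamma,\dot\gamma)}$. Integrating gives $u(q)-u(p)\ge \ell(\gamma)$, and taking the supremum over $\gamma$ yields $u(p)+d(p,q)\le u(q)$.

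\emph{Step 2: integral curves of $T$ are calibrated.} Let $\beta$ be an integral curve of $T$; Peano's theorem gives local existence. Along $\beta$ we have $g(\dot\beta,\dot\beta)=-1$ and $\frac{\dd}{\dd t}u(\beta)=-g(T,T)=1$. So $u$ increases by exactly the proper time along $\beta$. By Proposition \ref{jjtr}, every segment of $\beta$ is maximizing and $u$-calibrated, hence a timelike geodesic. Solutions of $\dot x=T(x)$ need not be unique, since $T$ is only continuous. However, each solution is a geodesic with initial velocity $T(x_0)$, so uniqueness of geodesics makes all solutions through $x_0$ agree locally.

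\emph{Step 3: maximal extension.} By Step 2, the integral curve through any point is the geodesic $\alpha$ with $\dot\alpha(0)=T(x_0)$. I take $\alpha$ on its maximal interval of geodesic existence $(a,\omega)$ and define $S=\{t:\dot\alpha(t)=T(\alpha(t))\}$.
\begin{itemize}
\item $S$ is closed in $(a,\omega)$ by continuity of $T$ and $\dot\alpha$.
\item $S$ is open: at $t_0\in S$, the local integral curve of $T$ through $\alpha(t_0)$ is a geodesic with the same initial data, so it coincides with $\alpha$ near $t_0$.
\end{itemize}
By connectedness, $\alpha$ is an integral curve of $T$ on all of $(a,\omega)$. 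It is an inextendible geodesic, and by Step 2 applied on every compact subinterval it is a timelike line along which $u$ grows by proper time. Completeness was never invoked; global hyperbolicity is not even needed here, because calibration alone gives maximality.

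\emph{Step 4: splitting and the converse.} If one of these lines is complete (automatic under timelike completeness), the Lorentzian splitting theorem applies under TCC. For the converse, in the product $(\mathbb{R}\times S,-\dd t^2+h)$ we have $\nabla t=-\p_t$, which is past directed with $g(\nabla t,\nabla t)=-1$.

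The only delicate point is Step 3: the vector field is merely continuous, so integral curves are not unique a priori. The geodesic identification in Step 2 is what resolves this. I expect no further obstacle.
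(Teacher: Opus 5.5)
Your proposal is correct and follows essentially the same route as the paper's proof: rushing of $u$ via the reverse Cauchy--Schwarz inequality, calibration of Peano integral curves through Proposition \ref{jjtr} (hence they are timelike geodesics), an open--closed argument on the maximal geodesic to obtain an inextendible integral curve that is a timelike line, and then the Lorentzian splitting theorem together with $\nabla t=-\p_t$ for the converse. Your explicit remark that uniqueness of geodesics compensates for the possible non-uniqueness of integral curves of the merely continuous field is exactly the point the paper uses in its openness step.
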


The $C^1$ regularity required in Proposition \ref{propeik}, or in Theorem \ref{thmcone}, could be obtained by imposing a suitable elliptic equation on a Lipschitz eikonal function, such as $b^+$ on $\textrm{dom}\, b^+$, and then invoking standard elliptic regularity theory \cite{gilbarg01}. For instance, setting $\vert \nabla u\vert:=(-g(\nabla u,\nabla u))^{1/2}$, on eikonal functions the equation $\Box u=0$ coincides with $\textrm{div}(\vert \nabla u\vert^{p-2}\nabla u)=0$, which for $p<1$ is elliptic on functions with timelike gradient \cite{braun24}, whereas the d'Alembertian $\Box$ itself is a hyperbolic operator. Lipschitz eikonal weak solutions of this equation are then smooth, and by Proposition \ref{propeik} they produce timelike lines; if one of these lines is complete, for instance if $(M,g)$ is timelike complete, then under TCC the spacetime splits. Smoothness alone is not sufficient: the function $u=t$ on the half-cylinder of Section \ref{secconv} is a smooth solution of this equation. Note, however, that the geometrical content of the argument lies in the eikonal equation, which provides the timelike lines, and it is not clear that this requirement could be dispensed with: we do not know whether, in a timelike complete spacetime satisfying TCC, a global solution of such an elliptic equation whose gradient is timelike, but not of constant length, forces the splitting.

\begin{proof}
Let $X:=-\nabla u$, a continuous future-directed unit timelike vector field. The function $u$ is rushing. Indeed, for every future-directed causal vector $w$ the reverse Cauchy--Schwarz inequality gives $\dd u(w)=-g(X,w)\ge \sqrt{-g(w,w)}$, so $u(q)-u(p)\ge \ell(\gamma)$ for every future-directed causal curve $\gamma$ from $p$ to $q$, hence $u(q)-u(p)\ge d(p,q)$.

Let $\beta$ be an integral curve of $X$, which exists by Peano's theorem. Then $g(\dot \beta,\dot \beta)=-1$ and $\frac{\dd}{\dd t}u(\beta(t))=\dd u(X)=1$, so $u$ increases by the length along $\beta$, and by Proposition \ref{jjtr} every segment of $\beta$ is maximizing. Hence $\beta$ is a timelike geodesic. Let $\gamma:(a_-,a_+)\to M$ be the inextendible timelike geodesic extending $\beta$, and let $A:=\{t\in (a_-,a_+): \dot \gamma(t)=X(\gamma(t))\}$. The set $A$ is non-empty and closed. It is open: if $t_0\in A$, an integral curve of $X$ through $\gamma(t_0)$ defined near $t_0$ is, by the previous argument, a timelike geodesic with the same initial data as $\gamma$, hence it coincides with $\gamma$ near $t_0$. Thus $A=(a_-,a_+)$, $\gamma$ is an inextendible integral curve of $X$, and every segment of it is maximizing, namely $\gamma$ is a timelike line. If $\gamma$ is complete, under TCC the spacetime splits by the Lorentzian splitting theorem. The converse is clear since $\nabla t=-\p_t$.
\end{proof}

Note that integral curves of $-\nabla u$ are thus unique, even though $-\nabla u$ is only continuous. Locally Lipschitz functions satisfying the eikonal equation almost everywhere always exist: for instance, the signed Lorentzian distance from a smooth spacelike compact Cauchy hypersurface, which exists by \cite{bernal04}. Thus the $C^1$ regularity is essential. By Theorem \ref{thmcone} and Example \ref{exstatic} below, if $(M,g)$ is timelike complete and satisfies TCC, the existence of such a function $u$ is equivalent to $b^+$ being $C^1$ on $\textrm{dom}\, b^+$.

\begin{corollary} \label{coruni}
$\textrm{dom}\, b$ is the union of the sets $I^-(\gamma)$ where $\gamma$ runs over the future inextendible $b$-calibrated curves. Each of these sets is a TIP.
\end{corollary}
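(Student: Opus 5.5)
Both inclusions come from the calibration identity and the rushing inequality (\ref{fcpb}); the TIP claim comes from timelikeness of calibrated curves (Theorem \ref{xxhg}(i)).

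\emph{Inclusion $\subset$.} Let $p\in \textrm{dom}\, b$. By Proposition \ref{bber} there is a future inextendible $b$-calibrated curve $\gamma$ starting from $p$. By Theorem \ref{xxhg}(i) it is a timelike geodesic, so $p\ll \gamma(\epsilon)$ for any $\epsilon>0$. Hence $p\in I^-(\gamma)$.

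\emph{Inclusion $\supset$.} Let $\gamma$ be a future inextendible $b$-calibrated curve. By Definition \ref{defcal} its image lies in $\textrm{dom}\, b$. If $q\in I^-(\gamma)$, then $q\ll \gamma(t)$ for some $t$, and the rushing property gives
\[
b(q)\le b(\gamma(t))-d(q,\gamma(t))<+\infty .
\]
Since $b>-\infty$ everywhere, $q\in\textrm{dom}\, b$. Equivalently, $\textrm{dom}\, b$ is a past set by Proposition \ref{cmol}, so it contains $I^-(\gamma)$.

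\emph{Each set is a TIP.} By Theorem \ref{xxhg}(i), every such $\gamma$ is a future inextendible timelike curve. By the characterization of TIPs recalled in the proof of Theorem \ref{cbdc} (Geroch--Kronheimer--Penrose), $I^-(\gamma)$ is then a TIP.

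The only nontrivial ingredient is the timelikeness of calibrated curves, which is already established in Theorem \ref{xxhg}(i). I expect no real obstacle beyond citing it correctly. Without it, a lightlike calibrated curve would fail the first inclusion ($p$ need not lie in $I^-(\gamma)$) and would not obviously yield a TIP.
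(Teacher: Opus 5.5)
Your proposal is correct and follows essentially the same route as the paper. The paper also obtains $\supset$ from $\gamma\subset\textrm{dom}\, b$ together with $\textrm{dom}\, b$ being a past set, $\subset$ from the timelikeness of calibrated curves (Theorem \ref{xxhg}(i)), and the TIP claim from Geroch--Kronheimer--Penrose.
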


\begin{proof}
If $\gamma$ is $b$-calibrated then $\gamma\subset \textrm{dom}\, b$ and $I^-(\gamma)\subset \textrm{dom}\, b$ as $\textrm{dom}\, b$ is a past set. Conversely, if $p\in \textrm{dom}\, b$ and $\gamma$ is a $b$-calibrated curve starting from $p$, then $\gamma$ is timelike by Theorem \ref{xxhg}(i), so $p\in I^-(\gamma)$. The sets $I^-(\gamma)$ with $\gamma$ future inextendible are TIPs \cite{geroch72}.
\end{proof}

\section{Cauchy horospheres} \label{sechoro}

The following result is related to \cite[Cor.\ 5.6]{galloway96}.

\begin{proposition} \label{cllp}
For every constant $c\in \mathbb{R}$, the set $B_c:=b^{-1}(c)$ is an acausal edgeless set, namely a partial Cauchy hypersurface, and $J^-(B_c)=D^-(B_c)= b^{-1}((-\infty,c])$. The set $B_c$ is non-empty for $c\ge 0$, and for every $c\in \mathbb{R}$ if $(M,g)$ is also past timelike complete. For every Cauchy hypersurface $S$, $J^+(S)\cap J^-(B_c)$ is compact.

Let $x: [0,+\infty)\to M$ be a future-inextendible $b$-calibrated curve which starts from $B_c$ (projection of a generator), then $x$ is a  $B_c$-ray. In particular, from every point of $B_c$ starts a $B_c$-ray. Moreover, $x$ intersects all $B_d$ with $d>c$ and for each $d\ge c$, with suitably redefined domain, $x$ is a $B_d$-ray.
\end{proposition}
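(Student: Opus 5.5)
Every claim will be derived from two properties of $b$: the rushing inequality (\ref{fcpb}) and Theorem \ref{xxhg}, which gives continuity, openness of $\textrm{dom}\, b$, and the fact that calibrated curves are complete timelike rays.

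\emph{Acausality and edgelessness.} If $p<q$ lie in $B_c$ with $p\ne q$, then (\ref{fcpb}) gives $c+d(p,q)\le c$, so $d(p,q)=0$. Hence a causal curve from $p$ to $q$ would be a lightlike geodesic along which $b$ is non-decreasing by rushing. Since $b(p)=b(q)=c$, $b$ is constant on it, and then $b(q)=b(p)+\ell$ holds with $\ell=0$. By Proposition \ref{jjtr} the curve is $b$-calibrated. Extend it beyond $q$ by a calibrated ray from $q$ (Proposition \ref{bber}). The concatenation is still calibrated, so by Proposition \ref{jjtr} it is maximizing, hence an unbroken geodesic. This is impossible: the first piece is lightlike and the second is timelike by Theorem \ref{xxhg}(i). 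Thus $B_c$ is acausal.

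For edgelessness I would use continuity. $B_c$ is a level set of a continuous function that is strictly increasing along timelike curves by rushing, so locally it is the boundary of the past set $\{b<c\}\cap\textrm{dom}\,b$, an achronal boundary. Concretely, take $p\in B_c$, a neighborhood $U\subset \textrm{dom}\,b$, and $p^-\ll p\ll p^+$ in $U$. Then $b(p^-)<c<b(p^+)$. Any timelike curve in $U$ from $I^-(p^-)$ to $I^+(p^+)$ close to these points crosses the value $c$ by the intermediate value theorem, so it meets $B_c$. This rules out $p$ being an edge point.

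\emph{$J^-(B_c)=D^-(B_c)=b^{-1}((-\infty,c])$.} Rushing gives $J^-(B_c)\subset\{b\le c\}$. Conversely, for $b(p)\le c$ take a calibrated ray $x$ from $p$. Along it $b$ increases continuously to $+\infty$, because $x$ has infinite length by Theorem \ref{xxhg}(i). So $x$ meets $B_c$ and $p\in J^-(B_c)$. For $D^-$, let $p$ satisfy $b(p)\le c$ and let $\gamma$ be any future inextendible causal curve from $p$. Then $\gamma$ meets the Cauchy hypersurfaces $\Sigma_k$, on which $b\ge k$ by (\ref{cmlr}). Along $\gamma$, $b$ is continuous with values in $(-\infty,+\infty]$, and it stays finite until the value $c$ is passed, since finite values of $b$ are reached continuously and $b\to+\infty$ only at $\partial\,\textrm{dom}\,b$. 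The intermediate value theorem then gives a crossing of $B_c$.

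The delicate point is a curve exiting $\textrm{dom}\,b$ before reaching value $c$. This cannot happen: continuity of $b$ on $M$ forces $b\to+\infty$ at the exit point, so $b$ must cross $c$ first. The inclusion $D^-(B_c)\subset J^-(B_c)$ is trivial.

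\emph{Non-emptiness and compactness.} For $c\ge0$, $B_c\ne\emptyset$ follows from Proposition \ref{ncho}. For $c<0$ under past completeness, follow a $\Sigma$-ray backwards. By rushing, $b$ along the past extension is at most $-\ell$, and with past completeness $\ell$ reaches arbitrarily large values. Continuity then gives all values $c<0$. For compactness, note that $J^-(B_c)\cap J^+(S)\subset J^-(\Sigma_k)\cap J^+(S)$ for $k>c$, because $b\ge k$ on $J^+(\Sigma_k)\setminus\Sigma_k$. The set $J^-(\Sigma_k)\cap J^+(S)$ is compact by global hyperbolicity, and $J^-(B_c)=\{b\le c\}$ is closed, so the intersection is compact.

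\emph{Ray statements.} Let $x$ be calibrated from $p\in B_c$. For any $y\in B_c$ with $y\le x(t)$, rushing gives $d(y,x(t))\le b(x(t))-c=\ell(x|_{[0,t]})$, so $x$ is a $B_c$-ray. Since $b(x(t))=c+t\to\infty$ continuously, $x$ meets each $B_d$ with $d>c$ exactly once, and the same argument shows that the tail of $x$ from that point is a $B_d$-ray.

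I expect the main obstacle to be the careful topological verification of edgelessness and of the $D^-$ equality near $\partial\,\textrm{dom}\,b$. Both rest on the continuity of $b$ as a map into $(-\infty,+\infty]$.
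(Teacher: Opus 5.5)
Your proof is correct and follows the paper's argument essentially step by step. Edgelessness uses rushing together with the continuity of $b$ into $(-\infty,+\infty]$, the domain-of-dependence identity and the compactness claim use the same lower bounds on $b$, non-emptiness follows a $\Sigma$-ray backwards, and the ray statements come from the rushing estimate. The only variation is the acausality step: you stay in $M$ and use Proposition \ref{jjtr} to show that a null segment in $B_c$ followed by a calibrated ray is an unbroken maximizing geodesic, whereas the paper lifts horizontally to $\tilde M$ and uses the achronality of $H=\textrm{graph}\, b$; both reach the same contradiction with Theorem \ref{xxhg}(i).
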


Observe that we are not making claims on the connectedness of $B_c$.

\begin{proof}
Let $\sigma$ be a proper time parametrized future $\Sigma$-ray. By Prop.\ \ref{ncho}, $b(\sigma(c))=c$ for every $c\ge 0$, thus $B_c\ne \emptyset$ for $c\ge 0$. If $(M,g)$ is past timelike complete, the timelike geodesic $\gamma$ extending $\sigma$ to the past is defined on $(-\infty,0]$ and, by the rushing property of $b$, $b(\gamma(t))\le b(\sigma(0))-d(\gamma(t),\sigma(0))\le t$ for $t\le 0$. Since $b$ is continuous and finite on $\gamma$ (Prop.\ \ref{cmol}), for every $c<0$ we can find a point of $\gamma$ where $b=c$, hence $B_c\ne \emptyset$.

Suppose, by contradiction, that $B_c$ is not acausal, then there is a causal curve $x:[0,1]\to M$ with distinct endpoints in $B_c$. Its horizontal lift $\tilde x(t)=(x(t),c)$ is a causal curve connecting two points of $H$. Since $H$ is the boundary of the open future set $\tilde D$, every causal curve joining two points of $H$ is contained in $H$: an intermediate point belongs to $\tilde J^+(\overline{\tilde D})\subset \overline{\tilde D}$, and if it belonged to $\tilde D$ then the final point would belong to $\tilde J^+(\tilde D)=\tilde D$. Thus $\tilde x$ is contained in $H=\textrm{graph}\, b$ (Theorem \ref{xxhg}(iv)), so $b=c$ along $x$, and $\tilde x$ is a causal curve in an achronal set, hence a lightlike geodesic with $\dot z=0$, so that $x$ is a lightlike geodesic segment. Let $y$ be a future inextendible $b$-calibrated curve starting from $x(1)$ (Proposition \ref{bber}). The lifts of $x$ and $y$ join into a causal curve contained in the achronal set $H$, hence into a lightlike geodesic of $\tilde M$ without corners, along which $\dot z$ is constant, thus zero. Therefore $y$ is lightlike, a contradiction with Theorem \ref{xxhg}(i).
The set $B_c$ is edgeless. Indeed, let $p\in B_c$ and let $U$ be a neighborhood of $p$. If $q_-\in I^-(p,U)$ and $q_+\in I^+(p,U)$, the rushing property gives $b(q_-)\le c-d(q_-,p)<c$ and $b(q_+)\ge c+d(p,q_+)>c$, thus, by the continuity of $b$ as a map to $(-\infty,+\infty]$, every timelike curve in $U$ from $q_-$ to $q_+$ meets $B_c$.

Let $p\in J^-(B_c)$, then there is $q\in B_c$ such that $b(p)+d(p,q)\le b(q)=c$, which implies $b(p)\le c$. Let $x$ be a future inextendible causal curve starting from $p$. By Eq.\ (\ref{cmlr}) and Thm.\ \ref{kvvb}, $b(x(t))\to +\infty$ for $t\to +\infty$, thus $x$ intersects $B_c$ by the continuity of $b$. This proves $b^{-1}((-\infty,c])\subset D^-(B_c)$, and the inclusions $D^-(B_c)\subset J^-(B_c)\subset b^{-1}((-\infty,c])$ are clear.

Let $k> \max\{1,c\}$ be an integer. Since $b_k\le b$, the set $J^-(B_c)= b^{-1}((-\infty,c])$ is contained in $b_k^{-1}((-\infty,c])$, i.e.\ in the set of those $p$ that satisfy $d(p,\Sigma_k)\ge k-c>0$. This set is contained in $J^-(\Sigma_k)$ thus $J^+(S)\cap J^-(B_c)\subset J^+(S)\cap J^-(\Sigma_k)$ where the last set is compact because $S$ and $\Sigma_k$ are Cauchy. Since  $J^+(S)$ is closed and $J^-(B_c)= b^{-1}((-\infty,c])$ is closed because $b$ is continuous, $J^+(S)\cap J^-(B_c)$  is compact.

Let $r\in B_c$ with $r\le x(t)$. Since $b$ is rushing and $x$ is $b$-calibrated and maximizing,
\[
d(r,x(t))\le b(x(t))-b(r)=b(x(t))-b(x(0))=d(x(0),x(t))=\ell(x\vert_{[0,t]})
\]
which states that $x$ is a $B_c$-ray. The last statement follows noting that $b$ goes to infinity along $x$, as along every future inextendible causal curve, which implies that $x$ intersects every $B_d$ for $d\ge c$. Chosen $d$, it is sufficient to redefine the domain so that $x$ starts from $B_d$ and repeat the previous argument.
\end{proof}

As previously mentioned, the set $b_k^{-1}(0)$ is what Galloway and Vega \cite{galloway14b} call a Cauchy prehorosphere; for $k>0$ it is the acausal set $(d^-_{\Sigma_k})^{-1}(k)$, which is a Cauchy hypersurface if $(M,g)$ is also past timelike complete (time dual of Theorem \ref{kvvb}). In the rest of this paragraph, in Proposition \ref{prophoro} and in Corollary \ref{corhoro} we assume past timelike completeness, as Galloway and Vega do when they express the Cauchy horosphere through the future sets $I^+(b_k^{-1}(0))$ \cite[Eq.\ (3.12)]{galloway14b}. Suppose $p\in b_{k+1}^{-1}(0)$, and consider a future inextendible timelike geodesic $x$ starting from $p$. The inequality $b_{k+1}\ge b_k$ implies $b_k(p)\le 0$ and as $b_k\to k$ along $x$, $x$ intersects $b_k^{-1}(0)$, which proves $b_{k+1}^{-1}(0)\subset J^-(b_k^{-1}(0))$. As a consequence, $I^+(b_k^{-1}(0))\subset I^+(b_{k+1}^{-1}(0))$, thus providing a non-decreasing sequence of future open sets.

\begin{proposition} \label{prophoro}
Assume that $(M,g)$ is also past timelike complete. The Cauchy horosphere $\p \cup_k I^+(b_k^{-1}(0))$ coincides with $b^{-1}(0)$.
\end{proposition}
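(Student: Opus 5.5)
Set $F:=\bigcup_k I^+(b_k^{-1}(0))$. I will show that $F=\{b>0\}$. The claim then follows, since $b$ is continuous as a map to $(-\infty,+\infty]$ (Theorem \ref{xxhg}(iv)) and $B_0=b^{-1}(0)$ is acausal and edgeless (Proposition \ref{cllp}). The first step is to describe each $I^+(b_k^{-1}(0))$ through $b_k$. For $k>0$, past timelike completeness makes $b_k^{-1}(0)=(d^-_{\Sigma_k})^{-1}(k)$ a Cauchy hypersurface, by the time dual of Theorem \ref{kvvb}. Moreover $b_k$ is strictly increasing along causal curves in $J^-(\Sigma_k)$, since $d(\cdot,\Sigma_k)$ is strictly decreasing there. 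Hence $I^+(b_k^{-1}(0))=\{b_k>0\}$: a point with $b_k(p)>0$ lies in $J^-(\Sigma_k)$ or in $J^+(\Sigma_k)$, and in either case a past inextendible timelike curve from $p$ meets the Cauchy hypersurface $b_k^{-1}(0)$ at a point distinct from $p$. The converse inclusion holds by strict monotonicity (respectively by $b_k=k$ on $J^+(\Sigma_k)$).

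Taking the union gives $F=\bigcup_k\{b_k>0\}=\{\sup_k b_k>0\}=\{b>0\}$, where $b=\sup_k b_k$ because the family is non-decreasing in $\tau$. It remains to prove $\p\{b>0\}=b^{-1}(0)$.

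For the inclusion $\subset$, take $p\in\p\{b>0\}$. Continuity of $b$ gives $b(p)\ge 0$, and $b(p)$ cannot be $+\infty$ or positive, since $\{b>0\}$ is open. Hence $b(p)=0$. For the inclusion $\supset$, take $p$ with $b(p)=0$. Then $p\notin\{b>0\}$. Any timelike curve leaving $p$ to the future enters $\{b>0\}$ immediately, because the rushing property gives $b(q)\ge b(p)+d(p,q)>0$ for $q\gg p$. Therefore $p\in\overline{\{b>0\}}$, so $p$ lies in the boundary.

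The main care point is the identification $I^+(b_k^{-1}(0))=\{b_k>0\}$, which uses that $b_k^{-1}(0)$ is Cauchy. This is the only place where past timelike completeness enters. The other subtle point is the handling of $b=+\infty$ in the continuity argument, but the extended continuity in Theorem \ref{xxhg}(iv) already covers it.
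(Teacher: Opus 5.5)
Your proof is correct, and it is organized differently from the paper's. The paper argues pointwise in three steps. First, $b^{-1}(0)\cap F=\emptyset$ because $b\ge b_k>0$ on $I^+(b_k^{-1}(0))$. Second, $b^{-1}(0)\subset\overline F$: it follows a future timelike geodesic $x$ from $p$, picks $k$ with $b_k(p)>-\epsilon/2$, and uses the rushing property of $b_k$ to find a crossing of $b_k^{-1}(0)$ on $x([0,\epsilon))$. Third, $\partial F\subset b^{-1}(0)$: it shows $b\ge 0$ along a timelike geodesic issuing from $p$ into $F$, and excludes $b(p)>0$ because then $b_k(p)>0$ and $p\in I^+(b_k^{-1}(0))$.

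You instead isolate the set identity $F=\{b>0\}$. Its two inclusions are exactly the facts the paper uses in its first and last steps. You then reduce the rest to the general observation that, for a continuous rushing function, the boundary of $\{b>0\}$ is $b^{-1}(0)$. Your closure step uses the rushing property of $b$ itself, rather than an $\epsilon$-approximation of $b$ by $b_k$ along a geodesic. This is shorter. It also gives the more transparent statement that the future set whose boundary defines the Galloway--Vega horosphere is precisely $\{b>0\}$, consistent with $J^-(B_c)=b^{-1}((-\infty,c])$ in Proposition \ref{cllp}.

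One sentence needs tightening. A past inextendible timelike curve from $p$ meets the Cauchy hypersurface $b_k^{-1}(0)$ only if $p\notin I^-(b_k^{-1}(0))$, and you do not say why this holds. It follows from what you state once you note that $b_k$ is non-decreasing along every causal curve of $M$: it is rushing on $J^-(\Sigma_k)$, and $b_k\le k$ with equality on $J^+(\Sigma_k)$. Concretely, if $p\ll q$ with $b_k(q)=0$, then $q\in J^-(\Sigma_k)$, and (\ref{rushtau}) gives $b_k(p)\le -d(p,q)<0$.

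Finally, you cite the acausality and edgelessness of $B_0$, but your argument never uses them.
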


The Cauchy horosphere is thus just one instance, $B_0$, of the level sets $B_c:=b^{-1}(c)$, $c\in \mathbb{R}$. These more general level sets might also be called Cauchy horospheres.

\begin{proof}
Let $F:=\cup_k I^+(b_k^{-1}(0))$.
Let us prove that $b^{-1}(0)$ does not intersect $F$. The function $b_k$ is a time function on $J^-(\Sigma_k)$ which is $0$ on the Cauchy hypersurface $b_k^{-1}(0)$, thus $b_k>0$ on $I^+(b_k^{-1}(0))$, and so $b\ge b_k>0$ there. We proved that $b^{-1}(0)\subset M\backslash F$.

Let $p\in b^{-1}(0)$ and let $x: [0,a)\to M$, $x(0)=p$, be a proper-time parametrized future inextendible timelike geodesic. We want to prove that $x((0,a))$ is contained in $F$, which would prove that $b^{-1}(0)\subset \overline{F}$, hence $b^{-1}(0)\subset \p F$.
Let $\epsilon\in (0,a)$ be so small that $d(x(0),x(\epsilon))=\epsilon$. For sufficiently large $k$, $x(\epsilon)\in J^-(\Sigma_k)$ and $b_k(p)> b(p)-\epsilon/2=-\epsilon/2$. Thus by (\ref{rushtau}) $b_k(x(\epsilon))\ge b_k(x(0))+d(x(0),x(\epsilon))\ge \epsilon/2$. Moreover, $b_k(x(0))\le b(x(0))=0$, thus $x([0,\epsilon))$ intersects $b_k^{-1}(0)$ and $x(\epsilon)\in F$. As $F$ is a future set and $\epsilon$ can be chosen arbitrarily small, $x((0,a))\subset F$.

Let $p\in \p F$ and let $x: [0,a)\to M$, $x(0)=p$, be a proper-time parametrized future inextendible timelike geodesic. For every future set $F$, by the openness of $I^+$, $I^+(\overline{F})=I^+(F)\subset F$, thus  $x((0,a))\subset F$. For every $\epsilon\in (0,a)$ we have for sufficiently large $k$, $x(\epsilon)\in I^+(b_k^{-1}(0))$ and hence $b(x(\epsilon))\ge b_k(x(\epsilon))\ge 0$. Since $b$ is continuous, $b(p)\ge 0$. It cannot be $b(p)>0$, otherwise $b_k(p)>0$ for large $k$, so that $p\in I^+(b_k^{-1}(0))$ (as $b_k^{-1}(0)$ is a Cauchy hypersurface) and $p$ would be in the open set $F$, not in its boundary. Thus $b(p)=0$.
\end{proof}

\begin{corollary} \label{corhoro}
Assume that $(M,g)$ is also past timelike complete. For every $c\ge 0$, the level set $B_c$ is the Cauchy horosphere of the compact Cauchy hypersurface $\Sigma_c$ in the sense of Galloway and Vega, namely $B_c=\p \cup_k I^+\big((b^{\Sigma_c}_k)^{-1}(0)\big)$, where $b^{\Sigma_c}_k$ are the approximating functions built from $\Sigma_c$.
\end{corollary}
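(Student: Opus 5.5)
The plan is to reduce the statement to Proposition \ref{prophoro} applied with $\Sigma$ replaced by $\Sigma_c$, using Remark \ref{cort} to compare the two Busemann functions. The case $c=0$ is exactly Proposition \ref{prophoro}, so assume $c>0$.

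First, $\Sigma_c$ is itself a compact Cauchy hypersurface by Theorem \ref{kvvb}. Hence all the constructions of Sections \ref{secbuse}--\ref{sechoro} apply to it: the approximating functions $b^{\Sigma_c}_k$ and the Cauchy Busemann function $b^{\Sigma_c}$. Under past timelike completeness, Proposition \ref{prophoro} applied to $\Sigma_c$ gives
\[
\p \cup_k I^+\big((b^{\Sigma_c}_k)^{-1}(0)\big)=(b^{\Sigma_c})^{-1}(0).
\]

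Second, Remark \ref{cort} with $s=c$ gives $b^{\Sigma_c}=b-c$. The underlying identity is $(d^+_{\Sigma_c})^{-1}(\tau')=\Sigma_{\tau'+c}$ from Theorem \ref{kvvb}, which yields $b^{\Sigma_c}_{\tau'}=b_{\tau'+c}-c$, and one then passes to the limit. Therefore
\[
(b^{\Sigma_c})^{-1}(0)=b^{-1}(c)=B_c,
\]
and combining this with the first step proves the claim.

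The only point needing care is the limit in Remark \ref{cort}. The sequence $b^{\Sigma_c}_k$ with $k\in\mathbb{N}$ corresponds to $b_{k+c}-c$, and $k+c$ need not be an integer. This is harmless because $b_\tau$ is non-decreasing in $\tau$, so $b=\sup_{\tau>0}b_\tau$ along any sequence $\tau\to\infty$. For the same reason, the horosphere defined through integer indices for $\Sigma_c$ is unaffected.
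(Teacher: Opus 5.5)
Your proposal is correct and matches the paper's argument: Remark~\ref{cort} gives $b^{\Sigma_c}=b-c$, so $B_c=(b^{\Sigma_c})^{-1}(0)$, and Proposition~\ref{prophoro} applied to the compact Cauchy hypersurface $\Sigma_c$ identifies this set with its Cauchy horosphere. Your remark on non-integer indices is accurate, since $b_\tau$ is monotone in $\tau$, but the paper does not need it because Remark~\ref{cort} already takes the limit over real $\tau$.
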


\begin{proof}
By Remark \ref{cort}, $b^{\Sigma_c}=b-c$, so $B_c=(b^{\Sigma_c})^{-1}(0)$, and Proposition \ref{prophoro} applied to $\Sigma_c$ gives the claim.
\end{proof}

For $c<0$ the identity of Remark \ref{cort} has no analogue, and we do not know whether $B_c$ is the Cauchy horosphere of some compact Cauchy hypersurface. In a product spacetime this is the case, since all level sets are $t$-slices (Example \ref{exstatic}), and a slice is its own Cauchy horosphere. Thus, for $c<0$, the name ``Cauchy horosphere'' for $B_c$ is a convention.

The following is a variant of \cite[Prop.\ 5.2]{galloway14b}.

\begin{proposition} \label{propline}
Any past causal $B_c$-ray extends to a timelike line.
If $(M,g)$ is also past timelike complete, this line is complete and thus, under TCC, the existence of a past causal $B_c$-ray implies that the spacetime splits.
\end{proposition}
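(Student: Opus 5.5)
Let $\gamma$ be a past causal $B_c$-ray. It is past inextendible, ends at some $p\in B_c$, and maximizes $d(\gamma(t),B_c)$. We parametrize it future directed as $\gamma:(-a,0]\to M$ with $\gamma(0)=p$. The plan is to glue $\gamma$ to a future $b$-calibrated ray from $p$ and show that the concatenation is $b$-calibrated on its whole domain. Proposition \ref{jjtr} then makes it maximizing, hence a line.

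The first step is to show that $b$ increases by the length along $\gamma$, that is, $b(\gamma(t))=c-\ell(\gamma\vert_{[t,0]})$. The rushing property already gives $b(\gamma(t))\le c-d(\gamma(t),p)=c-\ell(\gamma\vert_{[t,0]})$. For the reverse inequality we use $J^-(B_c)=D^-(B_c)=b^{-1}((-\infty,c])$ from Proposition \ref{cllp}. Suppose $b(\gamma(t))<c-\ell$, where $\ell:=\ell(\gamma\vert_{[t,0]})$. Take a future $b$-calibrated ray $y$ from $q=\gamma(t)$ (Proposition \ref{bber}); it lies in $\textrm{dom}\, b$ because $\gamma(t)\in J^-(p)\subset \textrm{dom}\, b$. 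Along $y$ the function $b$ increases exactly by length and tends to $+\infty$, so $y$ meets $B_c$ at a point $r$ with $d(q,r)\ge\ell(y\vert_{q\to r})=c-b(q)>\ell$. Then $d(q,B_c)>\ell=\ell(\gamma\vert_{[t,0]})$, contradicting the $B_c$-ray property. Hence $b(\gamma(t))=c-\ell(\gamma\vert_{[t,0]})$ for all $t$, and $\gamma$ is $b$-calibrated. In particular it lies in $\textrm{dom}\, b$.

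Next, let $x$ be a future $b$-calibrated ray from $p$. By Theorem \ref{xxhg}(i) it is timelike, future complete, and of infinite length. The concatenation $\gamma\ast x$ satisfies the calibration identity between any two of its points: on each piece this holds by construction, and across $p$ the increments of $b$ add. By Proposition \ref{jjtr} every segment of $\gamma\ast x$ is maximizing, so it is a geodesic without corner. Since $x$ is timelike, so is $\gamma$ (same geodesic), and $\gamma\ast x$ is an inextendible timelike line.

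For the completeness claim, the line is future complete because $x$ is. Under past timelike completeness the timelike geodesic $\gamma$ is past complete, so the line is complete, and under TCC the Lorentzian splitting theorem gives the product. I expect the main subtlety to be the reverse inequality in the first step, specifically making sure the calibrated ray from $\gamma(t)$ actually reaches $B_c$ with the claimed length; the divergence of $b$ along future inextendible curves in Proposition \ref{cllp} handles this. One also has to rule out that $\gamma$ is lightlike before gluing, and the corner-free argument takes care of that.
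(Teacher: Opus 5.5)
Your proof is correct, but it takes a different route from the paper's.

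The paper joins $\gamma$ with a future $B_c$-ray $\eta$ issued from $\gamma(0)$ and proves maximality of the join $\alpha$ directly, by a crossing argument. For $s_1<0<s_2$ one has $b(\alpha(s_1))<c<b(\alpha(s_2))$, by the rushing property and the acausality of $B_c$. By continuity of $b$, every causal competitor $\beta$ from $\alpha(s_1)$ to $\alpha(s_2)$ therefore meets $B_c$ at some point $\beta(t)$. Hence $\ell(\beta)\le d(\alpha(s_1),\beta(t))+d(\beta(t),\alpha(s_2))\le d(\alpha(s_1),B_c)+d(B_c,\alpha(s_2))=\ell(\alpha\vert_{[s_1,s_2]})$. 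This is the Galloway--Vega argument adapted to $B_c$. It uses only the two ray properties and the fact that $B_c$ separates the two endpoints, and it never examines $b$ along $\gamma$.

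You instead first show that the past ray is $b$-calibrated. In effect this is the identity $d(q,B_c)=c-b(q)$ for $q\in J^-(B_c)$. The upper bound comes from the rushing property. The lower bound comes from a calibrated ray issued from $q$, which must reach $B_c$ because $b$ is continuous and diverges along it. You then let Proposition \ref{jjtr} give maximality of the concatenation, in the same spirit as the proofs of Theorems \ref{notr} and \ref{thmcone}.

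Your route needs calibrated rays from the points of $\gamma$, but it yields more. Every past $B_c$-ray is $b$-calibrated, and the resulting timelike line is $b$-calibrated on its whole domain, which the paper's proof does not record.
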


\begin{proof}
Let $\gamma: (-\infty,0]\to M$, $\gamma(0)\in B_c$, be the past causal ray. From $\gamma(0)$ starts a  timelike $B_c$-ray $\eta: [0,+\infty)\to M$. Let $\alpha: (-\infty,+\infty)\to M$ be obtained joining the two curves. Let $a<0<b$ and let $\beta$ be a causal curve which joins $\alpha(a)$ and $\alpha(b)$. We have $b(\alpha(a))<c$, since $b(\alpha(a))\le c$ by the rushing property, and equality would place two causally related distinct points in the acausal set $B_c$; similarly $b(\alpha(b))>c$. By the continuity of $b$, $\beta$ intersects $B_c$ at some point $\beta(t)$, and so
\begin{align*}
\ell(\alpha\vert_{[a,b]})&=\ell(\alpha\vert_{[a,0]})+\ell(\alpha\vert_{[0,b]})=d(\alpha(a),B_c)+d(B_c,\alpha(b))\\
&\ge  d(\alpha(a),\beta(t))+d(\beta(t),\alpha(b))\ge \ell(\beta).
\end{align*}
By the arbitrariness of $\beta$, $\ell(\alpha\vert_{[a,b]})=d(\alpha(a),\alpha(b))$, which proves that $\alpha$ is maximizing, hence a geodesic line, which is timelike as $\eta$ is timelike.
\end{proof}

We say that $B_c$ is {\em past bounded} if there is a Cauchy hypersurface $\Sigma'$ such that $B_c\subset J^+(\Sigma')$.
The following result for $c=0$ is \cite[Cor.\ 4.5, Lemma 4.6]{galloway14b} (our result  is a generalization due to Remark \ref{cort}).

\begin{proposition} \label{propequiv}
For each non-empty level set $B_c$, $c\in \mathbb{R}$, the following are equivalent
\begin{itemize}
\item[(i)] $B_c$ is compact,
\item[(ii)] $B_c$ is past bounded,
\item[(iii)] $B_c$ is Cauchy,
\end{itemize}
and in this case the spacetime contains a timelike line; if $(M,g)$ is also past timelike complete then, under TCC, the spacetime splits.
\end{proposition}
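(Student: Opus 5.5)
The plan is to prove the cycle (iii)$\Rightarrow$(i)$\Rightarrow$(ii)$\Rightarrow$(iii), and then obtain the timelike line from a past causal $B_c$-ray via Proposition \ref{propline}.

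\emph{(iii)$\Rightarrow$(i).} Suppose $B_c$ is Cauchy. Then it is homeomorphic to the compact Cauchy hypersurface $\Sigma$, since any two Cauchy hypersurfaces of a globally hyperbolic spacetime are homeomorphic. So $B_c$ is compact.

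\emph{(i)$\Rightarrow$(ii).} Let $\Sigma'=\Sigma_a$ for some $a<0$, or any Cauchy hypersurface. For compact $B_c$ we may use a Cauchy time function $t$, which exists by global hyperbolicity. Since $t$ attains a minimum $m$ on $B_c$, we have $B_c\subset J^+(t^{-1}(m-1))$, and $t^{-1}(m-1)$ is Cauchy. Hence $B_c$ is past bounded.

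\emph{(ii)$\Rightarrow$(iii).} Suppose $B_c\subset J^+(\Sigma')$. We already know from Proposition \ref{cllp} that $B_c$ is acausal and that $D^-(B_c)=b^{-1}((-\infty,c])$, so every future inextendible causal curve from a point with $b\le c$ meets $B_c$. It remains to show that every past inextendible causal curve $\gamma$ ending at a point with $b>c$ also meets $B_c$. Along $\gamma$, as we go to the past, $b$ is non-increasing by the rushing property. The curve $\gamma$ eventually enters $I^-(\Sigma')$. I claim $b<c$ somewhere there. Let $J^-(B_c)=b^{-1}((-\infty,c])$. If $\gamma$ avoided $J^-(B_c)$ entirely, then $\gamma$ would stay in $\{b>c\}$. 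Take a point $q\in\gamma\cap I^-(\Sigma')$. A future inextendible causal curve from $q$ lies on the time-reversed side, so I argue differently: since $B_c\subset J^+(\Sigma')$, we get $J^-(B_c)\cap I^-(\Sigma')$. The cleaner route is the following. Consider $F:=\{b>c\}$, an open future set whose boundary is $B_c$ (by continuity and edgelessness). If $\gamma$ stays in $F$ and passes into $I^-(\Sigma')$, then $I^-(\Sigma')\cap F\neq\emptyset$, and since $F$ is future, $F\supset I^+(\text{that point})$.

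The key step, which I expect to be the main obstacle, is to show that $\gamma$ cannot be entirely contained in $F$ when $B_c\subset J^+(\Sigma')$. I would try the following. Pick $p\in\gamma\cap I^-(\Sigma')$ with $b(p)>c$. Take a future $b$-calibrated timelike ray from a point of $B_c$ and extend it to the past as an inextendible timelike geodesic. Along this extension $b<c$ before $B_c$, and it crosses $\Sigma'$ only after crossing $B_c$. This gives points of $\{b<c\}$ in $I^-(\Sigma')$. Then use the compactness of $J^+(S)\cap J^-(B_c)$ from Proposition \ref{cllp}, with $S=\Sigma'$, together with $B_c\subset J^+(\Sigma')$. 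These give that $J^-(B_c)\cap J^+(\Sigma')$ is compact and that $B_c$ is closed in this compact set, hence compact. Compactness of the acausal edgeless set $B_c$ then makes it Cauchy: a compact achronal edgeless hypersurface in a spacetime with compact Cauchy hypersurface is Cauchy (its past and future domains of dependence are closed and open, and $M$ is connected). This also covers (ii)$\Rightarrow$(i) directly, so the cycle closes as (ii)$\Rightarrow$(i)$\Rightarrow$(iii) via this standard fact.

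\emph{Timelike line.} If $B_c$ is Cauchy, take a past inextendible causal curve and a sequence of points $q_n$ going to the past infinity along, say, a past inextendible timelike geodesic. Take maximizing curves from $q_n$ to the compact Cauchy hypersurface $B_c$ realizing $d(q_n,B_c)$, and apply the time dual of Lemma \ref{lemlimray} with $B_c$ in place of $\Sigma$. This yields a past causal $B_c$-ray. Proposition \ref{propline} then gives a timelike line, which is complete under past timelike completeness, and the Lorentzian splitting theorem under TCC gives the splitting.
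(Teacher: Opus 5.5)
Your final argument is correct and is essentially the paper's. You get (ii)$\Rightarrow$(i) from the compactness of $J^+(\Sigma')\cap J^-(B_c)$ in Proposition \ref{cllp}, (i)$\Rightarrow$(iii) from the standard fact that a compact acausal edgeless set is Cauchy (Theorem \ref{thmcompact}; your parenthetical ``clopen'' remark only sketches it), and the timelike line from a past $B_c$-ray via Proposition \ref{propline}.

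The only difference is that you close the cycle through (iii)$\Rightarrow$(i)$\Rightarrow$(ii), using that Cauchy hypersurfaces are homeomorphic and a Cauchy time function, whereas the paper simply takes $\Sigma'=B_c$. The abandoned exploratory material in your (ii)$\Rightarrow$(iii) paragraph is unused and should be deleted.
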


\begin{proof}
By Prop.\ \ref{cllp} $B_c$ is an acausal edgeless set, non-empty by assumption.
(i) $\Rightarrow$ (iii). If $B_c$ is compact then it is a Cauchy hypersurface by Thm.\ \ref{thmcompact} in Appendix \ref{appcaus} or by \cite[Lemma 3.6]{galloway14b}.
(iii) $\Rightarrow$ (ii). If $B_c$ is Cauchy we can take $\Sigma'=B_c$.
(ii) $\Rightarrow$ (i). We know from Prop.\ \ref{cllp} that $J^+(\Sigma')\cap J^-(B_c)$ is compact. From (ii) the closed set $B_c$ is contained in it, so it is compact.
Finally, if (iii) holds there is a past  $B_c$-ray (Theorem \ref{cpgx}, time dual), and by Prop.\ \ref{propline} it extends to a timelike line, which is complete under past timelike completeness, and so, under TCC, the spacetime splits by the Lorentzian splitting theorem.
\end{proof}

The next result clarifies the question of whether the level sets $B_c$ are Cauchy hypersurfaces of the spacetime $(\textrm{dom}\, b, g)$. Note that $\textrm{dom}\, b$ is globally hyperbolic being an open past set, hence causally convex, of a globally hyperbolic spacetime.

\begin{proposition} \label{propdomcauchy}
Let $\textrm{dom}\, b$ be regarded as a spacetime.
\begin{itemize}
\item[(a)] Along every future inextendible causal curve in $\textrm{dom}\, b$, $b\to +\infty$.
\item[(b)] Every past inextendible causal curve in $\textrm{dom}\, b$ is past inextendible in $M$.
\item[(c)] The following conditions are equivalent: (1) for every $c\in \mathbb{R}$, $B_c$ is a Cauchy hypersurface for $\textrm{dom}\, b$; (2) for every past inextendible causal curve $x$ of $M$ starting from a point of $\textrm{dom}\, b$, $b(x(t))\to -\infty$.
\item[(d)] If $x$ is a past inextendible causal curve starting from $\textrm{dom}\, b$ along which $b$ is bounded from below, then $\ell(x)<+\infty$.
\end{itemize}
\end{proposition}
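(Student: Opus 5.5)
We treat the four items in order, each reduced to facts already available: the lower bound (\ref{cmlr}), Theorem \ref{kvvb}, the rushing property (\ref{fcpb}), the openness of $\textrm{dom}\, b$ and the fact that it is a past set (Theorem \ref{xxhg}(ii), Proposition \ref{cmol}), and Proposition \ref{cllp}.

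For (a), let $x$ be future inextendible in $\textrm{dom}\, b$. Extend it in $M$ to a future inextendible curve; the extension leaves $\textrm{dom}\, b$ only if $x$ has a future endpoint $e$ in $M$. In that case every point of $x$ lies in $J^-(e)$, and the rushing property gives $b\le b(e)-d(\cdot,e)$ along $x$. We then argue that $e\notin \textrm{dom}\, b$ (otherwise $x$ would be extendible inside the open set $\textrm{dom}\, b$), so $b(e)=+\infty$. Continuity of $b$ as a map into $(-\infty,+\infty]$ (Theorem \ref{xxhg}(iv)) then gives $b(x(t))\to b(e)=+\infty$. If instead $x$ is future inextendible in $M$, it meets every $\Sigma_k$ and stays in $J^+(\Sigma_k)$ afterwards. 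There $b\ge d^+_\Sigma\ge k$ by (\ref{cmlr}) and the monotonicity in Theorem \ref{kvvb}, so again $b\to+\infty$.

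For (b), a past inextendible curve in $\textrm{dom}\, b$ with a past endpoint $e\in M$ would satisfy $e\in\overline{\textrm{dom}\, b}$. Since $e\le x(t)\in\textrm{dom}\, b$ and $\textrm{dom}\, b$ is a past set, we get $e\in\textrm{dom}\, b$, so the curve would be extendible within $\textrm{dom}\, b$, a contradiction.

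For (c), the key tool is that $B_c$ is acausal (Proposition \ref{cllp}), so being Cauchy in $\textrm{dom}\, b$ amounts to every inextendible causal curve of $\textrm{dom}\, b$ meeting $B_c$.
\begin{itemize}
\item For (2)$\Rightarrow$(1), take an inextendible causal curve in $\textrm{dom}\, b$. By (b) its past part is past inextendible in $M$, so $b\to-\infty$ there by (2). By (a), $b\to+\infty$ in the future. Continuity of $b$ then produces a point with $b=c$.
\item For (1)$\Rightarrow$(2), let $x$ be past inextendible in $M$ starting in $\textrm{dom}\, b$. Then $x$ stays in $\textrm{dom}\, b$, since it is a past set, and is inextendible there. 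Extend $x$ to the future by a calibrated ray, obtaining an inextendible curve of $\textrm{dom}\, b$. For each $c$ this curve meets $B_c$ exactly once, by acausality. Since $b$ is non-decreasing along causal curves, if $b$ were bounded below by some $m$ on the past part, then for $c<m$ the level $B_c$ would be met neither on the past part nor on the future part, where $b\ge b(x(0))\ge m$. This contradicts (1). Hence $b\to-\infty$, using monotonicity once more to upgrade unboundedness to a limit.
\end{itemize}

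For (d), which is the main obstacle, bound the length using rushing: for $s<t$, $\ell(x|_{[s,t]})\le d(x(s),x(t))\le b(x(t))-b(x(s))$, taking $x$ past-parametrized with $x(t)$ later. The quantity $b(x(0))-\inf b$ is finite, and the length of the portion from $x(s)$ to $x(0)$ is bounded by it. Taking the supremum over $s$ gives $\ell(x)\le b(x(0))-\inf_x b<+\infty$. The only care needed is that $x$ stays in $\textrm{dom}\, b$, which holds because $\textrm{dom}\, b$ is a past set, so all values of $b$ along $x$ are finite.
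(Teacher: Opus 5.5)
Your proposal is correct and follows the paper's proof essentially step by step. Part (a) splits into a future endpoint $e\notin\textrm{dom}\, b$, where continuity of $b$ into $(-\infty,+\infty]$ applies, versus future inextendibility in $M$, where (\ref{cmlr}) applies; (b) uses that $\textrm{dom}\, b$ is a past set; (c) combines the acausality of $B_c$ with (a), (b) and the monotonicity of $b$; and (d) is the rushing inequality. In (c) you also make explicit a step the paper leaves implicit: for (1)$\Rightarrow$(2) you complete a past inextendible curve of $M$ to an inextendible curve of $\textrm{dom}\, b$ by adjoining a calibrated ray.
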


\begin{proof}
(a). Let $x$ be future inextendible in $\textrm{dom}\, b$. If it is future inextendible in $M$ it intersects every $\Sigma_k$ and $b\ge d^+_\Sigma\to +\infty$ along it by (\ref{cmlr}). Otherwise it has a future endpoint $e\in \p\, \textrm{dom}\, b$ and $b(x(t))\to b(e)=+\infty$ by continuity.

(b). If $x$ had a past endpoint $e\in M$, then $e\in J^-(x(0))\subset \textrm{dom}\, b$ as $\textrm{dom}\, b$ is a past set, so $x$ would be extendible in $\textrm{dom}\, b$.

(c). By Proposition \ref{cllp} each $B_c$ is acausal. By (a) and (b) an inextendible causal curve of $\textrm{dom}\, b$ is future inextendible in $\textrm{dom}\, b$ and past inextendible in $M$, and $b$ is continuous, finite and non-decreasing along it, going to $+\infty$ in the future. It intersects every $B_c$ iff $b\to -\infty$ in the past direction.

(d). By the rushing property $b(x(0))-b(x(t))\ge d(x(t),x(0))\ge \ell(x\vert_{[t,0]})$.
\end{proof}

Thus the only obstruction to $B_c$ being Cauchy in $\textrm{dom}\, b$ is the existence of past inextendible causal curves of finite length along which $b$ stays bounded from below.

\subsection{The mean curvature of the horospheres} \label{secmean}

We now show that, under TCC, the Cauchy horospheres are mean convex in a suitable weak sense. This justifies the terminology {\em cosmologically expanding region} used in Section \ref{secstru}, and it provides the barrier used in the rigidity result of Section \ref{secmeanC}.

We recall the d'Alembertian comparison for the Lorentzian distance from a point. Here $n+1=\dim M$, so that $n$ is the dimension of the level sets.

\begin{lemma} \label{lemdalem}
Assume TCC and let $\alpha:[0,s]\to M$ be a proper time parametrized maximizing timelike geodesic which extends as a maximizing geodesic beyond $\alpha(s)$. Let $r:=d(\cdot,\alpha(s))$, which is smooth in a neighborhood of $\alpha(0)$. Then
\[
\Box r\, (\alpha(0))\ \ge\ -\frac{n}{s} .
\]
Dually, if $\alpha:[-s,0]\to M$ is such a geodesic and $\rho:=d(\alpha(-s),\cdot)$, then $\Box \rho\,(\alpha(0))\le n/s$.
\end{lemma}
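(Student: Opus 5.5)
The plan is the standard Riccati/index comparison along $\alpha$. First, $r$ is smooth near $\alpha(0)$. The pair $(\alpha(0),\alpha(s))$ lies in the set $W$ recalled before Theorem \ref{xxhg}, because $\alpha$ extends as a maximizing geodesic beyond $\alpha(s)$. Hence $\alpha(0)$ is neither conjugate to nor a cut point of $\alpha(s)$ along $\alpha$.

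Second, identify the geometry. Near $\alpha(0)$ the level sets of $r$ are spacelike hypersurfaces, and $\nabla r$ is past-directed unit timelike, tangent to $\alpha$ with $\nabla r=-\dot\alpha$ along $\alpha$. The eikonal identity $g(\nabla r,\nabla r)=-1$ gives $\nabla_{\nabla r}\nabla r=0$. So $\Box r$ at $\alpha(t)$ equals $\mathrm{tr}$ of the Hessian restricted to $\dot\alpha^\perp$, that is, $\pm$ the mean curvature of the level set $\{r=s-t\}$ (the past light-cone-like sphere of radius $s-t$ about $\alpha(s)$).

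Third, set $\theta(t):=\Box r(\alpha(t))$ for $t\in[0,s)$ and derive the Raychaudhuri/Bochner identity. Apply the Bochner formula to $r$ using $|\nabla r|^2=-1$:
\[
0=\tfrac12\Box g(\nabla r,\nabla r)=|\nabla^2 r|^2+g(\nabla r,\nabla\Box r)+Ric(\nabla r,\nabla r).
\]
Along $\alpha$ one has $g(\nabla r,\nabla \Box r)=-\theta'(t)$. The Hessian is supported on the $n$-dimensional spacelike space $\dot\alpha^\perp$, so Cauchy--Schwarz gives $|\nabla^2r|^2\ge \theta^2/n$. With TCC this yields the Riccati inequality $\theta'\ge \theta^2/n$ on $[0,s)$.

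Fourth, fix the sign and asymptotics near $\alpha(s)$. In Minkowski space $r=\sqrt{-(x-\alpha(s))^2}$ has $\Box r=-n/r$, so $\theta(t)\sim -n/(s-t)\to-\infty$ as $t\to s^-$. Now compare. Where $\theta<0$, the quantity $1/\theta$ satisfies $(1/\theta)'=-\theta'/\theta^2\le -1/n$. Integrate from $0$ to $t\to s^-$, using $1/\theta(t)\to 0^-$: this gives $-1/\theta(0)\ge s/n$, i.e. $\theta(0)\ge -n/s$. If $\theta(0)\ge0$ the bound is trivial.

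Some care is needed where $\theta$ might vanish. If $\theta(t_1)\ge 0$ for some $t_1<s$, then the Riccati inequality keeps $\theta\ge0$ afterwards, contradicting $\theta\to-\infty$. So either $\theta(0)\ge 0$, or $\theta<0$ on all of $[0,s)$.

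The main obstacle I expect is sign conventions. These are the orientation of $\nabla r$, the sign of $\Box$ with signature $(-,+,\dots,+)$, and verifying rigorously the asymptotics $\theta\sim -n/(s-t)$ near the vertex $\alpha(s)$, where $r$ is not smooth. For the asymptotics I would use normal coordinates at $\alpha(s)$ and compare with the Minkowski computation, using that the metric is Minkowskian to first order there.

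An alternative, cleaner route avoiding the vertex analysis is the index form. Write $\Box r(\alpha(0))$ as $-\sum_i I(J_i,J_i)$ with $J_i$ Jacobi fields vanishing at $\alpha(s)$ and $\{J_i(0)\}$ orthonormal in $\dot\alpha(0)^\perp$. By maximality and the absence of conjugate points, the index form is extremized by Jacobi fields. So the sum is bounded by its value on the test fields $(1-t/s)E_i$, with $E_i$ parallel orthonormal. Those contribute $n/s$ minus $\int_0^s (1-t/s)^2 Ric(\dot\alpha,\dot\alpha)\,dt\le n/s$ by TCC. I would present this version.

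The dual statement for $\rho=d(\alpha(-s),\cdot)$ follows by reversing the time orientation, which flips the sign of $\Box\rho$.
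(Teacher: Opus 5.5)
Your index-form argument is correct, and it is a genuinely different route from the paper's. The paper argues via the Riccati equation. It regards $\Box r$ as the expansion $\theta$ of the congruence of maximizing geodesics into $\alpha(s)$. Raychaudhuri and TCC give $\dd\theta/\dd\tau\ge \theta^2/n$ in the variable $\tau=s-t$. The vertex asymptotics $\theta=-n/\tau+O(\tau)$ come from the Jacobi tensor with $A(s)=0$, $A'(s)=-\mathrm{Id}$. The paper then integrates $w=-1/\theta$ after a case split on the sign of $\theta$.

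You instead identify $\textrm{Hess}\, r$ on $\dot\alpha(0)^\perp$ with the index form of the Jacobi fields vanishing at $\alpha(s)$. This uses second variation, together with $\textrm{Hess}\, r(\dot\alpha,\dot\alpha)=0$ since $r(\alpha(t))=s-t$. You then compare with the linear test fields $(1-t/s)E_i$ through the Lorentzian index lemma. The lemma applies because, thanks to the maximal extension beyond $\alpha(s)$, no point of $\alpha([0,s))$ is conjugate to $\alpha(s)$.

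Your route avoids the vertex analysis altogether. It even gives the sharper bound $\Box r(\alpha(0))\ge -n/s+\int_0^s(1-t/s)^2 Ric(\dot\alpha,\dot\alpha)\,\dd t$. The price is the second variation formula and the index lemma, while the Riccati route yields monotonicity of $\theta$ along the whole segment. One precision: with your sign $\Box r=-\sum_i I(J_i,J_i)$, the index form must be $I(V,V)=\int_0^s[g(V',V')-g(R(V,\dot\alpha)\dot\alpha,V)]\,\dd t$. This form is positive definite here, so Jacobi fields \emph{minimize} it, and ``extremized'' should be stated that way.

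The Riccati sketch, however, has a sign error that you should fix if you keep it.
\begin{itemize}
\item Since $\dd r(\dot\alpha)=-1=g(\dot\alpha,\dot\alpha)$, one has $\nabla r=+\dot\alpha$, future directed, not $-\dot\alpha$. In Minkowski, $\nabla r=(q-x)/r$ with $q=\alpha(s)$.
\item Hence $g(\nabla r,\nabla \Box r)=\theta'(t)$ along $\alpha$, and the Bochner identity gives $\theta'\le -\theta^2/n$ in $t$. This is the paper's inequality written in $\tau=s-t$.
\item Your inequality $\theta'\ge\theta^2/n\ge 0$ would make $\theta$ non-decreasing in $t$. That is incompatible with your own asymptotics $\theta\to-\infty$ as $t\to s^-$.
\item Accordingly, $(1/\theta)'\le -1/n$ actually integrates to $-1/\theta(0)\le -s/n$, not $\ge s/n$.
\item Non-negativity of $\theta$ propagates toward $\alpha(0)$, not toward $\alpha(s)$.
\end{itemize}
With the correct sign, $\theta(t_1)\ge 0$ gives $\theta(0)\ge 0$ directly. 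Otherwise $(1/\theta)'\ge 1/n$ integrates to $-1/\theta(0)\ge s/n$. This corrected version is exactly the paper's proof.
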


\begin{proof}
As recalled before Theorem \ref{xxhg}, $(\alpha(t),\alpha(s))\in W$ for every $t\in [0,s)$, so $r$ is smooth in a neighborhood of each $\alpha(t)$, $t<s$. Where smooth, $r$ satisfies the eikonal equation $g(\nabla r,\nabla r)=-1$ and $\nabla r$ is the future-directed unit tangent field of the maximizing geodesics ending at $\alpha(s)$; in particular $\nabla r(\alpha(t))=\dot \alpha(t)$ and $r(\alpha(t))=s-t$. The field $\nabla r$ is geodesic and hypersurface-orthogonal, hence irrotational, and its expansion is $\theta:=\textrm{div}\, \nabla r=\Box r$. The Raychaudhuri equation gives, along $\alpha$,
\[
\frac{\dd \theta}{\dd t}=-\frac{\theta^2}{n}-\vert \sigma\vert^2- Ric(\dot \alpha,\dot \alpha)\le -\frac{\theta^2}{n},
\]
where we used TCC. Let $\tau:=s-t\in (0,s]$ be the remaining distance to $\alpha(s)$, and regard $\theta$ as a function of $\tau$. Then
\begin{equation} \label{riccati}
\frac{\dd \theta}{\dd \tau}\ \ge\ \frac{\theta^2}{n}\ \ge 0,
\end{equation}
so $\theta$ is non-decreasing in $\tau$.

Since the congruence has a vertex at $\alpha(s)$, $\theta\to -\infty$ as $\tau\to 0^+$; more precisely, writing $\theta=\textrm{tr}(A'A^{-1})$ for the Jacobi tensor $A$ along $\alpha$ with $A(s)=0$, $A'(s)=-\textrm{Id}$, we have $A(t)=\tau\,\textrm{Id}+O(\tau^3)$ and hence $\theta=-n/\tau+O(\tau)$.

If $\theta(\tau_1)\ge 0$ for some $\tau_1\in (0,s]$ then, by the monotonicity just proved, $\theta(\tau)\ge 0>-n/\tau$ for every $\tau\ge \tau_1$. Otherwise $\theta<0$ on $(0,s]$ and we may set $w:=-1/\theta>0$, which is differentiable with, by (\ref{riccati}),
\[
\frac{\dd w}{\dd \tau}=\frac{1}{\theta^2}\frac{\dd \theta}{\dd \tau}\ge \frac{1}{n} .
\]
As $w(\tau)\to 0^+$ for $\tau\to 0^+$, integration gives $w(\tau)\ge \tau/n$, that is $\theta(\tau)\ge -n/\tau$. In both cases $\theta(s)\ge -n/s$, which is the claim since $\tau=s$ corresponds to $\alpha(0)$. The dual statement follows by time reversal.
\end{proof}

We use the notion of weak solution introduced by Calabi and standard in the proof of the Lorentzian splitting theorem \cite{eschenburg88,galloway89,beem96}. The next theorem is due to Galloway and Vega \cite[Thm.\ 4.2]{galloway14b}, who proved, with the same support hypersurfaces, that Cauchy horospheres (and more general horospheres) have mean curvature $\ge 0$ in the support sense. We include the short proof in the language of functions.

\begin{definition} \label{defsupport}
Let $u:O\to \mathbb{R}$ be continuous on an open set $O\subset M$. We say that $\Box u\le 0$ {\em in the support sense} if for every $p\in O$ and every $\epsilon>0$ there is a function $\varphi$, smooth in a neighborhood of $p$, such that $\varphi(p)=u(p)$, $\varphi\ge u$ near $p$, and $\Box \varphi(p)\le \epsilon$. The condition $\Box u\ge 0$ in the support sense is defined dually, with lower support functions and $\Box \varphi(p)\ge -\epsilon$.
\end{definition}

\begin{theorem} \label{thmmean}
Assume TCC. Then
\begin{align*}
\Box b^+&\le 0 \quad \textrm{in the support sense on } \textrm{dom}\, b^+, \\
\Box b^-&\ge 0 \quad \textrm{in the support sense on } \textrm{dom}\, b^- ,
\end{align*}
the latter if $(M,g)$ is also past timelike complete.
The support functions can be taken to be the functions $b_{p,s}$ of (\ref{cmby}), namely the normalized Lorentzian distances from the points of a calibrated ray.
\end{theorem}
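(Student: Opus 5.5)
The plan is to use, at each $p\in \textrm{dom}\, b^+$, the upper support functions $b_{p,s}$ of Theorem \ref{xxhg}(iii) and to estimate their d'Alembertian at $p$ with Lemma \ref{lemdalem}. Fix $p\in\textrm{dom}\, b^+$ and let $\alpha$ be a proper time parametrized $b^+$-calibrated ray from $p$, which exists by Proposition \ref{bber}. By Theorem \ref{xxhg}(i) this ray is a future complete timelike ray. For each $s>0$ the function
\[
b_{p,s}(q)=b^+(p)+s-d(q,\alpha(s))
\]
is smooth near $p$, satisfies $b_{p,s}(p)=b^+(p)$, and satisfies $b^+\le b_{p,s}$ on $I^-(\alpha(s))$. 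This set is a neighborhood of $p$, since $p\ll\alpha(s)$. Thus $b_{p,s}$ is an admissible upper support function in the sense of Definition \ref{defsupport}, and it only remains to control $\Box b_{p,s}(p)$.

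Since $\Box b_{p,s}=-\Box r$ with $r:=d(\cdot,\alpha(s))$, Lemma \ref{lemdalem} applies: $\alpha\vert_{[0,s]}$ is maximizing and extends as a maximizing geodesic beyond $\alpha(s)$ because $\alpha$ is a ray. Under TCC the lemma gives $\Box r(p)\ge -n/s$, hence $\Box b_{p,s}(p)\le n/s$. Given $\epsilon>0$, I choose $s>n/\epsilon$, which is possible because the ray has infinite length. This yields $\Box b^+\le 0$ in the support sense, with the claimed support functions.

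For $b^-$ I argue by time duality, assuming past timelike completeness. The past Cauchy Busemann function is built from the past level sets with the roles of future and past exchanged. Its calibrated curves are past rays $\beta$ from $p$, and its lower support functions are $b^-(p)-s+d(\beta(-s),\cdot)$. The dual part of Lemma \ref{lemdalem} gives $\Box\rho(p)\le n/s$ for $\rho=d(\beta(-s),\cdot)$. This sign needs checking against the paper's normalization of $b^-$. With $b^-$ the time dual, $b^-(q)\ge b^-(p)-s+d(\beta(-s),q)$ is the correct one-sided inequality, so the support function is a lower one. I would therefore pin down the dual convention, as the sign bookkeeping there is the only real obstacle. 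The analytic content lies entirely in Lemma \ref{lemdalem}, together with the smoothness of $d$ near $(p,\alpha(s))$ supplied by the cut-locus facts recalled before Theorem \ref{xxhg}.
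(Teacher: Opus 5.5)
Your argument for $b^+$ is correct and coincides with the paper's proof. It uses the upper support functions $b_{p,s}$ of Theorem~\ref{xxhg}(iii), the estimate $\Box b_{p,s}(p)=-\Box r(p)\le n/s$ from Lemma~\ref{lemdalem}, and the choice $s>n/\epsilon$, which is possible because the calibrated ray has infinite length.

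The $b^-$ half, however, is left open, and as written it does not go through. You correctly identify the lower support function $\varphi=b^-(p)-s+\rho$ with $\rho=d(\beta(-s),\cdot)$. But $\Box b^-\ge 0$ in the support sense requires the \emph{lower} bound $\Box\varphi(p)\ge-\epsilon$. The estimate you quote, $\Box\rho(p)\le n/s$, is an upper bound and gives nothing. Deferring this as ``sign bookkeeping'' leaves exactly the needed step undone.

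The missing observation is that $\Box$ depends only on $g$, not on the time orientation, and TCC is orientation independent. Consider $(M,g)$ with reversed time orientation. There, the curve $t\mapsto\beta(-t)$, $t\ge 0$, is a future complete timelike ray from $p$, by past timelike completeness and the dual of Theorem~\ref{xxhg}(i). Moreover, $\rho$ is the Lorentzian distance \emph{to} its point $\beta(-s)$. So the first clause of Lemma~\ref{lemdalem} applies verbatim and gives $\Box\rho(p)\ge -n/s$. Hence $\Box\varphi(p)\ge -n/s\ge-\epsilon$ for $s>n/\epsilon$, which is the bound $\Box\ge -n/s$ used in the paper's proof.

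Equivalently, $b^-=-\hat b^+$, where $\hat b^+$ is the future Cauchy Busemann function of the time-reversed spacetime. Your first part applied to $\hat b^+$ yields the upper support functions $-\varphi$.

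Note that the dual clause of Lemma~\ref{lemdalem} as printed, $\Box\rho\le n/s$, is not the time dual of its first clause. Time reversal swaps the arguments of $d$ but does not change the sign of $\Box$. As a check, in Minkowski spacetime $\Box\rho=-n/\rho$, which saturates $\Box\rho\ge -n/s$.
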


\begin{proof}
Let $p\in \textrm{dom}\, b^+$ and $\epsilon>0$. Let $\alpha$ be a proper time parametrized $b^+$-calibrated ray starting from $p$; it is a timelike ray by Theorem \ref{xxhg}(i), so for every $s>0$ it is maximizing on $[0,s]$ and extends as a maximizing geodesic beyond $\alpha(s)$. By Theorem \ref{xxhg}(iii) the function
\[
b_{p,s}=b^+(p)+s-d(\cdot,\alpha(s))
\]
is smooth in a neighborhood of $p$, satisfies $b_{p,s}(p)=b^+(p)$ and $b_{p,s}\ge b^+$ there, namely it is a smooth upper support function at $p$. By Lemma \ref{lemdalem}, with $r=d(\cdot,\alpha(s))$,
\[
\Box b_{p,s}(p)=-\Box r(p)\le \frac{n}{s} .
\]
Choosing $s>n/\epsilon$ gives $\Box b_{p,s}(p)\le \epsilon$. The statement for $b^-$ is the time dual: for $p\in \textrm{dom}\, b^-$ and a past $b^-$-calibrated ray $\eta$ ending at $p$, the function $b^-(p)-s+d(\eta(-s),\cdot)$ is a smooth lower support function at $p$ with $\Box\ge -n/s$.
\end{proof}

In geometrical terms, with $T=-\nabla b^+$ the future-directed unit normal (Lemma \ref{lemgrad}), the future mean curvature (expansion) of the horospheres is $\theta=\textrm{div}\, T=-\Box b^+$, so the theorem states that the Cauchy horospheres $B^+_c$ have non-negative future mean curvature in the support sense. Explicitly:

\begin{corollary} \label{cormeanconv}
Assume TCC, let $c\in \mathbb{R}$, $p\in B^+_c$ and $\epsilon>0$. There are a smooth spacelike hypersurface $S$ through $p$ and a neighborhood $U$ of $p$ such that $S\cap U$ is acausal and edgeless in $U$, $B^+_c\cap U\subset J^+(S\cap U,U)$, and the future mean curvature of $S$ at $p$ is at least $-\epsilon$. Thus $B^+_c$ has future mean curvature $\ge 0$ in the support sense of \cite{andersson98b,galloway14b}. At every point at which $b^+$ is twice differentiable, $\Box b^+\le 0$, namely the horospheres have non-negative future mean curvature. Dually, if $(M,g)$ is also past timelike complete, the horospheres $B^-_d$ have non-positive future mean curvature.
\end{corollary}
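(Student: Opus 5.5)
The plan is to convert the upper support functions $b_{p,s}$ of Theorem \ref{thmmean} into support hypersurfaces for the level set. Fix $p\in B^+_c$ and $\epsilon>0$. Let $\alpha$ be a proper time parametrized $b^+$-calibrated ray from $p$ (Proposition \ref{bber}); by Theorem \ref{xxhg}(i) it is timelike. Choose $s>n/\epsilon$. Set $\varphi:=b_{p,s}=c+s-d(\cdot,\alpha(s))$. This function is smooth near $p$ by Theorem \ref{xxhg}(iii). It satisfies $\varphi(p)=c$ and $\varphi\ge b^+$ on $I^-(\alpha(s))$, which is a neighborhood of $p$. Its gradient is $\nabla\varphi=\nabla r$ with $r=d(\cdot,\alpha(s))$, and $\nabla r$ is a future-directed unit timelike vector. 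Hence $S:=\varphi^{-1}(c)$ is, near $p$, a smooth spacelike hypersurface. Concretely, $S$ is the level set $\{r=s\}$, the past Lorentzian sphere of radius $s$ about $\alpha(s)$.

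Next I fix the neighborhood $U$. Take $U$ to be a small globally hyperbolic convex neighborhood of $p$ contained in the region where $\varphi$ is smooth with timelike gradient. Then $\varphi$ is a time function on $U$, so $S\cap U$ is acausal in $U$. After shrinking $U$ (for instance to a small causal diamond around $p$ adapted to the timelike gradient), $S\cap U$ is edgeless in $U$ and every inextendible causal curve of $U$ meets it.

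For the inclusion, let $q\in B^+_c\cap U$. Then $\varphi(q)\ge b^+(q)=c$. Since $\varphi$ increases along future causal curves in $U$, the past-directed integral curve of $-\nabla\varphi$ from $q$ reaches $S$ inside $U$, again after shrinking $U$ if necessary. Therefore $q\in J^+(S\cap U,U)$.

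It remains to compute the mean curvature. With the future unit normal $N=\nabla r$, which is future directed, the future mean curvature of the level set $\{r=s\}$ is $\mathrm{div}\,N=\Box r$, up to the normalization of the trace, using $|\nabla r|=1$. Lemma \ref{lemdalem} gives $\Box r(p)\ge -n/s>-\epsilon$. This is equivalent to $\Box\varphi(p)\le n/s$, as in Theorem \ref{thmmean}. With the paper's identification $\theta=-\Box b^+$, this gives future mean curvature of $S$ at $p$ at least $-\epsilon$, which is the support sense of Andersson--Galloway--Howard and Galloway--Vega.

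At a point where $b^+$ is twice differentiable, the support inequality $\varphi-b^+\ge0$ with equality at $p$ forces $\mathrm{Hess}(\varphi-b^+)(p)\ge0$ (positive semidefinite) together with equal gradients. To conclude $\Box b^+(p)\le\Box\varphi(p)$ from this, one needs care: the Lorentzian trace of a semidefinite form has no sign. The way around this is the eikonal equation. Differentiating $g(\nabla b^+,\nabla b^+)=-1$ shows that $\mathrm{Hess}\,b^+$ and $\mathrm{Hess}\,\varphi$ both annihilate $T$. So $\Box$ reduces to the trace over the Riemannian space $T^\perp$, where the semidefinite inequality passes to the trace. This gives $\Box b^+(p)\le n/s$ for all $s$, hence $\le0$.

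The eikonal step is where I expect the main obstacle. Lemma \ref{lemgrad} gives the eikonal equation only pointwise at differentiability points, so differentiating it requires care. If that fails, I would instead apply the Riemannian-trace argument directly to the $T^\perp$-restricted Hessians. The dual statement for $B^-_d$ follows by time reversal using the lower support functions $b^-(p)-s+d(\eta(-s),\cdot)$.
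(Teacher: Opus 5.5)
Your construction of the support hypersurface is essentially the paper's. $S$ is the past Lorentzian sphere $\{d(\cdot,\alpha(s))=s\}$, i.e.\ the level set of $b_{p,s}$. Its future mean curvature at $p$ is $-\Box b_{p,s}(p)=\Box r(p)\ge -n/s>-\epsilon$ by Lemma \ref{lemdalem}. The local inclusion comes from $b^+\le\varphi$ and the gradient flow.

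Two small repairs are needed there.
\begin{itemize}
\item You wrote $\nabla\varphi=\nabla r$. Since $\varphi=c+s-r$, it is $\nabla\varphi=-\nabla r$, which is past directed. You implicitly use the correct sign when passing from $\Box r(p)\ge -n/s$ to $\Box\varphi(p)\le n/s$.
\item ``Shrinking $U$'' to a causal diamond does not by itself ensure that the backward integral curve from every $q\in B^+_c\cap U$ reaches $S$ before leaving $U$. The clean choice, made in the paper, is a flow box $U=\Phi((-\delta,\delta)\times S')$, with $\Phi$ the flow of $-\nabla\varphi$ and $S'\subset S$ a small neighborhood of $p$. Then $\varphi(\Phi_\tau(z))=c+\tau$, so $x=\Phi_\tau(z)\in B^+_c\cap U$ forces $\tau=\varphi(x)-c\ge b^+(x)-c=0$, and the flow segment from $z$ to $x$ lies in $U$. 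Acausality and edgelessness of $S\cap U=\Phi(\{0\}\times S')$ in $U$ follow because $\varphi$ is strictly increasing along causal curves there.
\end{itemize}

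The real gap is in the last part, the inequality $\Box b^+\le 0$ at points of twice differentiability.

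Differentiating $g(\nabla b^+,\nabla b^+)=-1$ at $p$ is legitimate only under two conditions. First, $b^+$ must be differentiable on a whole neighborhood of $p$, so that Lemma \ref{lemgrad} gives the eikonal equation there. Second, $\nabla b^+$ must be differentiable at $p$. For the semi-concave function $b^+$ the natural notion is a second-order Taylor expansion at $p$ (Alexandrov). Then you only know the eikonal equation at almost every nearby point, so the claim that $\textrm{Hess}\, b^+$ annihilates $T$ is not justified.

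Your fallback does not close this. Take $A:=\textrm{Hess}(\varphi-b^+)(p)\ge 0$ and an orthonormal frame $\{T,e_1,\dots,e_n\}$. Then $\textrm{tr}_g A=-A(T,T)+\sum_i A(e_i,e_i)$, and the spatial trace gives no control over the term $-A(T,T)\le 0$, which has the wrong sign.

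The missing observation is that only the diagonal entry $A(T,T)$ matters, and it vanishes by calibration. Since $\alpha$ is a geodesic and $b^+(\alpha(t))=c+t=c+s-d(\alpha(t),\alpha(s))=\varphi(\alpha(t))$ for $t\in[0,s]$, the function $\varphi-b^+$ vanishes along $\alpha\vert_{[0,s]}$. Its one-sided second-order expansion at $p$ along $\alpha$ then yields $A(T,T)=0$.

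This is exactly how the paper argues. It computes $\textrm{Hess}\,b^+(T,T)=\frac{\dd^2}{\dd t^2}b^+(\alpha(t))\vert_{t=0}=0$ and uses the eikonal equation only for the smooth function $b_{p,s}$. Then $\textrm{tr}_g A=\sum_i A(e_i,e_i)\ge 0$, so $\Box b^+(p)\le\Box\varphi(p)\le n/s$ for every $s$, hence $\Box b^+(p)\le 0$.
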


\begin{proof}
Take $S$ to be the level set of $b_{p,s}$ through $p$, $s>n/\epsilon$, which is the past Lorentzian distance sphere $\{d(\cdot,\alpha(s))=s\}$; it is smooth and spacelike near $p$ since $\nabla b_{p,s}$ is timelike, its future mean curvature at $p$ is $-\Box b_{p,s}(p)\ge -n/s>-\epsilon$. Let $\Phi$ be the flow of the future-directed unit timelike vector field $-\nabla b_{p,s}$, which satisfies $b_{p,s}(\Phi_\tau(x))=b_{p,s}(x)+\tau$ by the eikonal equation, and let $U=\Phi((-\delta,\delta)\times S')$, where $S'\subset S$ is a small neighborhood of $p$ and $\delta>0$ is so small that $b_{p,s}$ is smooth and $b^+\le b_{p,s}$ on $U$. In $U$ the function $b_{p,s}$ is strictly increasing along causal curves, so $S\cap U=\Phi(\{0\}\times S')$ is acausal and edgeless in $U$. If $x=\Phi_\tau(z)\in B^+_c\cap U$, $z\in S'$, then $\tau=b_{p,s}(x)-c\ge b^+(x)-c=0$, and the flow line from $z$ to $x$ shows $x\in J^+(S\cap U,U)$. For the second statement, let $q$ be a point at which $b^+$ is twice differentiable, let $\alpha$ be a proper time parametrized $b^+$-calibrated ray starting from $q$ and let $T=\dot \alpha(0)$. As $b_{q,s}-b^+\ge 0$ vanishes at $q$ and both functions are differentiable there, $\dd b^+_q=\dd (b_{q,s})_q$ and $\textrm{Hess}(b_{q,s}-b^+)(q)$ is positive semi-definite. Moreover $b^+(\alpha(t))=b^+(q)+t$ by calibration and $\alpha$ is a geodesic, so $\textrm{Hess}\, b^+(T,T)=\frac{\dd^2}{\dd t^2}b^+(\alpha(t))\vert_{t=0}=0$, and similarly $\textrm{Hess}\, b_{q,s}(T,T)=0$ since $b_{q,s}$ is smooth and satisfies the eikonal equation. Therefore the Lorentzian trace of $\textrm{Hess}(b_{q,s}-b^+)(q)$ coincides with its trace on the orthogonal complement of $T$, which is non-negative, and $\Box b^+(q)\le \Box b_{q,s}(q)\le n/s$ for every $s>0$.
\end{proof}

Observe that in Minkowski spacetime, or more generally in a product $-\dd t^2+h$, the horospheres are the slices $t=\textrm{const}$ and the mean curvature vanishes, so the inequality is sharp.

Mean convexity becomes a rigidity tool through the geometric maximum principle for rough spacelike hypersurfaces of Andersson, Galloway and Howard \cite{andersson98b}. Following \cite{galloway97,galloway14b}, a {\em $C^0$ spacelike hypersurface} is a subset $S\subset M$ which is locally acausal and edgeless. Given two such sets $S_1,S_2$ meeting at $q$, we say that $S_2$ is {\em locally to the past of $S_1$ near $q$} if there is a neighborhood $U$ of $q$ in which $S_1$ is acausal and edgeless and such that $S_2\cap U\subset J^-(S_1,U)$. The mean curvature inequality $H\ge a$ in the support sense is the one of Corollary \ref{cormeanconv}: for every point $q$ and $\epsilon>0$ there is a smooth spacelike hypersurface through $q$ with mean curvature at least $a-\epsilon$ at $q$, with respect to which the given set lies locally to the future near $q$. We need the case in which one of the two hypersurfaces is smooth, in which no further regularity is required of the other one \cite[Thm.\ 3.4]{galloway97}; see also \cite[Thm.\ 4.1]{galloway14b} for the general rough-rough version. We state its time dual, which is the version used below.

\begin{theorem}[Andersson--Galloway--Howard] \label{thmagh}
Let $S_1$ be a smooth spacelike hypersurface and let $S_2$ be a $C^0$ spacelike hypersurface, meeting at a point $q$, such that $S_2$ is locally to the past of $S_1$ near $q$. Suppose that, for some constant $a$, the future mean curvatures satisfy
\[
H_{S_1}\le a\le H_{S_2},
\]
where the second inequality is meant in the support sense. Then $S_1$ and $S_2$ agree in a neighborhood of $q$.
\end{theorem}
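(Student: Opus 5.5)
The plan is to reduce the statement to a strong maximum principle for the quasilinear elliptic mean curvature operator acting on spacelike graphs, following Hopf's boundary-point argument. The one real difficulty is that $S_2$ is only $C^0$. It is handled by applying the mean curvature inequality for $S_2$ only at points where a smooth comparison hypersurface touches it.

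\textbf{Step 1: graph representation.} Fix a smooth future-directed timelike vector field $N$ near $q$ extending the unit normal of $S_1$. Using its flow, choose coordinates $(x,t)$ on a neighborhood $U$ of $q$ in which $S_1=\{t=u(x)\}$ with $u$ smooth and spacelike. Since $S_2$ is acausal and edgeless in $U$, it is met exactly once by each integral curve of $N$, so $S_2=\{t=v(x)\}$ with $v$ locally Lipschitz. The assumption that $S_2$ lies locally to the past of $S_1$ reads $v\le u$, with $v(x_q)=u(x_q)$.

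For spacelike graphs the future mean curvature is a quasilinear operator $\mathcal{H}(w)=\mathcal{H}(x,w,Dw,D^2w)$. It is uniformly elliptic on any family of graphs whose gradients stay in a compact subset of the spacelike cone, and it is monotone in the Hessian: if two spacelike graphs touch at a point, have equal gradients there, and one lies above the other, then they compare in the order fixed by the sign convention. The hypotheses become:
\begin{itemize}
\item $\mathcal{H}(u)\le a$;
\item for every $p$ and every $\epsilon>0$ there is a smooth spacelike lower support graph $\varphi\le v$ with $\varphi(p)=v(p)$ and $\mathcal{H}(\varphi)(p)\ge a-\epsilon$.
\end{itemize}

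\textbf{Step 2: Hopf barrier.} Set $w:=u-v\ge 0$ and $Z:=\{w=0\}$, which is closed and contains $x_q$. I will show that $Z$ is open near $x_q$; connectedness of a small coordinate ball then gives $u=v$, i.e.\ $S_1=S_2$ near $q$.

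Suppose $Z$ is not a neighborhood of some point of $Z$ close to $x_q$. Then, as in Hopf's lemma, there is a small ball $B=B_R(y)\subset\{w>0\}$ whose boundary meets $Z$ at a point $x_0$. Take the annulus $A=B_{R/2}(x_0)$ and the barrier $h=e^{-\alpha|x-y|^2}-e^{-\alpha R^2}$.

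Linearize $\mathcal{H}$ at $u$. Its principal part is uniformly elliptic on $A$ because $Du$ is uniformly spacelike there. Hence for $\alpha$ large the linearization applied to $h$ is at least $c_0>0$ on $A$. Consequently, for every small $\delta>0$ the function $u_\delta:=u-\delta h$ satisfies $\mathcal{H}(u_\delta)\le a-c\delta$ on $A$, for some $c>0$, and $u_\delta$ is still uniformly spacelike.

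\textbf{Step 3: sliding and contact.} On $\partial A\cap \overline B$ we have $w>0$, so $w$ is bounded below by a positive constant there. On $\partial A\setminus B$ we have $h<0$. Therefore, for $\delta$ small, $u_\delta-v>0$ on $\partial A$, while $(u_\delta-v)(x_0)=-\delta h(x_0)=0$ and $u_\delta-v<0$ at interior points of $A\cap B$ near $x_0$.

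Let $m:=\min_{\overline A}(u_\delta-v)<0$. This minimum is attained at an interior point $p$, where the smooth spacelike graph of $u_\delta-m$ touches the graph of $v$ from above. Take a lower support graph $\varphi$ of $v$ at $p$ with $\epsilon<c\delta$. Then
\[
\varphi\le v\le u_\delta-m \quad\text{near } p,
\]
with equality at $p$. Hence $D\varphi(p)=Du_\delta(p)$, and in particular $\varphi$ is uniformly spacelike at $p$. This is exactly the step that overcomes the low regularity of $S_2$: any nearly null support planes of $S_2$ elsewhere are never used, only the one at the contact point, which coincides with the tangent plane of a smooth spacelike graph. Moreover $D^2\varphi(p)\le D^2u_\delta(p)$, so monotonicity of $\mathcal{H}$ gives
\[
a-\epsilon\le \mathcal{H}(\varphi)(p)\le \mathcal{H}(u_\delta)(p)\le a-c\delta ,
\]
which contradicts $\epsilon<c\delta$.

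\textbf{Main obstacle.} What needs the most care is getting the sign conventions right in Step 2. For future mean curvature of graphs, the order ``$S_2$ to the past of $S_1$'' combined with $H_{S_1}\le a\le H_{S_2}$ must correspond to $\varphi\le u_\delta$ implying $\mathcal{H}(\varphi)\le\mathcal{H}(u_\delta)$ at a contact point. I will check this on the model hyperboloids of Minkowski space. I will also verify that the $\epsilon$-slack in the support condition does not interfere: it is absorbed because $\delta$ is fixed before $\epsilon$ is chosen.
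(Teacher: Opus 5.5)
Your strategy is the classical Hopf barrier argument with Calabi's support functions: the rough hypersurface $S_2$ is tested only through its lower support hypersurface at the contact point with a smooth upper barrier. This is the route the paper has in mind. The paper gives no proof of its own; it cites \cite{andersson98b,galloway97}.

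However, the final chain of inequalities in Step 3 is wrong as written. At the minimum point you have $\varphi(p)=v(p)=u_\delta(p)-m$. Monotonicity of $\mathcal{H}$ in the Hessian holds only at fixed $(x,z,Dw)$, so what you actually get is $\mathcal{H}(\varphi)(p)\le \mathcal{H}(u_\delta-m)(p)$, not $\mathcal{H}(\varphi)(p)\le \mathcal{H}(u_\delta)(p)$. In the flow coordinates of Step 1 the metric coefficients depend on $t$ unless $N$ is a Killing field. So the graph operator genuinely depends on the height variable, as your own notation $\mathcal{H}(x,w,Dw,D^2w)$ indicates. The difference $\mathcal{H}(u_\delta-m)(p)-\mathcal{H}(u_\delta)(p)$ is therefore of order $|m|=O(\delta)$, the same order as your gain $c\delta$, and the displayed contradiction does not follow. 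The crude bound $|m|\le \delta\max_A h$ does not rescue it. With your barrier the uniform gain $\min_{\overline A}$ is of order $\alpha^2e^{-9\alpha R^2/4}$, whereas $\max_A h$ is of order $e^{-\alpha R^2/4}$.

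The repair is to work pointwise at $p$ and treat the height dependence as a zero-order term, as in Hopf's lemma for non-negative functions. Since $w\ge 0$ and $m\le 0$, you have $v(p)=u(p)-w(p)$ with $0\le w(p)=m+\delta h(p)\le \delta e^{-\alpha r^2}$, where $r=|p-y|\in[R/2,3R/2]$. Write $\mathcal{H}(w)=F(x,w,Dw,D^2w)=a^{ij}(x,w,Dw)D_{ij}w+b(x,w,Dw)$ and expand around $(u(p),Du(p),D^2u(p))$:
\[
\mathcal{H}(\varphi)(p)\le F\big(p,v(p),Du_\delta(p),D^2u_\delta(p)\big)\le a-\delta e^{-\alpha r^2}\big(4\lambda\alpha^2r^2-C\alpha-K\big)+C_\alpha\delta^2 .
\]
Here $\lambda$ is an ellipticity constant, $C$ bounds $\mathrm{tr}\,a$ and $\partial_PF$, and $K$ bounds $\partial_zF$; none of them depends on $\alpha$. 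Choose $\alpha$ large, then $\delta$ small, then $\epsilon$ below the resulting gain, and the contradiction follows.

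Your claim that $u_\delta-v<0$ somewhere near $x_0$ is unjustified, since both $w$ and $\delta h$ vanish at $x_0$ and are positive in $B$. It is also unnecessary. Since $u_\delta-v>0$ on $\partial A$ and vanishes at the centre $x_0$, the minimum $m\le 0$ is attained in the interior, and the estimate above covers $m=0$ as well.

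With this correction your argument is a valid proof of the smooth-versus-$C^0$ case. Your observation that only the support hypersurface at the contact point is used, and that it inherits the spacelike gradient of $u_\delta$, is exactly why no uniform bounds on the support hypersurfaces are needed here, unlike in the rough--rough version.
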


For smooth hypersurfaces this is the classical consequence of the Hopf maximum principle for the mean curvature operator, which is elliptic on spacelike graphs; the rough version combines it with Calabi's notion of support functions \cite{andersson98b}.

\section{Future and past Cauchy Busemann functions and splitting} \label{secsplit}

Let $b^+: M\to (-\infty,+\infty]$ be the function $b$ constructed so far from the Cauchy hypersurface $\Sigma$. Similarly, let $b^-: M\to [-\infty,+\infty)$ be its time dual, obtained from $\tilde \Sigma^-=\{(p,r): p\in \Sigma, r\ge 0\}$, taking the chronological past in $\tilde M$ and hence the  horizon of its future domain of dependence, expressed as the graph of $b^-$. Explicitly,
\begin{align*}
b^+_k(p)&=k-d(p,\Sigma_k), & b^+&=\sup_k b^+_k=\lim_k b^+_k, \\
b^-_k(p)&=-k+d(\Sigma_{-k},p), & b^-&=\inf_k b^-_k=\lim_k b^-_k.
\end{align*}
Throughout this section and Section \ref{secstru}, $(M,g)$ is assumed to be timelike complete, namely future and past timelike complete, the former property being needed for $b^+$ and the latter for $b^-$. All the previous results hold, after time duality, for $b^-$. In particular $b^-$ is rushing, $b^-\le -d^-_\Sigma$ on $J^-(\Sigma)$, $\textrm{dom}\, b^-$ is an open future set, and for $p\in \textrm{dom}\, b^-$ there are past inextendible timelike $b^-$-calibrated rays ending at $p$. We write $B^\pm_c:=(b^\pm)^{-1}(c)$.

Observe that $(b^+_k)^{-1}(0)\subset J^-(\Sigma)$ for every $k$ and so $(b^+)^{-1}(0)\subset J^-(\Sigma)$. Similarly, $(b^-)^{-1}(0)\subset J^+(\Sigma)$. Thus $(b^+)^{-1}(0)$ might not intersect $(b^-)^{-1}(0)$; however, if $\textrm{dom}\, b^+\cap \textrm{dom}\, b^- \ne \emptyset$ we can find $B^+_c$ and $B^-_d$ that intersect.

Let $T:=d_\Sigma$ be the signed distance of Theorem \ref{kvvb}, so that $\Sigma_s=T^{-1}(s)$.

\begin{lemma} \label{lemsandwich}
The function $T$ is rushing on $M$ and $b^-\le T\le b^+$.
\end{lemma}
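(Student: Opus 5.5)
The plan is to prove the rushing property of $T$ first, and then to obtain the sandwich inequality by combining it with (\ref{cmlr}) and with the rushing property of $b^\pm$.

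\emph{Rushing property.} Let $p\le q$. If both points lie in $J^+(\Sigma)$, then $d^+_\Sigma(q)\ge d^+_\Sigma(p)+d(p,q)$, as already observed before Proposition \ref{nner}. The case where both lie in $J^-(\Sigma)$ is dual: $d^-_\Sigma(p)\ge d(p,q)+d^-_\Sigma(q)$, that is, $-d^-_\Sigma(p)+d(p,q)\le -d^-_\Sigma(q)$. In the mixed case $p\in J^-(\Sigma)$, $q\in J^+(\Sigma)$, every causal curve from $p$ to $q$ meets $\Sigma$ at some point $r$. Hence its length is at most $d(p,r)+d(r,q)\le d^-_\Sigma(p)+d^+_\Sigma(q)$, and taking the supremum gives $d(p,q)\le -T(p)+T(q)$. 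The remaining case $p\in J^+(\Sigma)\setminus\Sigma$, $q\in J^-(\Sigma)$ cannot occur by acausality of $\Sigma$. Thus $T(p)+d(p,q)\le T(q)$ for all $p\le q$.

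\emph{Bound $T\le b^+$.} On $J^+(\Sigma)$ this is exactly (\ref{cmlr}). On $J^-(\Sigma)$, take $p$ and use that $b^+$ is rushing, with $\textrm{dom}\, b^+$ a past set when $b^+(p)<\infty$. Any timelike curve from $p$ reaches some $q\in\Sigma$ with $b^+(q)\ge 0=T(q)$ by (\ref{cmlr}). I would instead argue directly through the approximants. For $p\in J^-(\Sigma)$ and large $k$, choose $q\in\Sigma$ realizing $d(p,\Sigma)=d^-_\Sigma(p)$; by the time dual of Theorem \ref{thmcont} such a point exists on a maximizing geodesic. Then $b^+(p)\ge b^+(q)-d(p,q)$ fails in direction, since rushing only gives upper bounds on $b^+(p)$. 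So I will rely on the bound already derived just before (\ref{cmlr}): $b_\tau(p)\ge -d^-_\Sigma(p)=T(p)$ for $p\in J^-(\Sigma)$. This gives $b^+\ge T$ there, which is exactly the second half of (\ref{cmlr}).

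\emph{Bound $b^-\le T$.} This is the time dual: $b^-\le d^+_\Sigma$ on $J^+(\Sigma)$ and $b^-\le -d^-_\Sigma$ on $J^-(\Sigma)$, by the time dual of (\ref{cmlr}) as stated at the beginning of this section.

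The only genuinely new content is the mixed case of the rushing property, which I expect to be the main step. It is straightforward once one notes that $\Sigma$ is Cauchy, so every causal curve between the two regions crosses it.
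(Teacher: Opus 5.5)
Your proof is correct and follows the paper's argument: the rushing property of $T$ is checked by the same case analysis (in the mixed case you bound an arbitrary causal curve through its crossing point with $\Sigma$ and take the supremum, whereas the paper uses a maximizing curve), and $b^-\le T\le b^+$ is read off directly from (\ref{cmlr}) and its time dual. You should delete the abandoned detours in your paragraph on $T\le b^+$ (the rushing-based attempt and the choice of $q$ realizing $d^-_\Sigma(p)$), since the lower bound $b_\tau\ge -d^-_\Sigma$ on $J^-(\Sigma)$, together with (\ref{cmlr}) on $J^+(\Sigma)$, already gives the claim.
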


\begin{proof}
Let $p\le q$. If $p,q\in J^+(\Sigma)$, $d^+_\Sigma(q)\ge d^+_\Sigma(p)+d(p,q)$, and dually if $p,q\in J^-(\Sigma)$. If $p\in J^-(\Sigma)$ and $q\in J^+(\Sigma)$ a maximizing curve from $p$ to $q$ crosses $\Sigma$ at some $r$ and $d(p,q)=d(p,r)+d(r,q)\le d^-_\Sigma(p)+d^+_\Sigma(q)=T(q)-T(p)$. Thus $T$ is rushing, and this argument does not use any completeness assumption.
By (\ref{cmlr}), $b^+\ge d^+_\Sigma=T$ on $J^+(\Sigma)$ and $b^+\ge -d^-_\Sigma=T$ on $J^-(\Sigma)$. The inequality $b^-\le T$ is the time dual.
\end{proof}

\begin{theorem} \label{notr}
We have $b^+\ge b^-$. If $\textrm{dom}\, b^+\cap \textrm{dom}\, b^-\ne \emptyset$ then the function $G:=b^+-b^-:M\to [0,+\infty]$ attains a finite minimum at some point of $\Sigma$, the minimum over $\Sigma$ is a global minimum, and through every global minimum point $p$ passes a complete timelike line which is both $b^+$-calibrated and $b^-$-calibrated. Thus, under TCC, if $\textrm{dom}\, b^+\cap \textrm{dom}\, b^-\ne \emptyset$ the spacetime splits.
\end{theorem}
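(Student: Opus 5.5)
The inequality $b^+\ge b^-$ is immediate from Lemma \ref{lemsandwich}, since $b^-\le T\le b^+$. So $G=b^+-b^-$ is well defined with values in $[0,+\infty]$, no $\infty-\infty$ ambiguity arises, and $G$ is finite exactly on $\textrm{dom}\, b^+\cap \textrm{dom}\, b^-$. By Theorem \ref{xxhg}(iv) and its time dual, $b^+$ and $b^-$ are continuous as maps to $(-\infty,+\infty]$ and $[-\infty,+\infty)$ respectively. Hence $G:M\to[0,+\infty]$ is continuous, and its restriction to the compact set $\Sigma$ attains a minimum.

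I would show the minimum over $\Sigma$ is finite and global by a monotonicity argument along calibrated curves. The key observation is that $G$ is non-increasing along causal curves in the region where the relevant rays exist. Let $p\in \textrm{dom}\, b^+\cap\textrm{dom}\, b^-$ and let $\alpha$ be a future $b^+$-calibrated ray from $p$ (Theorem \ref{xxhg}(i)). Along $\alpha$, $b^+$ increases exactly by the length. Since $b^-$ is rushing, it increases at least by the length. Thus $G(\alpha(t))\le G(p)$.

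Since $\alpha$ is timelike and $\textrm{dom}\, b^-$ is a future set, $\alpha$ stays in both domains. Following $\alpha$ forward if $p\in J^-(\Sigma)$, or a past $b^-$-calibrated ray backward if $p\in J^+(\Sigma)$, reaches $\Sigma$ with $G$ no larger. Hence $\min_\Sigma G\le G(p)$ for every such $p$, while $G=+\infty$ elsewhere. So the minimum over $\Sigma$ is finite and is the global minimum.

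At a global minimum point $p$, I would take a future $b^+$-calibrated ray $\alpha$ from $p$ and a past $b^-$-calibrated ray $\eta$ ending at $p$. Along $\alpha$ the monotonicity gives $G(\alpha(t))\le G(p)=\min G$, so equality holds. Therefore $b^-$ also increases exactly by the length along $\alpha$, and $\alpha$ is $b^-$-calibrated; symmetrically $\eta$ is $b^+$-calibrated.

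Then the concatenation $\gamma=\eta*\alpha$ satisfies $b^+(\gamma(t_2))-b^+(\gamma(t_1))=\ell(\gamma|_{[t_1,t_2]})$ for all $t_1<t_2$. By Proposition \ref{jjtr}, $\gamma$ is maximizing, hence a geodesic without corners, and it is timelike. It is future complete by Theorem \ref{xxhg}(i) and past complete by the dual statement, so it is a complete timelike line calibrated for both $b^\pm$. Under TCC the Lorentzian splitting theorem then gives the splitting.

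The main obstacle I expect is the monotonicity step's finiteness bookkeeping. One must make sure $b^-$ is finite along $\alpha$, which follows from $\textrm{dom}\, b^-$ being a future set and $\alpha$ being timelike. One also needs the dual ray $\eta$ to lie in $\textrm{dom}\, b^+$, which holds because $\textrm{dom}\, b^+$ is a past set. Everything else is a direct reuse of earlier results.
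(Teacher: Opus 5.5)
Your proposal is correct and follows the paper's proof essentially step by step: you use the monotonicity of $G$ along a future $b^+$-calibrated ray and a past $b^-$-calibrated ray to reduce to $\Sigma$, then the forced equalities at a global minimum, and then Proposition \ref{jjtr} applied to the concatenation. The only cosmetic difference is that you reach $\Sigma$ by a case split on whether $p\in J^-(\Sigma)$ or $p\in J^+(\Sigma)$, whereas the paper simply observes that the inextendible join of the two rays meets $\Sigma$.
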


\begin{proof}
The inequality $b^+\ge b^-$ follows from Lemma \ref{lemsandwich}. The function $G$ is well defined, since $b^+>-\infty$ and $b^-<+\infty$, and it is continuous as a map to $[0,+\infty]$ since $b^\pm$ are continuous (Theorem \ref{xxhg}).

Let $q\in \textrm{dom}\, b^+\cap \textrm{dom}\, b^-$. Let $\gamma:[0,+\infty)\to M$ be a proper time parametrized future $b^+$-calibrated ray starting from $q$ and let $\eta:(-\infty,0]\to M$ be a proper time parametrized past $b^-$-calibrated ray ending at $q$; they are timelike by Theorem \ref{xxhg}(i) and its dual. As $\textrm{dom}\, b^-$ is a future set and $\textrm{dom}\, b^+$ is a past set, both curves are contained in $\textrm{dom}\, b^+\cap \textrm{dom}\, b^-$. For $t\ge 0$, by calibration and by the rushing property of $b^-$,
\[
b^+(\gamma(t))=b^+(q)+t, \qquad b^-(\gamma(t))\ge b^-(q)+t, \qquad \textrm{so} \quad G(\gamma(t))\le G(q),
\]
and similarly $G(\eta(t))\le G(q)$ for $t\le 0$. The inextendible curve obtained joining $\eta$ and $\gamma$ intersects $\Sigma$, thus $\min_\Sigma G\le G(q)$, where the minimum over the compact set $\Sigma$ exists by continuity. Since $G=+\infty$ outside $\textrm{dom}\, b^+\cap \textrm{dom}\, b^-$, the minimum of $G$ over $\Sigma$ is a global minimum, and it is finite.

Let $p$ be a global minimum point, and let $\gamma$, $\eta$ be as above with $q=p$. By minimality the inequalities in the previous display are equalities, thus $b^-(\gamma(t))=b^-(p)+t$ for $t\ge 0$ and, analogously, $b^+(\eta(t))=b^+(p)+t$ for $t\le 0$. Let $\alpha:\mathbb{R}\to M$ be the join of $\eta$ and $\gamma$. For $a<b$ we have $b^+(\alpha(b))-b^+(\alpha(a))=b-a=\ell(\alpha\vert_{[a,b]})$, so by Proposition \ref{jjtr} $\alpha$ is maximizing and $b^+$-calibrated. It is then a geodesic without corners, namely a timelike line, and by the same argument it is $b^-$-calibrated. Under TCC the spacetime splits by the Lorentzian splitting theorem.

\end{proof}

Note that if $(b^+-b^-)(p)$ is finite for some $p$ then $p\in \textrm{dom}\, b^+\cap \textrm{dom}\, b^-$. As $b^+_k-b^-_k$ is non-decreasing in $k$ and converges to $b^+-b^-$, this finiteness condition is equivalent to asking that $(b^+_k-b^-_k)(p)=2k-d(\Sigma_{-k},p)-d(p,\Sigma_k)$ is  bounded from above. Remarkably, the point $p$ can be removed from this condition.

\begin{theorem} \label{thmscal}
The sequence $\Delta_k:=2k-d(\Sigma_{-k},\Sigma_k)$, $k\in \mathbb{N}$, is non-negative and non-decreasing, and
\begin{equation} \label{deltalim}
\lim_{k\to \infty} \Delta_k=\min_{\Sigma} (b^+-b^-)=\inf_M (b^+-b^-) \in [0,+\infty].
\end{equation}
In particular, the following conditions are equivalent: (i) $\sup_k \Delta_k<+\infty$; (ii) $\textrm{dom}\, b^+\cap \textrm{dom}\, b^-\ne \emptyset$; (iii) $b^+-b^-$ has a finite minimum. Each of them implies the existence of a timelike line and, under TCC, they are equivalent to the splitting of the spacetime.
\end{theorem}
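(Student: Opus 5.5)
We will prove the three claims in the order: non-negativity, monotonicity, and then the identification of the limit with $\inf_M(b^+-b^-)$. The equivalences and the splitting conclusion will then be read off from Theorem \ref{notr}.

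\emph{Non-negativity and monotonicity.} Non-negativity is Proposition \ref{nner} with $a=-k$, $b=k$, which gives $d(\Sigma_{-k},\Sigma_k)\le 2k$. For monotonicity, let $p\in \Sigma_{-k}$ and $q\in \Sigma_k$ attain $d(\Sigma_{-k},\Sigma_k)$; they exist by compactness and the continuity of $d$ on $J^+(\Sigma_{-k}) \times J^-(\Sigma_k)$. Take a maximizing past $\Sigma_{-k}$-segment of length $1$ from $\Sigma_{-k-1}$ to $p$; it exists by the time dual of Theorem \ref{kvvb}, since $p\in(d^-_{\Sigma_{-k-1}})^{-1}$-level. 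Take likewise a segment of length $1$ from $q$ to $\Sigma_{k+1}$. Concatenating gives $d(\Sigma_{-k-1},\Sigma_{k+1})\ge d(\Sigma_{-k},\Sigma_k)+2$. If $d(\Sigma_{-k},\Sigma_k)=0$ we instead use Proposition \ref{kcqq} along the $\Sigma$-rays. Hence $\Delta_{k+1}\le\Delta_k$?

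This sign is backwards, so the concatenation argument must be replaced. We have $\Delta_k=\inf_{p}\,[b^+_k(p)-b^-_k(p)]$, because $d(\Sigma_{-k},\Sigma_k)=\sup_p [d(\Sigma_{-k},p)+d(p,\Sigma_k)]$, the sup being attained on a maximizing curve, which meets every intermediate point. Each $b^+_k-b^-_k$ is non-decreasing in $k$, and it is at least $0$ by Lemma \ref{lemsandwich} applied at level $k$. Hence $\Delta_k$, as an infimum over $p$ of non-decreasing families, is non-decreasing. This formula is the key reformulation, and it also yields $\Delta_k\ge0$.

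\emph{Identifying the limit.} Clearly $\Delta_k\le \inf_M(b^+_k-b^-_k)\le\inf_M(b^+-b^-)$, so $\lim\Delta_k\le\inf_M G$. For the reverse inequality, assume $L:=\lim\Delta_k<\infty$. Restrict the infimum to $p\in\Sigma$: every maximizing curve from $\Sigma_{-k}$ to $\Sigma_k$ crosses $\Sigma$, so $\Delta_k=\min_{\Sigma}(b^+_k-b^-_k)$, attained at some $p_k\in\Sigma$. By compactness of $\Sigma$ we may assume $p_k\to p$. For fixed $m\le k$, monotonicity gives $(b^+_m-b^-_m)(p_k)\le\Delta_k\le L$. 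By continuity of $b^\pm_m$ this passes to $(b^+_m-b^-_m)(p)\le L$, and letting $m\to\infty$ gives $G(p)\le L$. This is a Dini-type exchange of limits and is the main step, but it is harmless because of monotonicity. Thus $\min_\Sigma G\le L\le\inf_M G\le\min_\Sigma G$, which proves (\ref{deltalim}); the minimum over $\Sigma$ exists by continuity of $G$ into $[0,+\infty]$, as in Theorem \ref{notr}.

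\emph{Equivalences.} (i)$\Leftrightarrow$(iii) follows from (\ref{deltalim}). (iii)$\Rightarrow$(ii) holds because finiteness of $G$ at a point forces the point into both domains. (ii)$\Rightarrow$(iii) is Theorem \ref{notr}, which also supplies the timelike line and, under TCC, the splitting. Conversely, in a split product $b^\pm=t$ (Example \ref{exstatic}), so $G\equiv0$.
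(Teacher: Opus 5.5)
Your overall strategy is the paper's: restrict to $\Sigma$, write $\Delta_k=\min_\Sigma f_k$ with $f_k:=b^+_k-b^-_k$, get monotonicity from the pointwise monotonicity of $f_k$, exchange limits using compactness of $\Sigma$, and read the equivalences off Theorem \ref{notr}. However, every step in which $p$ ranges over all of $M$ is false.

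On $J^+(\Sigma_k)$ one has $b^+_k\equiv k$, while $b^-_k(p)=-k+d(\Sigma_{-k},p)\ge -k+T(p)$ is unbounded above there. Hence $\inf_M f_k=-\infty$, and $\Delta_k=\inf_{p\in M}f_k(p)$ fails. The identity $d(\Sigma_{-k},\Sigma_k)=\sup_p[d(\Sigma_{-k},p)+d(p,\Sigma_k)]$ is correct only when the supremum is taken over $p$ in the slab $J^+(\Sigma_{-k})\cap J^-(\Sigma_k)$, e.g.\ over $\Sigma$. Likewise, $f_k\ge 0$ does not follow from Lemma \ref{lemsandwich}, which concerns the limits $b^\pm$. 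The approximant satisfies $T\le b^+_k$ only on $J^-(\Sigma_k)$, since $b^+_k\equiv k<T$ on $I^+(\Sigma_k)$. Above all, your ``clearly $\Delta_k\le\inf_M(b^+_k-b^-_k)$'' asserts $\Delta_k\le-\infty$.

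Monotonicity survives thanks to your later restriction to $\Sigma$. Also $L\le\min_\Sigma G$ follows from your own $\Delta_k=\min_\Sigma f_k\le f_k\le G$ on $\Sigma$. But $L\le\inf_M G$, i.e.\ the second equality in (\ref{deltalim}), is left without a valid proof.

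There are two ways to repair it.
\begin{itemize}
\item Cite the proof of Theorem \ref{notr}, as the paper does. Through any $q\in\textrm{dom}\, b^+\cap\textrm{dom}\, b^-$, the join of a future $b^+$-calibrated ray and a past $b^-$-calibrated ray meets $\Sigma$, and $G$ does not increase along it, so $\min_\Sigma G\le G(q)$. If the developments are disjoint, $G\equiv+\infty$.
\item Argue pointwise. For $p\in M$ and $k>\vert T(p)\vert$ we have $p\in I^+(\Sigma_{-k})\cap I^-(\Sigma_k)$. Concatenating maximizers gives $d(\Sigma_{-k},p)+d(p,\Sigma_k)\le d(\Sigma_{-k},\Sigma_k)$, i.e.\ $\Delta_k\le f_k(p)\le G(p)$, and letting $k\to\infty$ gives $L\le G(p)$. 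This is more elementary than the paper's route, since it avoids calibrated rays.
\end{itemize}

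Two smaller points.
\begin{itemize}
\item The concatenation argument you discarded is not merely ``backwards''; it is invalid. A point of $\Sigma_{-k}$ (resp.\ $\Sigma_k$) need not be at distance $1$ from $\Sigma_{-k-1}$ (resp.\ $\Sigma_{k+1}$). Theorem \ref{kvvb} gives only the converse statement, cf.\ the remark after Proposition \ref{kcqq}.
\item In a product, Example \ref{exstatic} gives $b^+-b^-\equiv\max f-\min f$, not $0$, unless $\Sigma$ is a slice. Only finiteness is needed for your conclusion.
\end{itemize}
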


In words, the conjecture states that, under TCC, the inequality of Prop.\ \ref{nner} cannot hold with a defect going to infinity, unless the spacetime splits. Note that a sublinear divergence $\Delta_k=o(k)$ would be compatible with a counterexample.

\begin{proof}
Every causal curve from $\Sigma_{-k}$ to $\Sigma_k$ crosses $\Sigma$, and maximizing curves from $\Sigma_{-k}$ to $p\in \Sigma$ and from $p$ to $\Sigma_k$ can be concatenated. Thus, by compactness and continuity,
\[
d(\Sigma_{-k},\Sigma_k)=\max_{p\in \Sigma}\,[d(\Sigma_{-k},p)+d(p,\Sigma_k)],
\]
namely
\[
\Delta_k=\min_\Sigma f_k, \qquad f_k:=b^+_k-b^-_k .
\]
On $\Sigma$ the functions $f_k$ are continuous, non-negative (as $d(\Sigma_{-k},p)\le k$ and $d(p,\Sigma_k)\le k$) and non-decreasing in $k$, with limit $G=b^+-b^-$. Thus $\Delta_k$ is non-negative and non-decreasing. Let $L=\lim_k \Delta_k$. Clearly $L\le \min_\Sigma G$. If $L<+\infty$, let $p_k\in \Sigma$ be minimum points of $f_k$ and, passing to a subsequence, $p_k\to p\in \Sigma$. For $j\le k$, $f_j(p_k)\le f_k(p_k)\le L$, thus $f_j(p)\le L$ and, letting $j\to \infty$, $G(p)\le L$. This proves the first equality in (\ref{deltalim}), also in the case $L=+\infty$. The second equality was proved in the proof of Theorem \ref{notr}. The equivalences follow, and the splitting statement follows from Theorem \ref{notr} and Example \ref{exstatic}.
\end{proof}

\begin{example} \label{exstatic}
Let $g=-\dd t^2+h$ on $\mathbb{R}\times S$ with $(S,h)$ compact and connected, and let $\Sigma$ be any compact Cauchy hypersurface. Since $\Sigma$ meets every timelike line $t\mapsto (t,x)$ exactly once, it is the graph $t=f(x)$ of a continuous function $f$ on $S$. Let $m=\min f$, $m'=\max f$, and $D$ the diameter of $(S,h)$. Recall that $d((t_1,y),(t_2,x))=\sqrt{(t_2-t_1)^2-\textrm{dist}_h(x,y)^2}$ whenever $t_2-t_1\ge \textrm{dist}_h(x,y)$. For large $t$, choosing a point of $\Sigma$ where $f=m$,
\[
\sqrt{(t-m)^2-D^2}\le T(t,x)\le t-m,
\]
thus $\Sigma_k$ lies between the slices $t=k+m$ and $t=k+m+O(1/k)$, and $d((t,x),\Sigma_k)=k+m-t+o(1)$. Hence
\[
b^+=t-m, \qquad b^-=t-m', \qquad \lim_k \Delta_k=\max f-\min f .
\]
In particular $\textrm{dom}\, b^\pm=M$, the functions $b^\pm$ are smooth with vanishing Hessian (as $\p_t$ is parallel), $b^+-b^-\equiv m'-m$, and every vertical line $t\mapsto (t,x)$ is a timelike line calibrated for both $b^+$ and $b^-$. Moreover, bounded (rather than vanishing) defect is the correct target in Theorem \ref{thmscal}. That the Cauchy horospheres of a compact Cauchy hypersurface in such a product are compact Cauchy hypersurfaces, in fact $t$-slices, was observed by Galloway and Vega \cite[Lemma 5.3 and proof of Thm.\ 5.4]{galloway14b}. The estimate on $T$ used above is essentially the one in the proof of \cite[Prop.\ 4.3]{galloway26}.
\end{example}

\begin{corollary}[Clock defects] \label{cordefect}
For a complete proper time parametrized future directed timelike geodesic $\gamma:\mathbb{R}\to M$ the limits
\begin{align*}
A^+(\gamma)&=\lim_{t\to +\infty}[T(\gamma(t))-t]\in (-\infty,+\infty], \\
A^-(\gamma)&=\lim_{t\to -\infty}[T(\gamma(t))-t]\in [-\infty,+\infty)
\end{align*}
exist, and $b^-(\gamma(0))\ge A^-(\gamma)$, $b^+(\gamma(0))\le A^+(\gamma)$, with equality if $\gamma\vert_{[0,+\infty)}$ is $b^+$-calibrated (resp.\ $\gamma\vert_{(-\infty,0]}$ is $b^-$-calibrated). If the spacetime satisfies TCC and does not split, then every complete timelike geodesic satisfies $A^+(\gamma)=+\infty$ or $A^-(\gamma)=-\infty$.
\end{corollary}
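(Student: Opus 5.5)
The plan is to derive everything from the foliation-free formula of Theorem \ref{thmintr} and its time dual, applied to the two halves of $\gamma$. Here $T$ is the signed distance, equal to $d^+_\Sigma$ on $J^+(\Sigma)$.

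\emph{Existence of $A^+$.} For $t$ large, $\gamma(t)\in J^+(\Sigma)$, and with proper time parametrization $\ell(\gamma\vert_{[0,t]})=t$. Hence $T(\gamma(t))-t=F_x(t)$ for $x=\gamma\vert_{[0,+\infty)}$, and Theorem \ref{thmintr} says this quantity is non-decreasing. So $A^+(\gamma)=L(x)\in(-\infty,+\infty]$ exists. Formula (\ref{intrinsic}) then gives $b^+(\gamma(0))\le L(x)=A^+(\gamma)$. The "moreover" part of Theorem \ref{thmintr} gives equality when $x$ is $b^+$-calibrated; note that calibration forces $\gamma(0)\in\textrm{dom}\, b^+$.

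\emph{Existence of $A^-$.} The time-dual formula for $b^-$ involves $F^-_x(t)=-d^-_\Sigma(x(t))+\ell(x\vert_{[t,0]})$ along the past half, which equals $T(\gamma(t))-t$ for $t\to-\infty$. This quantity is non-increasing as $t\to-\infty$, so $A^-(\gamma)$ exists, and $b^-(\gamma(0))=\sup$ of such limits is at least $A^-(\gamma)$. Equality holds for $b^-$-calibrated past rays.

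\emph{Dichotomy under TCC.} Suppose the spacetime does not split and some complete $\gamma$ has both $A^+(\gamma)<+\infty$ and $A^-(\gamma)>-\infty$. The inequalities above give $b^+(\gamma(0))<+\infty$ and $b^-(\gamma(0))>-\infty$. Thus $\gamma(0)\in\textrm{dom}\, b^+\cap\textrm{dom}\, b^-$, and Theorem \ref{notr} yields splitting, a contradiction.

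There is no serious obstacle. The only care needed is matching sign conventions in the time-dual of Theorem \ref{thmintr}, namely that the inf becomes a sup and the monotonicity is reversed. One should also note that the limit defining $A^\pm$ over real $t$ (not just the domain where $\gamma(t)\in J^\pm(\Sigma)$) is unaffected, since $\gamma$ eventually stays in $J^+(\Sigma)$, respectively $J^-(\Sigma)$.
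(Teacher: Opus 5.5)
Your proposal is correct and follows the paper's proof essentially verbatim. You identify $T(\gamma(t))-t$ on the tails with $F_x$ (and its time dual), read off $A^+(\gamma)=L(x)\ge b^+(\gamma(0))$ with equality for calibrated rays from Theorem~\ref{thmintr}, and conclude the dichotomy from Theorem~\ref{notr}; the only cosmetic difference is that the paper gets monotonicity of $t\mapsto T(\gamma(t))-t$ on all of $\mathbb{R}$ at once from the rushing property of $T$ (Lemma~\ref{lemsandwich}), whereas you get it on the tails from Theorem~\ref{thmintr}, which rests on the same inequality.
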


\begin{proof}
Since $T$ is rushing, $t\mapsto T(\gamma(t))-t$ is non-decreasing, so the limits exist. For large $t$, $\gamma(t)\in J^+(\Sigma)$ and $T(\gamma(t))-t$ is the function $F_x(t)$ of Theorem \ref{thmintr} with $x=\gamma\vert_{[0,+\infty)}$, thus $A^+(\gamma)=L(x)\ge b^+(\gamma(0))$, with equality for calibrated rays. The past statement is the time dual. If both $A^\pm(\gamma)$ are finite, $\gamma(0)\in \textrm{dom}\, b^+\cap \textrm{dom}\, b^-$ and the spacetime splits by Theorem \ref{notr}.
\end{proof}

Thus, in a counterexample, every inertial observer measures an unbounded clock defect, with respect to the distance from $\Sigma$, either in the far future or in the far past. In particular, a future $b^+$-calibrated ray, once extended to the past, must have infinite past defect.

\begin{corollary} \label{corint}
Assume TCC. If $\textrm{dom}\, b^+\cap \textrm{dom}\, b^-\ne \emptyset$ then $(M,g)$ is isometric to a product $(\mathbb{R}\times S,-\dd t^2+h)$ with $(S,h)$ compact, $\Sigma$ is a graph $t=f(x)$, and
\[
b^+=t-\min f, \qquad b^-=t-\max f .
\]
In particular $\textrm{dom}\, b^\pm=M$, the functions $b^\pm$ are smooth with vanishing Hessian, $b^+-b^-$ is constant, and through every point there passes a timelike line calibrated for both $b^+$ and $b^-$.
\end{corollary}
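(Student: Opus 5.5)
The plan is to combine Theorem \ref{notr} with the Lorentzian splitting theorem and then identify $b^\pm$ explicitly via Example \ref{exstatic}.

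\textbf{Step 1: splitting.} Since $\textrm{dom}\, b^+\cap \textrm{dom}\, b^-\ne \emptyset$, Theorem \ref{notr} yields a complete timelike line. Under TCC, the Lorentzian splitting theorem then gives an isometry $(M,g)\cong(\mathbb{R}\times S,-\dd t^2+h)$.

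\textbf{Step 2: compactness of $S$.} The slice $\{0\}\times S$ is a Cauchy hypersurface, so it is homeomorphic to the compact Cauchy hypersurface $\Sigma$. Hence $S$ is compact, and it is connected because $M$ is connected.

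\textbf{Step 3: the graph description.} Every vertical line $t\mapsto (t,x)$ is an inextendible timelike curve. It therefore meets the acausal Cauchy hypersurface $\Sigma$ exactly once, so $\Sigma$ is the graph $t=f(x)$ of a function $f$. Continuity of $f$ follows because $\Sigma$ is a closed topological hypersurface; alternatively, $f$ is obtained by solving $\tau(t,x)=\tau_\Sigma$ for a Cauchy time function, or by invoking Example \ref{exstatic}, which already makes this observation.

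\textbf{Step 4: identifying $b^\pm$.} The remaining conclusions are exactly the content of Example \ref{exstatic}: $b^+=t-\min f$ and $b^-=t-\max f$. From these formulas it follows that $\textrm{dom}\, b^\pm=M$, that $b^\pm$ are smooth with vanishing Hessian (since $\p_t$ is parallel), that $b^+-b^-\equiv \max f-\min f$, and that every vertical line is a timelike line calibrated for both $b^+$ and $b^-$.

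\textbf{Main obstacle.} The step needing real care is the asymptotic computation in Example \ref{exstatic}, which the paper presents only in outline. Writing $m=\min f$ and $D=\textrm{diam}(S,h)$, the claimed estimate is
\[
\sqrt{(t-m)^2-D^2}\le T(t,x)\le t-m
\]
for large $t$. It yields the following.

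\emph{Upper bound.} For any $(s,y)\in\Sigma$ we have $s\ge m$, so $d((s,y),(t,x))\le t-s\le t-m$.

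\emph{Lower bound.} Use the point of $\Sigma$ where $f=m$ together with $\textrm{dist}_h\le D$.

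\emph{Consequence for the level sets.} These bounds give $\Sigma_k\subset\{k+m\le t\le k+m+O(1/k)\}$. For fixed $(t,x)$, the distance to $\Sigma_k$ is then squeezed between the distances to the bounding slices, namely $k+m-t+O(1/k)$, because the distance from a point to a slice $\{t=c\}$ is $c-t$. Therefore $b^+_k(t,x)\to t-m$.

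The dual argument, using the point of $\Sigma$ where $f=\max f$, gives $b^-=t-\max f$.
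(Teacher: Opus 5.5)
Your proposal is correct and follows the paper's proof: Theorem \ref{notr} gives a complete timelike line, the Lorentzian splitting theorem gives the product, and $b^\pm$ are then read off from Example \ref{exstatic}. Your extra details correctly fill in what the paper leaves implicit or only sketches: $S$ is compact because it is homeomorphic to $\Sigma$, and $\Sigma_k$ is squeezed between the slices $t=k+m$ and $t=m+\sqrt{k^2+D^2}$, which gives $b^+_k(t,x)\to t-m$ and dually $b^-=t-\max f$.
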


\begin{proof}
By Theorem \ref{notr} the spacetime contains a timelike line, so it splits by the Lorentzian splitting theorem. The rest is Example \ref{exstatic}.
\end{proof}

Thus, under TCC, the intersection of the two cosmological developments is either empty or the whole spacetime. Equivalently: the cosmological wall is empty if and only if the two developments meet.

\begin{corollary} \label{bbwe}
If $\textrm{dom}\, b^+$ (or $\textrm{dom}\, b^-$) contains a Cauchy hypersurface the spacetime contains a timelike line, hence splits under TCC.
\end{corollary}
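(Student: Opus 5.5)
The plan is to reduce Corollary \ref{bbwe} to Theorem \ref{notr}: a timelike line exists as soon as $\textrm{dom}\, b^+\cap \textrm{dom}\, b^-\ne \emptyset$, so it suffices to show that this intersection is non-empty whenever $\textrm{dom}\, b^+$ contains a Cauchy hypersurface $S$. The case of $\textrm{dom}\, b^-$ is the time dual.

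\textbf{Step 1.} Since $\textrm{dom}\, b^+$ is a past set (Proposition \ref{cmol}), $S\subset \textrm{dom}\, b^+$ gives $J^-(S)\subset \textrm{dom}\, b^+$.

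\textbf{Step 2.} By the time dual of Proposition \ref{cmol}, $\textrm{dom}\, b^-$ intersects every Cauchy hypersurface. In particular it meets $S$, so some point of $S$ lies in $\textrm{dom}\, b^+\cap \textrm{dom}\, b^-$.

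Concretely, take a past $\Sigma$-ray $\eta$. It is timelike by the time dual of Theorem \ref{cpgx}, and $I^+(\eta)\subset \textrm{dom}\, b^-$. Extend $\eta$ to an inextendible timelike geodesic. By the rushing property of $b^-$, this geodesic lies in $\textrm{dom}\, b^-$, and being inextendible it crosses $S$.

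\textbf{Step 3.} Theorem \ref{notr} now yields a complete timelike line, which is complete because we work under timelike completeness in this section. Under TCC the Lorentzian splitting theorem then gives the splitting.

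I expect no real obstacle. The only point needing care is that the finiteness of $b^-$ along the past extension of $\eta$ follows from rushing and the inequality $b^-<+\infty$, exactly as in the proof of Proposition \ref{cmol}. Notably, the argument does not even need the compactness of $S$.
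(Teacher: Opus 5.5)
Your proposal is correct and is essentially the paper's proof: the time dual of Proposition \ref{cmol} puts a point of $\textrm{dom}\, b^-$ on $S\subset \textrm{dom}\, b^+$, and Theorem \ref{notr} then gives the timelike line. Your Step 1 is never used and can be dropped.
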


\begin{proof}
If $S\subset \textrm{dom}\, b^+$ is a Cauchy hypersurface, by Prop.\ \ref{cmol} (time dual) $\textrm{dom}\, b^-$ intersects $S$ and hence $\textrm{dom}\, b^+$, so Thm.\ \ref{notr} applies.
\end{proof}

The following result concerns the local equi-Lipschitz property of the Lorentzian distances to the leaves of a Cauchy foliation, which fails in general \cite{galloway26}, and shows that, for the distance foliation, this property is equivalent to the conclusion.

\begin{corollary}[Equi-Lipschitz criterion] \label{coreqlip}
The family $\{d(\cdot,\Sigma_\tau)\}_{\tau>0}$ is locally eventually equi-Lipschitz on $\textrm{dom}\, b^+$, while for no neighborhood $V$ of a point of $\p\, \textrm{dom}\, b^+$ and no $\tau_0>0$ is the family $\{d(\cdot,\Sigma_\tau)\}_{\tau\ge \tau_0}$ equi-Lipschitz on $V$. Consequently the following conditions are equivalent:
\begin{itemize}
\item[(i)] the family $\{d(\cdot,\Sigma_\tau)\}_{\tau>0}$ is locally eventually equi-Lipschitz on a neighborhood of $\Sigma$;
\item[(ii)] $\Sigma\subset \textrm{dom}\, b^+$.
\end{itemize}
They imply the existence of a timelike line and, under TCC, they are equivalent to the splitting of the spacetime. The time dual statements hold for $\{d(\Sigma_{-\tau},\cdot)\}_{\tau>0}$ and $\textrm{dom}\, b^-$.
\end{corollary}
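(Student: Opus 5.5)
The first assertion is Theorem \ref{thmequi}, so the new content is the failure near $\p\,\textrm{dom}\, b^+$ and the equivalence. For the failure I would argue by contradiction. Suppose $p\in \p\,\textrm{dom}\, b^+$, $V$ is a neighborhood of $p$, and the functions $d(\cdot,\Sigma_\tau)$, $\tau\ge\tau_0$, are $L$-Lipschitz on $V$ in a coordinate metric. Shrinking $V$ to a convex coordinate ball, pick $q\in V\cap \textrm{dom}\, b^+$, which exists since $\textrm{dom}\, b^+$ is open with $p$ on its boundary. Then
\[
b^+_\tau(y)=\tau-d(y,\Sigma_\tau)\le b^+_\tau(q)+L\vert y-q\vert\le b^+(q)+L\,\textrm{diam}\,V
\]
for every $y\in V$ and every $\tau\ge\tau_0$. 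Letting $\tau\to\infty$ gives $b^+(y)\le b^+(q)+L\,\textrm{diam}\,V<\infty$ on all of $V$. This contradicts $b^+(p)=+\infty$, since $p\notin \textrm{dom}\, b^+$ because the domain is open. The argument gives even more: the family is not uniformly Lipschitz on $V$ for any tail, which is exactly the claimed negation.

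For the equivalence, (ii)$\Rightarrow$(i) follows directly. $\textrm{dom}\, b^+$ is open and contains the compact set $\Sigma$, so it is itself a neighborhood of $\Sigma$, and Theorem \ref{thmequi} applies on it.

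For (i)$\Rightarrow$(ii), let $N\supset\Sigma$ be an open neighborhood on which the family is locally eventually equi-Lipschitz. If some point of $\Sigma$ were outside $\textrm{dom}\, b^+$, then, since $\textrm{dom}\, b^+$ meets every Cauchy hypersurface (Proposition \ref{cmol}) and $\Sigma$ is connected (it is homeomorphic to a Cauchy slice of the connected $M$), $\Sigma$ would meet $\p\,\textrm{dom}\, b^+$ at some point $p$. Point $p$ lies in $N$ and has a neighborhood $V\subset N$ with an equi-Lipschitz tail, contradicting the first part. Hence $\Sigma\subset\textrm{dom}\, b^+$.

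Then (ii) gives a Cauchy hypersurface inside $\textrm{dom}\, b^+$, so Corollary \ref{bbwe} yields a timelike line, and under TCC a splitting. Conversely, if the spacetime splits, Corollary \ref{corint} gives $\textrm{dom}\, b^+=M$, so (ii) holds.

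The time dual statements follow by reversing time orientation. The only delicate point I expect is connectedness of $\Sigma$, which is needed so that $\Sigma$, meeting $\textrm{dom}\, b^+$ but not contained in it, must cross its boundary. This holds because $M$ is homeomorphic to $\mathbb{R}\times\Sigma$ and $M$ is connected.
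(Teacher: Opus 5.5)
Your proposal is correct and follows the paper's proof essentially step by step. The steps are: a contradiction at a point of $\p\,\textrm{dom}\, b^+$ obtained by bounding $b_\tau$ there through a nearby point of $\textrm{dom}\, b^+$; the connectedness argument on $\Sigma$ for (i)$\Rightarrow$(ii); and Theorem \ref{thmequi} and Corollary \ref{bbwe} for the remaining implications. One citation needs fixing: for ``splitting implies (ii)'' cite Example \ref{exstatic}, which gives $\textrm{dom}\, b^\pm=M$ in any product, because Corollary \ref{corint} as stated assumes $\textrm{dom}\, b^+\cap \textrm{dom}\, b^-\ne \emptyset$ rather than splitting.
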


\begin{proof}
The first statement is Theorem \ref{thmequi}. Let $q\in \p\, \textrm{dom}\, b^+$ and suppose that, for some $\tau_0>0$, the family $\{d(\cdot,\Sigma_\tau)\}_{\tau\ge \tau_0}$ is $L$-Lipschitz, with respect to a coordinate metric, on a convex coordinate ball $V$ centered at $q$. Since $\textrm{dom}\, b^+$ is open, $q\notin \textrm{dom}\, b^+$, while $V$ contains some $y\in \textrm{dom}\, b^+$. Then for $\tau\ge \tau_0$
\[
b_\tau(q)\le b_\tau(y)+L\vert q-y\vert\le b^+(y)+L\vert q-y\vert,
\]
a contradiction since $b_\tau(q)\to b^+(q)=+\infty$.

(i) $\Rightarrow$ (ii). By the previous paragraph $\Sigma\cap \p\, \textrm{dom}\, b^+=\emptyset$. Then $\Sigma\cap \textrm{dom}\, b^+$ is open in $\Sigma$, closed in $\Sigma$ (as $\Sigma\cap \overline{\textrm{dom}\, b^+}=\Sigma\cap \textrm{dom}\, b^+$) and non-empty (Proposition \ref{cmol}). As $\Sigma$ is connected, $\Sigma\subset \textrm{dom}\, b^+$.

(ii) $\Rightarrow$ (i) follows from Theorem \ref{thmequi}, since $\textrm{dom}\, b^+$ is an open neighborhood of $\Sigma$. Corollary \ref{bbwe} shows that (ii) implies the existence of a timelike line. Conversely, if the spacetime splits then $\textrm{dom}\, b^+=M$ by Example \ref{exstatic}.
\end{proof}

Corollary \ref{coreqlip} shows that, for the distance foliation, local equi-Lipschitz bounds near a single compact Cauchy hypersurface are equivalent to Bartnik's conjecture. This is in line with \cite{galloway26}, where conjectures equivalent to Bartnik's are formulated in terms of equi-Lipschitz bounds for the distances to the level sets of smooth Cauchy temporal functions \cite[Conjecture 4.1]{galloway26}. In a counterexample the bounds hold on the future development and fail at every point of the future cosmological horizon $\p\,\textrm{dom}\, b^+$, which meets every Cauchy hypersurface (Theorem \ref{thmwall}).

The {\em no future observer horizon} condition (NFOH) states that for every future inextendible causal curve $\gamma$, $I^-(\gamma)=M$.

\begin{proposition}
The NFOH condition is equivalent to the absence of future lightlike rays.
\end{proposition}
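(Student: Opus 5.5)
We prove the two implications separately. Throughout, recall that $M$ is globally hyperbolic with compact Cauchy hypersurface $\Sigma$, and that a future lightlike ray means a future inextendible lightlike geodesic, every segment of which is maximizing (achronal).

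\emph{NFOH implies no future lightlike rays.} Let $\gamma$ be a future lightlike ray starting at $p$. Pick a point $q\in I^-(p)$, which exists since the spacetime is a manifold with time orientation. By NFOH applied to $\gamma$, $I^-(\gamma)=M$, so $p\in I^-(\gamma)$, namely $p\ll \gamma(t)$ for some $t>0$. But $\gamma\vert_{[0,t]}$ is a maximizing lightlike segment, so $d(p,\gamma(t))=0$, while $p\ll\gamma(t)$ gives $d(p,\gamma(t))>0$, a contradiction. (Equivalently: an achronal causal curve cannot have two chronologically related points.) Hence no future lightlike ray exists.

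\emph{No future lightlike rays implies NFOH.} Suppose NFOH fails: there is a future inextendible causal curve $\gamma$ with $I^-(\gamma)\ne M$. The set $I^-(\gamma)$ is an open past set, non-empty, and $M$ is connected, so its boundary $\p I^-(\gamma)$ is a non-empty achronal boundary. Take $p\in \p I^-(\gamma)$. The plan is the standard construction of a generator: choose $p_k\in I^-(\gamma)$, $p_k\to p$, and points $\gamma(t_k)$ with $p_k\ll\gamma(t_k)$, $t_k\to+\infty$; take causal curves $\sigma_k$ from $p_k$ to $\gamma(t_k)$. Since $\gamma$ is future inextendible in a globally hyperbolic spacetime, the points $\gamma(t_k)$ escape every compact set (non-total imprisonment), so by the limit curve theorem a subsequence converges to a future inextendible causal curve $\sigma$ from $p$. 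Each $\sigma_k\subset I^-(\gamma)$ up to its endpoint region, hence $\sigma\subset\overline{I^-(\gamma)}$. If some point $r\in\sigma$ were in $I^-(\gamma)$, then $p\le r$ would give $p\in I^-(\gamma)$ (open past set, $J^-(I^-(\gamma))\subset I^-(\gamma)$), contradicting $p\in\p I^-(\gamma)$ and openness. Hence $\sigma\subset \p I^-(\gamma)$, which is achronal, so $\sigma$ is an achronal future inextendible causal curve, i.e.\ (after reparametrization) a future lightlike ray, contradicting the assumption.

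The only delicate step is the limit-curve step: ensuring the $\sigma_k$ do not collapse and that the limit is future inextendible. This follows exactly as in the proof of Proposition \ref{propbdry} and Lemma \ref{lemlimray}: endpoints leaving every compact set together with non-total imprisonment force an inextendible limit. No completeness or curvature assumption is needed, so the equivalence holds in any globally hyperbolic spacetime (indeed in any strongly causal one).
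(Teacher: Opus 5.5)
Your proof is correct and follows the paper's argument. The direction ``NFOH $\Rightarrow$ no future lightlike rays'' is identical: the initial point of a ray cannot lie in $I^-$ of the ray, by maximality. For the other direction the paper simply invokes the standard fact that from each point of $\p I^-(\gamma)$ starts a future lightlike ray contained in it, and your limit-curve construction proves this fact explicitly. Two small clean-ups: the point $q\in I^-(p)$ is never used, and rather than saying that $\sigma_k\subset I^-(\gamma)$ ``up to its endpoint region'' you can simply choose $\sigma_k$ timelike, which gives $\sigma_k\subset\overline{I^-(\gamma)}$.
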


\begin{proof}
Suppose NFOH fails, so that $I^-(\gamma)\ne M$ for some future inextendible $\gamma$. Since $M$ is connected, $\p I^-(\gamma)$ is non-empty and from each of its points starts a future lightlike ray contained in it. Conversely, given a future lightlike ray $\eta$ starting from $p$, $p\notin I^-(\eta)$ since otherwise $d(p,\eta(t))>0$ for some $t$, contradicting maximality, thus $I^-(\eta)\ne M$.
\end{proof}

The following appeared in \cite{galloway19} (improving \cite{galloway84b} where the result first appeared under a stronger no observer horizon assumption).

\begin{corollary} \label{cornfoh}
If NFOH (or its time dual) holds, then the spacetime contains a timelike line, hence it splits under TCC.
\end{corollary}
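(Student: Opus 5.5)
The plan is to show that NFOH forces $\textrm{dom}\, b^-$ to contain a Cauchy hypersurface, or more directly that $\textrm{dom}\, b^+\cap \textrm{dom}\, b^-\ne\emptyset$, and then invoke Theorem \ref{notr} (or Corollary \ref{bbwe}). The time dual hypothesis is handled by the time dual argument, with the roles of $b^+$ and $b^-$ exchanged, so we only treat NFOH itself.

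First we produce an observer whose chronological past is contained in a development. By Proposition \ref{cmol}, for any future $\Sigma$-ray $\sigma$ (which exists and is timelike by Theorem \ref{cpgx}) we have $I^-(\sigma)\subset \textrm{dom}\, b^+$. The curve $\sigma$ is future inextendible, so NFOH gives $I^-(\sigma)=M$. Hence $\textrm{dom}\, b^+=M$.

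Next we use that $\textrm{dom}\, b^-$ is non-empty. By the time dual of Proposition \ref{cmol}, $\textrm{dom}\, b^-$ meets every Cauchy hypersurface, and in particular it is non-empty. Therefore $\textrm{dom}\, b^+\cap \textrm{dom}\, b^-=\textrm{dom}\, b^-\ne\emptyset$. Equivalently, $\textrm{dom}\, b^+=M$ contains the Cauchy hypersurface $\Sigma$, so Corollary \ref{bbwe} applies directly.

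Finally, Theorem \ref{notr} yields a complete timelike line through a minimum point of $b^+-b^-$, and under TCC the Lorentzian splitting theorem gives the splitting.

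There is no real obstacle, since the argument is a direct chaining of earlier results. The only points to check are that the $\Sigma$-ray is future inextendible, so that NFOH applies to it, and that timelike completeness is in force, which is the standing assumption of this section. Both are guaranteed by Theorem \ref{cpgx} together with the standing assumption. For the time dual hypothesis one uses a past $\Sigma$-ray $\eta$: then $I^+(\eta)\subset\textrm{dom}\, b^-$, and the time dual of NFOH gives $I^+(\eta)=M$, so $\textrm{dom}\, b^-=M$.
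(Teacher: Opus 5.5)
Your proof is correct, but it reaches $\textrm{dom}\, b^+=M$ by a different route than the paper. The paper argues through the horizon. By Proposition \ref{propbdry} a future lightlike ray starts from every point of $\partial\,\textrm{dom}\, b^+$. NFOH excludes future lightlike rays, by the proposition preceding the corollary. Hence $\partial\,\textrm{dom}\, b^+=\emptyset$, and since $\textrm{dom}\, b^+$ is non-empty and open and $M$ is connected, $\textrm{dom}\, b^+=M$. It then concludes as you do with Theorem \ref{notr}.

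You instead apply NFOH directly to a single future $\Sigma$-ray $\sigma$ and use the elementary inclusion $I^-(\sigma)\subset\textrm{dom}\, b^+$ of Proposition \ref{cmol}. This avoids three ingredients the paper needs: the regularity theory behind Proposition \ref{propbdry} (timelike calibrated rays, openness of the domain), the equivalence between NFOH and the absence of future lightlike rays, and the connectedness argument.

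The two routes weaken the hypothesis in different directions. Yours shows that NFOH is needed only for one future $\Sigma$-ray. In fact $I^-(\sigma)\cap I^+(\eta)\neq\emptyset$ for a past $\Sigma$-ray $\eta$ already suffices, so your argument is essentially a special case of Corollary \ref{corray}. The paper's route shows instead that it suffices that no future lightlike ray starts from $\partial\,\textrm{dom}\, b^+$. That is the refinement of Lesourd's no-null-ray criterion recorded right after the corollary.
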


\begin{proof}
From every point of $\p\, \textrm{dom}\, b^+$ starts a future lightlike ray (Prop.\ \ref{propbdry}), thus this set is empty and hence $\textrm{dom}\, b^+=M$, $M$ being connected. As $\textrm{dom}\, b^-$ is non-empty the two domains intersect and Thm.\ \ref{notr} applies.
\end{proof}

Lesourd \cite{lesourd18b} proved that a timelike complete cosmological spacetime splits iff it does not admit a null ray. The proof above shows that the absence of future lightlike rays starting from $\p\, \textrm{dom}\, b^+$ suffices.

We obtain as another corollary \cite[Thm.\ 4.1]{galloway97}.

\begin{corollary} \label{corray}
Let $\Sigma$ be a compact Cauchy hypersurface. Let $\gamma$ be a future $\Sigma$-ray and let $\eta$ be a past $\Sigma$-ray. If $I^+(\eta)\cap I^-(\gamma)\ne \emptyset$ the spacetime contains a timelike line, hence it splits under TCC.
\end{corollary}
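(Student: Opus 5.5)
The plan is to reduce the hypothesis to $\textrm{dom}\, b^+\cap \textrm{dom}\, b^-\ne \emptyset$ and then apply Theorem \ref{notr}. That theorem gives a complete timelike line, since we are assuming timelike completeness in this section, and the Lorentzian splitting theorem then gives the splitting under TCC.

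Pick a point $q\in I^+(\eta)\cap I^-(\gamma)$.

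\emph{Step 1: $q\in \textrm{dom}\, b^+$.} By Proposition \ref{cmol}, for the future $\Sigma$-ray $\gamma$ we have $I^-(\gamma)\subset \textrm{dom}\, b^+$, so $q\in \textrm{dom}\, b^+$. Concretely, $b^+\le b^\gamma<+\infty$ on $I^-(\gamma)$.

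\emph{Step 2: $q\in \textrm{dom}\, b^-$.} Apply the time dual of Proposition \ref{cmol} to the past $\Sigma$-ray $\eta$. It gives $b^-\ge b^\eta>-\infty$ on $I^+(\eta)$, where $b^\eta$ is the Busemann function of the past ray. Hence $I^+(\eta)\subset \textrm{dom}\, b^-$ and $q\in \textrm{dom}\, b^-$.

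\emph{Step 3: conclusion.} Steps 1 and 2 give $q\in \textrm{dom}\, b^+\cap \textrm{dom}\, b^-$. Theorem \ref{notr} then yields a complete timelike line through a minimum point of $b^+-b^-$. Under TCC the spacetime splits by the Lorentzian splitting theorem.

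The only point needing care is the time duality in Step 2. One has to check that the past $\Sigma$-ray $\eta$, parametrized so that it is future directed and ends on $\Sigma$, plays for $b^-$ exactly the role that $\gamma$ plays for $b^+$.

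This holds because the dual construction replaces $\Sigma_k$ with $\Sigma_{-k}$. The dual of Proposition \ref{kcqq} gives $d(\Sigma_{-\tau},\eta(-t))=\tau-t$. The estimate $b^-_k(p)=-k+d(\Sigma_{-k},p)\ge -k+d(\eta(-k),p)$ then gives $b^-(p)\ge b^\eta(p)$. Finally, $b^\eta$ is finite on $I^+(\eta)$ by the dual of the argument given just after Theorem \ref{cbdc}.

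I expect no genuine obstacle: the content is already contained in Proposition \ref{cmol} and Theorem \ref{notr}.
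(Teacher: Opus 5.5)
Your proposal is correct and follows the paper's proof exactly: Proposition \ref{cmol} and its time dual give $I^-(\gamma)\subset \textrm{dom}\, b^+$ and $I^+(\eta)\subset \textrm{dom}\, b^-$, so the two developments meet and Theorem \ref{notr} applies. Your explicit check that $b^-\ge b^\eta>-\infty$ on $I^+(\eta)$ correctly spells out the time duality that the paper leaves implicit.
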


Of course the condition $I^+(\eta)\cap I^-(\gamma)\ne \emptyset$ is weaker than the condition: there is a future $\Sigma$-ray $\gamma$ such that $I^-(\gamma)\supset \Sigma$, as $\Sigma$ admits a past $\Sigma$-ray $\eta$  and so $I^+(\eta)\cap I^-(\gamma)\ne \emptyset$.

\begin{proof}
From Prop.\ \ref{cmol} we know that $I^-(\gamma)\subset \textrm{dom}\, b^+$ and $I^+(\eta)\subset \textrm{dom}\, b^-$, thus $\textrm{dom}\, b^+\cap \textrm{dom}\, b^-\ne \emptyset$ and Thm.\ \ref{notr} applies.
\end{proof}

\begin{corollary}
Let $\Sigma$ be a compact Cauchy hypersurface and let $\sigma$ be a future $\Sigma$-ray. If a level set of $b^\sigma$ is a Cauchy hypersurface then the spacetime contains a timelike line, hence it splits under TCC.
\end{corollary}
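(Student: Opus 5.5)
The idea is to compare $b^\sigma$ with $b=b^+$ and reduce to Corollary \ref{bbwe}. Suppose a level set $S:=(b^\sigma)^{-1}(c)$ is a Cauchy hypersurface; necessarily $c$ is finite.

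First, $S$ lies in $\textrm{dom}\, b^\sigma$. As recalled before Proposition \ref{cmol}, $b^\sigma=+\infty$ outside $I^-(\sigma)$ and $b^\sigma$ is finite on $I^-(\sigma)$. Hence $\textrm{dom}\, b^\sigma=I^-(\sigma)$, and so $S\subset I^-(\sigma)$. By Proposition \ref{cmol}, $I^-(\sigma)\subset \textrm{dom}\, b^+$, so $S\subset \textrm{dom}\, b^+$. Corollary \ref{bbwe} then says that $\textrm{dom}\, b^+$ contains a Cauchy hypersurface, so the spacetime contains a timelike line and splits under TCC.

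The hypotheses of Corollary \ref{bbwe} need to be checked. Corollary \ref{bbwe} invokes Theorem \ref{notr}, which needs $b^-$ and hence timelike completeness, the standing assumption of this section. The one thing to confirm is that no continuity of $b^\sigma$ is needed. The argument above only uses that the level set of a finite value sits inside the finiteness domain $I^-(\sigma)$, and that holds by definition.

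The main step, which is almost trivial, is identifying $\textrm{dom}\, b^\sigma$ with $I^-(\sigma)$. If one wants to avoid Corollary \ref{bbwe}, there is an alternative. Since $\textrm{dom}\, b^-$ meets every Cauchy hypersurface (the time dual of Proposition \ref{cmol}), it meets $S\subset \textrm{dom}\, b^+$, and Theorem \ref{notr} applies directly.
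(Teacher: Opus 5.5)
Your proof is correct and is essentially the paper's argument: $b^\sigma=+\infty$ off $I^-(\sigma)$, so a Cauchy level set with finite value lies in $I^-(\sigma)\subset \textrm{dom}\, b^+$ (Proposition \ref{cmol}), and Corollary \ref{bbwe} concludes. The one step you assert without justification, that $c$ must be finite, the paper settles in one line: every Cauchy hypersurface meets $I^-(\sigma)$, where $b^\sigma$ is finite, so the level set $\{b^\sigma=+\infty\}$ cannot be Cauchy.
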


\begin{proof}
Every Cauchy hypersurface intersects $I^-(\sigma)$, thus $b^\sigma$ is finite at some point of every Cauchy hypersurface. Thus $S:=(b^\sigma)^{-1}(c)$ can be Cauchy only for $c$ finite, and then $S\subset I^-(\sigma)$ since $b^\sigma=+\infty$ outside $I^-(\sigma)$. By Prop.\ \ref{cmol}, $S\subset I^-(\sigma)\subset \textrm{dom}\, b^+$, and Corollary \ref{bbwe} applies.
\end{proof}

\begin{remark} \label{remminmax}
Galloway and Vega \cite{galloway14b} introduce the min-max condition: there is some $B>0$ such that for every $k$
\[
\max_{p\in \Sigma} d(p,\Sigma_k)-\min_{p\in \Sigma} d(p,\Sigma_k)\le B.
\]
This condition has a simple interpretation in terms of Cauchy Busemann functions (they used  Cauchy (pre)horospheres). Since $\max_{p\in \Sigma} d(p,\Sigma_k)=k$, the condition is equivalent to $\max_\Sigma b^+_k\le B$ for every $k$, or $b^+\vert_\Sigma\le B$, or, which is the same, $\Sigma\subset \textrm{dom}\, b^+$, since in this case $b^+$ is continuous over the compact set $\Sigma$ and hence bounded from above. By Corollary \ref{bbwe}, under this min-max condition the spacetime splits under TCC.
\end{remark}

\begin{corollary} \label{cormintip}
Every minimal TIP, namely every TIP which contains no other TIP as a proper subset \cite{heveling25}, is of the form $I^-(\sigma)$ for a future $\Sigma$-ray $\sigma$ \cite[Prop.\ 4.2]{heveling25}, and hence it is contained in $\textrm{dom}\, b^+$. Dually, every minimal TIF is contained in $\textrm{dom}\, b^-$. As a consequence, if the {\em min-min condition} of Garc\'ia-Heveling holds, namely if some minimal TIP intersects some minimal TIF, then the spacetime contains a timelike line \cite[Thm.\ 1.4]{heveling25}, hence it splits under TCC.
\end{corollary}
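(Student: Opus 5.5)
The statement has three parts: the form of minimal TIPs, the inclusion in $\textrm{dom}\, b^+$, and the min-min consequence. I would handle them in that order, each reducing to results already proved.

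\emph{Form of minimal TIPs.} Let $P$ be a minimal TIP. By Theorem \ref{cbdc}, $P$ contains a TIP of the form $I^-(\sigma)$, where $\sigma$ is a future $\Sigma$-ray and hence timelike. Minimality forces $P=I^-(\sigma)$. This recovers \cite[Prop.\ 4.2]{heveling25} without any further argument.

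\emph{Inclusion in $\textrm{dom}\, b^+$.} By Proposition \ref{cmol}, $I^-(\sigma)\subset \textrm{dom}\, b^+$ for every future $\Sigma$-ray $\sigma$, so $P\subset \textrm{dom}\, b^+$. The dual statement follows in the same way. By the time dual of Theorem \ref{cbdc}, every TIF contains a TIF of the form $I^+(\eta)$ with $\eta$ a past $\Sigma$-ray. By the time dual of Proposition \ref{cmol}, $I^+(\eta)\subset\textrm{dom}\, b^-$. This dual step needs past timelike completeness only for $b^-$ to be defined, and that is a standing assumption in this section.

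\emph{Min-min consequence.} Suppose a minimal TIP $P$ meets a minimal TIF $F$. Then
\[
\emptyset\ne P\cap F\subset \textrm{dom}\, b^+\cap\textrm{dom}\, b^-.
\]
Theorem \ref{notr} then gives a complete timelike line, and under TCC the Lorentzian splitting theorem gives the splitting. This is the same argument as in Corollary \ref{corray}.

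I expect no real obstacle. The only delicate point is checking that Theorem \ref{cbdc} and its dual are stated for arbitrary TIPs and TIFs, which they are, with no completeness assumption. So minimality alone pins down the form of the TIP.
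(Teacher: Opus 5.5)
Your proof is correct and follows the paper's argument: Theorem \ref{cbdc} plus minimality gives $P=I^-(\sigma)$, Proposition \ref{cmol} gives $P\subset \textrm{dom}\, b^+$, the time dual handles minimal TIFs, and Theorem \ref{notr} yields the complete timelike line. In the dual step, state explicitly that minimality again gives $F=I^+(\eta)$, since merely containing $I^+(\eta)$ would not place $F$ inside $\textrm{dom}\, b^-$.
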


\begin{proof}
By Theorem \ref{cbdc} a minimal TIP $P$ contains a TIP $I^-(\sigma)$ with $\sigma$ a future $\Sigma$-ray, thus by minimality $P=I^-(\sigma)\subset \textrm{dom}\, b^+$ (Prop.\ \ref{cmol}). The dual argument applies to TIFs and the last claim follows from Theorem \ref{notr}.
\end{proof}

\subsection{Mean curvature of Cauchy hypersurfaces and rigidity of Hawking's theorem} \label{secmeanC}

Proposition \ref{prop39} below states that the mean curvature of a smooth Cauchy hypersurface is non-negative at the initial points of the future $\Sigma$-rays. The geometric maximum principle provides a rigid converse. Let $\Sigma$ be a smooth spacelike compact Cauchy hypersurface, $N$ its future unit normal, $H$ its future mean curvature (the expansion of the congruence of future normal geodesics), and $E(z,s):=\exp_z(sN_z)$ the normal exponential map. Let $T=d_\Sigma$ be the signed distance and $b^\pm$ the Cauchy Busemann functions of $\Sigma$, as in Lemma \ref{lemsandwich} (recall that $\textrm{dom}\, b^\pm$ do not depend on the choice of $\Sigma$, Theorem \ref{xxxt}), and let
\[
F^+:=\{y\in \Sigma: \ y \textrm{ is the initial point of a future } \Sigma\textrm{-ray}\},
\]
and dually $F^-$ the set of final points of past $\Sigma$-rays. Every future $\Sigma$-ray $\sigma$ is timelike, so $\sigma(0)\in I^-(\sigma)\subset \textrm{dom}\, b^+$ (Proposition \ref{cmol}): thus $F^+\subset \Sigma\cap \textrm{dom}\, b^+$, and similarly $F^-\subset \Sigma\cap \textrm{dom}\, b^-$. The sets $F^\pm$ are non-empty (Theorem \ref{cpgx}). Moreover, $F^+=\Sigma\cap (b^+)^{-1}(0)$ and $F^-=\Sigma\cap (b^-)^{-1}(0)$. Indeed, if $\sigma$ is a future $\Sigma$-ray then $b^+(\sigma(0))=0$ by Proposition \ref{ncho}. Conversely, if $p\in \Sigma$ and $b^+(p)=0$, let $\alpha$ be a proper time parametrized $b^+$-calibrated ray starting from $p$ (Proposition \ref{bber} and Theorem \ref{xxhg}(i)); for $s\ge 0$, by (\ref{cmlr}), $s=\ell(\alpha\vert_{[0,s]})\le d^+_\Sigma(\alpha(s))\le b^+(\alpha(s))=s$, so $\alpha$ is a future $\Sigma$-ray. The case of $F^-$ is the time dual. In particular, $F^\pm$ are closed. Recall that $\Sigma$-maximizing curves are geodesics orthogonal to $\Sigma$, so a future $\Sigma$-ray is of the form $s\mapsto E(y,s)$, $s\ge 0$.

\begin{lemma} \label{lemsmoothT}
Let $\sigma=E(y,\cdot)$ be a future $\Sigma$-ray and $t>0$. There are a connected neighborhood $W$ of $y$ in $\Sigma$ and an open interval $I\ni t$ such that $E$ maps $W\times I$ diffeomorphically onto an open neighborhood $V$ of $\sigma(t)$, and $T(E(z,s))=s$ for $(z,s)\in W\times I$. In particular $T$ is smooth on $V$, and $-\nabla T$ is the tangent field of the normal geodesics. Moreover, $W$ and $I$ can be chosen so that for every $z\in W$ the normal geodesic $E(z,\cdot)$ has no focal points of $\Sigma$ on $[0,\sup I]$.
\end{lemma}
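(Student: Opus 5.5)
The plan is to use the standard fact that a $\Sigma$-maximizing geodesic, in the interior of its maximizing interval, carries no focal point and meets no other maximizing normal geodesic. I will work at the point $(y,t')$ for some $t'>\sup I$, using that $\sigma$ is a ray, so that the segment $\sigma\vert_{[0,t']}$ is $\Sigma$-maximizing with $t'>t$.

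\textbf{Step 1: no focal points.} Fix $t'>t$, say $t'=2t$. Since $\sigma\vert_{[0,t']}$ maximizes the distance from $\Sigma$, there are no focal points of $\Sigma$ along $\sigma$ on $[0,t')$. A focal point at $s_0<t'$ would allow a variation producing longer curves from $\Sigma$ to $\sigma(t')$, by the standard Lorentzian index form argument \cite{beem96}. Hence $dE$ is non-singular at $(y,s)$ for $s\in[0,t'']$, where $t<t''<t'$ is fixed. Non-singularity of $dE$ at $(z,s)$ is an open condition, and $[0,t'']$ is compact. So there is a neighborhood $W_0$ of $y$ such that $dE$ is invertible on $W_0\times[0,t'']$. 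This gives the last claim once we choose $W\subset W_0$ and $\sup I\le t''$.

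\textbf{Step 2: injectivity and $T=s$.} I argue by contradiction, in the usual cut-locus style. Suppose there are sequences $z_n\to y$, $s_n\to t$ such that the point $q_n:=E(z_n,s_n)$ satisfies $T(q_n)\neq s_n$ (by maximality this means $T(q_n)>s_n$), or $q_n=E(z_n',s_n')$ with $(z_n',s_n')\neq(z_n,s_n)$ and $(z_n',s_n')\to(y,t)$.

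For $T(q_n)>s_n$, pick $\Sigma$-maximizing geodesics $E(w_n,T(q_n))=q_n$ (Theorem \ref{thmcont}). By compactness of $\Sigma$ and continuity of $T$, we get $w_n\to w$ with $E(w,t)=\sigma(t)$ and $T(\sigma(t))=t$. The limit of these geodesics is a $\Sigma$-maximizing geodesic to $\sigma(t)$. If $w\neq y$, concatenating it with $\sigma\vert_{[t,t']}$ gives a maximizing broken curve to $\sigma(t')$, which is a contradiction. So $w=y$, and both $(w_n,T(q_n))$ and $(z_n,s_n)$ converge to $(y,t)$ while having the same image under $E$. This contradicts local injectivity of $E$ near $(y,t)$, which holds by the inverse function theorem and Step 1. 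The second alternative is handled by the same inverse function argument.

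Therefore on a small product $W\times I$ the map $E$ is injective and satisfies $T\circ E=s$. Combined with Step 1 it is a diffeomorphism onto an open set $V$, by invariance of domain or the inverse function theorem. I take $W$ to be a connected coordinate ball.

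\textbf{Step 3: smoothness and gradient.} Smoothness of $T=s\circ E^{-1}$ on $V$ is immediate. The gradient identity follows from a Gauss lemma argument. Along each normal geodesic $dT(\p_s)=1$. The level sets $E(W\times\{s\})$ are orthogonal to $\p_s$, by the generalized Gauss lemma for normal exponential maps. Hence $\nabla T=-\p_s$, since $g(\p_s,\p_s)=-1$.

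The main obstacle is the compactness/uniqueness argument in Step 2. In particular one must carefully exclude nearby maximizers coming from distant points of $\Sigma$. This is exactly where extending $\sigma$ beyond $t$ (the ray property) is used.
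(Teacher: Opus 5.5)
Your proposal is correct and follows essentially the same route as the paper. There are three ingredients. First, since $\sigma\vert_{[0,t')}$ has no focal points, $E$ is a local diffeomorphism near $(y,t)$, and non-degeneracy of $dE$ persists on a product $W\times[0,\sup I]$. Second, the $\Sigma$-maximizer to $\sigma(t)$ is unique by the corner argument with $\sigma\vert_{[t,t']}$. Third, the maximizer data of nearby points converge to $(y,t)$, and local injectivity of $E$ then forces $T\circ E=s$.

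The differences are only cosmetic. You use compactness of $\Sigma$ and continuity of $E$ in place of the limit curve theorem, you phrase the argument by contradiction, and you spell out the Gauss-lemma justification that $-\nabla T=\partial_s E$.
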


\begin{proof}
Since $\sigma\vert_{[0,t+1]}$ is $\Sigma$-maximizing, there are no focal points of $\Sigma$ along $\sigma\vert_{[0,t+1)}$ \cite[Ch.\ 12]{beem96}, \cite[Sect.\ 3.2]{treude11}, so $E$ is a local diffeomorphism near $(y,t)$. The focal points along the normal geodesics are the zeros of $\det A(z,s)$, where $A$ is the $\Sigma$-Jacobi tensor along $E(z,\cdot)$, a continuous function which does not vanish on $\{y\}\times [0,t+\frac12]$; hence it does not vanish on $W\times [0,t+\frac12]$ for $W$ small, which gives the last statement provided $I\subset (0,t+\frac12)$. Moreover $\sigma\vert_{[0,t]}$ is the unique $\Sigma$-maximizer to $\sigma(t)$, for a different one followed by $\sigma\vert_{[t,t+1]}$ would be a $\Sigma$-maximizer with a corner. Let $x_k\to \sigma(t)$, and let $E(z_k,\cdot)\vert_{[0,T(x_k)]}$ be $\Sigma$-maximizers to $x_k$. By the limit curve theorem, the continuity of $T$ and the uniqueness just proved, $(z_k,T(x_k))\to (y,t)$. Hence for $x$ close to $\sigma(t)$ we have $x=E(z_x,T(x))$ with $(z_x,T(x))$ in the domain where $E$ is a diffeomorphism, that is, $T$ is the second component of $E^{-1}$. Shrinking to a product neighborhood gives the claim.
\end{proof}

\begin{theorem}[Localized rigidity of Hawking's theorem] \label{thmhawk}
Assume TCC and let $\Sigma$ be a smooth spacelike compact Cauchy hypersurface. If $H\le 0$ on a neighborhood of $F^+$ in $\Sigma$, in particular if
\[
H\le 0 \quad \textrm{on } \Sigma\cap \textrm{dom}\, b^+,
\]
then every point of $\Sigma$ is the initial point of a future $\Sigma$-ray, $\Sigma\subset \textrm{dom}\, b^+$, and the spacetime splits. Dually, if $H\ge 0$ on a neighborhood of $F^-$, in particular on $\Sigma\cap \textrm{dom}\, b^-$, then the spacetime splits.
\end{theorem}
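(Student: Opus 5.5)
The plan is to show that the set $F^+=\Sigma\cap (b^+)^{-1}(0)$ of initial points of future $\Sigma$-rays is all of $\Sigma$. We already know $F^+$ is non-empty (Theorem \ref{cpgx}) and closed. Since $\Sigma$ is connected, it suffices to prove that $F^+$ is open in $\Sigma$. Once $F^+=\Sigma$, we get $\Sigma\subset \textrm{dom}\, b^+$. Corollary \ref{bbwe} then gives a timelike line, which is complete by timelike completeness, and the Lorentzian splitting theorem under TCC gives the splitting.

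The "in particular" clause reduces to the main hypothesis. Indeed $F^+\subset \Sigma\cap\textrm{dom}\, b^+$, and $\textrm{dom}\, b^+$ is open (Theorem \ref{xxhg}(ii)). So $\Sigma\cap \textrm{dom}\, b^+$ is itself an open neighborhood of $F^+$ in $\Sigma$ on which $H\le 0$.

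For openness, fix $y\in F^+$ and compare the smooth hypersurface $S_1:=\Sigma$ with the horosphere $S_2:=B^+_0=(b^+)^{-1}(0)$, which passes through $y$. We then apply the maximum principle, Theorem \ref{thmagh}, with $a=0$; its hypotheses are checked as follows.
\begin{itemize}
\item $S_2$ is a $C^0$ spacelike hypersurface, being acausal and edgeless by Proposition \ref{cllp}.
\item On $S_1$ we have $H_{S_1}\le 0$ near $y$ by assumption.
\item On $S_2$ we have $H_{S_2}\ge 0$ in the support sense by Corollary \ref{cormeanconv}; this is where TCC enters, through Theorem \ref{thmmean}.
\item $S_2$ is locally to the past of $S_1$ near $y$. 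By Lemma \ref{lemsandwich}, every $x\in B^+_0$ satisfies $T(x)\le b^+(x)=0$, so $B^+_0\subset J^-(\Sigma)$. To localize this to $J^-(\Sigma\cap U, U)$, take $U$ to be a small normal-flow neighborhood $E(W\times(-\delta,\delta))$ of $y$. Here $E$ is a diffeomorphism and $\Sigma\cap U=E(W\times\{0\})$ is acausal and edgeless in $U$. A point of $U$ lying in $J^-(\Sigma)$ has normal coordinate $s\le 0$: if $s>0$ it would lie in $I^+(\Sigma)$, contradicting achronality of $\Sigma$. Its normal flow line within $U$ then reaches $\Sigma\cap U$, which places the point in $J^-(\Sigma\cap U,U)$.
\end{itemize}
Theorem \ref{thmagh} then yields that $\Sigma$ and $B^+_0$ coincide in a neighborhood of $y$. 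Hence every $z\in\Sigma$ near $y$ satisfies $b^+(z)=0$, that is $z\in F^+$, so $F^+$ is open.

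I expect the main delicate point to be this localization of the "past" relation. In particular, one must ensure that the $C^0$ horosphere is acausal and edgeless within the same neighborhood $U$ used in Theorem \ref{thmagh}, which may require shrinking $U$ and $W$. The continuity of $b^+$ at $y$ (Theorem \ref{xxhg}(iv)) ensures that $B^+_0\cap U$ is nonempty and stays in a controlled region near $y$.

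The dual statement follows by time reversal. We use $F^-=\Sigma\cap(b^-)^{-1}(0)$, the inequality $b^-\le T$, the fact that $B^-_0$ has future mean curvature $\le 0$ (Corollary \ref{cormeanconv}), and Theorem \ref{thmagh} in its original (non-dual) form with the roles of past and future exchanged.
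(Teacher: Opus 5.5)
Your proof is correct, and at the key step it takes a shorter route than the paper.

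\textbf{The paper's route.} The paper does not compare $\Sigma$ with $B^+_0$ at $y$. It moves to the interior point $\sigma(1)$ of the ray.
\begin{itemize}
\item It uses Lemma \ref{lemsmoothT} and the Raychaudhuri equation with TCC to transport $H\le 0$ to the level set $S=E(W\times\{1\})$ of $T$.
\item It applies Theorem \ref{thmagh} to $S$ and $B^+_1$.
\item It then needs Step 1 (a point $x\in I^+(\Sigma)$ with $T(x)=b^+(x)$ lies inside a future $\Sigma$-ray) to turn $b^+=T=1$ on $E(W'\times\{1\})$ back into $W'\subset F^+$.
\end{itemize}

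\textbf{Your route.} You apply the maximum principle at $y$ itself. You use $B^+_0\subset\{T\le 0\}=J^-(\Sigma)$ and the characterization $F^+=\Sigma\cap(b^+)^{-1}(0)$, proved just before the theorem, which already plays the role of Step 1. Your normal-coordinate localization is exactly what the definition of ``locally to the past'' requires. Acausality and edgelessness in $U$ are needed only for the smooth $S_1=\Sigma$; $B^+_0$ only needs to be a $C^0$ spacelike hypersurface, which Proposition \ref{cllp} gives. So the worry you raise about the horosphere being acausal and edgeless in $U$ is unnecessary.

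Your argument is the case $S=\Sigma$, $c=0$ of the comparison principle stated in Remark \ref{remscope}. It dispenses with Lemma \ref{lemsmoothT} and with Step 2, so TCC enters only through the mean convexity of the horospheres and the splitting theorem.

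\textbf{What the paper's detour buys.} First, a by-product: the contact set $\{T=b^+\}\cap I^+(\Sigma)$ is exactly the union of the future $\Sigma$-rays minus their initial points. Second, its comparison at interior points only uses non-positive mean curvature of the distance level set $T^{-1}(1)$ near the rays.
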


\begin{proof}
Let $\mathcal{N}$ be an open neighborhood of $F^+$ in $\Sigma$ on which $H\le 0$; note that $\Sigma\cap \textrm{dom}\, b^+$ is such a neighborhood.

{\em Step 1: contact points lie on $\Sigma$-rays.} Let $x\in I^+(\Sigma)\cap \textrm{dom}\, b^+$ be such that $T(x)=b^+(x)$, let $\gamma$ be a $\Sigma$-maximizer to $x$ and $\alpha$ a proper time parametrized $b^+$-calibrated ray starting from $x$. For $t\ge 0$, by Lemma \ref{lemsandwich} and calibration, $T(\alpha(t))\le b^+(\alpha(t))=b^+(x)+t=T(x)+t$, while, $T$ being rushing and $\alpha$ maximizing, $T(\alpha(t))\ge T(x)+d(x,\alpha(t))=T(x)+t$. Thus the concatenation of $\gamma$ and $\alpha\vert_{[0,t]}$ has length $T(x)+t=T(\alpha(t))$, so it is $\Sigma$-maximizing, hence an unbroken geodesic. Therefore $\gamma$ extends, through $\alpha$, to a future $\Sigma$-ray which contains $x$ in its interior.

{\em Step 2: the level sets of $T$ near the rays have non-positive mean curvature.} Let $\sigma=E(y,\cdot)$ be a future $\Sigma$-ray, and let $W$, $I$, $V$ be as in Lemma \ref{lemsmoothT} with $t=1$ and $W\subset \mathcal{N}$ (recall $y\in F^+\subset \mathcal{N}$). For $z\in W$ the normal geodesic $E(z,\cdot)$ has no focal points on $[0,\sup I]$, so the expansion $\theta(z,s)$ of the congruence of normal geodesics is defined along it. For $s\in I$, $\theta(z,s)$ is the future mean curvature at $E(z,s)$ of the level set $T^{-1}(s)\cap V=E(W\times \{s\})$, whose future unit normal $-\nabla T$ is the tangent of the normal geodesics. By the Raychaudhuri equation and TCC, $\p_s\theta\le -\theta^2/n\le 0$, so $\theta(z,s)\le \theta(z,0)=H(z)\le 0$. Thus the smooth spacelike hypersurface $S:=E(W\times\{1\})$ has future mean curvature $H_S\le 0$.

{\em Step 3: local coincidence with a horosphere.} Let $B:=B^+_1=(b^+)^{-1}(1)$. By Proposition \ref{ncho}, $b^+(\sigma(1))=1=T(\sigma(1))$, so $\sigma(1)\in S\cap B$. By Proposition \ref{cllp} $B$ is acausal and edgeless, hence a $C^0$ spacelike hypersurface, and by Corollary \ref{cormeanconv} it has future mean curvature $H_B\ge 0$ in the support sense. Let $W'\subset W$ be a neighborhood of $y$ and $I'\subset I$ an open interval containing $1$, and let $U:=E(W'\times I')$. In $U$ the function $T$ is smooth with timelike gradient, so $S\cap U=E(W'\times \{1\})$ is acausal and edgeless in $U$. If $x=E(z,s)\in B\cap U$, then $s=T(x)\le b^+(x)=1$ by Lemma \ref{lemsandwich}, and the normal geodesic $E(z,\cdot)\vert_{[s,1]}$ is a future-directed causal curve in $U$ from $x$ to $S$. Thus $B\cap U\subset J^-(S\cap U,U)$, namely $B$ is locally to the past of $S$ near $\sigma(1)$. Theorem \ref{thmagh} with $a=0$ shows that $S$ and $B$ agree near $\sigma(1)$. Hence, possibly shrinking $W'$, for every $z\in W'$
\[
b^+(E(z,1))=1=T(E(z,1)).
\]

{\em Step 4: $F^+=\Sigma$.} The set $F^+$ is closed. Indeed, if $y_k\in F^+$ converge to $y$, then $T(E(y_k,s))=s$ for every $s\ge 0$ and, passing to the limit, $T(E(y,s))=s$ for every $s$ in the domain of the inextendible geodesic $E(y,\cdot)$, so $E(y,\cdot)$ is a future $\Sigma$-ray.

The set $F^+$ is open. Let $y\in F^+$ and let $W'$ be as in Step 3. For $z\in W'$ the point $x_z:=E(z,1)$ satisfies $T(x_z)=1=b^+(x_z)$, in particular $x_z\in \textrm{dom}\, b^+$. The curve $E(z,\cdot)\vert_{[0,1]}$ has length $1=T(x_z)$, so it is a $\Sigma$-maximizer to $x_z$, and by Step 1 it extends to a future $\Sigma$-ray; this ray is the normal geodesic $E(z,\cdot)$. Thus $W'\subset F^+$.

Since $M$ is connected, so is $\Sigma$, and therefore $F^+=\Sigma$.

{\em Step 5: conclusion.} By Step 4 and the inclusion $F^+\subset \textrm{dom}\, b^+$, the future cosmological development contains the Cauchy hypersurface $\Sigma$, so by Corollary \ref{bbwe} the spacetime contains a timelike line and splits. The dual statement follows by time reversal, under which $H$ changes sign and $\textrm{dom}\, b^+$ and $\textrm{dom}\, b^-$ are exchanged.
\end{proof}

Step 1, which does not use the assumption on $H$, together with Proposition \ref{ncho} shows that in general the contact set $\{T=b^+\}\cap I^+(\Sigma)$ is exactly the union of the future $\Sigma$-rays deprived of their initial points. Steps 2 and 3 use the sign of $H$, the mean convexity of the horospheres and the maximum principle to show that the set $F^+$ of the initial points of these rays is open. Note that no differentiability of $b^+$ is used.

\begin{corollary} \label{corhawk}
Assume TCC.
\begin{itemize}
\item[(a)] If $M$ admits a smooth spacelike compact Cauchy hypersurface whose mean curvature is everywhere non-positive, or everywhere non-negative, then $M$ splits. In particular, $M$ splits if it admits a compact CMC Cauchy hypersurface. The latter statement is Bartnik's theorem \cite[Cor.\ 1]{bartnik88}, see also \cite[Thm.\ 2]{galloway19}. The case $H\le 0$ everywhere also follows from \cite[Prop.\ 4.3]{galloway14b}, which gives the splitting of $J^+(\Sigma)$ and the maximality of $\Sigma$, combined with Bartnik's theorem.
\item[(b)] If $M$ does not split, then for every smooth spacelike compact Cauchy hypersurface $\Sigma$ the closure of $\{H>0\}$ meets $F^+$ and the closure of $\{H<0\}$ meets $F^-$. In particular the mean curvature of $\Sigma$ takes positive values on $\Sigma\cap \textrm{dom}\, b^+$ and negative values on $\Sigma\cap \textrm{dom}\, b^-$, two disjoint open subsets of $\Sigma$.
\end{itemize}
\end{corollary}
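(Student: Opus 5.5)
Both parts follow from Theorem \ref{thmhawk} and its time dual, plus the facts collected before Lemma \ref{lemsmoothT} on $F^\pm$.

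For (a), suppose $\Sigma$ is a smooth spacelike compact Cauchy hypersurface with $H\le 0$ everywhere. Then $H\le 0$ on all of $\Sigma$, which is a neighborhood of $F^+$ in $\Sigma$. So Theorem \ref{thmhawk} applies directly and gives the splitting. If instead $H\ge 0$ everywhere, the dual half of Theorem \ref{thmhawk} applies with $\mathcal N=\Sigma\supset F^-$. A compact CMC Cauchy hypersurface has constant $H$, so it falls into one of the two cases according to the sign of the constant. This recovers Bartnik's theorem. The remark on \cite{galloway14b} is a bibliographic comparison and needs no proof.

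For (b), argue by contraposition. Assume $M$ does not split and suppose the closure of $\{H>0\}$ does not meet $F^+$. Then $\mathcal N:=\Sigma\setminus\overline{\{H>0\}}$ is open in $\Sigma$, contains $F^+$, and satisfies $H\le 0$ on it. Theorem \ref{thmhawk} then gives splitting, a contradiction. The statement about $\overline{\{H<0\}}$ and $F^-$ follows dually.

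For the "in particular" part, we have $F^+\subset\Sigma\cap\textrm{dom}\, b^+$, and this set is open in $\Sigma$ because $\textrm{dom}\, b^+$ is open (Theorem \ref{xxhg}(ii)). A point of $F^+$ that lies in the closure of $\{H>0\}$ therefore has a neighborhood inside $\textrm{dom}\, b^+$, and that neighborhood contains points with $H>0$. So $H$ takes positive values on $\Sigma\cap\textrm{dom}\, b^+$; dually it takes negative values on $\Sigma\cap\textrm{dom}\, b^-$.

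It remains to show these two open subsets of $\Sigma$ are disjoint. If they met, $\textrm{dom}\, b^+\cap\textrm{dom}\, b^-\neq\emptyset$, and Theorem \ref{notr} would give splitting under TCC, contrary to assumption. No step is a real obstacle; the only care needed is to check that the neighborhood $\mathcal N$ fed into Theorem \ref{thmhawk} is open, which holds because we remove a closed set from $\Sigma$.
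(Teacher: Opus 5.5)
Your proposal is correct and matches the paper's proof. Part (a) applies Theorem \ref{thmhawk} or its dual with all of $\Sigma$ as the required neighborhood, and part (b) argues by contradiction via Theorem \ref{thmhawk}, with the disjointness coming from Theorem \ref{notr}; the only cosmetic difference is that you exhibit the neighborhood explicitly as $\Sigma\setminus\overline{\{H>0\}}$, whereas the paper cites the compactness of $F^+$, which this step does not actually need.
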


\begin{proof}
(a) If $H\le 0$ (resp.\ $H\ge 0$) everywhere, apply Theorem \ref{thmhawk} (resp.\ its dual). A CMC hypersurface satisfies one of the two conditions. (b) $F^+$ is compact, so if the closure of $\{H>0\}$ did not meet $F^+$, then $H\le 0$ on a neighborhood of $F^+$, and Theorem \ref{thmhawk} would give a splitting. The disjointness of $\textrm{dom}\, b^+$ and $\textrm{dom}\, b^-$ in the non-split case is Theorem \ref{notr}.
\end{proof}

Thus, in a counterexample, every smooth compact Cauchy hypersurface is expanding somewhere in the future cosmological development and contracting somewhere in the past one. This sharpens part (a). The mean curvature must change sign, and moreover its positive values must reach into the future development and its negative values into the past one. Combined with Proposition \ref{prop39}, which gives $H\ge 0$ on $F^+$, it says that either $H>0$ at some initial point of a future $\Sigma$-ray, or $H$ vanishes at such a point and takes positive values arbitrarily close to it.

\begin{remark}[Scope of the method] \label{remscope}
The proof of Theorem \ref{thmhawk} uses the future Cauchy Busemann function only through three properties: $T\le b^+$ with contact along the future $\Sigma$-rays, the mean convexity of the horospheres (Corollary \ref{cormeanconv}), and the geometric maximum principle (Theorem \ref{thmagh}). The horospheres act as a mean convex barrier at future infinity. They replace the second hypersurface in Galloway's hypersurface rigidity theorem \cite[Thm.\ 2.3]{galloway97}. The closest antecedent is \cite[Prop.\ 4.3]{galloway14b}. There, a hypersurface $S$ with $H\le 0$ everywhere in the support sense, admitting a future $S$-ray, is compared with the ray horosphere of that ray by means of the maximum principle of \cite{andersson98b}, and is shown to be maximal, with $J^+(S)$ split. The novelty of Theorem \ref{thmhawk} is that the sign of $H$ is only required near the initial points of the future $\Sigma$-rays. Since the Cauchy Busemann function collects all such rays at once, the open-closed argument can run on the set $F^+$ instead of on $\Sigma$.

More generally, the same argument shows that if a smooth spacelike hypersurface $S$ with future mean curvature $H_S\le 0$ touches a horosphere $B^+_c$ at a point $q$ and lies in $\{b^+\ge c\}$ near $q$, then $S$ and $B^+_c$ coincide near $q$; in particular $B^+_c$ is smooth and maximal near $q$. A proof of Bartnik's conjecture along these lines therefore requires such a {\em barrier from the future} inside the future development, together with a continuation mechanism such as Step 4 above. In a counterexample the natural global candidates are not available:
\begin{itemize}
\item the level sets of $b^-$ lie in $\textrm{dom}\, b^-$, which is disjoint from $\textrm{dom}\, b^+$;
\item for the level sets of the Lorentzian distance from a smooth compact Cauchy hypersurface, Corollary \ref{corhawk}(b) shows that the sign condition on a whole neighborhood of $F^+$ required in Theorem \ref{thmhawk} is not available; this does not exclude local barriers of this type, or non-smooth distance levels, combined with a continuation argument;
\item passing to the limit in the approximating functions $b_k$ on the whole spacetime requires equi-Lipschitz bounds near a Cauchy hypersurface, which by Corollary \ref{coreqlip} are equivalent to the conjecture.
\end{itemize}
In this sense the missing ingredient is a mean convex barrier touching a horosphere from the future inside the future development.
\end{remark}

\section{Structure of a hypothetical counterexample} \label{secstru}

As in Section \ref{secsplit}, $(M,g)$ is assumed to be timelike complete.

\begin{theorem} \label{xxxt}
The open past set $\textrm{dom}\, b^+$ and the open future set $\textrm{dom}\, b^-$, as well as the $C^0$ future null hypersurface $\p\, \textrm{dom}\, b^+$ and the $C^0$ past null hypersurface $\p\, \textrm{dom}\, b^-$, do not depend on $\Sigma$.
Moreover, for $\Sigma, \Sigma'$ compact Cauchy hypersurfaces, let
\[
R= \sup_{p\in \Sigma,p'\in \Sigma'} \max(d(p,p'),d(p',p)).
\]
Then, denoting with $b'^\pm$ the Cauchy Busemann functions of $\Sigma'$, for every $p\in \textrm{dom}\, b^+$,
$
\vert b^+(p)-b'^+(p)\vert \le R,
$
and similarly in the minus case.
\end{theorem}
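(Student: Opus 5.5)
The plan is to prove the quantitative estimate first, using the foliation-free formula of Theorem \ref{thmintr}. Independence of $\textrm{dom}\, b^\pm$ will then follow immediately, and so will independence of their boundaries. Fix a future inextendible causal curve $x$ starting from $p$. Compare $F_x(t)=d^+_\Sigma(x(t))-\ell(x\vert_{[0,t]})$ with $F'_x(t)=d^+_{\Sigma'}(x(t))-\ell(x\vert_{[0,t]})$. Then $L(x)-L'(x)=\lim_t [d^+_\Sigma(x(t))-d^+_{\Sigma'}(x(t))]$, and the whole problem reduces to a uniform bound
\[
\vert d^+_\Sigma(q)-d^+_{\Sigma'}(q)\vert\le R
\]
for $q$ sufficiently far in the future, namely $q\in J^+(\Sigma)\cap J^+(\Sigma')$. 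Taking infima over $x$ in (\ref{intrinsic}) then gives $b^+(p)\le b'^+(p)+R$ and the symmetric inequality. In particular $b^+(p)<\infty$ iff $b'^+(p)<\infty$, so $\textrm{dom}\, b^+=\textrm{dom}\, b'^+$. (The Claim mentioned in Remark \ref{remcomplete} is presumably this comparison of distance functions.)

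For the key bound, let $q\in I^+(\Sigma)\cap I^+(\Sigma')$ and let $\gamma$ be a maximizing curve from $p\in\Sigma$ to $q$ of length $d^+_\Sigma(q)$ (Theorem \ref{thmcont}). Since $\Sigma'$ is Cauchy, the inextendible extension of $\gamma$ meets $\Sigma'$ at some point $p'$. There are two cases.
\begin{itemize}
\item[(i)] If $p'$ lies on $\gamma$ between $p$ and $q$, then $p\le p'$ and $d^+_\Sigma(q)=d(p,p')+d(p',q)\le R+d^+_{\Sigma'}(q)$.
\item[(ii)] If $p'$ precedes $p$ on the past extension of $\gamma$, then $p'\le p\le q$. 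Here I would use reverse triangle inequalities: $d^+_{\Sigma'}(q)\ge d(p',q)\ge d(p',p)+d(p,q)\ge d^+_\Sigma(q)$, which gives the inequality even without $R$.
\end{itemize}
Exchanging the roles of $\Sigma$ and $\Sigma'$ gives the two-sided bound. Neither case uses compactness of $\Sigma'$ beyond the finiteness of $R$, and $R$ is finite because $d$ is finite on $\Sigma\times\Sigma'$ and bounded there: the causally related pairs lie in compact diamonds $J^+(\Sigma)\cap J^-(\Sigma')$ and so on. No completeness assumption is needed for this step. Since along any future inextendible $x$ the point $x(t)$ eventually lies in both futures, the limits above are legitimate. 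Note that $L(x)$ may be $+\infty$; in that case $L'(x)$ is also $+\infty$.

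The boundaries $\p\,\textrm{dom}\, b^\pm$ then coincide automatically, and their $C^0$ null character is Proposition \ref{propbdry}. The minus case is the time dual, using the time dual of Theorem \ref{thmintr} with $d^-$. The main thing I expect to need care with is the case analysis locating $p'$ on the extension of $\gamma$. I also need to make sure the equality $b^+=\inf_x L(x)$ holds on all of $M$, including at points of $\partial\,\textrm{dom}\, b^+$, so that finiteness transfers. Theorem \ref{thmintr} states it for every $p\in M$, so this should be fine.
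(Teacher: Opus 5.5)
Your proof is correct, and it takes a genuinely different route from the paper's.

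\textbf{The paper's route.} It proves a Claim comparing $d(\cdot,\Sigma)$ with $d(\cdot,\Sigma')$ on all of $M$, through a four-case analysis. For points in the past of both hypersurfaces it introduces the auxiliary Cauchy hypersurfaces $\p[J^\pm(\Sigma)\cup J^\pm(\Sigma')]$. It then uses the Claim to show $\Sigma_\tau\subset J^+(\Sigma'_{\tau-R})\cap J^-(\Sigma'_{\tau+R})$. From this it deduces that the constant $R_\tau$ of the pair $\Sigma_\tau,\Sigma'_\tau$ is still at most $R$. Applying the Claim to that pair gives $\vert b_\tau-b'_\tau\vert\le R$ on $M$ for every $\tau>R$, and it passes to the limit.

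\textbf{Your route.} You use the foliation-free formula of Theorem \ref{thmintr}, which removes the level sets $\Sigma_\tau$ from the picture. Only the comparison of $d^+_\Sigma$ and $d^+_{\Sigma'}$ on $I^+(\Sigma)\cap I^+(\Sigma')$ is then needed. Your proof of that comparison, cutting a $\Sigma$-maximizer where its extension crosses $\Sigma'$, is more elementary than the paper's. Time reversed, the same trick would also dispose of the paper's first case.

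\textbf{A missing line.} You should say why the crossing point $p'$ cannot lie to the future of $q$. If it did, since $q\in I^+(\Sigma')$ there would be $r\in\Sigma'$ with $r\ll q\le p'$, hence two chronologically related points of $\Sigma'$, contradicting achronality.

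\textbf{What each buys.} The paper's route gives an estimate already at the level of the approximants, uniform in $\tau$. It also avoids the product trick on which Theorem \ref{thmintr} rests. Yours is shorter. It also makes transparent, in the spirit of Remark \ref{remfol}, that $b^+$ depends on $\Sigma$ only through the asymptotics of $d^+_\Sigma$ along future inextendible causal curves.
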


That is, the mentioned sets are natural objects of the cosmological spacetime $(M,g)$. If Bartnik's conjecture is not true these objects are non-trivial ($\textrm{dom}\, b^+\ne M$, $\p\, \textrm{dom}\, b^+\ne \emptyset$) and hence have physical meaning. It is natural to call $\p\, \textrm{dom}\, b^+$  {\em future cosmological horizon}. An observer that traverses it from $\textrm{dom}\, b^+$ cannot return to $\textrm{dom}\, b^+$ because $M\backslash \textrm{dom}\, b^+$ is a future set. In this respect it behaves as the event horizon of a black hole; whether an area law holds for it is an open question. We  call $\textrm{dom}\, b^+$ the {\em future cosmological development}. If the spacetime does not split then, by Theorem \ref{notr}, $\textrm{dom}\, b^+$ and $\textrm{dom}\, b^-$ are disjoint non-empty open sets, thus, since $M$ is connected, $M\backslash (\textrm{dom}\, b^+\cup \textrm{dom}\, b^-)$ is non-empty. We call it the {\em cosmological wall}. It might consist of just the union of the cosmological horizons.

\begin{proof}
We consider only the $b^+$ case and so we omit this sign throughout the proof. The $b^-$ case is analogous.
Consider two arbitrary Cauchy hypersurfaces $\Sigma$ and $\Sigma'$, possibly intersecting. Observe that the supremum $R$ is attained by the compactness of $\Sigma$ and $\Sigma'$ and the continuity of $d$.

\begin{claim*}
For every $p\in M$,
\[
\vert d(p,\Sigma)-d(p,\Sigma')\vert\le R, \quad \textrm{and} \quad \vert d(\Sigma,p)-d(\Sigma',p)\vert\le R .
\]
\end{claim*}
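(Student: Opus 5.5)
The plan is to prove the one-sided inequality $d(p,\Sigma)\le d(p,\Sigma')+R$ for an arbitrary $p\in M$. Exchanging the roles of $\Sigma$ and $\Sigma'$, which leaves $R$ unchanged, then gives the first estimate. The second estimate, for $d(\Sigma,p)$ and $d(\Sigma',p)$, follows by time duality. No completeness assumption should be needed. We only use global hyperbolicity, the fact that both hypersurfaces are Cauchy, and the finiteness of $R$, which holds because $\Sigma\times\Sigma'$ is compact and $d$ is continuous and finite.

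If $d(p,\Sigma)=0$ there is nothing to prove, since $d(p,\Sigma')\ge 0$. Otherwise $p\in I^-(\Sigma)$. By the time dual of Theorem \ref{thmcont}, $d(p,\Sigma)$ is finite and attained: there are $q\in\Sigma$ and a maximizing timelike geodesic $\gamma$ from $p$ to $q$ with $\ell(\gamma)=d(p,q)=d(p,\Sigma)$. We extend $\gamma$ to an inextendible causal curve $\hat\gamma$. Since $\Sigma'$ is Cauchy, $\hat\gamma$ meets $\Sigma'$ at some point $r$, and we split into three cases according to where $r$ lies on $\hat\gamma$.

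\emph{Case (a): $r$ lies on $\gamma$ between $p$ and $q$.} Since $\gamma$ is maximizing, $d(p,q)=d(p,r)+d(r,q)\le d(p,\Sigma')+R$.

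\emph{Case (b): $r$ lies after $q$, so $q\le r$.} The reverse triangle inequality gives $d(p,r)\ge d(p,q)$, hence $d(p,\Sigma)\le d(p,\Sigma')$.

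\emph{Case (c): $r$ lies before $p$, so $r\le p$.} Then $d(r,q)\ge d(r,p)+d(p,q)\ge d(p,q)$, hence $d(p,\Sigma)=d(p,q)\le d(r,q)\le R\le d(p,\Sigma')+R$.

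In all three cases $d(p,\Sigma)\le d(p,\Sigma')+R$, as required.

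There is no serious obstacle. The only point needing care is case (c), where the extension of the maximizing segment meets $\Sigma'$ in the past of $p$. There the bound must come from $R$ alone rather than from $d(p,\Sigma')$. This is exactly why $R$ is defined as the maximum of $d(p,p')$ and $d(p',p)$, taken in both time orders.
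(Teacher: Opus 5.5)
Your argument is correct, and it takes a different, more economical route than the paper.

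The paper splits according to the position of $p$ relative to $J^\pm(\Sigma)$ and $J^\pm(\Sigma')$. In the main case $p\in J^-(\Sigma)\cap J^-(\Sigma')$ it introduces the auxiliary Cauchy hypersurfaces $\Sigma^-=\partial[J^+(\Sigma)\cup J^+(\Sigma')]$ and $\Sigma^+=\partial[J^-(\Sigma)\cup J^-(\Sigma')]$. It sandwiches both $d(p,\Sigma)$ and $d(p,\Sigma')$ between $d(p,\Sigma^-)$ and $d(p,\Sigma^+)$. It then bounds the width of the sandwich by following a maximizer from $p$ to $\Sigma^+$ across $\Sigma^-$, using $d(\Sigma^-,\Sigma^+)\le R$. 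The mixed case $p\in J^-(\Sigma)\cap J^+(\Sigma')$ is treated essentially as your case (c), by prolonging the maximizer to the past until it reaches $\Sigma'$.

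You instead follow a single maximizer from $p$ to $\Sigma$ and only ask where its inextendible extension crosses $\Sigma'$. This treats all positions of $p$ uniformly. It also avoids checking that $\Sigma^\pm$ are acausal Cauchy hypersurfaces contained in $\Sigma\cup\Sigma'$, a step the paper only sketches. In exchange, the paper's construction gives a symmetric two-sided comparison through the lower and upper envelopes of $\Sigma$ and $\Sigma'$. Your route uses only the attainment of $d(p,\Sigma)$ (time dual of Theorem \ref{thmcont}), the Cauchy property of $\Sigma'$ and the reverse triangle inequality. It also makes plain that compactness enters only through $R<\infty$.

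One small correction to your closing remark. Cases (a) and (c) both use only the bound $d(r,q)\le R$ for $r\in\Sigma'$, $q\in\Sigma$. The other time order in the definition of $R$ is needed when you exchange the roles of $\Sigma$ and $\Sigma'$, not specifically because of case (c).
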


\begin{proof}[Proof of Claim]
We prove the former inequality, the latter being analogous. There are various cases to consider depending on whether $p$ belongs to $J^-(\Sigma)\cap J^-(\Sigma')$, to $J^-(\Sigma)\cap J^+(\Sigma')$, to $J^+(\Sigma)\cap J^-(\Sigma')$, or to $J^+(\Sigma)\cap J^+(\Sigma')$.

In the first case consider the  sets $\Sigma^-=\p[J^+(\Sigma)\cup J^+(\Sigma')]$ and $\Sigma^+=\p[J^-(\Sigma)\cup J^-(\Sigma')]$, which are  Cauchy hypersurfaces contained in $\Sigma\cup \Sigma'$ with $\Sigma, \Sigma'\subset J^+(\Sigma^-)\cap J^-(\Sigma^+)$ and $p\in J^-(\Sigma^-)$. (They are acausal: e.g.\ if $x\in \Sigma\cap \Sigma^-$ and $y\in \Sigma'\cap \Sigma^-$ with $x<y$, then $y\in J^+(\Sigma)$ and $y\in J^-(\Sigma)$ as $y\in \Sigma^-$, thus $y\in \Sigma$, which contradicts the acausality of $\Sigma$.) Now,
$
d(p,\Sigma^-)\le d(p,\Sigma)\le d(p,\Sigma^+)
$
and similarly for $\Sigma'$ in place of $\Sigma$. Since every point of $\Sigma^-$ and of $\Sigma^+$ belongs to $\Sigma\cup\Sigma'$ and $\Sigma$, $\Sigma'$ are achronal, $d(\Sigma^-,\Sigma^+)\le R$. Consider the causal curve $\sigma: [0,1]\to M$ that maximizes the distance between $p=\sigma(0)$ and $\Sigma^+$. It intersects $\Sigma^-$ at some point $\sigma(a)$ and
\[
d(p,\Sigma^+)-d(p,\Sigma^-)\le \ell(\sigma\vert_{[0,1]})-\ell(\sigma\vert_{[0,a]})=\ell(\sigma\vert_{[a,1]})\le d(\Sigma^-,\Sigma^+)\le R.
\]
As a consequence, $d(p,\Sigma)\le d(p,\Sigma^+)\le d(p,\Sigma^-)+R\le d(p,\Sigma')+R$ and exchanging the roles of $\Sigma$ and $\Sigma'$ we get the desired result.

In the second case $d(p,\Sigma')=0$ and $d(p,\Sigma)\le R$ because the maximizing causal curve between $p$ and $\Sigma$ can be prolonged to the past to a causal curve connecting $\Sigma'$ and $\Sigma$. The third case is similar with $\Sigma'$ and $\Sigma$ switched. In the fourth case $d(p,\Sigma)=d(p,\Sigma')=0$ and the inequality is clear.
\end{proof}

Let $\tau>R$ be a real number. Recall that $\Sigma_\tau$ consists of the points at distance $\tau$ from $\Sigma$. By the Claim, if $p\in \Sigma_\tau$ then $d(\Sigma',p)\in [\tau-R,\tau+R]$, hence $\Sigma_\tau\subset J^+(\Sigma'_{\tau-R})\cap J^-(\Sigma'_{\tau+R})$. Analogously, $\Sigma'_\tau\subset J^+(\Sigma_{\tau-R})\cap J^-(\Sigma_{\tau+R})$.
Every causal curve from $\Sigma_\tau$ to $\Sigma'_\tau$ can be prolonged to a causal curve from $\Sigma_\tau$ to $\Sigma_{\tau+R}$, so its length is at most $R$ (see the end of Section \ref{secdist}). Similarly, every causal curve from $\Sigma'_\tau$ to $\Sigma_\tau$ has length at most $R$, and so
\[
R_\tau= \sup_{p\in \Sigma_\tau,p'\in \Sigma'_\tau} \max(d(p,p'),d(p',p))\le R.
\]
Applying the Claim to $\Sigma_\tau$ and $\Sigma'_\tau$ we get that for every $p\in M$,
$
\vert d(p,\Sigma_\tau)-d(p,\Sigma'_\tau)\vert\le R.
$

But $b_\tau(p)=\tau-d(p,\Sigma_\tau)$ and $b'_\tau(p)=\tau-d(p,\Sigma'_\tau)$, thus for every $p\in M$ we have $\vert b_\tau(p)-b'_\tau(p)\vert\le R$ and taking the limit $\tau\to\infty$ we get that one function is $+\infty$ iff  the other is, hence $\textrm{dom}\, b=\textrm{dom}\, b'$, and for each $p$ in this common domain we have $\vert b(p)-b'(p)\vert\le R$.
\end{proof}

\begin{proposition} \label{prop39}
Assume TCC. Let $\Sigma$ be a smooth spacelike compact Cauchy hypersurface and let $\theta$ be its mean curvature (the expansion of the congruence of future timelike geodesics issued orthogonally from $\Sigma$). Then $\theta\ge 0$ at the starting point of every future $\Sigma$-ray. In particular, $\theta$ cannot be negative all over the compact set $\Sigma\cap \overline{\textrm{dom}\, b^+}$. Dually, $\theta\le 0$ at the endpoint of every past $\Sigma$-ray.
\end{proposition}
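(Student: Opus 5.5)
Let $\sigma=E(y,\cdot)$ be a future $\Sigma$-ray, parametrized by proper time, with $y\in\Sigma$. The idea is to run the Raychaudhuri argument along $\sigma$ for the congruence of normal geodesics and compare it with the absence of focal points. A $\Sigma$-ray is $\Sigma$-maximizing on every segment $[0,s]$. By the standard focal point theory \cite[Ch.\ 12]{beem96}, already invoked in Lemma \ref{lemsmoothT}, there is therefore no focal point of $\Sigma$ along $\sigma\vert_{[0,+\infty)}$. Hence the $\Sigma$-Jacobi tensor $A(s)$ along $\sigma$, with $A(0)=\textrm{Id}$ and $A'(0)$ equal to the shape operator, is invertible for every $s\ge 0$. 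Consequently the expansion $\theta(s)=\textrm{tr}(A'A^{-1})$ is defined and smooth on $[0,+\infty)$, with $\theta(0)=\theta(y)$ the mean curvature of $\Sigma$ at $y$. We note also that $\sigma$ is complete by timelike completeness, or by Theorem \ref{xxhg}(i).

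Now suppose by contradiction that $\theta(0)<0$. The Raychaudhuri equation, with vanishing vorticity (the congruence is hypersurface orthogonal) and TCC, gives $\theta'\le -\theta^2/n$. So $\theta$ is non-increasing and stays negative, and we may set $w=-1/\theta>0$. Then $w'=\theta'/\theta^2\le -1/n$, hence $w(s)\le w(0)-s/n$. This becomes non-positive by $s=n/\vert\theta(0)\vert$, so $\theta\to-\infty$ at some finite $s_0\le n/\vert\theta(0)\vert$. Since $\theta=(\det A)'/\det A$, this forces $\det A(s)\to 0$, that is, a focal point at some $s\le s_0$. This contradicts the absence of focal points along the infinite ray. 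Hence $\theta(y)\ge 0$. The main technical point is the statement that a $\Sigma$-maximizing geodesic carries no focal point before its endpoint. Here we need it on all of $[0,+\infty)$, which follows by applying that statement to $\sigma\vert_{[0,s+1]}$ for each $s$.

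For the second claim, the set $F^+$ of initial points of future $\Sigma$-rays is non-empty by Theorem \ref{cpgx}. We showed before Lemma \ref{lemsmoothT} that $F^+\subset \Sigma\cap\textrm{dom}\, b^+\subset\Sigma\cap\overline{\textrm{dom}\, b^+}$. The latter set is compact, being closed in $\Sigma$. Since $\theta\ge0$ on $F^+\neq\emptyset$, $\theta$ cannot be negative everywhere on it.

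The dual statement follows by time reversal. A past $\Sigma$-ray ending at $y$ becomes a future ray for the reversed time orientation, and the future mean curvature changes sign, giving $\theta(y)\le0$.
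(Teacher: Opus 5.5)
Your proof is correct and follows the paper's argument. The ray has no focal points, which you get by applying the standard result to $\sigma\vert_{[0,s+1]}$, while the Raychaudhuri inequality under TCC would force $\theta\to-\infty$ within proper time $n/\vert\theta(0)\vert$, a time that the future complete ray reaches. Your explicit Riccati integration, and your use of $\emptyset\ne F^+\subset\Sigma\cap\textrm{dom}\, b^+$ for the second claim, simply spell out what the paper leaves to \cite[Ch.\ 12]{beem96} and to the earlier discussion of $F^+$.
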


\begin{proof}
A future $\Sigma$-ray $\sigma$ is a timelike geodesic orthogonal to $\Sigma$ without focal points to $\Sigma$ before any of its points. If $\theta(\sigma(0))<0$, by the Raychaudhuri equation and TCC there would be a focal point along $\sigma$ within proper time $n/\vert \theta(\sigma(0))\vert$ \cite[Ch.\ 12]{beem96}, which is reached since $\sigma$ is future complete. This contradicts the ray property. Only future timelike completeness has been used, and only past timelike completeness is needed for the dual statement.
\end{proof}

The proof gives the following quantitative version. If $\theta(y)\le -\delta<0$ at $y\in \Sigma$, then the future normal geodesic issued from $y$ ceases to be $\Sigma$-maximizing at proper time at most $n/\delta$, if it is defined that far. Thus, on a compact subset of $\Sigma$ where $\theta\le -\delta$, the normal $\Sigma$-maximizing segments have proper time duration bounded by $n/\delta$.
It is  natural to consider examples in which the expansion of $\Sigma$ is non-negative on $\textrm{dom}\, b^+$ and non-positive on $\textrm{dom}\, b^-$. For this reason, with some abuse, $\textrm{dom}\, b^+$ might be called the {\em cosmologically expanding region}. A much stronger statement holds: by Theorem \ref{thmmean} all the horospheres $B^+_c$ have non-negative future mean curvature in the support sense, which is what really justifies the terminology.

\begin{theorem} \label{thmwall}
Assume TCC and that the spacetime does not split.
\begin{itemize}
\item[(a)] There is a connected component $D$ of $\p\, \textrm{dom}\, b^+$ which intersects every Cauchy hypersurface and contains a lightlike line.
\end{itemize}
Of course, an analogous result holds for $\p\, \textrm{dom}\, b^-$.
\end{theorem}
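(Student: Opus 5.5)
We need a component $D$ of $\p\,\textrm{dom}\, b^+$ that meets every Cauchy hypersurface and contains a lightlike line. Since the spacetime does not split, Theorem \ref{notr} makes $\textrm{dom}\, b^+$ and $\textrm{dom}\, b^-$ disjoint, and $\textrm{dom}\, b^+\ne M$; as $M$ is connected, $\p\,\textrm{dom}\, b^+\ne\emptyset$.

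\emph{Step 1: a component meets every Cauchy hypersurface.} Fix a compact Cauchy hypersurface $S$, which is connected. By Proposition \ref{cmol}, $S\cap \textrm{dom}\, b^+\ne\emptyset$, and by its time dual $S\cap \textrm{dom}\, b^-\ne\emptyset$. The two sets are disjoint and open, so connectedness of $S$ forces $S\cap\p\,\textrm{dom}\, b^+\ne\emptyset$. To get a single component, use the Cauchy time function $T=d_\Sigma$ and a homeomorphism $M\cong \mathbb{R}\times\Sigma$. Since $\p\,\textrm{dom}\, b^+$ is an achronal boundary, it is the graph of a locally Lipschitz function over an open subset $\Omega\subset\Sigma$, with values in $[-\infty,+\infty]$. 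The key point is that $\Omega=\Sigma$: every timelike curve $t\mapsto(t,x)$ starts in $\textrm{dom}\, b^-$ or ends in $\textrm{dom}\, b^+$. I expect to need care here. If every vertical line meets the boundary, the boundary is a graph over all of $\Sigma$, hence connected, and is itself the component $D$; it then meets every Cauchy hypersurface, being edgeless and intersecting each one as above. Otherwise I would choose $D$ as a component whose closure in the $\mathbb{R}\times\Sigma$ picture separates $\textrm{dom}\, b^+$ from a point of $\textrm{dom}\, b^-$ on every $\Sigma_c$. Running the connectedness argument along the slices $\Sigma_c$, $c\in\mathbb{R}$, shows that some fixed component meets all $\Sigma_c$, and hence every Cauchy hypersurface, via $T$-compactness of $J^+(\Sigma_a)\cap J^-(\Sigma_b)$.

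\emph{Step 2: lightlike line in $D$.} By Proposition \ref{propbdry}, each point of $D$ starts a future lightlike ray contained in $D$. Pick $p_k\in D\cap\Sigma_{-k}$ and future lightlike generators $\eta_k$ from $p_k$. Each $\eta_k$ meets $\Sigma$ at some point $q_k$, so the rays pass through the compact $\Sigma$. Reparametrize by $h$-arclength with $\eta_k(0)=q_k$. By the limit curve theorem, a subsequence converges to an inextendible causal curve $\eta$ through $q=\lim q_k\in\Sigma$. Past inextendibility holds because $p_k\in\Sigma_{-k}$ escapes every compact set.

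Maximality passes to the limit, since segments of $\eta_k$ lie in the achronal set $\p\,\textrm{dom}\, b^+$ and are therefore achronal, and achronality is preserved under limits in a globally hyperbolic spacetime. Hence $\eta$ is an achronal inextendible causal curve, that is, a lightlike line. It lies in the closed set $\p\,\textrm{dom}\, b^+$, and it lies in $D$ because it is connected and is a limit of curves in $D$. This requires components of the boundary to be closed, which holds since the boundary is locally a topological hypersurface and its components are therefore open and closed in it.

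\emph{Main obstacle.} The hard step is Step 1: choosing a single component that meets all Cauchy hypersurfaces, since $p_k\in D$ is needed for every $k$ in Step 2. The first thing I would try is to prove the graph property $\Omega=\Sigma$ by a separation argument on each vertical line. If that fails, I would fall back on a compactness argument over nested components selected on $\Sigma_{\pm k}$.
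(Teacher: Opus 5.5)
Your Step 2 matches the paper's argument: points $p_k\in D$ far in the past, future lightlike rays in $D$ crossing the compact $\Sigma$, the limit curve theorem, and closedness and achronality of $D$. It works once you have a single component $D$ meeting every $\Sigma_{-k}$.

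The gap is Step 1, which is the real content of (a). Your key point, that every vertical line meets $\p\,\textrm{dom}\, b^+$ (i.e.\ $\Omega=\Sigma$), is not just unproved but false under the hypotheses. Set $h(x):=\sup\{s:(s,x)\in \textrm{dom}\, b^+\}\in[-\infty,+\infty]$. Since $\textrm{dom}\, b^+$ is an open past set, $\{s:(s,x)\in\textrm{dom}\, b^+\}=(-\infty,h(x))$ and $h$ is lower semi-continuous. The line over $x$ meets $\p\,\textrm{dom}\, b^+$ iff $h(x)\in\mathbb{R}$: if $h(x)=-\infty$ the line misses $\overline{\textrm{dom}\, b^+}$, because $I^-(\overline{\textrm{dom}\, b^+})\subset \textrm{dom}\, b^+$. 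If $h>-\infty$ everywhere, it attains its minimum on the compact $\Sigma$, so some whole slice $\Sigma_c$ lies in $\textrm{dom}\, b^+$, and Corollary \ref{bbwe} then gives a timelike line and a splitting. Hence in the non-split case $h^{-1}(-\infty)\neq\emptyset$, and your first branch is vacuous. Your sentence also reverses the directions: $\textrm{dom}\, b^+$ is a past set, so a vertical line can only \emph{start} in it, while $\textrm{dom}\, b^-$ is a future set, so a line can only \emph{end} in it. Your second branch, choosing a component that separates on every $\Sigma_c$, merely restates what must be proved. The slice-by-slice connectedness argument shows each $\Sigma_c$ meets \emph{some} component, not that one fixed component meets all of them; a priori the components reaching $\Sigma_{-k}$ could change with $k$.

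What is missing is a way to get one component that reaches arbitrarily far into the past. The paper does this with $h$, which is continuous as a map into $[-\infty,+\infty]$:
\begin{itemize}
\item $G=h^{-1}(-\infty)$ is a non-empty closed proper subset of the connected slice, so it has a boundary point $x_0$.
\item In a connected neighborhood $B$ of $x_0$ avoiding the closed set $h^{-1}(+\infty)$, some component $V$ of $B\backslash G\subset h^{-1}(\mathbb{R})$ accumulates at a point of $G$, where $h\to-\infty$.
\item The component $D$ lying over the component of $h^{-1}(\mathbb{R})$ containing $V$ therefore has $\inf h=-\infty$, and lightlike rays in $D$ from points in $I^-(S')$ hit any Cauchy hypersurface $S'$.
\end{itemize}
Your ``nested components'' fallback can also be completed, but it needs an ingredient you never state. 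The components of $\p\,\textrm{dom}\, b^+$ are open in it (it is a topological hypersurface), so only finitely many meet the compact set $\Sigma\cap \p\,\textrm{dom}\, b^+$. A component meeting $\Sigma_{-k}$ also meets $\Sigma_{-j}$ for $j<k$ and meets $\Sigma$, since the future lightlike ray of Proposition \ref{propbdry} stays in it. So the families of components meeting $\Sigma_{-k}$ are non-empty (by your connectedness argument), decreasing, and contained in a fixed finite set. They must stabilize, and any component in the stable family is the required $D$.
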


\begin{proof}
(a). Let $t$ be a Cauchy temporal function \cite{bernal04}, $S:=t^{-1}(0)$, and let $\phi$ be the flow of the timelike vector field $-\nabla t/g(\nabla t,\nabla t)$, so that $t(\phi(s,x))=s$ and $\phi: \mathbb{R}\times S\to M$ is a homeomorphism. Let $P:=\textrm{dom}\, b^+$ and $N:=\p P$. For $x\in S$ let $h(x):=\sup\{s: \phi(s,x)\in P\}\in [-\infty,+\infty]$ (with $\sup \emptyset=-\infty$). Since $P$ is an open past set and the flow lines are timelike, $\{s:\phi(s,x)\in P\}=(-\infty,h(x))$.

The function $h$ is continuous as a map into $[-\infty,+\infty]$. It is lower semi-continuous because $P$ is open. For upper semi-continuity, let $h(x_0)<a$ and pick $a'\in (h(x_0),a)$; the point $y=\phi(a',x_0)$ does not belong to $P$ and, since $M\backslash P$ is a future set, the open set $I^+(\phi(a'-\epsilon,x_0))$, $a'-\epsilon>h(x_0)$, is contained in $M\backslash P$ and contains $y$; thus $h<a$ near $x_0$.

Moreover, $N=\{\phi(h(x),x): h(x)\in \mathbb{R}\}$. Indeed, if $y=\phi(s,x)\in N$ then $y\notin P$, so $s\ge h(x)$, while $I^-(y)\subset I^-(\overline{P})=I^-(P)\subset P$ gives $\phi(s',x)\in P$ for $s'<s$, so $s\le h(x)$. The converse inclusion is clear. Thus $N$ is homeomorphic, through the projection, to the open set $U:=h^{-1}(\mathbb{R})$, and the connected components of $N$ correspond to those of $U$.

Let $G:=h^{-1}(-\infty)$ and $F:=h^{-1}(+\infty)$, which are disjoint closed subsets of $S$. We have $G\ne \emptyset$: otherwise the lower semi-continuous function $h$ would attain a finite minimum $m$ on the compact set $S$, so that the Cauchy hypersurface $t^{-1}(m-1)$ would be contained in $P$ and the spacetime would contain a timelike line (Corollary \ref{bbwe}), hence it would split. Also $U\ne\emptyset$, since by Prop.\ \ref{cmol} and Corollary \ref{bbwe} $P$ intersects the connected hypersurface $S$ without containing it, and so $N\cap S\ne \emptyset$. Thus $G$ is a non-empty closed proper subset of the connected manifold $S$ and it has a boundary point $x_0$. Let $B$ be a connected open neighborhood of $x_0$ which does not intersect $F$. The set $B\backslash G\subset U$ is open and non-empty; let $V$ be one of its connected components. $V$ is open in $B$ and it cannot be closed in $B$, otherwise $V=B\ni x_0$. Thus there is $y\in B\cap \overline{V}\backslash V$. As the components of the open set $B\backslash G$ are relatively closed in it, $y\in G$. By continuity of $h$, $h\to -\infty$ approaching $y$ from $V$. Let $U_D$ be the component of $U$ containing $V$ and $D$ the corresponding component of $N$. We have $\inf_{U_D} h=-\infty$.

Let $S'$ be a Cauchy hypersurface. Since $t$ is bounded on the compact set $S'$, there is a point of $D$ in $I^-(S')$, and the future lightlike ray starting from it and contained in $N$ (Prop.\ \ref{propbdry}) is contained in $D$, being connected, and intersects $S'$. Thus $D$ intersects every Cauchy hypersurface.

For every $k$ let $p_k\in D\cap I^-(\Sigma_{-k})$ and let $\sigma_k$ be a future lightlike ray starting from $p_k$ and contained in $D$. It intersects $\Sigma$ at a point $r_k\in D\cap \Sigma$. Since $D$ is closed (a component of a closed set) and $\Sigma$ is compact, we can assume $r_k\to r\in D\cap \Sigma$. Parametrizing $\sigma_k$ with respect to $h$-arc length with $\sigma_k(0)=r_k$, the domains are $[-a_k,+\infty)$ with $a_k\to +\infty$ as $p_k$ escapes every compact set. By the limit curve theorem a subsequence converges uniformly on compact subsets to an inextendible causal curve $\sigma$ through $r$, which is maximizing as a limit of maximizing curves, and it is contained in the closed set $D$. As $D$ is achronal, $\sigma$ is a lightlike line.

\end{proof}

\subsection{The cosmological wall}

Assume that the spacetime does not split and satisfies TCC, so that $U^+:=\textrm{dom}\, b^+$ and $U^-:=\textrm{dom}\, b^-$ are disjoint, and let $N^\pm:=\p U^\pm$ be the future and past cosmological horizons and $W=M\backslash(U^+\cup U^-)$ the cosmological wall.

\begin{proposition} \label{propcommon}
The wall $W$ is closed and causally convex, and $N^+\cup N^-\subset W$. Every point of $N^+\cap N^-$ lies on a lightlike line contained in $N^+\cap N^-$. No null completeness assumption is needed.
\end{proposition}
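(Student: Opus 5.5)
Closedness of $W$ is immediate, since $U^\pm$ are open (Theorem \ref{xxhg}(ii) and its dual). Next, $N^+\subset W$: a point of $N^+=\p U^+$ is not in $U^+$ because $U^+$ is open. It is also not in $U^-$, because $U^-$ is an open set disjoint from $U^+$, hence disjoint from $\overline{U^+}$. The same argument gives $N^-\subset W$.

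For causal convexity, let $p\le r\le q$ with $p,q\in W$. If $r\in U^+$, then $p\in J^-(U^+)\subset U^+$ since $U^+$ is a past set (Prop.\ \ref{cmol}), a contradiction. If $r\in U^-$, then $q\in J^+(U^-)\subset U^-$ by the dual statement, again a contradiction. Hence $r\in W$. An equivalent phrasing is that $W=(M\setminus U^+)\cap(M\setminus U^-)$ is the intersection of a closed future set and a closed past set.

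For the last claim, let $p\in N^+\cap N^-$. By Proposition \ref{propbdry} there is a future lightlike ray $\sigma^+$ starting from $p$ and contained in $N^+$. By its time dual there is a past lightlike ray $\sigma^-$ ending at $p$ and contained in $N^-$.

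Future part, $\sigma^+\subset N^-$. Every point of $\sigma^+$ lies in $J^+(p)$. Since $p\in\overline{U^-}$ and $U^-$ is an open future set, $J^+(\overline{U^-})\subset\overline{U^-}$ (a standard property of future sets). So $\sigma^+\subset \overline{U^-}$. Also $\sigma^+\subset N^+\subset W$ is disjoint from $U^-$, hence $\sigma^+\subset \overline{U^-}\setminus U^-=N^-$. Dually, $\sigma^-\subset N^+$. Therefore the concatenation $\alpha$ of $\sigma^-$ and $\sigma^+$ is an inextendible causal curve contained in $N^+\cap N^-$.

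Remaining step: show $\alpha$ is a lightlike line. I expect this step to carry the content. The tool is achronality of the horizons. $\alpha$ lies in the achronal boundary $N^+$ (Proposition \ref{propbdry}), so no two of its points are chronologically related. A causal curve with this property has no corners, since a corner could be smoothed to produce a timelike relation. Hence $\alpha$ is a single lightlike geodesic, and each segment of it is maximizing with $d=0$. So $\alpha$ is an inextendible lightlike line.

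Only the existence of generators via limit curves (Proposition \ref{propbdry}) was used, not null completeness, which matches the statement.
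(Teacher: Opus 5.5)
Your proposal is correct and follows the paper's proof essentially step for step. Causal convexity comes from $U^+$ being a past set and $U^-$ a future set, and $N^\pm\subset W$ from the openness and disjointness of $U^\pm$. The lightlike line is obtained by pushing the generator of $N^+$ (resp.\ $N^-$) through $p$ into $\overline{U^-}$ (resp.\ $\overline{U^+}$) and joining the two rays inside the achronal set $N^+$. Your explicit justification of $J^+(\overline{U^-})\subset\overline{U^-}$ spells out what the paper states more briefly as ``$\overline{U^-}$ is a future set''.
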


\begin{proof}
If $x\le y\le z$ with $x,z\in W$, then $y\in U^+$ would give $x\in U^+$, and $y\in U^-$ would give $z\in U^-$, since $U^+$ is a past set and $U^-$ a future set. $N^+$ does not intersect the open set $U^+$, and it does not intersect $U^-$ since a point of $U^-\cap \p U^+$ would have a neighborhood in $U^-$ meeting $U^+$. Thus $N^+\subset W$, and similarly $N^-\subset W$.

Let $p\in N^+\cap N^-$ and let $\sigma$ be a future lightlike ray starting from $p$ and contained in $N^+$ (Proposition \ref{propbdry}). The set $\overline{U^-}$ is a future set, thus $\sigma\subset \overline{U^-}$, and since $\sigma\subset N^+$ does not intersect $U^-$, $\sigma\subset N^-$. Dually, a past lightlike ray ending at $p$ and contained in $N^-$ is contained in $N^+$. The join of the two rays is an inextendible causal curve contained in the achronal set $N^+$, hence an achronal lightlike geodesic, namely a lightlike line, contained in $N^+\cap N^-$.
\end{proof}

\begin{theorem}[Rigidity of meeting horizons] \label{thmrigid}
Assume additionally that the spacetime is null geodesically complete. If a connected component $D^+$ of $N^+$ intersects a connected component $D^-$ of $N^-$ then $D^+=D^-$, and this set is a smooth totally geodesic null hypersurface generated by lightlike lines.
\end{theorem}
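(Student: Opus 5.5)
The plan is to apply Galloway's null maximum principle for $C^0$ null hypersurfaces (Galloway, \emph{Maximum principles for null hypersurfaces and null splitting theorems}), with $N^+$ and $N^-$ as the two competing barriers, and then propagate the resulting local rigidity by an open--closed argument inside the connected sets $D^\pm$.

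\emph{Step 1: mean curvature signs.} By Proposition \ref{propbdry}, $N^+$ is an achronal $C^0$ future null hypersurface: through each point starts a future lightlike ray contained in it. By null completeness these generators are future complete. TCC implies the null energy condition $Ric(k,k)\ge 0$ by continuity. Galloway's result then gives that $N^+$ has null mean curvature $\theta^+\ge 0$ in the support sense. The argument is the Raychaudhuri focusing along the complete generators, which carry no null focal points since they are rays.

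Dually, $N^-$ is an achronal $C^0$ past null hypersurface whose past complete generators are past lightlike rays. With respect to the future orientation it therefore satisfies $\theta^-\le 0$ in the support sense.

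\emph{Step 2: relative position and local rigidity.} Let $p\in D^+\cap D^-$. We have $N^-\subset W\subset M\backslash U^+$, and $M\backslash U^+$ is a closed future set whose boundary is $N^+$. Hence near $p$ the hypersurface $N^-$ lies (weakly) to the future of $N^+$.

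Galloway's maximum principle applies to a future null hypersurface $N_1$ with $\theta\le 0$ lying to the future of a past null hypersurface $N_2$ with $\theta\ge 0$, in the appropriate support sense, meeting at $p$. In our configuration the sign pattern is the reversed one. I would therefore first check the sign conventions carefully, using time duality if needed, so that the roles match. This means taking $N^+$ as the lower barrier with $\theta\ge 0$ and $N^-$ as the upper barrier with $\theta\le 0$, which is exactly the configuration in the proof of the null splitting theorem.

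The conclusion is that $N^+$ and $N^-$ coincide in a neighbourhood of $p$, and that there they form a smooth totally geodesic null hypersurface.

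\emph{Step 3: globalization.} Let $A:=D^+\cap N^-$. It is closed in $D^+$ because $N^-$ is closed. It is open in $D^+$ by Step 2, applied at each of its points. It is non-empty by hypothesis. Since $D^+$ is connected, $D^+\subset N^-$, and being connected and meeting $D^-$ it lies in $D^-$. Symmetrically $D^-\subset D^+$, so $D^+=D^-$.

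Smoothness and total geodesicity are local, so they hold everywhere by Step 2. Every point of $D^+$ now lies in $N^+\cap N^-$. By Proposition \ref{propcommon} it lies on a lightlike line contained in $N^+\cap N^-$. By connectedness of the line this line stays in the component $D^+$, so $D^+$ is generated by lightlike lines.

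\emph{Main obstacle.} The delicate point is Step 2. One must verify that the hypotheses of the $C^0$ null maximum principle hold exactly: correct orientation of the support null hypersurfaces, the one-sided containment in a small neighbourhood, and the support-sense mean curvature inequalities obtained from future (respectively past) complete rays.

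I would first try to reduce to Galloway's theorem in the form used for the null splitting theorem. In that theorem, a future ray horizon and a past ray horizon that meet coincide, and each point of $N^+\cap N^-$ lies on a lightlike line by Proposition \ref{propcommon}. Working locally near such a line should reproduce that setting almost verbatim.
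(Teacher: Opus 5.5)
Your proposal is correct and follows the paper's proof essentially step for step. The steps are: support-sense signs $\theta\ge 0$ on $N^+$ and $\theta\le 0$ on $N^-$ from Galloway's lemma under null completeness; $N^-$ weakly to the future of $N^+$ near a common point, since $N^-\subset W\subset M\backslash U^+$; Galloway's null maximum principle; an open--closed argument in the connected components; and Proposition \ref{propcommon} for the lines.

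The sign worry in your Step 2 is unfounded. Galloway's Theorem III.4 is stated exactly for a future null hypersurface with $\theta\ge 0$ lying below a past null hypersurface with $\theta\le 0$, so no time duality is involved. Note also that the principle itself only yields a smooth null hypersurface with $\theta=0$, and it requires the lower bound on the support null second fundamental forms, supplied by the same lemma; total geodesicity then follows, as in the paper, from the Raychaudhuri equation and the null convergence condition.
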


\begin{proof}
By null completeness the generators of $N^+$ are future complete, thus by the null convergence condition (a consequence of TCC) $N^+$ has non-negative null expansion in the support sense, with support null second fundamental forms locally bounded from below \cite[Lemma IV.2]{galloway00}. Dually $N^-$ has non-positive null expansion in the support sense. Let $x\in D^+\cap D^-$. In a small neighborhood $O$ of $x$ the sets $U^+\cap O$ and $U^-\cap O$ lie, respectively, to the past of $N^+$ and to the future of $N^-$, and as they are disjoint, $N^-\cap O$ lies to the future of $N^+\cap O$ (in the non-strict sense). By the null maximum principle \cite[Thm.\ III.4]{galloway00}, $N^+$ and $N^-$ coincide near $x$ and form there a smooth null hypersurface with vanishing expansion; by the Raychaudhuri equation and the null convergence condition its shear also vanishes, so it is totally geodesic. Thus $D^+\cap D^-$ is open in $D^+$ and in $D^-$. It is also closed, thus $D^+=D^+\cap D^-=D^-$. Its generators are lightlike lines by Proposition \ref{propcommon}.
\end{proof}

\begin{theorem}[Thin walls] \label{thmthin}
If the cosmological wall has empty interior then
\begin{itemize}
\item[(a)] $W=N^+=N^-$;
\item[(b)] every point of $W$ lies on a lightlike line contained in $W$;
\item[(c)] for every Cauchy hypersurface $S$, $S\backslash W$ is the disjoint union of the non-empty open sets $S\cap U^+$ and $S\cap U^-$, so $S\cap W$ separates $S$;
\item[(d)] if the spacetime is null geodesically complete, every connected component of $W$ is a smooth totally geodesic null hypersurface generated by lightlike lines.
\end{itemize}
\end{theorem}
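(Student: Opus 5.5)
Assuming $\textrm{Int}\, W=\emptyset$, I would prove (a) first and then derive (b)--(d) from it and the earlier results.

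For (a) I would show $W\subset N^+\cap N^-$; since Proposition \ref{propcommon} gives $N^\pm\subset W$, this yields $W=N^+=N^-$. Take $p\in W$. Because $W$ has empty interior, every neighborhood $O$ of $p$ meets $U^+\cup U^-$. The key step is that $O$ must meet both sets. Suppose $O\cap U^-=\emptyset$ for some small neighborhood $O$ of $p$. Then $p\in \overline{U^+}$, and since $p\notin U^+$, we get $p\in N^+$. I would use that $M\backslash U^+$ is a closed future set and $N^+$ is an achronal boundary. A small timelike curve through $p$ enters $I^+(p)\cap O\subset M\backslash U^+$. These points also avoid $U^-$ (by the choice of $O$), so they lie in $W$. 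The open set $I^+(p)\cap O$ is nonempty and contained in $W$, contradicting empty interior. The dual argument handles $U^+$. Hence every neighborhood of $p$ meets both $U^+$ and $U^-$, so $p\in \overline{U^+}\cap\overline{U^-}$, and since $p\notin U^\pm$, $p\in N^+\cap N^-$.

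For (b), every point of $W=N^+\cap N^-$ lies on a lightlike line contained in $N^+\cap N^-=W$ by Proposition \ref{propcommon}.

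For (c), $S\backslash W=(S\cap U^+)\cup(S\cap U^-)$ is a union of open sets, disjoint because $U^+\cap U^-=\emptyset$ (Theorem \ref{notr}, in the non-split case). Both are non-empty by Proposition \ref{cmol} and its dual, which state that $\textrm{dom}\, b^\pm$ meets every Cauchy hypersurface. Since $S$ is connected, $S\cap W\neq\emptyset$, and removing it disconnects $S$.

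For (d), let $D$ be a component of $W=N^+=N^-$. It is simultaneously a component of $N^+$ and of $N^-$, so it intersects itself as such. Theorem \ref{thmrigid} then shows that $D$ is a smooth totally geodesic null hypersurface generated by lightlike lines.

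The main obstacle is the openness step in (a): showing that $I^+(p)\cap O$ avoids $U^+$. This follows because $M\backslash U^+$ is a future set containing $p$, so $I^+(p)\subset M\backslash U^+$. The remaining steps are bookkeeping.
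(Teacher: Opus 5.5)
Your proposal is correct and follows essentially the paper's argument. The key step is the same: if $p\notin U^+$ and a neighborhood $O$ of $p$ misses $U^-$, then $I^+(p)\cap O$ is a non-empty open subset of $W$. You apply it to prove $W\subset N^+\cap N^-$ directly, whereas the paper first gets $W\subset N^+\cup N^-$ and then $N^+\subset N^-$ and $N^-\subset N^+$; this difference is only cosmetic, and (b)--(d) are then deduced from Proposition \ref{propcommon}, Proposition \ref{cmol} and Theorem \ref{thmrigid} exactly as in the paper.
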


\begin{proof}
(a). Since $W$ has empty interior, $M=\overline{U^+\cup U^-}=\overline{U^+}\cup \overline{U^-}$, thus $W\subset N^+\cup N^-$. Let $p\in N^+$ and suppose $p\notin \overline{U^-}$. Let $O$ be a neighborhood of $p$ disjoint from $U^-$. The open set $I^+(p)$ does not intersect $\overline{U^+}$, since $I^-(\overline{U^+})=I^-(U^+)\subset U^+$ and $p\notin U^+$. Thus $I^+(p)\cap O$ is a non-empty open subset of $W$, a contradiction. Therefore $N^+\subset \overline{U^-}\backslash U^-=N^-$, and dually $N^-\subset N^+$. Together with Proposition \ref{propcommon} we get $W=N^+=N^-$.

(b) follows from (a) and Proposition \ref{propcommon}, and (c) from (a) and Proposition \ref{cmol}. (d) follows from (a) and Theorem \ref{thmrigid}.
\end{proof}

Theorem \ref{thmthin} justifies the thin wall structure depicted in Figures \ref{figstruct} and \ref{figeg}. We stress that it does not prove that the wall is thin; whether the wall must have empty interior is an open problem.

\begin{remark} \label{remwallcurv}
Let $H$ be a smooth totally geodesic null component of the wall, as in Theorem \ref{thmthin}(d), with future generator $k$. The Raychaudhuri equation gives $Ric(k,k)=0$ on $H$, and testing TCC on the timelike vectors $k+sX+cs^2\ell$, where $X$ is a screen vector, $\ell$ a null vector with $g(k,\ell)=-1$, $c>g(X,X)/2$ and $s\ne 0$ small of both signs, gives $Ric(k,X)=0$. Thus $Ric(k,\cdot)=\lambda\, g(k,\cdot)$ on $H$ and, testing on $k+\epsilon \ell$, $\lambda\le 0$. The geometry of such walls and of their compact sections deserves further investigation.
\end{remark}

\begin{remark} \label{rem41}
Concerning the causal boundary of the future cosmological development, we have established the following: every TIP contains a TIP of the form $I^-(\sigma)\subset \textrm{dom}\, b^+$ with $\sigma$ a future $\Sigma$-ray (Theorem \ref{cbdc}), every minimal TIP is contained in $\textrm{dom}\, b^+$ (Corollary \ref{cormintip}), and $\textrm{dom}\, b^+$ is a union of TIPs of $b^+$-calibrated rays (Corollary \ref{coruni}). We expect the past causal boundary of $\textrm{dom}\, b^+$ to be null, but we do not have a proof.
\end{remark}

Although the past set $\textrm{dom}\, b^+$ regarded as a spacetime of its own is globally hyperbolic, it is not clear  if the level sets $B_c^+$ need to be its Cauchy hypersurfaces. By Proposition \ref{propdomcauchy} this is the case iff $b^+\to -\infty$ along every past inextendible causal curve starting from $\textrm{dom}\, b^+$. Note that graphically the condition that $B_c^+$ is Cauchy for $\textrm{dom}\, b^+$ coincides with its level sets `dying'  on the `spacelike boundary' rather than on the null boundary below.

The simplest causal structure that respects all the constraints seems to be that of Figure \ref{figeg}, which happens to be exactly that considered in the example by Ehrlich and Galloway \cite{ehrlich90}. One might wonder if a different conformal factor can be chosen to disprove the conjecture. The answer is negative. As commented in \cite{ehrlich90}, in 2 spacetime dimensions the TCC is equivalent to the non-positivity of the timelike sectional curvature (with the sign convention of Section \ref{secconv}) and with this strengthened assumption the Bartnik statement is correct.

\begin{figure}[ht]
\begin{center}
 \includegraphics[width=12cm]{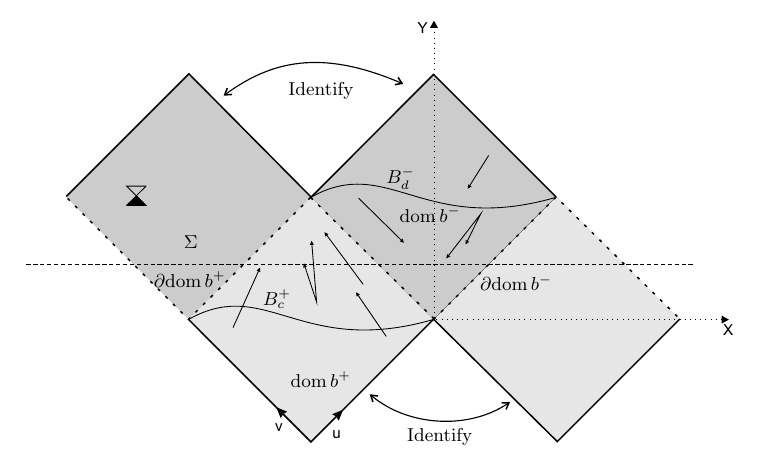}
\end{center} \caption{A simple structure where the cosmological developments are two squares and the  horospheres are Cauchy in the respective cosmological developments. Squares of the same shade are identified.} \label{figeg}
\end{figure}

In higher dimensions, the Bartnik statement is also known to hold under non-positive timelike sectional curvature \cite{ehrlich90}, and, in the synthetic setting of Lorentzian length spaces, under non-negative timelike curvature, the synthetic counterpart of this condition \cite{flores24}. Since the Ricci tensor is an average of sectional curvatures, a counterexample cannot satisfy such a sectional curvature bound: its sectional curvatures must have different signs depending on the timelike plane.

\section{Conclusions} \label{secconc}

We have shown that every timelike complete cosmological spacetime carries two canonical open regions, the future and past cosmological developments $\textrm{dom}\, b^\pm$, defined through the Cauchy Busemann functions of any compact Cauchy hypersurface and independent of it. Bartnik's conjecture is equivalent to the statement that these regions always intersect (Theorem \ref{notr}), or that the scalar defect $2k-d(\Sigma_{-k},\Sigma_k)$ stays bounded (Theorem \ref{thmscal}). The approach unifies several earlier partial results: the no observer horizon and no null ray conditions (Corollary \ref{cornfoh}), the ray-to-ray condition of Galloway (Corollary \ref{corray}), the min-max condition of Galloway and Vega (Remark \ref{remminmax}), and the min-min condition of Garc\'ia-Heveling, which requires a minimal TIP and a minimal TIF to intersect (Corollary \ref{cormintip}).

The main tool is the product trick, which exhibits $b^+$ as the graph function of a Cauchy horizon in $M\times \mathbb{R}$. It gives continuity, local semi-concavity and calibrated timelike rays, a foliation-free characterization of $b^+$ as a minimal asymptotic clock defect (Theorem \ref{thmintr}), and local eventual equi-Lipschitz bounds for the distances to the level sets of the distance foliation on the future development (Theorem \ref{thmequi}), a property which fails for general Cauchy foliations \cite{galloway26}. Regularity is decisive: if $b^+$ is $C^1$ on its finiteness domain, namely if from every point of the future development starts a unique calibrated ray (Proposition \ref{propcrease}), the spacetime splits (Theorem \ref{thmcone}). Under TCC, the two developments either are disjoint or fill the whole spacetime (Corollary \ref{corint}), so that the cosmological wall is empty precisely when they meet. Under TCC the horospheres are mean convex in the support sense (Theorem \ref{thmmean}); combined with the geometric maximum principle of Andersson, Galloway and Howard (Theorem \ref{thmagh}) this gives a localized rigidity version of Hawking's singularity theorem (Theorem \ref{thmhawk}): in a counterexample every compact Cauchy hypersurface must be expanding somewhere in the future development and contracting somewhere in the past one. On the other hand, a proof of the conjecture along these lines needs a mean convex barrier touching a horosphere from the future inside the future development. The natural global candidates are not available in a counterexample, and the regularity needed to work with the approximants on the whole spacetime is itself equivalent to the conjecture (Corollary \ref{coreqlip}, Remark \ref{remscope}).

If the conjecture is false, the spacetime must contain a non-empty closed causally convex cosmological wall separating the two developments. The cosmological horizons bounding the developments are $C^0$ null hypersurfaces, some component of the future horizon contains a lightlike line, the points common to the two horizons lie on lightlike lines, and every inertial observer experiences an infinite clock defect in the future or in the past. If the wall is thin, it coincides with both horizons, separates every Cauchy hypersurface and is ruled by lightlike lines, and under null completeness it is a smooth totally geodesic null hypersurface. The simplest realization of this causal structure is precisely the Ehrlich--Galloway example, which, however, violates the timelike convergence condition. Should a counterexample exist, the cosmological wall and its horizons would be its most distinctive physical feature. Much as the event horizon of a black hole, each cosmological horizon can be crossed by future-directed causal curves in one direction only: an observer can leave the future development and can enter the past development, but can never return to the former or exit the latter.

Natural directions for further work are: finding a mean convex barrier inside the future development to which the maximum principle can be applied as in Theorem \ref{thmhawk} (Remark \ref{remscope}); proving that the wall is thin and analyzing the geometry of its compact sections; and, in the opposite direction, constructing explicit metrics which realize the causal structure of a thin wall and satisfy the timelike convergence condition.

\section*{Acknowledgments}


The main results of this work—such as the existence of the cosmological wall in potential counterexamples—and the main novel ideas—such as the use of the product trick to regard the Cauchy Busemann functions as Cauchy horizons in a product \(M\times \mathbb{R}\), thereby obtaining results on regularity—were all obtained and elaborated without the use of AI.
Nevertheless, this work was supported in part by a grant of access to OpenAI models through the ChatGPT for Academic Researchers program. For the preparation of the manuscript we also used AI tools such as Anthropic Claude and DeepSeek.

\appendix

\section{Auxiliary causality results} \label{appcaus}

The following result improves \cite{budic78} by dropping the connectedness condition (see also \cite{galloway85}). Galloway and Vega proved a stronger version under an achronality assumption  \cite[Lemma 3.6]{galloway14b}; so it is not strictly needed, but we wanted to record its interesting proof.

\begin{theorem} \label{thmcompact}
Let $S$ be a compact acausal edgeless set (that is, a compact partial Cauchy hypersurface) in a globally hyperbolic spacetime with compact Cauchy hypersurfaces. Then $S$ is a Cauchy hypersurface (hence connected).
\end{theorem}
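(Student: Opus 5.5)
The plan is to show that every inextendible causal curve meets $S$; together with acausality this makes $S$ a Cauchy hypersurface. Fix a compact Cauchy hypersurface $\Sigma$ and a Cauchy temporal function $t$ with $\Sigma=t^{-1}(0)$. Since $S$ is compact, $t$ is bounded on it, so $S\subset J^+(\Sigma_a)\cap J^-(\Sigma_b)$ for two level sets $\Sigma_a=t^{-1}(a)$, $\Sigma_b=t^{-1}(b)$. Set $D:=D(S)=D^+(S)\cup D^-(S)$. Because $S$ is acausal and edgeless, $S$ is a closed topological hypersurface and $\mathrm{Int}\,D(S)$ is an open globally hyperbolic neighbourhood of $S$. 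It therefore suffices to prove $D(S)=M$.

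First I would show that the Cauchy horizon $H^+(S)$ is empty (the past case is dual). Suppose it is not. Since $S$ is edgeless, $H^+(S)$ is a closed achronal set whose generators are past inextendible lightlike geodesics. Each generator stays inside $H^+(S)$ in the past direction, never reaches $S$ (as $\mathrm{edge}(S)=\emptyset$), and lies in $J^+(S)$. Now $J^+(S)\cap J^-(\Sigma_c)$ is compact for every $c$, because $S$ is compact and $M$ is globally hyperbolic. Pick $p\in H^+(S)$ and $c>t(p)$. The past generator from $p$ is then imprisoned in the compact set $J^+(S)\cap J^-(p)$. This contradicts non-total imprisonment of globally hyperbolic spacetimes. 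Hence $H^\pm(S)=\emptyset$.

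Once $H^\pm(S)=\emptyset$, I use $\partial D(S)=H^+(S)\cup H^-(S)$ for edgeless acausal $S$. This gives $\partial D(S)=\emptyset$, so $D(S)$ is open and closed. It is non-empty, and $M$ is connected, so $D(S)=M$. Therefore every inextendible causal curve meets $S$, and $S$ is Cauchy. Connectedness of $S$ follows because $S$ is then homeomorphic to $\Sigma$, and $\Sigma$ is connected since $M\cong\mathbb{R}\times\Sigma$ is connected.

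The main obstacle I expect is justifying, without connectedness of $S$, the standard facts that a past generator of $H^+(S)$ stays in $H^+(S)$ and does not end on $S$, and that $\partial D(S)=H^+\cup H^-$. These are usually stated for closed achronal sets, and I would invoke them from \cite{minguzzi18b}, checking that only the emptiness of the edge is used. The compactness of $J^+(S)\cap J^-(p)$ is where compactness of $S$ enters. The compactness of the Cauchy hypersurfaces is used only mildly, to guarantee connectedness of $\Sigma$ and the bounds on $t$. If imprisonment feels delicate, an alternative is to observe that the past generator, being past inextendible, must cross $\Sigma_a$, which is impossible since it stays in $J^+(S)\subset J^+(\Sigma_a)$ and never reaches the past of $S$.
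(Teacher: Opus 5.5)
Your proof is correct, but it takes a different route from the paper's. The paper's argument is elementary topology. It takes a Cauchy temporal function $t$ and the flow of a timelike field $V$ with $\mathrm{d}t(V)=1$. Achronality makes $S$ a graph $t=f$ over $\pi(S)\subset\Sigma'=t^{-1}(0)$. Since $S$ is a topological hypersurface, invariance of domain makes $\pi(S)$ open; compactness makes it closed, so $\pi(S)=\Sigma'$. Then along any inextendible causal curve $\gamma$ the function $t(\gamma(s))-f(\pi(\gamma(s)))$ changes sign, so $\gamma$ meets $S$ by the intermediate value theorem.

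You instead run the classical Cauchy-horizon argument. Edgelessness makes the generators of $H^\pm(S)$ inextendible. Compactness of $S$ imprisons them in the compact set $J^+(S)\cap J^-(p)$ (resp.\ $J^-(S)\cap J^+(p)$), contradicting non-imprisonment. Then $\partial D(S)=H^+(S)\cup H^-(S)=\emptyset$ and the connectedness of $M$ give $D(S)=M$. The two facts you import (the generator structure of $H^\pm(S)$ and $\partial D(S)=H^+(S)\cup H^-(S)$) hold for arbitrary closed achronal sets. So, as you suspected, connectedness of $S$ is never needed.

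What each approach buys: the paper's proof is self-contained and also yields the global graph description of $S$ over a Cauchy slice. Yours avoids invariance of domain and the smooth splitting, relying instead on the structure theory of Cauchy horizons.

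One inaccuracy: neither proof actually uses the compactness of the Cauchy hypersurfaces of $M$. In yours, compactness enters only through $S$, via the closedness of $J^+(S)$ and the compactness of $J^+(S)\cap J^-(p)$. The connectedness of $\Sigma$ comes from that of $M$, and compactness of the Cauchy hypersurfaces follows a posteriori, since $S$ turns out to be one of them. So your remark that this hypothesis is used ``to guarantee connectedness of $\Sigma$ and the bounds on $t$'' is wrong, though harmless.
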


\begin{proof}
Let $t$ be a Cauchy temporal function, $\Sigma'=t^{-1}(0)$, and let $\phi$ be the flow of a future-directed timelike vector field $V$ with $\dd t(V)=1$, so that $\phi:\mathbb{R}\times \Sigma'\to M$, $(s,x)\mapsto \phi_s(x)$, is a homeomorphism with $t(\phi_s(x))=s$ \cite{bernal04}. Let $\pi: M\to \Sigma'$ be the corresponding projection. Since $S$ is achronal and the flow lines are timelike, each flow line intersects $S$ at most once, thus $S$ is the graph $t=f(x)$ of a function $f$ defined on $D:=\pi(S)$, and $f$ is continuous since $\pi\vert_S$ is a continuous bijection from the compact set $S$ onto $D$. Since $S$ is achronal and edgeless it is a topological hypersurface \cite{minguzzi18b}, thus $\pi\vert_S$ is a continuous injective map between manifolds of the same dimension and hence, by invariance of domain, $D$ is open. But $D$ is also compact, hence closed. As $\Sigma'$ is connected, $M$ being connected, $D=\Sigma'$.

Let $\gamma: \mathbb{R}\to M$ be an inextendible causal curve. Since $t$ is a Cauchy temporal function, $t(\gamma(s))$ takes all real values. As $f$ takes values in some compact interval $[a,b]$, there are $s_-<s_+$ such that $t(\gamma(s_-))<a$ and $t(\gamma(s_+))>b$. The continuous function $\Delta(s):=t(\gamma(s))-f(\pi(\gamma(s)))$ satisfies $\Delta(s_-)<0<\Delta(s_+)$, thus there is $\bar s$ such that $\Delta(\bar s)=0$, namely $\gamma(\bar s)\in S$. Thus $S$ is an acausal set met by every inextendible causal curve, namely a Cauchy hypersurface.
\end{proof}


\begin{thebibliography}{10}

\bibitem{amini18}
R.~Amini, M.~Sharifzadeh, and Y.~Bahrampour.
\newblock Bartnik's splitting conjecture and {L}orentzian {B}usemann function.
\newblock {\em Class. Quantum Grav.}, 35:105009, 2018.

\bibitem{andersson98b}
L.~Andersson, G.~J. Galloway, and R.~Howard.
\newblock A strong maximum principle for weak solutions of quasi-linear
  elliptic equations with applications to {L}orentzian and {R}iemannian
  geometry.
\newblock {\em Comm. Pure Appl. Math.}, 51:581--624, 1998.

\bibitem{bartnik88}
R.~Bartnik.
\newblock Remarks on cosmological spacetimes and constant mean curvature
  surfaces.
\newblock {\em Commun. Math. Phys.}, 117:615--624, 1988.

\bibitem{beem96}
J.~K. Beem, P.~E. Ehrlich, and K.~L. Easley.
\newblock {\em Global Lorentzian Geometry}.
\newblock Marcel {D}ekker {I}nc., New York, 1996.

\bibitem{bernal04}
A.~N. Bernal and M.~{S\'a}nchez.
\newblock Smoothness of time functions and the metric splitting of globally
  hyperbolic spacetimes.
\newblock {\em Commun. Math. Phys.}, 257:43--50, 2005.

\bibitem{braun24}
M.~Braun, N.~Gigli, R.~J. McCann, A.~Ohanyan, and C.~S{\"a}mann.
\newblock An elliptic proof of the splitting theorems from {L}orentzian
  geometry.
\newblock {\em Comm. Amer. Math. Soc.}, 6:1087--1112, 2026.
\newblock {arXiv:}2410.12632.

\bibitem{budic78}
R.~Budic, J.~Isenberg, L.~Lindblom, and P.~B. Yasskin.
\newblock On the determination of {C}auchy surfaces from intrinsic properties.
\newblock {\em Commun. Math. Phys.}, 61:87--95, 1978.

\bibitem{chrusciel01}
P.~T. Chru{\'s}ciel, E.~Delay, G.~J. Galloway, and R.~Howard.
\newblock Regularity of horizons and the area theorem.
\newblock {\em Ann. {H}enri {P}oincar\'e}, 2:109--178, 2001.

\bibitem{costa20b}
I.~P.~Costa e~Silva, J.~L. Flores, and J.~Herrera.
\newblock Some remarks on conformal symmetries and {B}artnik's splitting
  conjecture.
\newblock {\em Mediterr. J. Math.}, 17:21, 2020.

\bibitem{ehrlich90}
P.~E. Ehrlich and G.~J. Galloway.
\newblock Timelike lines.
\newblock {\em Class. Quantum Grav.}, 7:297--307, 1990.

\bibitem{eschenburg88}
J.-H. Eschenburg.
\newblock The splitting theorem for space-times with strong energy condition.
\newblock {\em J. Differential Geom.}, 27:477--491, 1988.

\bibitem{flores24}
J.~L. Flores, J.~Herrera, and D.~A. Solis.
\newblock Low regularity approach to {B}artnik's conjecture, 2024.
\newblock {arXiv:}2412.08967.

\bibitem{galloway84b}
G.~J. Galloway.
\newblock Splitting theorems for spatially closed spacetimes.
\newblock {\em Commun. Math. Phys.}, 96:423--429, 1984.

\bibitem{galloway85}
G.~J. Galloway.
\newblock Some results on {C}auchy surface criteria in {L}orentzian geometry.
\newblock {\em Illinois J. Math.}, 29:1--10, 1985.

\bibitem{galloway89}
G.~J. Galloway.
\newblock The {L}orentzian splitting theorem without the completeness
  assumption.
\newblock {\em J. Differential Geom.}, 29:373--387, 1989.

\bibitem{galloway97}
G.~J. Galloway.
\newblock {\em Some rigidity results for spatially closed space-times}, volume
  Mathematics of Gravitation {P}art {I}: {L}orentzian geometry and {E}instein
  equations of {\em Banach Center Publications, volume 41}, pages 21--34.
\newblock Warzawa: Polish Academy of Sciences, Institute of Mathematics, 1997.

\bibitem{galloway00}
G.~J. Galloway.
\newblock Maximum principles for null hypersurfaces and null splitting
  theorems.
\newblock {\em Ann. {H}enri {P}oincar\'e}, 1:543--567, 2000.

\bibitem{galloway19}
G.~J. Galloway.
\newblock Existence of {CMC} {C}auchy surfaces and spacetime splitting.
\newblock {\em Pure and Applied Mathematics Quarterly}, 15:667--682, 2019.

\bibitem{galloway96}
G.~J. Galloway and A.~Horta.
\newblock Regularity of {L}orentzian {B}usemann functions.
\newblock {\em Trans {A}mer. {M}ath. {S}oc.}, 348:2063--2084, 1996.

\bibitem{galloway18c}
G.~J. Galloway and E.~Ling.
\newblock Existence of {CMC} {C}auchy surfaces from a spacetime curvature
  condition.
\newblock {\em Gen. Relativ. Gravit.}, 50:108, 2018.

\bibitem{galloway26}
G.~J. Galloway, R.~J. McCann, and A.~Ohanyan.
\newblock Failure of local equi-{L}ipschitzness for families of {L}orentz
  distances to {C}auchy surface foliations.
\newblock {\em Gen. Relativ. Gravit.}, 58:105, 2026.

\bibitem{galloway14b}
G.~J. Galloway and C.~Vega.
\newblock Achronal limits, {L}orentzian spheres, and splitting.
\newblock {\em Ann. Henri Poincar{\'e}}, 15:2241--2279, 2014.

\bibitem{galloway17}
G.~J. Galloway and C.~Vega.
\newblock Hausdorff closed limits and rigidity in {L}orentzian geometry.
\newblock {\em Ann. Henri Poincar{\'e}}, 18:3399--3426, 2017.

\bibitem{heveling25}
L.~Garc{\'i}a-Heveling.
\newblock Minimal elements of the causal boundary with applications to
  spacetime splitting.
\newblock {\em Mediterr. J. Math.}, 22:63, 2025.

\bibitem{geroch72}
R.~Geroch, E.~H. Kronheimer, and R.~Penrose.
\newblock Ideal points in spacetime.
\newblock {\em Proc. {R}oy. {S}oc. {L}ond. {A}}, 327:545--567, 1972.

\bibitem{gilbarg01}
D.~Gilbarg and N.~S. Trudinger.
\newblock {\em Elliptic partial differential equations of second order}.
\newblock Classics in Mathematics. Springer-Verlag, Berlin, 2001.

\bibitem{hawking67}
S.~W. Hawking.
\newblock The occurrence of singularities in cosmology. {III}. {C}ausality and
  singularities.
\newblock {\em Proc. R. Soc. Lond. A}, 300(1461):187--201, 1967.

\bibitem{lesourd18b}
M.~Lesourd.
\newblock Bartnik splitting and cosmological singularities, 2018.
\newblock {arXiv:}1809.02071.

\bibitem{ling25}
E.~Ling and A.~Ohanyan.
\newblock Vacuum cosmological spacetimes without {CMC} {C}auchy surfaces.
\newblock In W.~Barrera et~al., editors, {\em Progress in {L}orentzian
  Geometry}, volume 512 of {\em Springer Proc. Math. Stat.}, page 231.
  Springer, Cham, 2025.
\newblock {arXiv:}2407.21524.

\bibitem{minguzzi14d}
E.~Minguzzi.
\newblock Area theorem and smoothness of compact {C}auchy horizons.
\newblock {\em Commun. Math. Phys.}, 339:57--98, 2015.
\newblock {arXiv}:1406.5919.

\bibitem{minguzzi18b}
E.~Minguzzi.
\newblock Lorentzian causality theory.
\newblock {\em Living Rev. Relativ.}, 22:3, 2019.

\bibitem{minguzzi19d}
E.~Minguzzi.
\newblock On the regularity of {C}auchy hypersurfaces and temporal functions in
  closed cone structures.
\newblock {\em Rev. Math. Phys.}, 32:2050033, 2020.
\newblock {arXiv:}1909.09352.

\bibitem{newman90}
R.~P. A.~C. Newman.
\newblock A proof of the splitting conjecture of {S}.-{T}. {Y}au.
\newblock {\em J. Diff. Geom.}, 31:163--184, 1990.

\bibitem{rockafellar09}
R.~T. Rockafellar and R.~J-B Wets.
\newblock {\em Variational analysis}.
\newblock {Springer-Verlag}, Berlin, 2009.

\bibitem{sharifzadeh09}
M.~Sharifzadeh and Y.~Bahrampour.
\newblock Some results about the level sets of {L}orentzian {B}usemann function
  and {B}artnik's conjecture.
\newblock {\em Commun. Math. Phys.}, 286:389--398, 2009.

\bibitem{treude11}
J.-H. Treude.
\newblock Ricci curvature comparison in {R}iemannian and {L}orentzian geometry.
\newblock Master's thesis, Universit{\"a}t Freiburg, 2011.

\bibitem{yau82}
S.-T. Yau.
\newblock Problem section.
\newblock In {\em Seminar on Differential Geometry}, volume 102 of {\em Ann. of
  Math. Stud.}, pages 669--706. Princeton University Press, Princeton, 1982.

\end{thebibliography}

\end{document}